\pgfplotsset{compat=1.18}
\tikzset{
	initial text=$ $,
	every state/.style={inner sep=2pt, minimum size=18pt}
}
\tikzset{%
    symbol/.style={%
        draw=none,
        every to/.append style={%
            edge node={node [sloped, allow upside down, auto=false]{$#1$}}}
    }
}
\newcommand{\referappendix}[2]{\ifthenelse{\boolean{longversion}}{Appendix~\ref{#1}}{Appendix~{#2}}}
\newcommand{\todo}[1]{\textcolor{red}{[ToDo: #1]}}
\newcommand{\answertype}{\mathbf{Prop}}
\newcommand{\answerobj}{\Omega}
\newcommand{\contmonad}[1]{C_{#1}}
\newcommand{\CPS}[1]{#1^{\sharp}}
\renewcommand{\category}[1]{\mathbb{#1}}
\newcommand{\letrec}[4]{\mathbf{let}\ \mathbf{rec}\ #1\ #2\ =\ #3\ \mathbf{in}\ #4}
\newcommand{\letrecchurch}[6]{\mathbf{let}\ \mathbf{rec}\ #1\ (#2 : #3) : #4\ =\ #5\ \mathbf{in}\ #6}
\newcommand{\nontermconst}{\bot}
\newcommand{\join}{\mathbin{\square}}
\newcommand{\operation}[1]{\mathsf{#1}}
\newcommand{\ifexpr}[3]{\mathbf{if}\ #1\ \mathbf{then}\ #2\ \mathbf{else}\ #3}
\newcommand{\emalgsymbol}{\zeta}
\newcommand{\geneff}[1]{\mathbf{gen}_{#1}}
\theoremstyle{acmdefinition}
\newtheorem{remark}[theorem]{Remark}}
\begin{document}

\title{A Category-Theoretic Framework for Syntactic Computation of Generic Weakest Preconditions}

\author{Satoshi Kura}
\email{satoshi.kura@aoni.waseda.jp}
\orcid{0000-0002-3954-8255}
\affiliation{%
  \institution{Waseda University}
  \state{Tokyo}
  \country{Japan}
}

\begin{abstract}
	Weakest preconditions are a useful notion for program verification as they reduce a problem of program verification to a problem of constraint solving.
	Category-theoretic generalisations of weakest preconditions have been studied to capture various computational effects and various properties in a unified framework.
	In this paper, we propose a novel and general relationship between weakest precondition transformers and CPS transformations for higher-order functional languages with general computational effects and recursion.
	Technically, this gives a syntactic counterpart of the categorically-defined generic weakest precondition transformer in [Aguirre \& Katsumata, 2020].
	The usefulness of our results is threefold.
	(1) Since CPS transformations purify effectful programs, various verification problems for \emph{effectful} programs can be reduced to verification problems for \emph{pure} programs.
	This syntactic reduction makes it easier to solve the verification problems and potentially facilitates combinations with other sophisticated verification methods tailored for pure programs.
	(2) We capture two existing verification methods, namely, verification of event sequences~[Kobayashi et al., 2018] and expected cost~[Avanzini et al., 2021] as instances of our framework.
	(3) Our results streamline the process of extending weakest precondition transformers for \emph{imperative} programs to those for \emph{higher-order} programs.
	We show two such extensions: analysis of higher moments of cost and the conditional weakest pre-expectation for higher-order probabilistic programs.
	These extensions demonstrate that our theoretical framework can produce novel verification methods.
\end{abstract}

\begin{CCSXML}
	<ccs2012>
		<concept>
			<concept_id>10003752.10010124.10010138.10010142</concept_id>
			<concept_desc>Theory of computation~Program verification</concept_desc>
			<concept_significance>500</concept_significance>
		</concept>
		<concept>
			<concept_id>10003752.10010124.10010138.10010141</concept_id>
			<concept_desc>Theory of computation~Pre- and post-conditions</concept_desc>
			<concept_significance>500</concept_significance>
		</concept>
		<concept>
			<concept_id>10003752.10010124.10010131.10010137</concept_id>
			<concept_desc>Theory of computation~Categorical semantics</concept_desc>
			<concept_significance>500</concept_significance>
		</concept>
		<concept>
			<concept_id>10003752.10003790.10011741</concept_id>
			<concept_desc>Theory of computation~Hoare logic</concept_desc>
			<concept_significance>300</concept_significance>
		</concept>
		<concept>
			<concept_id>10003752.10003790.10003800</concept_id>
			<concept_desc>Theory of computation~Higher order logic</concept_desc>
			<concept_significance>500</concept_significance>
		</concept>
		<concept>
			<concept_id>10003752.10003790.10003793</concept_id>
			<concept_desc>Theory of computation~Modal and temporal logics</concept_desc>
			<concept_significance>300</concept_significance>
		</concept>
	</ccs2012>
\end{CCSXML}
	
\ccsdesc[500]{Theory of computation~Program verification}
\ccsdesc[500]{Theory of computation~Pre- and post-conditions}
\ccsdesc[500]{Theory of computation~Categorical semantics}
\ccsdesc[300]{Theory of computation~Hoare logic}
\ccsdesc[500]{Theory of computation~Higher order logic}
\ccsdesc[300]{Theory of computation~Modal and temporal logics}
\keywords{weakest precondition, CPS transformation, category theory, program verification, computational effects}

\maketitle

\section{Introduction}
This paper is about a general framework for verification based on weakest preconditions \cite{dijkstra1975}.
Our framework subsumes two earlier approaches with diverse applications: for safety properties \cite{kobayashi2018} and for the expected cost of randomised algorithms \cite{avanzini2021}, establishing a formal connection between these syntactic verification methods \cite{kobayashi2018,avanzini2021} and categorical semantics of generic weakest preconditions \cite{aguirre2020}.
We also demonstrate how our framework goes beyond these two previous examples by providing new verification methods for (1) higher moments of cost of randomised programs and (2) conditional weakest pre-expectations for probabilistic programs with conditioning.
Technically, our framework is based on the connection between continuation passing and weakest preconditions.
We achieve a syntactic verification method by extracting it from a new categorical foundation.

\paragraph{Background: weakest preconditions and program verification.}
Weakest preconditions~\cite{dijkstra1975} are useful notions for program verification.
Given a program and a Hoare-style specification (i.e.\ a precondition and a postcondition for the program), the satisfaction of the specification can be reduced to constraint solving assuming that we can compute weakest preconditions as formulas.
The combination of weakest preconditions and constraint solvers yields Hoare-logic style verification tools such as Why3~\cite{filliatre2013} and Boogie~\cite{barnett2006}.

\paragraph{Generic weakest preconditions for various computational effects.}
There are many variations of weakest preconditions for various kinds of programs.
For example, for programs that may diverge, there are two variations for total and partial correctness depending on whether we consider that a postcondition is satisfied when a program diverges.
For nondeterministic programs, we have weakest preconditions for may and must correctness depending on whether a postcondition should be satisfied for some/any possible output values.
For probabilistic programs, the weakest pre-expectation~\cite{mciver2001} and the expected runtime transformer~\cite{kaminski2018} are studied as quantitative extensions of weakest preconditions.
To give a uniform account of these variations, there is a line of research~\cite{goncharov2013,hino2016,hasuo2015,aguirre2020} on category-theoretic frameworks of weakest preconditions.
These abstract frameworks successfully capture the semantic structures of the examples above and potentially lead to new applications of weakest preconditions while minimising problem-specific considerations.

\paragraph{Problem.}
However, existing category-theoretic frameworks mainly focus on the general semantics of weakest preconditions and often lack general syntactic aspects.
This makes it difficult to automate the computation of generic weakest preconditions.
For example, \cite{aguirre2020} considers a \emph{semantic} weakest precondition transformer $\mathrm{wp}^{\emalgsymbol}[f] : \category{C}(Y, \answerobj) \to \category{C}(X, \answerobj)$ defined by $\mathrm{wp}^{\emalgsymbol}[f](q) = \emalgsymbol \comp T q \comp f$ where $T$ is a monad on a category $\category{C}$, $\emalgsymbol : T \answerobj \to \answerobj$ is an EM $T$-algebra, and $f : X \to T Y$ represents the interpretation of a program.
Although this gives a nice general \emph{semantic} definition of various weakest precondition transformers, we could hardly expect that we can automatically compute the semantic weakest precondition as is because the interpretation $f : X \to T Y$ of a program is already hard to compute when a program is written in a realistic programming language (e.g.\ one that allows recursive functions).

Suppose we have a program $M$ and a formula $Q$ that represents a postcondition.
When automating the computation of weakest preconditions, we want to compute a formula $P$ that represents the weakest precondition for $M$ and $Q$.
More formally, the problem here is to compute a formula $P$ such that $\interpret{P} = \mathrm{wp}^{\emalgsymbol}[\interpret{M}](\interpret{Q})$, which reads the interpretation of the formula $P$ is equal to the semantic weakest precondition defined by the interpretations of $M$ and $Q$; and this should be done without computing the interpretation $\interpret{M}$.
We call such $P$ the \emph{syntactic} weakest precondition.
In this paper, we consider the case where $M$ is a functional program and how such computation of syntactic weakest preconditions yields syntactic verification methods of functional programs such as \cite{kobayashi2018,avanzini2021}.

\begin{figure}[tbp]
	\centering
	\begin{tikzpicture}[font=\small]
		\node[draw] (syn) at (-1.3, 0) {Syntax};
		\node[draw] (sem) at (-1.3, -1.1) {Semantics};
		\node[draw] (prog) at (0, 0.5) {Program};
		\node[draw] (logic) at (6, 0.5) {Formula};
		\node at (0, 0) (prog_syn) {$M$};
		\node at (6, 0) (log_syn) {$\CPS{M}$};
		\node at (0, -1.1) (prog_sem) {$\mathcal{A}\interpret{M}$};
		\node at (4, -1.1) (wpt) {$\mathrm{wp}^{\emalgsymbol}[\mathcal{A}\interpret{M}]$};
		\node at (6, -1.1) (log_sem) {$\mathcal{A}^{\emalgsymbol} \interpret{\CPS{M}}$};
		\draw[|->] (prog_syn) -- node[midway, right] {interpret} (prog_sem);
		\draw[|->] (log_syn) -- node[midway, left] {interpret} (log_sem);
		\draw[|->] (prog_syn) -- node[midway, above, align=center, font=\normalsize] {CPS transformation} (log_syn);
		\draw[|->] (prog_sem) -- node[midway, below, align=center] (wpt_label) {Semantic weakest \\ precondition transformer} (wpt);
		\path (wpt) -- node[midway] {$=$} (log_sem);
	\end{tikzpicture}
	\caption{CPS transformations as a syntactic counterpart of weakest precondition transformers. Here, $\mathcal{A}\interpret{-}$ is the interpretation of programs, and $\mathcal{A}^{\emalgsymbol}\interpret{-}$ is the interpretation of formulas.}
	\label{fig:cps_wpt}
\end{figure}

\paragraph{Our result.}
In this paper, we provide a general category-theoretic framework for computing \emph{syntactic} weakest preconditions.
Using our framework, we establish a formal connection between two existing syntactic verification methods \cite{kobayashi2018,avanzini2021} and categorical semantics of generic weakest preconditions \cite{aguirre2020}, and also obtain a new method for verifying randomised programs.
In doing so, a CPS transformation~\cite{plotkin1975} plays a crucial role.

Specifically, we consider the following setting.
As a programming language, we consider a functional language with computational effects and recursion.
Pre/post-conditions are written in a generalised version of higher-order modal fixed-point logic (HFL) \cite{viswanathan2004}.
In this setting, we can compute syntactic weakest preconditions using a CPS transformation (the top edge in Fig.~\ref{fig:cps_wpt}).
The soundness of the syntactic weakest preconditions (the equation at the bottom right of Fig.~\ref{fig:cps_wpt}) is proved in Theorem~\ref{thm:cps-is-wpt-with-recursion}.

Intuitively, the connection between weakest preconditions for functional programs and a CPS transformation can be understood as follows.
Given a well-typed program $x : \tau \vdash M : \rho$, the CPS transformation $\CPS{({-})}$ gives a well-typed formula $x : \CPS{\tau} \vdash \CPS{M} : (\CPS{\rho} \to \answertype) \to \answertype$.
Here, we use a type of truth values $\answertype$ as an answer type.
By reordering the arguments of $\CPS{M}$, we can regard $\CPS{M}$ as a function of type $(\CPS{\rho} \to \answertype) \to (\CPS{\tau} \to \answertype)$
Now, we can see that $\CPS{M}$ has the same type as weakest precondition transformers.
The weakest precondition transformer for $M$ is a function of type $(\rho \to \answertype) \to (\tau \to \answertype)$ where $\rho \to \answertype$ is the type of postconditions and $\tau \to \answertype$ is the type of preconditions.
If we have $\CPS{\tau} = \tau$ and $\CPS{\rho} = \rho$, which is the case when $\tau$ and $\rho$ are ground types, then these two types coincide.
Moreover, the CPS-transformed term $\CPS{\tau}$ and the weakest precondition transformer for $M$ perform the same computation: they both pass the result of $M$ to a continuation or a postcondition.
This connection holds for various kinds of weakest preconditions.
To the best of our knowledge, our Theorem~\ref{thm:cps-is-wpt-with-recursion} is the most general one that proves this.

Computing weakest preconditions by a CPS transformation makes it easier to verify programs because the CPS transformation turns \emph{effectful} programs into HFL formulas, which are basically terms of pure lambda calculus.
The verification of pure lambda calculus is usually easier than the verification of effectful programs, and thus, many sophisticated methods are available for the former (e.g., refinement type systems~\cite{flanagan2006}).
Combined with such methods, our syntactic translation from effectful programs to pure terms potentially leads to a powerful approach to program verification.
In this paper, we will focus on its theoretical foundation.

Our framework is general in the sense that it contains several parameters that can be instantiated for various problems of program verification (Fig.~\ref{fig:overview}).
The syntax of programs is parameterised by base types, effect-free constants, and algebraic operations so that our framework can cover various programs, examples of which will be provided in Section~\ref{sec:examples}.
The semantics is also parameterised so that base types, effect-free constants, and algebraic operations can be interpreted appropriately.
The semantics of generic weakest preconditions is defined based on the results of~\cite{aguirre2020}, which uses an Eilenberg--Moore algebra $\emalgsymbol$ as a parameter.
Once these parameters are fixed, the syntax and semantics of HFL formulas are determined accordingly, and we automatically get a sound syntactic translation (CPS transformation) to obtain the weakest precondition.
We don't have to repeat parameter-specific proofs of soundness for each combination of parameters because we have a general soundness theorem (Theorem~\ref{thm:cps-is-wpt-with-recursion}) proved via category-theoretic abstraction.

\begin{figure}[tbp]
	\textbf{Parameters}:
	\begin{itemize}
		\item Syntax of programs $\lambda_c(\Sigma)$ is parameterised by $\Sigma$.
		\item Semantics of programs $\mathcal{A}\interpret{-}$ is parameterised by $\mathcal{A}$.
		\item Semantics of generic wp $\mathrm{wp}^{\emalgsymbol} {[{-}]}$ is parameterised by $\emalgsymbol$.
	\end{itemize}
	\textbf{What we get from our framework}:
	\begin{itemize}
		\item Higher-order modal fixed-point logic (Syntax $\lambda_{\mathrm{HFL}}(\Sigma)$ / Semantics $\mathcal{A}^{\emalgsymbol} \interpret{-}$)
		\item Syntactic computation of weakest preconditions ($\CPS{M}$ such that $\mathrm{wp}^{\emalgsymbol}[\mathcal{A}\interpret{M}]\ =\ \mathcal{A}^{\emalgsymbol} \interpret{\CPS{M}}$)
	\end{itemize}
	\caption{Overview of our framework.}
	\label{fig:overview}
\end{figure}

By instantiating parameters appropriately, we obtain several instances as shown in Table~\ref{tab:instances}.
As the simplest situation, we have instances of total and partial correctness where we do not have any computational effect except for non-termination caused by recursion.
We can also capture two existing works of CPS-based program verification (for safety properties~\cite{kobayashi2018} and for expected cost analyses~\cite{avanzini2021}) as instances, which justifies the design of our framework.
Our framework also yields a new application: a CPS-based verification method for analysing higher moments of cost of functional probabilistic programs.

These instances exemplify the usefulness of our framework.
Firstly, our framework can be used to naturally extend weakest precondition transformers for \emph{imperative} programs to \emph{higher-order} functional programs.
EM algebras used to define generic weakest preconditions~\cite{aguirre2020} are often obtained by analysing weakest precondition transformers for imperative programs.
Once such EM algebras are obtained, our framework provides a way to use the same weakest precondition transformers for higher-order programs.
For example, the instance of cost moment analysis in this paper gives a higher-order extension of~\cite{kura2019}.
Secondly, our framework provides a uniform understanding of existing verification methods~\cite{kobayashi2018,avanzini2021} by establishing a formal connection between these verification methods and the work~\cite{aguirre2020} on semantic weakest preconditions.
Note that their soundness proofs were tightly coupled with their specific problem settings, while our soundness proof uniformly holds for many problems.
This theoretical foundation paves a way to adapt these methods to different situations.
In fact, the instance of cost moment analysis extends the expected cost analysis in~\cite{avanzini2021} as well.

\begin{table*}[tbp]
	\caption{Instances of Theorem~\ref{thm:cps-is-wpt-with-recursion}.
	Here, $U$ is the set of states of a finite automaton, and $2^U$ is the powerset of $U$.
	The computational effect of non-termination is implicitly assumed since we consider languages with recursion.
	More instances can be found in \protect\referappendix{sec:more-instances}{H}.}
	\label{tab:instances}
	\centering
	\begin{tabular}{|c|c|c|c|c|}
		\hline
		Property & Computational effects & Category & Truth values & CPS \\
		\hline
		Total correctness & \multirow{2}{*}{(no effects)} & \multirow{2}{*}{$\omegaCPO$} & $\{ \mathbf{true}, \mathbf{false} \}$ & Example~\ref{ex:total-partial-correctness-cps} \\
		Partial correctness & & & $\{ \mathbf{true}, \mathbf{false} \}$ & Example~\ref{ex:total-partial-correctness-cps} \\
		\hline
		Safety property & output \& nondeterminism & $\omegaCPO$ & $2^U$ & Example~\ref{ex:trace-propery-extended} \\
		\hline
		Expected cost& \multirow{2}{*}{probability \& cost} & \multirow{2}{*}{$\omegaQBS$} & $[0, \infty]$ & Example~\ref{ex:expected-cost-extended} \\
		Cost moment & & & $[0, \infty]^n$ & Example~\ref{ex:cost-moment-cps} \\
		\hline
		Conditional wp & probability \& conditioning & $\omegaQBS$ & $[0, \infty] \times [0, 1]$ & Example~\ref{ex:conditional-wp-cps} \\
		\hline
	\end{tabular}
\end{table*}

Our contributions are summarised as follows.
\begin{itemize}
	\item We provide a general framework to syntactically compute weakest preconditions for various computational effects and various problems of program verification.
	In Theorem~\ref{thm:cps-is-wpt-with-recursion}, we show that a CPS transformation gives syntactic weakest preconditions for functional programs with general computational effects and recursion.
	Since the CPS transformation ``purifies'' effectful programs into pure terms, our framework makes program verification easier and also makes it easier to apply other sophisticated verification methods for pure programs to the verification of effectful programs.
	\item We show that by choosing the parameters of our framework, existing methods for program verification~\cite{kobayashi2018,avanzini2021} can be reproduced as instances of Theorem~\ref{thm:cps-is-wpt-with-recursion}.
	This establishes a formal connection between CPS-based methods for program verification~\cite{kobayashi2018,avanzini2021} and a categorical generalisation of weakest preconditions \cite{aguirre2020}, which gives a theoretical guide to adapt these methods to new situations.
	\item We obtain new verification methods for cost moment analyses and the conditional weakest pre-expectation for higher-order programs as instances of Theorem~\ref{thm:cps-is-wpt-with-recursion}.
	These examples demonstrate the potential of our framework to provide new syntactic verification methods.
\end{itemize}

\section{Examples of Verification Problems}\label{sec:examples}
In this section, we show several problems of program verification as examples and explain how we can use CPS transformations for these problems.
All of the examples are instances of our Theorem~\ref{thm:cps-is-wpt-with-recursion} (see Table~\ref{tab:instances}), which we will later explain in detail.
Section~\ref{subsec:example-total-partial} deals with the simplest problem of verifying total/partial correctness; Section~\ref{subsec:trace-property} is from~\cite{kobayashi2018}; Section~\ref{subsec:expected-cost} is from~\cite{avanzini2021}; and Section~\ref{subsec:cost-moment} is a new instance, which extends~\cite{avanzini2021,kura2019}.
Note that the point of this paper is to provide a uniform framework that subsumes these problems whereas~\cite{kobayashi2018,avanzini2021} are limited to specific problems.
This section aims to illustrate the range of verification problems supported by our framework.
Therefore, we intentionally keep example programs simple.
More advanced programs can be found in, e.g., \cite{kobayashi2018,avanzini2021}, but there is nothing happening beyond the description below: it is just a matter of applying CPS transformations after all.

\subsection{Total/Partial Correctness}\label{subsec:example-total-partial}
To get an idea of the relationship between weakest preconditions and a CPS transformation, we begin with a simple situation.
Suppose that we want to obtain weakest preconditions for the function $\mathrm{fact} : \mathbf{int} \to \mathbf{int}$ that takes an integer $n$, returns the factorial $n!$ if $n \ge 0$, and diverges otherwise.
The program can be written as follows using OCaml-like syntax.
\[ \letrec{\mathrm{fact}}{n}{\ifexpr{n = 0}{1}{n * \mathrm{fact}\ (n - 1)}}{\mathrm{fact}\ n} \]

Given a postcondition $Q$ (a predicate on the output), the weakest precondition is, if we ignore the case where $\mathrm{fact}$ diverges, the predicate $P$ on the input such that an input $n$ satisfies $P(n)$ if and only if the output $\mathrm{fact}\ n$ satisfies $Q (\mathrm{fact}\ n)$.
This is similar to what the CPS-transformed program does.
\begin{align}
	&\mathbf{let}\ \mathbf{rec}\ \mathrm{fact}'\ n\ k \ =\ \ifexpr{n = 0}{k\ 1}{\mathrm{fact}'\ (n - 1)\ (\lambda r. k\ (n * r))}\ \mathbf{in}\ \mathrm{fact}'\ n\ Q \label{eq:total-partial-cpsed}
\end{align}
The type of the CPS-transformed program $\mathrm{fact}'$ is $\mathbf{int} \to (\mathbf{int} \to \answertype) \to \answertype$ where $\answertype$ is an answer type.
It takes a pair of an input $n : \mathbf{int}$ and a continuation $k : \mathbf{int} \to \answertype$ and returns the value that is equal to $k (\mathrm{fact}\ n) : \answertype$.
If we pass a postcondition $Q : \mathbf{int} \to \answertype$ as a continuation, we get a program $\mathrm{fact}'\ n\ Q$ that returns $Q (\mathrm{fact}\ n)$ for a given input $n$.
This is exactly the same as what the weakest precondition does if we interpret $\answertype$ as a type of truth values.
More formally, our Theorem~\ref{thm:cps-is-wpt-with-recursion} ensures that the CPS transformation actually transforms a program into a term of pure lambda calculus (we think of it as a higher-order logic formula) that represents the weakest precondition transformer.

Let's consider how the divergence of the program affects the weakest preconditions, which was left implicit above.
There are two possible ways to deal with divergence in weakest preconditions, which correspond to how we interpret $\mathbf{let}\ \mathbf{rec}$ in~\eqref{eq:total-partial-cpsed}.
If we interpret $\mathbf{let}\ \mathbf{rec}$ as the least fixed point, then we obtain the weakest precondition for \emph{total correctness}, which requires the program to terminate.
On the other hand, if we interpret $\mathbf{let}\ \mathbf{rec}$ as the greatest fixed point, then we obtain the weakest precondition for \emph{partial correctness} (or the weakest \emph{liberal} precondition), which does not require termination.
We will later explain that our framework captures this distinction by how we define order relations on $\answertype$.

\subsection{Safety Property}\label{subsec:trace-property}
We consider the problem of checking safety properties: given a program $M$ that outputs sequences of events and an automaton $\mathfrak{A}$ that accepts a regular language, we consider the problem of checking whether any possible output from the program $M$ is accepted by the automaton $\mathfrak{A}$.
Here, we assume that the automaton $\mathfrak{A}$ is deterministic (i.e.\ has at most one transition for each state-event pair) and that all states are final states, which means that the corresponding regular language is prefix-closed.
A CPS-based verification method for checking the safety property is studied in~\cite[Section~6]{kobayashi2018} and~\cite{kobayashi2009a}.

\begingroup
\setlength{\columnsep}{0pt}
\begin{wrapfigure}{r}{11em}
	\vspace{-1em}
	\begin{tikzpicture}[node distance=2cm,baseline=(q0)]
		\node[state,initial] (q0) {$q_0$};
		\node[state, right of=q0] (q1) {$q_1$};
		\path[->] (q0) edge [loop above] node[right] {read} ()
			(q0) edge node [above] {close} (q1);
	\end{tikzpicture}
	\vspace{-1em}
\end{wrapfigure}
For example, consider verifying that the following program \eqref{eq:trace-example} does not read a file after closing it.
The specification is given by the automaton on the right, which corresponds to the regular expression $(\operation{read})^{*}\ \operation{close}$.
\begin{equation}
	\letrec{f}{x}{\ifexpr{{*}}{\mathrm{close}(x)}{(\mathrm{read}(x); f\ x)}}{f\ \mathrm{some\_file}}
	\label{eq:trace-example}
\end{equation}
Here, $\ifexpr{{*}}{M}{N}$ means nondeterministic branching.
Since we are interested only in the sequence of file operations, we slightly simplify the program.
\begin{equation}
	\letrec{f}{()}{\ifexpr{{*}}{\mathrm{ev\_close}()}{(\mathrm{ev\_read}(); f\ ())}}{f\ ()} \label{eq:example-program-trace-property}
\end{equation}
\endgroup

In \eqref{eq:example-program-trace-property}, we focus on the content of the ``debug log'', which contains sequences of events.
In this case, an event is either $\operation{close}$ or $\operation{read}$.
The functions $\operation{ev\_close}()$ and $\operation{ev\_read}()$ append these events to the ``debug log''.
Now, the problem is whether any possible event sequence in the ``debug log'' matches the specification $(\operation{read})^{*}\ \operation{close}$.

As proposed in~\cite{kobayashi2018}, a safety property can be reduced to the validity checking of a formula of higher-order modal fixed-point logic (HFL)~\cite{viswanathan2004}.
Combined with constraint solvers for HFL, this reduction leads to an automated verification method for safety properties.
We focus only on the reduction to HFL because constraint solving is out of the scope of our paper.
By applying their reduction to the example above, we get the following HFL formula~\eqref{eq:example-formula-trace-property}, and the safety property holds if and only if \eqref{eq:example-formula-trace-property} is true at the initial state $q_0$ of the automaton.
\begin{equation}
	\nu F. \langle \mathrm{close} \rangle \mathbf{true} \land \langle \mathrm{read} \rangle F \label{eq:example-formula-trace-property}
\end{equation}
Here, $\langle a \rangle$ is a modal operator such that $\langle a \rangle \phi$ holds if there exists a transition labelled with $a$ such that $\phi$ holds after the transition.
Note that \eqref{eq:example-formula-trace-property} is interpreted as a set of states of the automaton, and the greatest fixed point operator $\nu F$ is the greatest with respect to the inclusion order.

This reduction from \eqref{eq:example-program-trace-property} to \eqref{eq:example-formula-trace-property} is an instance of our Theorem~\ref{thm:cps-is-wpt-with-recursion}.
We apply a CPS transformation to \eqref{eq:example-program-trace-property}, in which nondeterminism branching and event operations ($\mathrm{ev\_read}()$ / $\mathrm{ev\_close}()$) are mapped to conjunction $\land$ and modal operators $\langle \mathrm{read} \rangle$ / $\langle \mathrm{close} \rangle$, respectively.
Then, we obtain a function $f' : \mathbf{unit} \to (\mathbf{unit} \to \answertype) \to \answertype$ where the answer type $\answertype$ here is a type of assignments of true or false to each state of the automaton.
\begin{equation}
	\letrec{f'}{x\ k}{\langle \mathrm{close} \rangle (k\ x) \land \langle \mathrm{read} \rangle (f'\ x\ k)}{f'\ ()\ (\lambda r. \mathbf{true})} \label{eq:example-cps-trace-property}
\end{equation}
Here, we pass the always-true proposition $\lambda r. \mathbf{true} : \mathbf{unit} \to \answertype$ as a continuation and interpret $\mathbf{let}\ \mathbf{rec}$ as the greatest fixed point (the reason for these choices will be explained later in Example~\ref{ex:trace-property-wp},\ref{ex:trace-property-cps}).
Then, the safety property holds if and only if \eqref{eq:example-cps-trace-property} is true at the initial state $q_0$ by Theorem~\ref{thm:cps-is-wpt-with-recursion}.
Note that \eqref{eq:example-formula-trace-property} is obtained from \eqref{eq:example-cps-trace-property} by defining $F \coloneqq f'\ ()\ (\lambda r. \mathbf{true})$.

\subsection{Expected Cost Analysis}\label{subsec:expected-cost}
Given a randomised program, we consider the problem of estimating the expected cost (or runtime) of the program.
\cite{avanzini2021} proposed a CPS-based method to solve this problem.
In fact, their result~\cite{avanzini2021} can be understood as an instance of our framework, and our CPS transformation and that of~\cite{avanzini2021} coincide in this case.

For example, consider the expected cost of a random walk.
\[ \letrec{f}{n}{\ifexpr{n \le 0}{()}{(f\ (n-1) +_p f\ (n+1))^{\checkmark}}}{f\ 42} \]
Here, $+_p$ is a probabilistic branching operator whose left operand is taken with probability $p$, and $({-})^{\checkmark}$ means incrementing the cost.
This program reads ``if the current state is $n > 0$, then the cost increases by 1, and the next state is $n-1$ with probability $p$ and $n+1$ with probability $1-p$''.
We want to know the expected number of transitions (marked by $({-})^{\checkmark}$) until we reach $n \le 0$.

We can apply our Theorem~\ref{thm:cps-is-wpt-with-recursion} to this problem.
By applying a CPS transformation, we get a pure function $f' : \mathbf{int} \to (\mathbf{unit} \to \answertype) \to \answertype$ where $\answertype = \mathbf{real}^{+}$ is a type of non-negative extended real numbers $r \in [0, \infty]$.
\begin{align}
	&\mathbf{let}\ \mathbf{rec}\ f'\ n\ k\ =\ \ifexpr{n \le 0}{k\ ()}{1 + p \cdot (f'\ (n-1)\ k) + (1-p) \cdot (f'\ (n+1)\ k)}\ \mathbf{in} \\
	&f'\ 42\ (\lambda r. 0)
\end{align}
Here, the CPS transformation maps $+_p$ and $({-})^{\checkmark}$ to a weighted sum $p \cdot ({-}) + (1 - p) \cdot ({-})$ and an addition $1 + ({-})$, respectively.
It should be noted that the meaning of $\mathbf{let}\ \mathbf{rec}$ changes after the CPS transformation.
In this case, $\mathbf{let}\ \mathbf{rec}$ after CPS transformation is interpreted as the least fixed point with respect to the standard order on $[0, \infty]$, the reason for which will be explained later in this paper.
Intuitively, the continuation $k : \mathbf{unit} \to \answertype$ of $f'$ is a function that represents the expected cost of a continuation of $f$, and $f'\ n\ k : \answertype$ represents the expected cost of the program $f\ n$ followed by the continuation of $f$.
Therefore, if we use $\lambda r. 0 : \mathbf{unit} \to \mathbf{real}^{+}$ as a continuation of $f'$, we get the expected cost of $f\ n$ itself.
More advanced examples (e.g.\ cost of higher-order programs) can be found in~\cite{avanzini2021} and \referappendix{sec:advanced-example}{C}.

\subsection{Cost Moment Analysis}\label{subsec:cost-moment}
We consider the same programs as the expected cost analysis (Section~\ref{subsec:expected-cost}) but a different problem here, that is, the problem of estimating higher moments $\mathbb{E}[C^m]$ of cost $C$ instead of the expected cost $\mathbb{E}[C]$.
This extension allows us to obtain more information about the probability distribution of the cost.
For example, we can get tighter upper bounds of tail probabilities using higher moments~\cite{kura2019}.

Our Theorem~\ref{thm:cps-is-wpt-with-recursion} allows us to extend the CPS-based expected cost analysis~\cite{avanzini2021} to analyses of higher moments for functional randomised programs.
This also extends cost moment analyses for imperative programs~\cite{kura2019} to higher-order programs.
Here, we apply a CPS transformation using $\answertype = (\mathbf{real}^{+})^m$ as an answer type.
The intuition is that $(c_1, \dots, c_m) : (\mathbf{real}^{+})^m$ represents a tuple $(\mathbb{E}[C], \dots, \mathbb{E}[C^m])$ of moments, which follows the idea proposed in~\cite{kura2019}.
For simplicity, suppose that we want to know the second moment ($m = 2$).
Using $\answertype = \mathbf{real}^{+} \times \mathbf{real}^{+}$, our CPS transformation gives the following.
\begin{align}
	&\mathbf{let}\ \mathbf{rec}\ f'\ n\ k\ =\ \ifexpr{n \le 0}{k\ ()}{1 \oplus (p \cdot (f'\ (n-1)\ k) + (1-p) \cdot (f'\ (n+1)\ k))}\ \mathbf{in} \\
	&f'\ 42\ (\lambda r. (0, 0))
\end{align}
Here, $f'\ 42\ (\lambda r. (0, 0)) : \mathbf{real}^{+} \times \mathbf{real}^{+}$ represents the pair of the first and the second moments of cost of $f\ 42$.
Note that a function $1 \oplus ({-}) : \mathbf{real}^{+} \times \mathbf{real}^{+} \to \mathbf{real}^{+} \times \mathbf{real}^{+}$ (called \emph{elapse function} in~\cite{kura2019}) is defined by $1 \oplus (x_1, x_2) \coloneqq (1 + x_1, 1 + 2 x_1 + x_2)$, which implements the binomial expansions $\mathbb{E}[(C + 1)^i] = \mathbb{E}[C^i] + i \mathbb{E}[C^{i-1}] + \dots + 1$ for $i = 1, 2$ required for incrementing cost $C \mapsto C + 1$.

\subsection{Conditional Weakest Pre-expectation}\label{subsec:conditional-wp}
The weakest pre-expectation \cite{mciver2001} is an extension of the weakest precondition for probabilistic programs.
To reason about probabilistic programs, real-valued predicates are used instead of boolean-valued predicates, and these are called \emph{pre-/post-expectations}.
The weakest pre-expectation gives the expected value of a post-expectation with respect to the distribution of the results of a computation.

Conditioning is one of the main features of probabilistic programs.
The weakest pre-expectation for imperative probabilistic programs \emph{with conditioning} is studied in \cite{olmedo2018}.
Here, we consider extending the conditional weakest pre-expectation to \emph{higher-order} probabilistic programs with conditioning.
As a running example, consider the following program.
\begin{align}
	\mathbf{let}\ \mathbf{rec}\ f\ x\ =&\quad\mathbf{let}\ b_1, b_2 = \mathrm{bern}(1/2), \mathrm{bern}(1/2)\ \mathbf{in}\ \mathrm{observe}(\lnot b_2 \lor b_1); \\
	&\quad\ifexpr{b_2}{f\ (); f\ ()}{\ifexpr{b_1}{f\ ()}{()}}
\end{align}

The program $f\ ()$ flips two fair coins ($b_1$ and $b_2$), observes that $(b_1, b_2) \in \{ (\mathbf{false}, \mathbf{false}),\allowbreak (\mathbf{true}, \mathbf{false}),\allowbreak (\mathbf{true}, \mathbf{true}) \}$ holds, and decides the number of recursive calls to itself based on the result of the coin flips.
Suppose that we are interested in the conditional termination probability, which is equal to the conditional weakest pre-expectation for the constant post-expectation $1$.

By Theorem~\ref{thm:cps-is-wpt-with-recursion}, we can obtain the conditional weakest pre-expectation by applying a CPS transformation.
As an answer type, we use $\answertype = \mathbf{real}^{+} \times \mathbf{real}_{[0, 1]}$ where $\mathbf{real}_{[0, 1]}$ is the type of real numbers in the unit interval $[0, 1]$.
The intuition is that the first component of type $\mathbf{real}^{+}$ represents the (unnormalised) weakest pre-expectation and the second component of type $\mathbf{real}_{[0, 1]}$ represents the probability that all observations are satisfied.
The CPS-transformed term is given as follows.
\begin{alignat}{3}
	\mathbf{let}\ \mathbf{rec}\ f'\ x\ k =&\quad 1/2 \cdot (1/2 \cdot f'\ ()\ (\lambda x. f'\ ()\ k) &&+ 1/2 \cdot f'\ ()\ k&&) \\
	&\quad 1/2 \cdot (1/2 \cdot (0, 0) &&+ 1/2 \cdot k\ ()&&) 
\end{alignat}
Here, coin flips are CPS-transformed to weighted sums, and the outer weighted sum corresponds to $b_1$ and the inner to $b_2$.
Note that $\mathrm{observe}(\mathbf{true})$ is CPS-transformed to $1 \cdot ({-})$ and $\mathrm{observe}(\mathbf{false})$ is CPS-transformed to $0 \cdot ({-})$.
By passing the post-expectation $k = \lambda x. (1, 1)$, the term $f'\ ()\ k$ gives a pair of the unnormalised termination probability and the probability that all observations are satisfied.
Now, the conditional termination probability is obtained as the quotient of the first component divided by the second component.

\section{Source Language}\label{sec:source-language}
We define the syntax and semantics of a functional language with computational effects and recursion.
Technically, we define the $\lambda_c$-calculus with algebraic effects and recursion following~\cite{katsumata2013}, which is expressive enough to write the programs in Section~\ref{sec:examples}.
We will use it as the source language of the CPS transformation in Section~\ref{subsec:cps-transformation}.

\subsection{Syntax}
\begin{definition}[types and ground types]
	Let $B$ be a set of base types.
	We define the set $\mathbf{Typ}(B)$ of \emph{types} and its subset $\mathbf{GTyp}(B)$ of \emph{ground types} as follows where $b$ ranges over $B$.
	\begin{align}
		\mathbf{Typ}(B) \ni\qquad \rho, \tau &\quad\coloneqq\quad b \mid 1 \mid \rho \times \tau \mid 0 \mid \rho + \tau \mid \rho \to \tau \\
		\mathbf{GTyp}(B) \ni\qquad \rho, \tau &\quad\coloneqq\quad b \mid 1 \mid \rho \times \tau \mid 0 \mid \rho + \tau
	\end{align}
	That is, types are built from base types $b \in B$, unit type $1$, product types ${\times}$, empty type $0$, coproduct types ${+}$, and function types $\to$.
	We write $\underline{n} \coloneqq 1 + \dots + 1$ for the $n$-fold coproduct of $1$ (Table~\ref{tab:syntactic-sugar}).
\end{definition}

The source language is parameterised to cover various situations.
We define the parameter for syntax as follows.
\begin{definition}[$\lambda_c$-signature]\label{def:lambda-c-signature}
	A \emph{$\lambda_c$-signature} $\Sigma$ is a tuple $(B, K, O, \mathrm{ar}, \mathrm{car})$ where $B$ is a set of base types, $K$ is a set of symbols for effect-free constants, $O$ is a set of symbols for algebraic operations, and $\mathrm{ar}, \mathrm{car} : K \cup O \to \mathbf{GTyp}(B)$ are functions assigning arities and coarities, respectively, to constants and algebraic operations.
	We sometimes write $\Sigma = (B, K, O) = (B, \{ \dots, c : \mathrm{ar}(c) \rightarrowtriangle \mathrm{car}(c), \dots \}, \{ \dots, o : \mathrm{ar}(o) \rightarrowtriangle \mathrm{car}(o), \dots \})$ as a convenient notation.
	We say $o \in O$ is a an \emph{$n$-ary algebraic operation} if $\mathrm{ar}(o) = \underline{n}$ and $\mathrm{car}(o) = 1$.
\end{definition}

For example, we often consider the following base types: the type of integers is denoted by $\mathbf{int}$, and the type of real numbers is denoted by $\mathbf{real}$.
Typical examples of effect-free constants include basic arithmetic operators (e.g.\ ${+} : \mathbf{int} \times \mathbf{int} \rightarrowtriangle \mathbf{int}$) and comparison operators (e.g.\ ${\le} : \mathbf{int} \times \mathbf{int} \rightarrowtriangle 1 + 1$ where $1 + 1$ is used as the type of boolean values).
Our language include algebraic operations as primitives that cause computational effects.
The probabilistic branching operator ${+}_{p} : 1 + 1 \rightarrowtriangle 1$ for probabilistic programs is an example of an algebraic operation.
The term $M_1 +_{p} M_2$ tosses a biased coin and invokes $M_1$ with probability $p$ and $M_2$ with probability $1 - p$.

\begin{definition}[terms]\label{def:source-term}
	Given a $\lambda_c$-signature $\Sigma = (B, K, O, \mathrm{ar}, \mathrm{car})$, \emph{terms} of the $\lambda_c$-calculus are defined by variables, effect-free constant for $c \in K$, generic effect for $o \in O$, nullary and binary tuples, projections, lambda abstractions, applications, injections, nullary and binary case analyses for $M : 0$ and $M : \rho_1 + \rho_2$, respectively, and recursion:
	\begin{align}
		M, N \quad\coloneqq\quad &x \mid c\ M \mid \geneff{o}\ M \mid () \mid (M, N) \mid \pi_i M \mid \lambda x {:} \rho. M \mid M\ N \\
		&\quad \mid \iota_i M \mid \delta(M) \mid \delta(M, x_1 {:} \rho_1. N_1, x_2 {:} \rho_2. N_2) \mid \mu f : \rho \to \tau. M \qquad\text{where $i \in \{1, 2\}$.} \label{eq:lambda-c-calculus}
	\end{align}
	We sometimes omit type annotations and write, e.g., $\delta(M, x_1. N_1, x_2. N_2)$, $\lambda x. M$, and $\mu f. M$.
	Types and terms are sometimes referred to as \emph{$\lambda_c(\Sigma)$-types} and \emph{$\lambda_c(\Sigma)$-terms} to make explicit that they are types and terms of the $\lambda_c$-calculus with a $\lambda_c$-signature given by $\Sigma$.
	We often omit $()$ in $c\ ()$ and $\geneff{o}\ ()$ and just write $c$ and $\geneff{o}$, respectively.
	Table~\ref{tab:syntactic-sugar} defines useful syntactic sugar for terms.
\end{definition}
It is well-known~\cite{plotkin2003} that \emph{generic effects} bijectively correspond to \emph{algebraic operations}.
For example, the algebraic operation for probabilistic branching ${+}_{p} : 1 + 1 \rightarrowtriangle 1$ corresponds to the generic effect $\geneff{{+}_{p}}\ () : 1 + 1$, which represents a program that simply tosses a coin.
In Definition~\ref{def:source-term}, we formulate the source language using generic effects because it makes our CPS transformation slightly simpler.
Using the bijective correspondence, we can define a $\lambda_c$-term $o\ (M, N)$ as syntactic sugar for $M\ (\geneff{o}\ N)$ where $o \in O$, $M : \mathrm{ar}(o) \to \rho$, and $N : \mathrm{car}(o)$.

\begin{definition}[well-typed terms]
	A \emph{context} is a list of pairs of variables and types: $\Gamma \coloneqq x_1 {:} \rho_1, \dots, x_n {:} \rho_n$.
	A \emph{well-typed term} $\Gamma \vdash M : \rho$ is defined by standard typing rules (see \referappendix{subsec:source-typing-rules}{A.1} for the full typing rules).
	Specifically, the typing rules for effect-free constants, generic effects, nullary case analysis, and recursion are as follows.
	\begin{mathpar}
		\inferrule{
			\Gamma \vdash M : \mathrm{ar}(c)
		}{
			\Gamma \vdash c\ M : \mathrm{car}(c)
		}
		\and
		\inferrule{
			\Gamma \vdash M : \mathrm{car}(o)
		}{
			\Gamma \vdash \geneff{o}\ M : \mathrm{ar}(o)
		}
		\and
		\inferrule{
			\Gamma \vdash M : 0
		}{
			\Gamma \vdash \delta(M) : \rho
		}
		\and
		\inferrule{
			\Gamma, f : \rho \to \tau \vdash M : \rho \to \tau
		}{
			\Gamma \vdash \mu f : \rho \to \tau. M : \rho \to \tau
		}
	\end{mathpar}
	Note that the typing rule for $\geneff{o}\ M$ might be a bit confusing, but we follow the standard terminology for the arity and the coarity of algebraic effects.
\end{definition}

\begin{table}
	\caption{List of syntactic sugar for types and terms.}\label{tab:syntactic-sugar}
	\begin{tabular}{l|l}
		Syntactic sugar & Meaning \\
		\hline
		$\mathbf{bool}$ \hspace{2em} (boolean type) & $1 + 1$ \\
		$\underline{n}$ & $\underline{0} = 0$, $\underline{1} = 1$, and $\underline{n+1} = \underline{n} + 1$ if $n \ge 1$ \\
		$\mathbf{let}\ x = M\ \mathbf{in}\ N$ & $(\lambda x. N)\ M$ \\
		$M; N$ & $\mathbf{let}\ x = M\ \mathbf{in}\ N$\hspace{2em} (if $x$ does not occur in $N$)\\
		$\letrecchurch{f}{x}{\rho}{\tau}{M}{N}$ & $\mathbf{let}\ f = \mu f. \lambda x. M\ \mathbf{in}\ N$ \\
		$\mathbf{if}\ M\ \mathbf{then}\ N_1\ \mathbf{else}\ N_2$ \hspace{2em} (for $M : \mathbf{bool}$) & $\delta(M, z_1. N_1, z_2. N_2)$ \hspace{2em} ($z_1, z_2$ are fresh)
	\end{tabular}
\end{table}

\begin{example}[total/partial correctness]\label{ex:total-partial-correctness-syntax}
	As the simplest case, we consider a $\lambda_c$-signature $\Sigma = (B, K, O)$ with no algebraic operation $O = \emptyset$.
	In this situation, we can write a program like the factorial function in Section~\ref{subsec:example-total-partial} if $B$ contains a type $\mathbf{int}$ of integers and $K$ contains basic operators like comparison ${\le}_{\mathbf{int}} : \mathbf{int} \times \mathbf{int} \rightarrowtriangle 1 + 1$ and multiplication $({\cdot}) : \mathbf{int} \times \mathbf{int} \rightarrowtriangle \mathbf{int}$.
\end{example}

\begin{example}[safety property]\label{ex:trace-property-syntax}
	We define a $\lambda_c$-signature that covers the situations in Section~\ref{subsec:trace-property}.
	Let $E$ be a finite set of \emph{events} and $\Sigma = (B, K, O)$ be a $\lambda_c$-signature where $O$ consists of a unary operation $\operation{event}_a : 1 \rightarrowtriangle 1$ for outputting an event $a \in E$ and a binary operation ${\join} : 1 + 1 \rightarrowtriangle 1$ for nondeterministic branching.
	Now, we can write nondeterministic branching $\ifexpr{{*}}{M}{N}$ and event operations $\mathrm{ev\_read}()$ and $\mathrm{ev\_close}()$ in Section~\ref{subsec:trace-property} using generic effects because the generic effect $\geneff{\join} : 1 + 1$ for ${\join}$ represents nondeterministic choice between true and false; and the generic effect $\geneff{\operation{event}_a} : 1$ represents outputting an event $a \in E$.
	Note that the output operation is sometimes written as $\operation{write}_a$ in the literature (e.g.~\cite{plotkin2003}), but we stick to the same notation as~\cite{kobayashi2018}.
\end{example}

\begin{example}[expected cost analysis and cost moment analysis]\label{ex:expected-cost-syntax}
	We define a $\lambda_c$-signature for expected cost analyses~\cite{avanzini2021} (Section~\ref{subsec:expected-cost}) and cost moment analyses (Section~\ref{subsec:cost-moment}).
	As algebraic operations, we consider a binary probabilistic branching (or Bernoulli distribution) ${+}_p : 1 + 1 \rightarrowtriangle 1$ for any $p \in [0, 1]$ and a unary tick operator $({-})^{\checkmark} : 1 \rightarrowtriangle 1$ for incrementing the accumulated cost.
	We also allow continuous distributions, which makes the setting slightly beyond~\cite{avanzini2021}.
	For simplicity, we consider the uniform distribution on the unit interval $[0, 1]$ as the only continuous distribution.
	Thus, we define a $\lambda_c$-signature $\Sigma = (B, K, O)$ by $O = \{ {+}_p \mid p \in [0, 1] \} \cup \{ ({-})^{\checkmark}, \operation{unif} \}$ and $\mathbf{real} \in B$ where $\operation{unif} : \mathbf{real} \rightarrowtriangle 1$ is an algebraic operation for sampling from the uniform distribution, that is, $\geneff{\operation{unif}} : \mathbf{real}$ samples a real number from the uniform distribution over $[0, 1]$.
	We define $\lambda_c$-terms $M^{\checkmark}$ and $M +_{p} N$ as syntactic sugars for $M^{\checkmark} \coloneqq \geneff{({-})^{\checkmark}}; M$ and $M +_{p} N \coloneqq \ifexpr{\geneff{{+}_p}}{M}{N}$, respectively.
\end{example}

\begin{example}[conditional weakest preexpectation]\label{ex:conditional-wp-syntax}
	We define a $\lambda_c$-signature for conditional weakest preexpectation~\cite{olmedo2018}.
	Similarly to Example~\ref{ex:expected-cost-syntax}, we consider algebraic operations for probabilistic branching ${+}_p : 1 + 1 \rightarrowtriangle 1$ and sampling from the uniform distribution $\operation{unif} : \mathbf{real} \rightarrowtriangle 1$.
	Moreover, we add an algebraic operation for (soft) conditioning $\operation{score} : 1 \rightarrowtriangle \mathbf{real}_{[0, 1]}$ where $\mathbf{real}_{[0, 1]}$ is the type of real numbers in the unit interval $[0, 1]$.
	That is, $\geneff{\operation{score}}\ L$ re-weights the likelihood of the program trace by multiplying $L \in [0, 1]$.
	Hard conditioning in Section~\ref{subsec:conditional-wp} is a special case of soft conditioning: We can think of $\mathrm{observe}(\mathbf{true})$ as $\geneff{\operation{score}}\ 1$ and $\mathrm{observe}(\mathbf{false})$ as $\geneff{\operation{score}}\ 0$.
	To sum up, we define a $\lambda_c$-signature $\Sigma = (B, K, O)$ by $O = \{ {+}_p : 1 + 1 \rightarrowtriangle 1 \mid p \in [0, 1] \} \cup \{ \operation{unif} : \mathbf{real} \rightarrowtriangle 1, \operation{score} : 1 \rightarrowtriangle \mathbf{real}_{[0, 1]} \}$ and $\mathbf{real}, \mathbf{real}_{[0, 1]} \in B$.
\end{example}

\subsection{Semantics}
We will explain the interpretation of the source language $\lambda_c$.
The interpretation is rather standard, and the outline is as follows.

A categorical model of pure simply typed $\lambda$-calculus with product/coproduct types is given by a bicartesian closed category $\category{C}$~\cite{lambek1986}.
Product and coproduct types are interpreted by categorical products and coproducts in $\category{C}$, respectively, and function types are interpreted by exponential objects in $\category{C}$.
A well-typed term $x_1 : \sigma_1, \dots, x_n : \sigma_n \vdash M : \sigma$ is interpreted as $\interpret{M} : \interpret{\sigma_1} \times \dots \times \interpret{\sigma_n} \to \interpret{\sigma}$.

When we add computational effects to simply typed $\lambda$-calculus, we need a strong monad $T$ to interpret computational effects~\cite{moggi1989}.
In the $\lambda_c$-calculus, function types are interpreted by Kleisli exponentials $\interpret{\sigma \to \tau} = \exponential{\interpret{\sigma}}{T \interpret{\tau}}$ instead of mere exponentials because functions may cause computational effects.
Now, a well-typed term $x_1 : \sigma_1, \dots, x_n : \sigma_n \vdash M : \sigma$ is interpreted as a morphism $\interpret{M} : \interpret{\sigma_1} \times \dots \times \interpret{\sigma_n} \to T \interpret{\sigma}$.

If we further add recursion to a language, we need a (parameterised) fixed-point operator~\cite{simpson2000}, which gives a fixed point $f^{\dagger} : X \to Y$ of a morphism $f : X \times Y \to Y$.
We consider $\omegaCPO$-enriched bicartesian closed categories so that we can define a fixed-point operator by least fixed points in hom-$\omega$cpos.

\subsubsection{Without Recursion}\label{subsubsec:source-semantics-without-recursion}
For the sake of simplicity, we define the semantics of the recursion-free fragment first and then extend it to the full $\lambda_c$-calculus with recursion.

Let $\category{C}$ be a bicartesian closed category (i.e.\ a category with finite products, finite coproducts, and exponential objects) and $T : \category{C} \to \category{C}$ be a strong monad on $\category{C}$ with unit $\eta^{T}_X : X \to T X$, multiplication $\mu^{T}_X : T^2 X \to T X$, and strength $\strength^T_{X, Y} : X \times T Y \to T (X \times Y)$ where $X, Y \in \category{C}$.
Note that we often omit subscripts of natural transformations when they are clear from the context.

The interpretation of our source language $\lambda_c$ is parameterised by the following data.
\begin{definition}[$\lambda_c(\Sigma)$-structure]\label{def:lambda-c-structure}
	Let $\Sigma$ be a $\lambda_c$-signature.
	A \emph{$\lambda_c(\Sigma)$-structure} is a tuple $\mathcal{A} = (\category{C}, T, A, a)$ where $\category{C}$ is a bicartesian closed category, $T$ is a strong monad on $\category{C}$, $A : B \to \category{C}$ is a mapping that assigns an interpretation to each base type, and $a$ assigns an interpretation to each constant and algebraic operation as follows.
	First, we extend $A : B \to \category{C}$ to $\mathcal{A}\interpret{-} : \mathbf{GTyp}(B) \to \category{C}$ using the bicartesian structure of $\category{C}$.
	Then, the function $a$ assigns a morphism $a(c) : \mathcal{A} \interpret{\mathrm{ar}(c)} \to \mathcal{A} \interpret{\mathrm{car}(c)}$ for each $c \in K$, and $a(o) : \mathcal{A}\interpret{\mathrm{car}(o)} \to T \mathcal{A}\interpret{\mathrm{ar}(o)}$ for each $o \in O$.
\end{definition}

\begin{definition}[interpretation of the $\lambda_c$-calculus]\label{def:interpretation-lambda-c-calculus}
	Let $\mathcal{A} = (\category{C}, T, A, a)$ be a $\lambda_c(\Sigma)$-structure.
	We define the interpretation $\mathcal{A}\interpret{-}$ of the $\lambda_c$-calculus as follows.
	For each type $\rho \in \mathbf{Typ}(B)$, $\mathcal{A}\interpret{\rho} \in \category{C}$ is defined by $\mathcal{A}\interpret{\rho \to \tau} = \exponential{\mathcal{A}\interpret{\rho}}{T \mathcal{A}\interpret{\tau}}$ and for other type constructions, defined using the bicartesian structure of $\category{C}$.
	We interpret contexts by $\mathcal{A} \interpret{x_1 : \rho_1, \dots, x_n : \rho_n} \coloneqq \mathcal{A} \interpret{\rho_1} \times \dots \mathcal{A} \interpret{\rho_n} \in \category{C}$.
	For each well-typed term $\Gamma \vdash M : \rho$, its interpretation $\mathcal{A} \interpret{M} : \mathcal{A} \interpret{\Gamma} \to T \mathcal{A} \interpret{\rho}$ is defined in the standard way.
	Specifically, constants and algebraic operations are interpreted as follows.
	\begin{gather}
		\mathcal{A} \interpret{c\ M} \ \coloneqq\ T(a(c)) \comp \mathcal{A} \interpret{M} \qquad\qquad
		\mathcal{A} \interpret{\geneff{o}\ M} \ \coloneqq\ \mu^T \comp T a(o) \comp \mathcal{A} \interpret{M}
	\end{gather}
	See \referappendix{subsec:source-semantics}{A.2} for the full definition.
\end{definition}

\subsubsection{With Recursion}
To interpret recursion, we consider fixed-point operators defined by the least fixed point in $\omega$cpos.
Let $T$ be a strong monad on a cartesian closed category $\category{C}$.
An \emph{Eilenberg--Moore $T$-algebra} (or \emph{EM algebra}) is a pair of an object $A \in \category{C}$ and a morphism $\emalgsymbol : T A \to A$ such that $\emalgsymbol \comp \eta^T_A = \identity{A}$ and $\emalgsymbol \comp T \emalgsymbol = \emalgsymbol \comp \mu^T_A$.
We recall two facts on EM algebras for later use.
First, $\mu^T_X : T^2 X \to T X$ is a (free) EM algebra for any $X \in \category{C}$.
Second, if $\category{C}$ is cartesian closed and $T$ is a strong monad, then $\exponential{X}{A}$ has an EM algebra structure for any $X \in \category{C}$ and an EM algebra $\emalgsymbol : T A \to A$.

A \emph{uniform $T$-fixed-point operator} is a mapping $({-})^{\dagger} : \category{C}(T X, T X) \to \category{C}(1, T X)$ satisfying the fixed-point property $f^{\dagger} = f \comp f^{\dagger}$ and the uniformity $g \comp h = h \comp f \implies g^{\dagger} = h \comp f^{\dagger}$ for any $f : T X \to T X$, $g : T Y \to T Y$, and $h : T X \to T Y$ such that $h$ is a morphism between free EM algebras (i.e.\ $h \comp \mu^T_X = \mu^T_Y \comp T h$).
It is known that any uniform $T$-fixed-point operator can be extended to a parameterized uniform fixed-point operator $({-})^{\dagger} : \category{C}(X \times A, A) \to \category{C}(X, A)$ for each $X \in \category{C}$ and EM $T$-algebra $\emalgsymbol : T A \to A$ (see~\cite{hasegawa2002} for details); and we use this to interpret recursion.

Let $\omegaCPO$ be the category of $\omega$cpos and Scott-continuous functions.
An \emph{$\omegaCPO$-enriched bicartesian closed category} is a bicartesian closed category $\category{C}$ such that homsets are $\omega$cpos (i.e., objects in $\omegaCPO$); and composition ${-} \comp {-}$, tupling $\langle {-}, {-} \rangle$, cotupling $[{-}, {-}]$, and currying $\Lambda_{X, Y, Z} : \category{C}(X \times Y, Z) \to \category{C}(X, \exponential{Y}{Z})$ are Scott-continuous functions (i.e., morphisms in $\omegaCPO$).
For example, $\omegaCPO$ itself is an $\omegaCPO$-enriched bicartesian closed category with the pointwise order on each hom-set.
The underlying ordinary bicartesian closed category of $\category{C}$ is denoted by $\category{C}_0$.
A \emph{pseudo-lifting strong monad} $T$ on $\category{C}$ is an ordinary strong monad $T$ on $\category{C}_0$ that has a generic effect $\bot_{T 0} : 1 \to T 0$ such that $\bot_{T X} \coloneqq T {?} \comp \bot_{T 0} : 1 \to T X$ is the least morphism in $\category{C}(1, T X)$ for any $X \in \category{C}$~\cite{katsumata2013} where ${?} : 0 \to X$ is the unique morphism from the initial object $0$.
The simplest example of a pseudo-lifting strong monad is the lifting monad $({-})_{\bot} \coloneqq ({-}) \cup \{ \bot \}$ on $\omegaCPO$.
For any strong monad $T$, if we have a strong monad morphism $\phi : ({-})_{\bot} \to T$, then $T$ is a pseudo-lifting strong monad.

Given a pseudo-lifting strong monad $T$, we can define a uniform $T$-fixed-point operator $f \mapsto f^{\dagger}$ by the least fixed point of $f \comp ({-}) : \category{C}(1, T X) \to \category{C}(1, T X)$, that is, $f^{\dagger} \coloneqq \sup_n (f^n \comp \bot_{T X})$.
It follows that the parameterized uniform fixed-point operator $({-})^{\dagger} : \category{C}(X \times A, A) \to \category{C}(X, A)$ induced by a pseudo-lifting strong monad is the least fixed point of $f \comp \tupling{\identity{}}{-} : \category{C}(X, A) \to \category{C}(X, A)$.
\begin{lemma}\label{lem:parameterised-uniform-T-fixed-point-lfp}
	Let $T$ be a pseudo-lifting strong monad on a $\omegaCPO$-enriched cartesian closed category $\category{C}$.
	For any EM algebra $\emalgsymbol : T A \to A$, and morphisms $f : X \times A \to A$ and $g : X \to A$,
	if $f \comp \tupling{\identity{}}{g} \le g$, then $f^{\dagger} \le g$.
	\qed
\end{lemma}
By Lemma~\ref{lem:parameterised-uniform-T-fixed-point-lfp}, for any pseudo-lifting strong monad $T$ on $\omegaCPO$, EM algebra $\emalgsymbol : T A \to A$, and morphism $f : X \times A \to A$, the parameterised uniform fixed-point operator $({-})^{\dagger}$ gives a function $f^{\dagger} : X \to A$ such that for any element $x \in X$, $f^{\dagger}(x)$ is the least fixed point of $f(x, {-}) : A \to A$.

Now, we extend the interpretation defined in Section~\ref{subsubsec:source-semantics-without-recursion} to recursive programs.
\begin{definition}
	An \emph{$\omegaCPO$-enriched $\lambda_c(\Sigma)$-structure} is a tuple $\mathcal{A} = (\category{C}, T, A, a)$ where $\category{C}$ is an $\omegaCPO$-enriched bicartesian closed category, $T$ is a pseudo-lifting strong monad, and $(\category{C}_0, T, A, a)$ is a $\lambda_c(\Sigma)$-structure.
	From now on, we always consider the $\omegaCPO$-enriched setting and may omit ``$\omegaCPO$-enriched''.
\end{definition}

\begin{definition}[interpretation of recursion]
	We extend the interpretation $\mathcal{A} \interpret{-}$ in Def.~\ref{def:interpretation-lambda-c-calculus} using the uniform $T$-fixed-point operator $({-})^{\dagger}$ induced by the pseudo-lifting strong monad $T$.
	\[ \mathcal{A} \interpret{\mu f. M} \ \coloneqq\ (\mathcal{A} \interpret{M})^{\dagger} \qquad \text{where \hspace{1em} $\mathcal{A} \interpret{M} : \mathcal{A} \interpret{\Gamma} \times (\exponential{\mathcal{A} \interpret{\rho}}{T \mathcal{A} \interpret{\tau}}) \to \exponential{\mathcal{A} \interpret{\rho}}{T \mathcal{A} \interpret{\tau}}$} \]
	Here, we can apply $({-})^{\dagger}$ to $\mathcal{A} \interpret{M}$ because $\exponential{\mathcal{A} \interpret{\rho}}{T \mathcal{A} \interpret{\tau}}$ has an EM $T$-algebra structure.
\end{definition}

\begin{example}[total/partial correctness]\label{ex:total-partial-correctness-semantics}
	An interpretation for the $\lambda_c$-signature in Example~\ref{ex:total-partial-correctness-syntax} is given by an $\omegaCPO$-enriched $\lambda_c(\Sigma)$-structure $\mathcal{A} = (\omegaCPO, ({-})_{\bot}, A, a)$ where $({-})_{\bot}$ is the lifting monad, $A(\mathbf{int})$ is defined by $(\mathbb{Z}, {=})$, and the interpretation $a(c)$ of each effect-free constant $c \in K$ is defined in the obvious way.
\end{example}

For safety properties, we define a strong monad by an algebraic theory in $\omegaCPO$.
An \emph{algebraic theory} is defined by a pair $(\Sigma, E)$ of a set $\Sigma$ of operations of at most countable arities and a set $E$ of equations and inequations between terms constructed from $\Sigma$.
A \emph{$(\Sigma, E)$-algebra} is defined by an $\omega$cpo $A$ together with an interpretation of each operation over $A$ such that all (in)equations in $E$ are satisfied.
It is known that an algebraic theory induces a strong monad $T$ on $\omegaCPO$ where $T X$ is a free algebra generated by $X$ (see~\cite[Section~6]{abramsky1994} and~\cite{hyland2006}).

\begin{example}[safety property]\label{ex:trace-property-semantics}
To interpret the $\lambda_c$-signature in Example~\ref{ex:trace-property-syntax}, we define an $\omegaCPO$-enriched $\lambda_c(\Sigma)$-structure by $\mathcal{A} = (\omegaCPO, T^P, A, a)$.
To define a pseudo-lifting strong monad $T^P$, we consider an algebraic theory $\mathcal{T}_P$ in $\omegaCPO$ defined by a nullary operation $\nontermconst$, a unary operation $\operation{event}_a$ for any $a \in E$, and a binary operation ${\join}$ together with the following (in)equations.
\begin{gather}
x \join x\ =\ x \qquad
x \join y\ =\ y \join x \qquad
(x \join y) \join z\ =\ x \join (y \join z) \\
\operation{event}_a(x \join y)\ =\ \operation{event}_a(x) \join \operation{event}_a(y) \qquad
x\ \ge\ \nontermconst
\label{eq:hoare-powerdomain}
\end{gather}
Intuitively, $\nontermconst$ represents divergence, $\operation{event}_a$ represents outputting an event $a$, and ${\join}$ represents nondeterministic branching, that is, $x \join y$ intuitively means $\ifexpr{{*}}{x}{y}$ in Section~\ref{subsec:trace-property}.
Note that the first three equations are the axioms for Plotkin powerdomains (or semilattices).
We will show later that this theory is consistent (i.e.\ we cannot derive $x = y$ for two different variables $x, y$) by giving a non-trivial $\mathcal{T}_P$-algebra.
Now, for any $X \in \omegaCPO$, we define $T^P X$ by a free $\mathcal{T}_P$-algebra generated by $X$.
Then, it is straightforward to define interpretations of base types $B$, effect-free constants $K$, and algebraic operations $O = \{ \operation{event}_a, {\join} \}$.
\end{example}

For expected cost analysis, cost moment analysis, and conditional weakest pre-expectations, we use the $\omegaCPO$-enriched bicartesian closed category of $\omega$qbses~\cite{vakar2019} to interpret continuous distributions and recursion.
We recall basic definitions.

\begin{definition}[$\omegaQBS$~{\cite[Def.~3.5]{vakar2019}}]
	A \emph{quasi-Borel space} (or qbs) is a tuple $(|P|, M_P)$ where $|P|$ is a set, and $M_P \subseteq (\exponential{\real}{|P|})$ is a set of \emph{random elements} satisfying a certain condition (see~\cite{heunen2017} for details).
	An \emph{$\omega$qbs} is a tuple $P = (|P|, M_P, {\le_P})$ where $(|P|, M_P)$ is a qbs and $(|P|, {\le_P})$ is an $\omega$cpo.
	A \emph{morphism} $f : P \to Q$ between $\omega$qbses is a function $f : |P| \to |Q|$ such that $f$ is a morphism between underlying qbses (i.e.\ if $\alpha \in M_P$, then $f \comp \alpha \in M_Q$) and $f$ is Scott-continuous w.r.t.\ the underlying $\omega$cpos.
	Let $\omegaQBS$ be the category of $\omega$qbses and morphisms between them.
\end{definition}
For example, $\mathbb{W} = ([0, \infty], \mathbf{Meas}(\real, [0, \infty]), {\le}_{[0, \infty]})$ is the $\omega$qbs of real weights~\cite[Example~3.6, 3.7]{vakar2019} where $\mathbf{Meas}(\real, [0, \infty])$ is the set of measurable functions from $\real$ to $[0, \infty]$ and ${\le}_{[0, \infty]}$ is the standard order on $[0, \infty]$.
It is known that $\omegaQBS$ is bicartesian closed and $\omegaCPO$-enriched by the pointwise order~\cite{vakar2019}.

To combine the computational effects of probability, cost, and nontermination induced by recursion, we consider combining strong monads via distributive laws.
\begin{definition}[distributive law]
	A \emph{distributive law} between strong monads is a natural transformation $d : T S \to S T$ satisfying the following five equations for each $X, Y \in \category{C}$.
	\begin{gather}
		\begin{aligned}
			\eta^S_{T X} &\ =\ d_X \comp T \eta^S_X \qquad\qquad & d \comp T \mu^S &\ =\  \mu^S_{T X} \comp S d_X \comp d_{S X} \\
			S \eta^T_{X} &\ =\ d_X \comp \eta^T_{S X} & d_X \comp \mu^T_{S X} &\ =\ S \mu^T_X \comp d_{T X} \comp T d_X
		\end{aligned} \\
		d_{X \times Y} \comp T \strength^S_{X, Y} \comp \strength^T_{X, S Y} \ =\ S \strength^T_{X, Y} \comp \strength^S_{X, T Y}  \comp (X \times d_Y)
	\end{gather}
\end{definition}

\begin{lemma}\label{lem:strong-distributive-law}
	If there is a distributive law $d : T S \to S T$ between strong monads, then $S T$ has a strong monad structure, and there exist strong monad morphisms $\phi_S : S \to S T$ and $\phi_T : T \to S T$.
	\qed
\end{lemma}

Lemma~\ref{lem:strong-distributive-law} is useful to define a $\lambda_c(\Sigma)$-structure for the composite strong monad $S T$.
Given a strong monad morphism $\phi : T_1 \to T_2$ and (an interpretation of) a generic effect $a(o) : C \to T_1 D$ for $T_1$, we have a generic effect for $T_2$ given by $\phi \comp a(o) : C \to T_2 D$.
Thus, by Lemma~\ref{lem:strong-distributive-law}, a distributive law $d : T S \to S T$ naturally induces generic effects for $S T$ from those for $S$ and $T$.

\begin{example}[expected cost analysis and cost moment analysis]\label{ex:expected-cost-cost-moment-semantics}
	To interpret the $\lambda_c$-signature in Example~\ref{ex:expected-cost-syntax}, we define an $\omegaCPO$-enriched $\lambda_c(\Sigma)$-structure by $\mathcal{A} = (\omegaQBS, P(\mathbb{W} \times ({-})_{\bot}), A, a)$ where $P(\mathbb{W} \times ({-})_{\bot})$ is the composite of three strong monads on $\omegaQBS$.
	(1) The lifting monad $({-})_{\bot}$ that adds a bottom element to an $\omega$qbs~\cite[Section~3]{vakar2019}.
	(2) The writer monad $\mathbb{W} \times ({-})$ induced by the additive monoid $(\mathbb{W}, 0, {+})$.
	(3) The probabilistic powerdomain monad $P$~\cite[Section~4]{vakar2019}.
	Using distributive laws between these strong monads, we get a pseudo-lifting strong monad $P(\mathbb{W} \times ({-})_{\bot})$ (see \referappendix{sec:detail-instance}{G} for details).
	The writer monad gives a natural interpretation of the tick operation $({-})^{\checkmark}$, and the probabilistic powerdomain monad gives interpretations of the probabilistic branching ${+}_p$ and the uniform distribution $\operation{unif}$.
	Thus, the composite monad $P(\mathbb{W} \times ({-})_{\bot})$ inherits these interpretations.
\end{example}

\begin{example}[conditional weakest preexpectation]\label{ex:conditional-wp-semantics}
	To interpret the $\lambda_c$-signature in Example~\ref{ex:conditional-wp-syntax}, we define an $\omegaCPO$-enriched $\lambda_c(\Sigma)$-structure by $\mathcal{A} = (\omegaQBS, P S, A, a)$, in which we use the composite of the probabilistic powerdomain monad $P$ and a monad $S$ that can interpret conditioning $\operation{score}$.
	We define $S$ by the free algebras of the following algebraic theory.
	\[ \bot \le x \qquad\qquad \operation{score}_1(x) = x \qquad\qquad \operation{score}_r(\operation{score}_s(x)) = \operation{score}_{r \cdot s}(x) \quad \text{for each $r, s \in [0, 1]$} \]
	where $\bot$ is a nullary operation (i.e.\ a constant) and $\operation{score}$ is an operation with arity $1$ and coarity $[0, 1]$.
	For $X = (|X|, M_X, {\le}_X) \in \omegaQBS$, we define $S X = (|S X|, M_{S X}, {\le}_{S X}) \in \omegaQBS$ by the free algebra generated by $X$.
	Concretely, $S X$ is given as follows.
	The underlying set is defined by $|S X| \coloneqq \{ \operation{score}_r(\bot^{S X}) \mid r \in [0, 1] \} + \{ \operation{score}_r(x) \mid r \in [0, 1], x \in X \}$.
	We often omit the superscript in $\bot^{S X}$.
	The set of random elements $M_{S X}$ is induced by the bijection $|S X| \cong [0, 1] + [0, 1] \times |X|$ and the standard construction of products and coproducts of quasi-Borel spaces.
	The order relation ${\le}_{S X}$ is defined by (a) $\operation{score}_r(\bot) \le \operation{score}_s(\bot)$ if and only if $r \ge s$, (b) $\operation{score}_r(\bot^{S X}) \le_{S X} \operation{score}_s(x)$ if and only if $r \ge s$ for any $x \in X$, and (c) $\operation{score}_r(x) \le_{S X} \operation{score}_s(y)$ if and only if $r = s$ and $x \le_X y$ for any $r, s \in [0, 1]$ and $x, y \in X$.
	We can verify that $S$ is a pseudo-lifting strong monad where the unit $\eta^S$ is defined by $\eta(x) \coloneqq \operation{score}_1(x)$ and the multiplication $\mu^S$ is defined by re-weighting $\mu^S(\operation{score}_r(\operation{score}_s(x))) = \operation{score}_{r \cdot s}(x)$ for any $x \in X$ or $x = \bot$.
	Note that the interpretation of $\operation{score}$ in $S$ distinguishes $\operation{score}_1(\bot) = \bot$ (diverge) and $\operation{score}_0(\bot)$ ($\operation{score}$ by 0 and then diverge), which is important when we reason about the conditional termination probability of the following program.
	\begin{equation}
		\letrec{\mathrm{diverge}}{x}{\mathrm{diverge}\ x}{\quad () +_{1/2} (\mathrm{observe}(\mathbf{false});\ \mathrm{diverge}\ ())}
		\label{eq:score-diverge}
	\end{equation}
	The conditional termination probability of~\eqref{eq:score-diverge} should be 1.
	However, if we have $\bot = \operation{score}_0(\bot)$, then we cannot distinguish~\eqref{eq:score-diverge} with $() +_{1/2} \mathrm{diverge}\ ()$, whose termination probability is $1/2$.
\end{example}

\section{Weakest Precondition Transformers}\label{sec:wpt}
In this section, we explain \emph{semantic} weakest precondition transformers studied in~\cite{aguirre2020}.
Let $T$ be a monad on $\category{C}$ and $f : X \to T Y$ be a morphism that represents the interpretation of a program (Def.~\ref{def:interpretation-lambda-c-calculus}).
Here, we forget the syntax of programs and instead focus on the semantics.
Take an object $\answerobj \in \category{C}$ that represents a set of truth values.
A typical choice of $\answerobj$ is the two-element boolean algebra, but there are other examples as we will show later.
We call a morphism $q : Y \to \answerobj$ a postcondition and $p : X \to \answerobj$ a precondition.
\begin{definition}\label{def:weakest-precondition}
	Let $T$ be a monad on $\category{C}$ and $\emalgsymbol : T \answerobj \to \answerobj$ be an EM $T$-algebra.
	For each $f : X \to T Y$ and $q : Y \to \answerobj$, we define $\mathrm{wp}^{\emalgsymbol}[f](q) : X \to \answerobj$ as follows.
	\begin{equation}
		\mathrm{wp}^{\emalgsymbol}[f](q) \quad=\quad \emalgsymbol \comp T q \comp f \label{eq:wp-for-lax-slice}
	\end{equation}
\end{definition}
Note that for pure programs, we don't need an EM $T$-algebra since $T = \mathrm{Id}_{\category{C}}$ is the identity functor, and the weakest precondition $\mathrm{wp}[f](q) = q \comp f$ is just the inverse image of a given postcondition $q$ along a program $f$.

Def.~\ref{def:weakest-precondition} has a couple of nice properties.
Firstly, \eqref{eq:wp-for-lax-slice} can capture various properties of programs with various computational effects as we will explain below.
Secondly, if $\answerobj$ has a certain order structure (i.e./ $\answerobj$ is an \emph{ordered object}) and $\emalgsymbol : T \answerobj \to \answerobj$ is monotone with respect to the order structure of $\answerobj$, then it is shown that~\eqref{eq:wp-for-lax-slice} does give the \emph{weakest} precondition with respect to the order structure of $\answerobj$.
Lastly, there is a bijective correspondence between EM monotone $T$-algebras and weakest precondition transformers that have a certain kind of compositionality (see~\cite[Cor.~4.5,4.6]{aguirre2020} for details).
In other words, EM monotone $T$-algebras give ``nice'' weakest precondition transformers, and conversely, ``nice'' weakest precondition transformers are given only by EM monotone $T$-algebras.

However, Def.~\ref{def:weakest-precondition} does not tell us much about how to syntactically compute weakest preconditions.
To automate weakest-precondition-based program verification, we usually want to compute a formula that represents the weakest precondition, but it is not obvious how to obtain such formulas from the syntax-free definition~\eqref{eq:wp-for-lax-slice}.
This is why we develop a syntactic counterpart of Def.~\ref{def:weakest-precondition} in Section~\ref{sec:wpt-cps}.

Most of the following examples are mild extensions of~\cite{aguirre2020} to domain theoretic models like $\omegaCPO$ and $\omegaQBS$, but Example~\ref{ex:trace-property-wp} is new.
\begin{example}[total correctness]\label{ex:total-correctness-wp}
	Let $\answerobj = (\{ \mathbf{false}, \mathbf{true} \}, {\le}) \in \omegaCPO$ where $\mathbf{false} \le \mathbf{true}$.
	The weakest precondition for total correctness is given by the EM algebra $\emalgsymbol_{\mathrm{tot}} : \answerobj_{\bot} \to \answerobj$ such that $\emalgsymbol_{\mathrm{tot}}(\bot) = \mathbf{false}$.
	That is, for each $f : X \to Y_{\bot}$ and $Q : Y \to \answerobj$, we have
	\[ \mathrm{wp}^{\emalgsymbol_{\mathrm{tot}}}[f](Q) \quad=\quad \{ x \in X \mid \exists y \in Y. f(x) = y \land Q(y) = \mathbf{true} \} \]
	by identifying a subset of $X$ with its characteristic function $X \to \Omega$.
\end{example}

\begin{example}[partial correctness]\label{ex:partial-correctness-wp}
	Consider $\answerobj^{\op} = (\{ \mathbf{false}, \mathbf{true} \}, {\ge}) \in \omegaCPO$, which has the opposite order structure of Example~\ref{ex:total-correctness-wp}.
	The weakest precondition for partial correctness is given by the EM algebra $\emalgsymbol_{\mathrm{par}} : (\answerobj^{\op})_{\bot} \to \answerobj^{\op}$ such that $\emalgsymbol_{\mathrm{par}}(\bot) = \mathbf{true}$ (here, the opposite order ${\ge}$ ensures that $\emalgsymbol_{\mathrm{par}}$ is Scott-continuous).
	For each $f : X \to Y_{\bot}$ and $Q : Y \to \answerobj^{\op}$, we have
	\[ \mathrm{wp}^{\emalgsymbol_{\mathrm{par}}}[f](Q) \quad=\quad \{ x \in X \mid f(x) = \bot \lor (\exists y \in Y. f(x) = y \land Q(y) = \mathbf{true}) \}. \]
\end{example}

\begin{remark}
	In Example~\ref{ex:partial-correctness-wp}, we used the opposite order ${\ge}$.
	We will later explain that this reversal of the order changes the interpretation of fixed points (Def.~\ref{def:interpretation-response-calculus}) in the target language of the CPS transformation, and thus, recursion in a program corresponds to the \emph{greatest} fixed points in the weakest precondition for partial correctness.
	This should \emph{not} be confused with the reversal of the implication order with respect to which the weakest precondition is the \emph{weakest}.
	Note also that we cannot use the trivial order $(\{ \mathbf{false}, \mathbf{true} \}, {=}) \in \omegaCPO$ to define an EM $({-})_{\bot}$-algebra $\emalgsymbol$ in Example~\ref{ex:total-correctness-wp},\ref{ex:partial-correctness-wp} because if $\emalgsymbol(\bot) = \mathbf{false}$, then by monotonicity of $\emalgsymbol$, we have $\mathbf{false} = \emalgsymbol(\bot) = \emalgsymbol(\mathbf{true}) = \mathbf{true}$, which is a contradiction.
	Similarly, $\emalgsymbol(\bot) = \mathbf{true}$ also leads to a contradiction.
\end{remark}

\begin{example}[safety property]\label{ex:trace-property-wp}
	We define an EM $T^P$-algebra $\emalgsymbol_{\mathrm{tr}} : T^P \answerobj \to \answerobj$ where $T^P$ is a monad defined in Example~\ref{ex:trace-property-semantics} and then explain that safety properties can be expressed as weakest preconditions for $\emalgsymbol_{\mathrm{tr}}$.
	Let $\mathfrak{A}$ be a deterministic finite automaton $(U, \delta, q_0, F)$ where $U$ is a finite set of states, $\delta \subseteq U \times E \times U$ is a transition relation, $q_0 \in U$ is an initial state, and $F$ is a set of final states.
	Here, we say $\mathfrak{A}$ is \emph{deterministic} if for any $q \in U$ and $a \in E$, there is at most one $q' \in U$ such that $(q, a, q') \in \delta$.
	We also assume that all states are final states $U = F$.
	We write $q \xrightarrow{a} q'$ if $(q, a, q') \in \delta$.
	The language accepted by $\mathfrak{A}$ is denoted by $L(\mathfrak{A})$.

	Now, consider an $\omega$cpo $\answerobj = (2^U, \supseteq)$ (note the \emph{opposite} inclusion order $\supseteq$).
	This means that each truth value $Q \in 2^U$ assigns true or false to each state of $\mathfrak{A}$.
	We define a $\mathcal{T}_P$-algebra on $\answerobj$ as follows.
	\begin{gather}
		\nontermconst^{\answerobj}\ \coloneqq\ U \qquad
		x \join^{\answerobj} y\ \coloneqq\ x \cap y \qquad
		\operation{event}^{\answerobj}_a(x)\ \coloneqq\ \langle a \rangle x\ \coloneqq\ \{ q \in U \mid \exists q' \in x, q \xrightarrow{a} q' \}
		\label{eq:trace-algebra}
	\end{gather}
	Note that operations defined in~\eqref{eq:trace-algebra} are Scott-continuous.
	Note also that $\operation{event}_a(x \join y) = \operation{event}_a(x) \join \operation{event}_a(y)$ holds because we assumed that $(U, \delta)$ is deterministic.
	This $\mathcal{T}_P$-algebra defines an EM $T^P$-algebra $\emalgsymbol_{\mathrm{tr}} : T^P \answerobj \to \answerobj$.

	The weakest precondition transformer defined by $\emalgsymbol_{\mathrm{tr}}$ corresponds to safety properties for the automaton $\mathfrak{A}$.
	For simplicity, consider a morphism $f : 1 \to T^P 1$ that represents a program whose input and output are the unit type.
	In this situation, we can regard $\mathrm{wp}^{\emalgsymbol_{\mathrm{tr}}}[f]$ as a function of type $\answerobj \to \answerobj$ by identifying $\answerobj \cong \omegaCPO(1, \answerobj)$.
	Relating $\mathrm{wp}^{\emalgsymbol_{\mathrm{tr}}}[f]$ with the safety property for $f$ is a bit tricky because of the trickiness of constructing a free $\mathcal{T}_P$-algebra $T^P X$ in Example~\ref{ex:trace-property-semantics}.
	Here, we outline how we manage it (see \referappendix{subsec:free-algebra-nondet}{G.1} for details).
	First, we define another algebraic theory $\mathcal{T}_H$ by adding $x \join y \ge x$ to $\mathcal{T}_P$, which corresponds to considering Hoare powerdomains instead of Plotkin powerdomains.
	Then, we can concretely construct a free $\mathcal{T}_H$-algebra $H 1$ generated by a terminal object $1$.
	Here, $H 1$ is given by $\{ Y \subseteq E^{*} \times (1 + \{ \bot \}) \mid \text{$Y$ satisfies a certain condition} \}$ where $E^{*}$ is the set of finite sequences of elements in $E$.
	Since a $\mathcal{T}_H$-algebra is a $\mathcal{T}_P$-algebra, we have a unique morphism $h^H : T^P 1 \to H 1$ by the freeness of $T^P 1$.
	Then, we can show
	\[ \mathrm{wp}^{\emalgsymbol_{\mathrm{tr}}}[f](U) \quad=\quad \bigcap_{(s, x) \in h^H(f(\star))} \langle s \rangle U \]
	where $\langle a_1 \dots a_n \rangle x \coloneqq \langle a_1 \rangle \dots \langle a_n \rangle x$ is a shorthand notation for a sequence of events and $\star \in 1$; and therefore
	\begin{equation}
		q_0 \in \mathrm{wp}^{\emalgsymbol_{\mathrm{tr}}}[f](U) \quad\iff\quad \forall (s, x) \in h^H(f(\star)),\ s \in L(\mathfrak{A}).
		\label{eq:wp-trace-iff}
	\end{equation}
	Since $\{ s \mid (s, x) \in h^H(f(\star)) \}$ is (the prefix closure of) the set of sequences of events output by $f$, we can rephrase \eqref{eq:wp-trace-iff} as ``the safety property for $f : 1 \to T^P 1$ is true if and only if $\mathrm{wp}^{\emalgsymbol_{\mathrm{tr}}}[f](U)$ is true at the initial state $q_0$ of the given automaton $\mathfrak{A}$''.
	Intuitively, $\mathrm{wp}^{\emalgsymbol_{\mathrm{tr}}}[f]$ takes a set of ``post-states'', runs the given automaton $\mathfrak{A}$ backwards, and returns the set of ``pre-states'' such that for any pre-state and any output string, there exists a run of $\mathfrak{A}$ that finishes at a post-state.
	Note that characterising the safety property by $q_0 \in \mathrm{wp}^{\emalgsymbol_{\mathrm{tr}}}[f](U)$ is a novel result to the best of our knowledge.
\end{example}

For expected cost analyses and cost moment analyses, we defined a composite monad using distributive laws (Example~\ref{ex:expected-cost-cost-moment-semantics}).
The following lemma~\cite[Section~2]{beck1969}\cite[Theorem~2.4.3]{manes2007} is useful when we define EM algebras for such a composite monad.
\begin{lemma}\label{lem:distributive-law-em-algebra}
	Let $d : T S \to S T$ be a distributive law.
	There is a bijection between (i) EM $ST$-algebras $\emalgsymbol^{ST} : S T \answerobj \to \answerobj$ and (ii) pairs of an EM $S$-algebra $\emalgsymbol^{S} : S \answerobj \to \answerobj$ and an EM $T$-algebra $\emalgsymbol^{T} : T \answerobj \to \answerobj$ that satisfy the \emph{composite law} $\emalgsymbol^{T} \comp T \emalgsymbol^{S} = \emalgsymbol^{S} \comp S \emalgsymbol^{T} \comp d$.
	\qed
\end{lemma}

\begin{example}[expected cost analysis]\label{ex:expected-cost-wp}
	Let $\answerobj = \mathbb{W}$ be the $\omega$qbs of real weights.
	We define an EM $P(\mathbb{W} \times ({-})_{\bot})$-algebra on $\answerobj$ as the composite of three EM algebras for $({-})_{\bot}$, $\mathbb{W} \times ({-})$, and $P$.
	(1) The EM $({-})_{\bot}$-algebra $\emalgsymbol^{({-})_{\bot}} : \mathbb{W}_{\bot} \to \mathbb{W}$ maps the bottom element to $0$.
	(2) An EM $(\mathbb{W} \times {-})$-algebra is defined by the addition $({+}) : \mathbb{W} \times \mathbb{W} \to \mathbb{W}$.
	(3) An EM $P$-algebra $\emalgsymbol^{P} : P \mathbb{W} \to \mathbb{W}$ is defined by the expectation of probability distributions.

	Then, these EM algebras satisfy the composite law of Lemma~\ref{lem:distributive-law-em-algebra}, and we obtain an EM $P(\mathbb{W} \times ({-})_{\bot})$-algebra by $\emalgsymbol_{\mathrm{ex}} \coloneqq \emalgsymbol^P \comp P({+}) \comp P(\mathbb{W} \times \emalgsymbol^{({-})_{\bot}})$.
	If the postcondition is the constant function 0, then the weakest precondition for $\emalgsymbol_{\mathrm{ex}}$ is given as follows.
	\[ \mathrm{wp}^{\emalgsymbol_{\mathrm{ex}}}[f](0) \quad=\quad \emalgsymbol^{P} \comp P \pi_1 \comp f \]
	Here, $P \pi_1 \comp f : X \to P \mathbb{W}$ corresponds to a probability distribution of the cost of $f : X \to P(\mathbb{W} \times Y_{\bot})$.
	Therefore, $\mathrm{wp}^{\emalgsymbol_{\mathrm{ex}}}[f](0)$ is the expected cost of $f$.
\end{example}

\begin{example}[cost moment analysis]\label{ex:cost-moment-wp}
	We consider the $\omega$qbs $\answerobj = \mathbb{W}^n$ of $n$-tuples of real weights and define an EM algebra structure by the following combination of EM algebras.
	\begin{itemize}
		\item The EM $({-})_{\bot}$-algebra $\emalgsymbol_n^{({-})_{\bot}} : (\mathbb{W}^n)_{\bot} \to \mathbb{W}^n$ maps the bottom element to $(0, \dots, 0)$.
		\item We define a $\mathbb{W}$-module ${\oplus} : \mathbb{W} \times \mathbb{W}^n \to \mathbb{W}^n$ such that the $i$-th component of $a \oplus (b_1, \dots, b_n)$ is $a^i + \sum_{j = 1}^i \binom{i}{j}  a^{i-j} b_j$.
			This defines an EM $(\mathbb{W} \times {-})$-algebra.
			Note that $\oplus$ is called the \emph{elapse function} in~\cite{kura2019} and essential for the extension to higher moments.
		\item We have an EM $P$-algebra $\emalgsymbol_n^{P} : P (\mathbb{W}^n) \to \mathbb{W}^n$ as the $n$-fold product of $\emalgsymbol^{P} : P \mathbb{W} \to \mathbb{W}$.
	\end{itemize}
	By Lemma~\ref{lem:distributive-law-em-algebra}, we have an EM $P(\mathbb{W} \times ({-})_{\bot})$-algebra $\emalgsymbol_{\mathrm{mo}, n} \coloneqq \emalgsymbol_n^P \comp P({\oplus}) \comp P(\mathbb{W} \times \emalgsymbol_n^{({-})_{\bot}})$.
	The weakest precondition for the constant postcondition $\mathbf{0} = (0, \dots, 0)$ is given by
	\[ \mathrm{wp}^{\emalgsymbol_{\mathrm{mo}, n}}[f](\mathbf{0}) \quad=\quad \emalgsymbol_n^P \comp P \mathrm{pow}_n \comp P \pi_1 \comp f \]
	where $\mathrm{pow}_n : \mathbb{W} \to \mathbb{W}^n$ is defined by $\mathrm{pow}_n(x) = (x, x^2, \dots, x^n)$.
	That is, the $i$-th component of the weakest precondition is the $i$-th moment of the distribution $P \pi_1 \comp f$ of cost.
\end{example}

\begin{example}[conditional weakest preexpectation]\label{ex:conditional-wp-wp}
	Recall the monad defined in Example~\ref{ex:conditional-wp-semantics}.
	Let $\answerobj_1 \coloneqq \mathbb{W}$, $\answerobj_2 = [0, 1]^{\op}$, and $\answerobj \coloneqq \answerobj_1 \times \answerobj_2$ where $[0, 1]^{\op}$ has the opposite order of $[0, 1]$.
	We aim to define an EM $P S$-algebra $\emalgsymbol_{\mathrm{cwp}} : P S \answerobj \to \answerobj$ as the product of two EM algebras $\emalgsymbol_{\mathrm{cwp}, 1} : P S \answerobj_1 \to \answerobj_1$ and $\emalgsymbol_{\mathrm{cwp}, 2} : P S \answerobj_2 \to \answerobj_2$.
	Since the expectation defines EM $P$-algebras $\emalgsymbol^{P} : P \mathbb{W} \to \mathbb{W}$ and $\emalgsymbol^{P}_{[0, 1]^{\op}} : P [0, 1]^{\op} \to [0, 1]^{\op}$, it remains to define EM $S$-algebras on $\mathbb{W}$ and $[0, 1]^{\op}$.
	Let $\emalgsymbol^S_1 : S \mathbb{W} \to \mathbb{W}$ be a morphism defined by $\emalgsymbol^S_1(\operation{score}_r(\bot)) = 0$ and $\emalgsymbol^S_1(\operation{score}_r(x)) = r \cdot x$ for any $x \in \mathbb{W}$; and $\emalgsymbol^S_2 : S [0, 1]^{\op} \to [0, 1]^{\op}$ be $\emalgsymbol^S_2(\operation{score}_r(\bot)) = r$ and $\emalgsymbol^S_2(\operation{score}_r(x)) = r \cdot x$ for any $x \in [0, 1]$.
	By Lemma~\ref{lem:distributive-law-em-algebra}, we obtain EM $P S$-algebras $\emalgsymbol_{\mathrm{cwp}, 1} = \emalgsymbol^{P} \comp P \emalgsymbol^S_1$ and $\emalgsymbol_{\mathrm{cwp}, 2} = \emalgsymbol^{P}_{[0, 1]^{\op}} \comp P \emalgsymbol^S_2$.
	The weakest precondition defined by $\emalgsymbol_{\mathrm{cwp}}$ is given as follows.
	\begin{equation}
		\mathrm{wp}^{\emalgsymbol_{\mathrm{cwp}}}[f](\tupling{Q_1}{Q_2}) \quad=\quad \tupling{\mathrm{wp}^{\emalgsymbol_{\mathrm{cwp}, 1}}[f](Q_1)}{\mathrm{wp}^{\emalgsymbol_{\mathrm{cwp}, 2}}[f](Q_2)} \label{eq:conditional-wp}
	\end{equation}
	Let $l : S X \to [0, 1]$ be the ``likelihood'' function defined by $l(\operation{score}_r(\bot)) = r$ and $l(\operation{score}_r(x)) = r$; and $v : S X \to X + \{ \bot \}$ be the ``result value'' defined by $v(\operation{score}_r(\bot)) = \bot$ and $v(\operation{score}_r(x)) = x$.
	The first component of~\eqref{eq:conditional-wp} is the integral of $Q_1$ re-weighted by the likelihood.
	\[ \mathrm{wp}^{\emalgsymbol_{\mathrm{cwp}, 1}}[f](Q_1)(x) \quad=\quad \int \emalgsymbol^S_1 \comp S Q_1 \,\mathrm{d} (f(x)) \quad=\quad \int_{\{ y \in S X \mid v(y) \neq \bot\}} l(y) \cdot Q_2(v(y)) \,\mathrm{d} (f(x))(y) \]
	If $Q_2$ is the constant function $1$, then the second component of~\eqref{eq:conditional-wp} is the integral of the likelihood.
	\[ \mathrm{wp}^{\emalgsymbol_{\mathrm{cwp}, 2}}[f](Q_2)(x) \quad=\quad \int \emalgsymbol^S_2 \comp S Q_2 \,\mathrm{d} (f(x)) \quad=\quad \int l \,\mathrm{d} (f(x)) \]
	Finally, the conditional weakest preexpectation is defined by  
	\[ \mathrm{cwp}[f](Q) \quad\coloneqq\quad \frac{\mathrm{wp}^{\emalgsymbol_{\mathrm{cwp}, 1}}[f](Q)}{\mathrm{wp}^{\emalgsymbol_{\mathrm{cwp}, 2}}[f](1)} \quad=\quad \int_{\{ y \in S X \mid v(y) \neq \bot\}} \frac{l(y)}{\int l \,\mathrm{d} (f(x))} \cdot Q_2(v(y)) \,\mathrm{d} (f(x))(y). \]
	This captures the conditional weakest preexpectation studied in \cite{olmedo2018}.
\end{example}

\section{CPS transformation}
We define a CPS transformation for the $\lambda_c$-calculus.
Our CPS transformation makes a clear distinction between the source language (the $\lambda_c$-calculus in Section~\ref{sec:source-language}) and the target language (Section~\ref{subsec:target-language}).
Since the syntax and the semantics of the source language is parameterised by a $\lambda_c$-signature $\Sigma$ and a $\lambda_c(\Sigma)$-structure $\mathcal{A}$, respectively, the target language is also parameterised by the same $\Sigma$ and $\mathcal{A}$.

\subsection{Target Language}\label{subsec:target-language}
Our target language is defined as a simply typed lambda calculus with a designated type $\answertype$ for truth values and (least) fixed points with respect to a given order relation on $\answertype$.
The main difference between the source and the target language is that the target language is \emph{pure} while the source language is \emph{effectful}.
This pureness makes it easier to reason about the target language.

\subsubsection{Syntax}
We define a target language $\lambda_{\mathrm{HFL}}$ of our CPS transformation based on the \emph{response calculus}~\cite{fuhrmann2004} and extend it with modal operators.
Our target language has an \emph{answer type} $\answertype$ as a type of results of continuations.
We interpret $\answertype$ as a type of truth values.
When $\answertype$ is interpreted as $\{ \mathbf{true}, \mathbf{false} \}$, our target language can be understood as a variant higher-order modal fixed-point logic~\cite{viswanathan2004}, but in general, we don't necessarily interpret $\answertype$ as $\{ \mathbf{true}, \mathbf{false} \}$.
This generality enables the target language $\lambda_{\mathrm{HFL}}$ to express various weakest precondition transformers.

\begin{definition}\label{def:target-language}
	Given a $\lambda_c$-signature $\Sigma = (B, K, O, \mathrm{ar}, \mathrm{car})$, we define \emph{$\lambda_{\mathrm{HFL}}(\Sigma)$-types} $\rho, \tau$ and \emph{$\lambda_{\mathrm{HFL}}(\Sigma)$-terms} $M, N$ as follows.
	\begin{align}
		\rho, \tau & \quad\coloneqq\quad \answertype \mid b \mid 1 \mid \rho \times \tau \mid 0 \mid \rho + \tau \mid \rho \to \answertype &&\text{where $b \in B$} \\
		M, N    & \quad\coloneqq\quad x \mid c\ M \mid o\ M \mid () \mid (M, N) \mid \pi_i M \mid \lambda x : \rho. M \mid M\ N \mid \iota_i M \\
    	&\qquad\quad \mid \delta(M) \mid \delta(M, x_1 {:} \rho_1. N_1, x_2 {:} \rho_2. N_2) \mid \mu f {:} \rho \to \answertype. M &&\text{where $c \in K$ and $o \in O$}
	\end{align}
	We call ``$o\ M$'' and ``$\mu f. M$'' in $\lambda_{\mathrm{HFL}}(\Sigma)$-terms (an application of) a \emph{modal operator} and a \emph{fixed point}, respectively.
	Similarly to the source language $\lambda_c$, we define $\letrec{f}{x}{M}{N}$ as syntactic sugar using $\mu f. M$.
	If $o \in O$ is an $n$-ary operation $o : \underline{n} \rightarrowtriangle \underline{1}$, we define a notation $o(M_1, \dots, M_n)$ by
	\begin{equation}
		o\ (M_1, \dots, M_n) \quad\coloneqq\quad o\ (\lambda x. \delta(x, x_1. M_1, \dots, x_n. M_n), ()) \label{eq:n-ary-algebraic-operation-notation}
	  \end{equation}
	where $\delta(x, x_1. M_1, \dots, x_n. M_n)$ is a notation for case analyses for finite coproduct types derived from $\delta({-})$ and $\delta({-}, x_1. M_1, x_2. M_2)$.

	A \emph{well-typed $\lambda_{\mathrm{HFL}}(\Sigma)$-term} $\Gamma \vdash M : \rho$ is defined like the usual simply typed lambda calculus, but some parts of the typing rules are restricted to the answer type $\answertype$ (Fig.~\ref{fig:typing-response-calculus}).
	Notably, the codomain of the function type is restricted to $\answertype$.
	The full definition of typing rules can be found in \referappendix{subsec:target-typing-rules}{B.1}.
\end{definition}

In the situation considered in Example~\ref{ex:trace-property-syntax}, the modal operator for $\operation{event}_a$ corresponds to the modal operator $\langle a \rangle$ used in the conventional HFL in~\cite{viswanathan2004}, as we will later explain in Example~\ref{ex:trace-property-cps}.
Compared to higher-order modal fixed-point logic~\cite{viswanathan2004}, modal operators $o\ M$ in our target language are generalized to arbitrary $o \in O$.
For example, for expected cost analysis, we will consider in Example~\ref{ex:expected-cost-cps} the modal operator for integration, which corresponds to the algebraic operation $\operation{unif}$ for sampling.
On the other hand, logical connectives (e.g.\ conjunction, disjunction, and negation) are not included in Definition~\ref{def:target-language} to keep the language minimal.
We can add them when the interpretation of the answer type has a sufficient structure (e.g.\ an internal lattice structure) to interpret them (see Section~\ref{subsec:extending-target-language}).

\begin{figure}[tbp]
	\small
	\begin{mathpar}
		\inferrule{
			\Gamma \vdash M : (\mathrm{ar}(o) \to \answertype) \times \mathrm{car}(o)
		}{
			\Gamma \vdash o\ M : \answertype
		}
		\and
		\inferrule{
			\Gamma, f : \rho \to \answertype \vdash M : \rho \to \answertype
		}{
			\Gamma \vdash \mu f. M : \rho \to \answertype
		}
	\end{mathpar}
	\caption{Selected typing rules for the target language.}
	\label{fig:typing-response-calculus}
\end{figure}

\subsubsection{Semantics}
To interpret the target language, we use an $\omegaCPO$-enriched $\lambda_c(\Sigma)$-structure $\mathcal{A} = (\category{C}, T, A, a)$ together with an EM $T$-algebra $\emalgsymbol : T \answerobj \to \answerobj$.
Most of the interpretation are done by the $\lambda_c(\Sigma)$-structure $\mathcal{A}$ without using $T$, just like the standard interpretation of pure STLC, but there are a few exceptions: the answer type $\answertype$, modal operators $o \in O$, and fixed points.
This is where we need the EM $T$-algebra $\emalgsymbol : T \answerobj \to \answerobj$ for interpretation.
\begin{definition}[interpretation of $\lambda_{\mathrm{HFL}}(\Sigma)$-types/terms]\label{def:interpretation-response-calculus}
	We define the interpretation $\mathcal{A}^{\emalgsymbol} \llbracket {-} \rrbracket$ of $\lambda_{\mathrm{HFL}}(\Sigma)$-types/terms as follows.
	For base types and the answer type, we define $\mathcal{A}^{\emalgsymbol} \interpret{b} \coloneqq A b$ and $\mathcal{A}^{\emalgsymbol} \llbracket \answertype \rrbracket \coloneqq \answerobj$, and for other types, we extend $\mathcal{A}^{\emalgsymbol} \interpret{-}$ using the bicartesian closed structure of $\category{C}$ (note $\mathcal{A}^{\emalgsymbol} \interpret{\rho \to \tau} \coloneqq \exponential{\mathcal{A}^{\emalgsymbol} \interpret{\rho}}{\mathcal{A}^{\emalgsymbol} \interpret{\tau}}$).
	For contexts, we define $\mathcal{A}^{\emalgsymbol} \interpret{x_1 : \rho_1, \dots, x_n : \rho_n} \coloneqq \mathcal{A}^{\emalgsymbol} \interpret{\rho_1} \times \dots \times \mathcal{A}^{\emalgsymbol} \interpret{\rho_n}$.
	For any well-typed term $\Gamma \vdash M : \rho$, the interpretation $\mathcal{A}^{\emalgsymbol} \interpret{M} : \mathcal{A}^{\emalgsymbol} \interpret{\Gamma} \to \mathcal{A}^{\emalgsymbol} \interpret{\rho}$ is defined by the standard interpretation of simply typed lambda calculus together with the following interpretation of effect-free constants, modal operators, and fixed points.
	\begin{gather}
		\mathcal{A}^{\emalgsymbol} \interpret{c\ M} \ \coloneqq\ a(c) \comp \mathcal{A}^{\emalgsymbol} \interpret{M} \qquad\qquad
    	\mathcal{A}^{\emalgsymbol} \interpret{\mu f. M} \ \coloneqq\ (\mathcal{A}^{\emalgsymbol} \interpret{M})^{\dagger} \\
		\mathcal{A}^{\emalgsymbol} \interpret{o\ M} \ \coloneqq \ \emalgsymbol \comp T \mathbf{ev} \comp \strength^T \comp (\identity{} \times a(o)) \comp \mathcal{A}^{\emalgsymbol} \interpret{M}
	\end{gather}
	Here, $\mathbf{ev}_{X, Y} : (\exponential{X}{Y}) \times X \to Y$ is the evaluation morphism.
	In the interpretation of fixed points, the EM algebra structure on $\exponential{\mathcal{A}^{\emalgsymbol} \interpret{\rho}}{\answerobj}$ allows us to use the uniform fixed-point operator $({-})^{\dagger}$.
	The full definition of $\mathcal{A}^{\emalgsymbol} \interpret{-}$ can be found in \referappendix{subsec:target-semantics}{B.2}.
\end{definition}
Note that different choices of EM algebras give different interpretations of the target language (especially, of fixed points and modal operators).
For example, a fixed point in a $\lambda_{\mathrm{HFL}}(\Sigma)$-term is interpreted as the \emph{least} fixed point with respect to the order structure of $\answerobj$.
If we use the opposite of the ``standard'' order of $\answerobj$, then a fixed point is interpreted as the \emph{greatest} fixed point with respect to the ``standard'' order.
Such examples can be found in Example~\ref{ex:total-partial-correctness-cps} and other examples in Section~\ref{sec:wpt-cps}.
Note also that an EM $T$-algebra $\emalgsymbol : T \answerobj \to \answerobj$ is also used in Section~\ref{sec:wpt} to define a weakest precondition transformer.
Using the same EM $T$-algebra $\emalgsymbol$ for defining weakest preconditions and for interpreting the target language is important in our main theorem (Theorem~\ref{thm:cps-is-wpt-with-recursion}).

\subsection{CPS}\label{subsec:cps-transformation}
Following~\cite{fuhrmann2004}, we define a CPS transformation.
\begin{definition}[CPS transformation]\label{def:cps-transformation}
	Let $\Sigma$ be a $\lambda_c$-signature.
	A CPS transformation $\CPS{({-})}$ is defined as a mapping from $\lambda_c(\Sigma)$-types/terms to $\lambda_{\mathrm{HFL}}(\Sigma)$-types/terms in Fig.~\ref{fig:cps-transformation}.
\end{definition}

\begin{lemma}
	If $\Gamma \vdash M : \rho$ is a well-typed $\lambda_c(\Sigma)$-term, then $\CPS{\Gamma} \vdash \CPS{M} : (\CPS{\rho} \to \answertype) \to \answertype$ is a well-typed $\lambda_{\mathrm{HFL}}(\Sigma)$-term.
	\qed
\end{lemma}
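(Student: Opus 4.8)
The plan is to prove the statement by structural induction on the typing derivation of $\Gamma \vdash M : \rho$, checking in each case that the term produced by Fig.~\ref{fig:cps_transformation} admits a derivation in the target language of the claimed type $(\CPS{\rho} \to \answertype) \to \answertype$. The guiding invariant is that every CPS term is an abstraction $\lambda k. (\ldots)$ whose body has answer type $\answertype$ under the assumption $k : \CPS{\rho} \to \answertype$; so in each case I apply the abstraction rule of Fig.~\ref{fig:typing_response_calculus} once on the outside and reduce to showing that the body has type $\answertype$. I will repeatedly use the observation, noted just before the lemma, that $\CPS{\rho} = \rho$ for every ground type $\rho \in \mathbf{GTyp}(B)$; this matters in the constant and algebraic-operation cases, since arities and coarities lie in $\mathbf{GTyp}(B)$, so $\CPS{\mathrm{ar}(o)} = \mathrm{ar}(o)$ and $\CPS{\mathrm{car}(o)} = \mathrm{car}(o)$.

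The structural cases (variable, unit, pairing, projections, injections, both case analyses, constants, and application) are mechanical: in each case I destructure the outer continuation $k$ according to the shape of $\CPS{\rho}$ dictated by the type translation, feed the resulting smaller continuations to the inductive hypotheses $\CPS{M_i} : (\CPS{\rho_i} \to \answertype) \to \answertype$, and observe that the remaining applications land in $\answertype$. For the binary case analysis $\CPS{(\delta(M, x_1. M_1, x_2. M_2))}$ I additionally use that the branches $\CPS{M_i}\ k$ have type $\answertype$ in the extended contexts $\CPS{\Gamma}, x_i : \CPS{\rho_i}$, which is exactly the premise demanded by the case-analysis rule of the target language; the nullary case $\CPS{(\delta(M))}$ uses the $\delta(m) : \answertype$ rule and simply discards the unused $k$.

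The two cases worth isolating are function abstraction and algebraic operations, since they are where the function-type translation $\CPS{(\rho \to \tau)} = (\CPS{\rho} \times (\CPS{\tau} \to \answertype)) \to \answertype$ is exercised. For $\CPS{(\lambda x. M)} = \lambda k. k\ (\lambda (x, h). \CPS{M}\ h)$ I assume $x : \CPS{\rho}$ and a local continuation $h : \CPS{\tau} \to \answertype$, apply the inductive hypothesis $\CPS{M} : (\CPS{\tau} \to \answertype) \to \answertype$ to obtain $\CPS{M}\ h : \answertype$, and conclude $\lambda(x,h). \CPS{M}\ h : \CPS{(\rho \to \tau)}$, so that $k$ may legitimately be applied to it.

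The expected main obstacle, though still routine, is the algebraic-operation case $\CPS{(o_{\rho}\ M)} = \lambda k. \CPS{M}\ (\lambda (m_1, m_2). o\ (\lambda x. m_1\ (x, k), m_2))$, where the bookkeeping is heaviest. Here the inductive hypothesis gives $\CPS{M} : (\CPS{((\mathrm{ar}(o) \to \rho) \times \mathrm{car}(o))} \to \answertype) \to \answertype$; unfolding the type translation and using $\CPS{\mathrm{ar}(o)} = \mathrm{ar}(o)$ and $\CPS{\mathrm{car}(o)} = \mathrm{car}(o)$ shows that $m_1 : (\mathrm{ar}(o) \times (\CPS{\rho} \to \answertype)) \to \answertype$ and $m_2 : \mathrm{car}(o)$. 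I then check that $\lambda x. m_1\ (x, k) : \mathrm{ar}(o) \to \answertype$ and pair it with $m_2$, so that the modal-operator rule $\Gamma \vdash N : (\mathrm{ar}(o) \to \answertype) \times \mathrm{car}(o) \Rightarrow \Gamma \vdash o\ N : \answertype$ applies and yields $o\ (\ldots) : \answertype$; the continuation $\lambda (m_1, m_2). o\ (\ldots)$ therefore has exactly the type demanded by $\CPS{M}$, and the outer $\lambda k$ completes the case.
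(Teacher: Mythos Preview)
Your proposal is correct and follows exactly the approach the paper has in mind: the paper states just before the lemma that ``it is routine to check that the CPS transformation preserves well-typedness'' and closes the lemma with a bare \qed, so the intended proof is precisely the structural induction on the typing derivation that you spell out. Your handling of the two nontrivial cases (abstraction and algebraic operations), including the use of $\CPS{\rho} = \rho$ for ground $\rho$ to align arities and coarities with the modal-operator rule in Fig.~\ref{fig:typing_response_calculus}, is accurate.
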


Note that Def.~\ref{def:cps-transformation} produces many administrative redexes.
Efficient implementation of the CPS transformation is orthogonal to our main problem (i.e.\ the soundness of syntactic computation of weakest preconditions) and out of the scope of this paper.

\begin{figure}[tbp]
	\textbf{Types and Contexts}
	\begin{gather}
		\CPS{b} \coloneqq b \qquad\quad
		\CPS{1} \coloneqq 1 \qquad\quad
		\CPS{0} \coloneqq 0 \qquad\quad
		\CPS{(\rho_1 \times \rho_2)} \coloneqq \CPS{\rho_1} \times \CPS{\rho_2} \qquad\quad
		\CPS{(\rho_1 + \rho_2)} \coloneqq \CPS{\rho_1} + \CPS{\rho_2} \\
		\CPS{(\rho \to \tau)} \coloneqq (\CPS{\rho} \times (\CPS{\tau} \to \answertype)) \to \answertype \qquad\qquad
		\CPS{(x_1 : \rho_1, \dots, x_n : \rho_n)} \coloneqq x_1 : \CPS{\rho_1}, \dots, x_n : \CPS{\rho_n}
	\end{gather}
	\textbf{Terms}
	\begin{alignat}{2}
		&\CPS{x} &&\quad\coloneqq\quad \lambda k.\ k\ x \\
		&\CPS{(\geneff{o}\ M)} &&\quad\coloneqq\quad \lambda k.\ \CPS{M}\ (\lambda m. o\ (k, m)) \\
		&\CPS{(c\ M)} &&\quad\coloneqq\quad \lambda k.\ \CPS{M}\ (\lambda m. k\ (c\ m)) \\
		&\CPS{(\pi_i M)} &&\quad\coloneqq\quad \lambda k.\ \CPS{M}\ (\lambda m. k\ (\pi_i m)) \\
		&\CPS{()} &&\quad\coloneqq\quad \lambda k.\ k\ () \\
		&\CPS{(M_1, M_2)} &&\quad\coloneqq\quad \lambda k.\ \CPS{M_1}\ (\lambda m_1. \CPS{M_2}\ (\lambda m_2. k\ (m_1, m_2))) \\
		&\CPS{(\delta(M))} &&\quad\coloneqq\quad \lambda k.\ \CPS{M} (\lambda m. \delta(m)) \\
		&\CPS{(\iota_i M)} &&\quad\coloneqq\quad \lambda k.\ \CPS{M}\ (\lambda m. k\ (\iota_i m)) \\
		&\CPS{(\delta(M, x_1. M_1, x_2. M_2))} &&\quad\coloneqq\quad \lambda k.\ \CPS{M} (\lambda m. \delta(m, x_1. \CPS{M_1}\ k, x_2. \CPS{M_2}\ k)) \\
		&\CPS{(\lambda x. M)} &&\quad\coloneqq\quad \lambda k.\ k\ (\lambda (x, h). \CPS{M}\ h) \\
		&\CPS{(M\ N)} &&\quad\coloneqq\quad \lambda k.\ \CPS{M}\ (\lambda m. \CPS{N}\ (\lambda n. m\ (n, k))) \\
		&\CPS{(\mu f. M)} &&\quad\coloneqq\quad \lambda k.\ k (\mu f. \lambda x. \CPS{M}\ (\lambda m. m\ x)) \\
		&\CPS{(\letrec{f}{x}{M}{N})} &&\quad\coloneqq\quad \letrec{f}{(x, k)}{\CPS{M}\ k}{\CPS{N}}
	\end{alignat}
	\caption{The CPS transformation. We implicitly assume that newly introduced variables are fresh. We use $\lambda (x, y). M \coloneqq \lambda z. M[\pi_1 z/x, \pi_2 z/y]$ as syntactic sugar where $M[N/x]$ is the substitution defined as usual.}
	\label{fig:cps-transformation}
\end{figure}

\section{Relating Weakest Preconditions and the CPS Transformation}\label{sec:wpt-cps}
\subsection{Main Result}

\begin{definition}[stable binary coproducts]\label{def:stable-finite-coproducts}
	A binary coproduct $A_1 \xrightarrow{\iota_1} A \xleftarrow{\iota_2} A_2$ is \emph{stable} if for any morphism $f : X \to A$, there exists a pullback $\iota_i^{*} X$ of $X$ along $\iota_i$ ($i = 1, 2$) such that $\iota_1^{*} X \to X \leftarrow \iota_2^{*} X$ is a coproduct diagram.
	A bicartesian closed category is \emph{stable} if it has stable binary coproducts.
\end{definition}

\begin{theorem}\label{thm:cps-is-wpt-with-recursion}
	Suppose that the following parameters are given: (i) a $\lambda_c$-signature $\Sigma$ (Def.~\ref{def:lambda-c-signature}) (ii) an $\omegaCPO$-enriched $\lambda_c(\Sigma)$-structure $\mathcal{A} = (\category{C}, T, A, a)$ (Def.~\ref{def:lambda-c-structure}) such that $\category{C}$ is stable (Def.~\ref{def:stable-finite-coproducts}) (iii) an Eilenberg--Moore $T$-algebra $\emalgsymbol : T \answerobj \to \answerobj$.
	For any well-typed $\lambda_c(\Sigma)$-term $x_1 {:} \rho_1, \dots, x_n {:} \rho_n \vdash M : \rho$ (Def.~\ref{def:source-term}), if $\rho_1, \dots, \rho_n, \rho$ are ground types (i.e.\ types without $\to$), then for any postcondition $x : \rho \vdash P : \answertype$ in the target language (i.e.\ $P$ is a $\lambda_{\mathrm{HFL}}(\Sigma)$-term, Def.~\ref{def:target-language}), we have
	\begin{equation}
		\mathrm{wp}^{\emalgsymbol}[\mathcal{A} \interpret{M}] (\mathcal{A}^{\emalgsymbol} \interpret{P}) \quad=\quad \mathcal{A}^{\emalgsymbol} \interpret{\CPS{M}\ (\lambda x : \rho. P)} \label{eq:cps_is_wpt}
	\end{equation}
	where $\mathrm{wp}^{\emalgsymbol}[{-}]$ is defined by \eqref{eq:wp-for-lax-slice}, $\CPS{({-})}$ is the CPS transformation (Def.~\ref{def:cps-transformation}), and $\mathcal{A} \interpret{-}$ and $\mathcal{A}^{\emalgsymbol} \interpret{-}$ are the interpretations of $\lambda_c(\Sigma)$-terms (Def.~\ref{def:interpretation-lambda-c-calculus}) and $\lambda_{\mathrm{HFL}}(\Sigma)$-terms (Def.~\ref{def:interpretation-response-calculus}), respectively.
\end{theorem}
\begin{proof}
	See \referappendix{sec:main-theorem-proof}{E}.
\end{proof}
Theorem~\ref{thm:cps-is-wpt-with-recursion} has three parameters.
A $\lambda_c$-signature is a parameter for syntax, a $\lambda_c(\Sigma)$-structure is for semantics, and an EM algebra is for the weakest precondition and the semantics of the target language.
More precisely, the last parameter, an EM algebra $\emalgsymbol : T \answerobj \to \answerobj$, defines (i) truth values $\mathcal{A}^{\emalgsymbol} \interpret{\answertype} = \answerobj$, (ii) meaning of the weakest precondition transformer $\mathrm{wp}^{\emalgsymbol}[{-}]$, (iii) the interpretation $\mathcal{A}^{\emalgsymbol} \interpret{o\ M}$ of modal operators in $\lambda_{\mathrm{HFL}}(\Sigma)$-terms, and (iv) the interpretation $\mathcal{A}^{\emalgsymbol} \interpret{\mathbf{let}\ \mathbf{rec}\ \dots}$ of fixed points in $\lambda_{\mathrm{HFL}}(\Sigma)$-terms.

Theorem~\ref{thm:cps-is-wpt-with-recursion} has a few assumptions.
Firstly, $\category{C}$ must be stable, which is not very restrictive since all the categories that we used in examples ($\omegaCPO$ and $\omegaQBS$) are stable.
Secondly, Theorem~\ref{thm:cps-is-wpt-with-recursion} does not allow function types in the context $\Gamma$ and the type $\rho$ of a well-typed $\lambda_c(\Sigma)$-term $\Gamma \vdash M : \rho$.
This restriction ensures that the type of the left-hand side and the right-hand side of \eqref{eq:cps_is_wpt} are the same because we have $\CPS{\rho} = \rho$ for any ground type $\rho \in \mathbf{GTyp}(B)$.
Compared with the studies~\cite{kobayashi2018,avanzini2021} that deal with special cases of Theorem~\ref{thm:cps-is-wpt-with-recursion}, our assumption on contexts and types is less restrictive than theirs because their results are limited to closed terms of type $1$ or $b$ (a base type).
Note that we can still use higher-order recursive or non-recursive functions in a program $M$ internally as long as function types don't appear at the top level.

In Theorem~\ref{thm:cps-is-wpt-with-recursion}, the postcondition $x : \rho \vdash P : \answertype$ does not refer to variables $x_1, \dots, x_n$ in the context of $x_1 : \rho_1, \dots, x_n : \rho_n \vdash M : \rho$.
This restriction is not essential because we can allow $P$ to refer to those variables as follows.
\begin{corollary}
	Suppose that we have the same parameters as Theorem~\ref{thm:cps-is-wpt-with-recursion}.
	For any well-typed $\lambda_c(\Sigma)$-term $x_1 : \rho_1, \dots, x_n : \rho_n \vdash M : \rho$, if $\rho_1, \dots, \rho_n, \rho$ are ground types, then for any postcondition $x_1 : \rho_1, \dots, x_n : \rho_n, x : \rho \vdash P : \answertype$, we have
	$\mathrm{wp}^{\emalgsymbol}[\tupling{\identity{}}{\mathcal{A} \interpret{M}}] (\mathcal{A}^{\emalgsymbol} \interpret{P}) = \mathcal{A}^{\emalgsymbol} \interpret{\CPS{M}\ (\lambda x : \rho. P)}$.
\end{corollary}
\begin{proof}
	Let $\rho' = \rho_1 \times \dots \times \rho_n \times \rho$.
	Apply Theorem~\ref{thm:cps-is-wpt-with-recursion} to a $\lambda_c(\Sigma)$-term $x_1 : \rho_1, \dots, x_n : \rho_n \vdash (x_1, \dots, x_n, M) : \rho'$ and a postcondition \\$y : \rho' \vdash P[\pi_1\ y/x_1, \dots, \pi_n\ y/x_n, \pi_{n+1}\ y/x] : \answertype$.
\end{proof}

By applying Theorem~\ref{thm:cps-is-wpt-with-recursion} to EM algebras in Section~\ref{sec:wpt} (Example~\ref{ex:total-correctness-wp},\ref{ex:partial-correctness-wp},\ref{ex:trace-property-wp},\ref{ex:expected-cost-wp},\ref{ex:cost-moment-wp}), we obtain all examples in Section~\ref{sec:examples} as instances of Theorem~\ref{thm:cps-is-wpt-with-recursion}.
We explain below how modal operators and fixed points in CPS-transformed programs are interpreted in each instance based on our general framework, which turns out to be essentially the same as what happened in Section~\ref{sec:examples}.

\begin{example}[total/partial correctness]\label{ex:total-partial-correctness-cps}
	The CPS transformation in Section~\ref{subsec:example-total-partial} does give weakest preconditions.
	We apply Theorem~\ref{thm:cps-is-wpt-with-recursion} to the EM algebras in Example~\ref{ex:total-correctness-wp},\ref{ex:partial-correctness-wp}.
	For total correctness, fixed points in the target language are interpreted as the least fixed points with respect to (the pointwise extension of) the order $\mathbf{false} \le \mathbf{true}$.
	For partial correctness, they are interpreted as greatest fixed points because we use $\answerobj^{\op} = (\{ \mathbf{false}, \mathbf{true} \}, {\ge})$ instead of $\answerobj = (\{ \mathbf{false}, \mathbf{true} \}, {\le})$. Here, recall that $\mathcal{A}$ is defined in Example~\ref{ex:total-partial-correctness-semantics}
	\begin{align}
		\mathcal{A}^{\emalgsymbol_{\mathrm{tot}}} \interpret{\mu f {:} \rho \to \answertype. M}(\gamma) &= \mathrm{lfp}^{\exponential{\mathcal{A}^{\emalgsymbol_{\mathrm{tot}}}\interpret{\rho}}{\answerobj}}(\mathcal{A}^{\emalgsymbol_{\mathrm{tot}}} \interpret{M}(\gamma, {-})) \\
		\mathcal{A}^{\emalgsymbol_{\mathrm{par}}} \interpret{\mu f {:} \rho \to \answertype. M}(\gamma) &= \mathrm{lfp}^{\exponential{\mathcal{A}^{\emalgsymbol_{\mathrm{par}}}\interpret{\rho}}{\answerobj^{\op}}}(\mathcal{A}^{\emalgsymbol_{\mathrm{par}}} \interpret{M}(\gamma, {-})) = \mathrm{gfp}^{\exponential{\mathcal{A}^{\emalgsymbol_{\mathrm{par}}}\interpret{\rho}}{\answerobj}}(\mathcal{A}^{\emalgsymbol_{\mathrm{par}}} \interpret{M}(\gamma, {-}))
	\end{align}
\end{example}

\begin{example}[safety property]\label{ex:trace-property-cps}
	Let $\vdash M : 1$ be a (closed) $\lambda_c(\Sigma)$-term.
	By Example~\ref{ex:trace-property-wp}, the safety property for $M$ holds if and only if $q_0 \in \mathrm{wp}[\mathcal{A} \interpret{M}](U)$ where $\mathcal{A}$ is defined in Example~\ref{ex:trace-property-semantics}.
	By Theorem~\ref{thm:cps-is-wpt-with-recursion}, we can reduce the problem of verifying a safety property to the problem of model checking the formula $q_0 \in \mathcal{A}^{\emalgsymbol_{\mathrm{tr}}} \interpret{\CPS{M} (\lambda \_. \mathbf{true})}$ via the CPS transformation.
	Here, we slightly extend the target language with $\mathcal{A}^{\emalgsymbol_{\mathrm{tr}}} \interpret{\mathbf{true}} = U$ (see Section~\ref{subsec:extending-target-language} for details).
	This gives the same translation as~\cite[Thm~3]{kobayashi2018}.

	The generic effect for $\operation{event}_a$ in a $\lambda_c(\Sigma)$-term is CPS-transformed to the corresponding modal operator
	\[ \CPS{(\geneff{\operation{event}_a})} \quad=\quad \lambda k : 1 \to \answertype.\ \operation{event}_a(k\ ()). \]
	Note that we use the notation \eqref{eq:n-ary-algebraic-operation-notation} for the unary operation $\operation{event}_a$.
	In our target language, $\operation{event}_a$ is interpreted as
	\[ \mathcal{A}^{\emalgsymbol_{\mathrm{tr}}} \interpret{\operation{event}_a(M)}(x) \quad=\quad \langle a \rangle (\mathcal{A}^{\emalgsymbol_{\mathrm{tr}}} \interpret{M}(x)). \]
	The modal operator $\langle a \rangle$ here has the same meaning as the modal operator in the HFL used in~\cite{kobayashi2018}. 
	Compared to their target language, our target language lacks the dual modal operator $[a]$ because $[a]$ is not used for the reduction for the safety property.
	The target language with dual modal operators is discussed in \referappendix{sec:duality}{I}.
	Such an extension allows us to take the negation of a formula and is useful when considering may-reachability.

	The generic effect for nondeterministic branching $\join$ is CPS-transformed to the corresponding modal operator
	\[ \CPS{(\geneff{\join})} \quad=\quad \lambda k : 1 + 1 \to \answertype.\ k\ (\iota_1\ ()) \join k\ (\iota_2\ ()) \]
	where we use the notation \eqref{eq:n-ary-algebraic-operation-notation} for the binary infix operator $\join$.
	The interpretation of the modal operator for $\join$ is given by conjunction.
	\[ \mathcal{A}^{\emalgsymbol_{\mathrm{tr}}} \interpret{M \join N}(\gamma) \quad=\quad \mathcal{A}^{\emalgsymbol_{\mathrm{tr}}} \interpret{M}(\gamma) \cap \mathcal{A}^{\emalgsymbol_{\mathrm{tr}}} \interpret{N}(\gamma) \]

	Similarly to Example~\ref{ex:total-partial-correctness-cps}, the interpretation of a fixed point is given by the greatest fixed point with respect to the inclusion order on $2^U$ because we consider the opposite order $\answerobj = (2^U, {\supseteq})$.
\end{example}

\begin{example}[expected cost analysis]\label{ex:expected-cost-cps}
	Given a well-typed $\lambda_c(\Sigma)$-term $\Gamma \vdash M : \rho$ such that $\rho$ and types in $\Gamma$ are ground types, its expected cost (i.e.\ the expected number of $\mathbf{tick}$ operations invoked by $M$) is given by the weakest precondition $\mathrm{wp}^{\emalgsymbol_{\mathrm{ex}}}[\mathcal{A}\interpret{M}](0) = \emalgsymbol^{P} \comp P \pi_1 \comp \mathcal{A}\interpret{M} : \mathcal{A}\interpret{\Gamma} \to \mathbb{W}$ by Example~\ref{ex:expected-cost-wp}.
	By Theorem~\ref{thm:cps-is-wpt-with-recursion}, the CPS transformation gives the expected cost $\mathrm{wp}^{\emalgsymbol_{\mathrm{ex}}}[\mathcal{A}\interpret{M}](0) = \mathcal{A}^{\emalgsymbol_{\mathrm{ex}}} \interpret{\CPS{M}\ (\lambda x. 0)}$.
	This corresponds to~\cite[Thm~4.15]{avanzini2021}.

	Generic effects in $\lambda_c(\Sigma)$-terms are CPS-transformed to corresponding modal operators.
	\begin{gather}
		\CPS{(\geneff{\operation{unif}})} = \lambda k.\ \operation{unif}(k, ()) \qquad
		\CPS{(M^{\checkmark})} = \lambda k.\ (\CPS{M}\ k)^{\checkmark} \qquad
		\CPS{(M_1 +_p M_2)} = \lambda k.\ (\CPS{M_1}\ k +_p \CPS{M_2}\ k)
	\end{gather}
	That is, the modal operator for tick $({-})^{\checkmark}$, probabilistic branching ${+_p}$ (or a Bernoulli distribution), and sampling from the uniform distribution $\operation{unif}$ correspond to the addition of $1$, a weighted sum, and the integration over the uniform distribution, respectively.
	\begin{align}
		&\mathcal{A}^{\emalgsymbol_{\mathrm{ex}}} \interpret{M^{\checkmark}}(\gamma) &&=\qquad 1 + \mathcal{A}^{\emalgsymbol_{\mathrm{ex}}} \interpret{M}(\gamma) \\
		&\mathcal{A}^{\emalgsymbol_{\mathrm{ex}}}\interpret{\operation{unif}(\lambda x : \mathbf{real}.\ M, ())}(\gamma) &&=\qquad \int_{[0, 1]} \mathcal{A}^{\emalgsymbol_{\mathrm{ex}}}\interpret{M}(\gamma, x) \,\mathrm{d} x \\
		&\mathcal{A}^{\emalgsymbol_{\mathrm{ex}}} \interpret{M_1 +_p M_2}(\gamma) &&=\qquad p \cdot \mathcal{A}^{\emalgsymbol_{\mathrm{ex}}} \interpret{M_1}(\gamma) + (1 - p) \cdot \mathcal{A}^{\emalgsymbol_{\mathrm{ex}}} \interpret{M_2}(\gamma)
	\end{align}
	Fixed points are interpreted as the least fixed points with respect to the standard order of $[0, \infty]$.
\end{example}

\begin{example}[cost moment analysis]\label{ex:cost-moment-cps}
	For any $\lambda_c(\Sigma)$-term $\Gamma \vdash M : \rho$ such that $\rho$ and types in $\Gamma$ are ground types, the tuple of moments of cost is given by the weakest precondition $\mathrm{wp}^{\emalgsymbol_{\mathrm{mo}, n}}[\mathcal{A}\interpret{M}](\mathbf{0}) = \emalgsymbol^{P}_n \comp P \mathrm{pow}_n \comp P \pi_1 \comp \mathcal{A}\interpret{M} : \mathcal{A}\interpret{\Gamma} \to \mathbb{W}^n$ by Example~\ref{ex:cost-moment-wp}.
	By Theorem~\ref{thm:cps-is-wpt-with-recursion}, the CPS transformation gives the moments of cost $\mathrm{wp}^{\emalgsymbol_{\mathrm{mo}, n}}[\mathcal{A}\interpret{M}](\mathbf{0}) = \mathcal{A}^{\emalgsymbol_{\mathrm{mo}, n}} \interpret{\CPS{M}\ (\lambda x. \mathbf{0})}$.

	Similarly to Example~\ref{ex:expected-cost-cps}, the modal operator for probabilistic branching ${+_p}$ and sampling $\operation{unif}$ are interpreted as the (component-wise) weighted sum and the (component-wise) integration, respectively.
	On the other hand, the modal operator for tick is interpreted by the elapse function $\mathcal{A}^{\emalgsymbol_{\mathrm{mo}, n}} \interpret{M^{\checkmark}}(x) = 1 \oplus \mathcal{A}^{\emalgsymbol_{\mathrm{mo}, n}} \interpret{M}(x)$.
\end{example}

\begin{example}[conditional weakest preexpectation]\label{ex:conditional-wp-cps}
	We apply the CPS transformation to obtain the conditional weakest preexpectation of a $\lambda_c(\Sigma)$-term $\Gamma \vdash M : \rho$.
	Based on the observations in Example~\ref{ex:conditional-wp-wp}, we aim to obtain the weakest precondition component-wise.
	We consider two types $\answertype_1$ and $\answertype_2$, which are interpreted by $\answerobj_1$ and $\answerobj_2$, respectively.
	By Theorem~\ref{thm:cps-is-wpt-with-recursion} and Example~\ref{ex:conditional-wp-wp}, we have $\mathrm{wp}^{\emalgsymbol_{\mathrm{cwp}, i}}[\mathcal{A}\interpret{M}](\mathcal{A}^{\emalgsymbol_{\mathrm{cwp}, i}}\interpret{Q_i}) = \mathcal{A}^{\emalgsymbol_{\mathrm{cwp}, i}}\interpret{\CPS{M}\ (\lambda x. Q_i)}$ for $i = 1, 2$ and $x : \rho \vdash Q_i : \answertype_i$.
	The modal operator $\operation{score}$ for conditioning is interpreted by the following multiplication.
	\[ \mathcal{A}^{\emalgsymbol_{\mathrm{cwp}, i}}\interpret{\operation{score}(\lambda x:1. M, N)}(\gamma) \quad=\quad \mathcal{A}^{\emalgsymbol_{\mathrm{cwp}, i}}\interpret{N}(\gamma) \cdot \mathcal{A}^{\emalgsymbol_{\mathrm{cwp}, i}}\interpret{M[()/x]}(\gamma) \]
	The interpretation of the modal operators for $\operation{unif}$ and ${+}_p$ is the same as Example~\ref{ex:expected-cost-cps}.
	Therefore, these modal operators are interpreted in the same way for both $i = 1$ and $i = 2$.
	However, the interpretations of fixed points are different: $\mathcal{A}^{\emalgsymbol_{\mathrm{cwp}, 1}}\interpret{\mu f. M}$ is the least fixed point with respect to the standard order of $[0, \infty]$, whereas $\mathcal{A}^{\emalgsymbol_{\mathrm{cwp}, 2}}\interpret{\mu f. M}$ is the greatest with respect to the standard order of $[0, 1]$, according to order relations defined on $\answerobj_1$ and $\answerobj_2$.
\end{example}

\subsection{Extending the Target Language}\label{subsec:extending-target-language}
If $\mathcal{A}^{\emalgsymbol} \interpret{\answertype} = \answerobj$ has an algebraic structure like a lattice structure, we can extend the target language using operators of the algebraic structure.
Although this doesn't essentially change the CPS transformation, such extensions are useful to rewrite tricky modal operators with other well-known term constructors.

\begin{definition}[extended $\lambda_{\mathrm{HFL}}$-terms]
	Suppose we have an $n$-ary operator $\mathrm{op}^{\answerobj} \in \category{C}(\answerobj^n, \answerobj)$.
	We extend the syntax of $\lambda_{\mathrm{HFL}}$-terms by $M \coloneqq \dots \mid \mathrm{op}(M_1, \dots, M_n)$ with the following typing rule.
	\begin{mathpar}
		\inferrule{
			\Gamma \vdash M_1 : \answertype \\
			\dots \\
			\Gamma \vdash M_n : \answertype
		}{
			\Gamma \vdash \mathrm{op}(M_1, \dots, M_n) : \answertype
		}
	\end{mathpar}
	The interpretation is given by
	\[ \mathcal{A}^{\emalgsymbol}\interpret{\mathrm{op}(M_1, \dots, M_n)} \quad=\quad \mathrm{op}^{\answerobj} \comp \langle \mathcal{A}^{\emalgsymbol}\interpret{M_1}, \dots, \mathcal{A}^{\emalgsymbol}\interpret{M_n} \rangle. \]
\end{definition}

A typical example is when $\answerobj$ is an internal bounded distributive lattice.
\begin{definition}[internal bounded distributive lattice]
	A \emph{bounded distributive lattice internal to $\category{C}$} is a tuple $(\answerobj, \top, {\land}, \bot, {\lor})$ where $\answerobj \in \category{C}$; and $\top, \bot : 1 \to \answerobj$ and ${\land}, {\lor} : \answerobj^2 \to \answerobj$ are morphisms in $\category{C}$ that satisfies the equational axioms of bounded distributive lattices.
	That is, $\lor$ and $\land$ are idempotent, commutative, and associative binary operations; $\bot$ and $\top$ are the unit element for $\lor$ and $\land$, respectively; and $\lor$ and $\land$ satisfy the absorption and the distributive laws.
\end{definition}

\begin{example}[safety property, continued from Example~\ref{ex:trace-property-cps}]\label{ex:trace-propery-extended}
	For safety property, $\answerobj = (2^U, {\supseteq})$ has an internal bounded distributive lattice structure $(\answerobj, \mathbf{true}, {\land}, \mathbf{false}, {\lor})$ defined by $(\answerobj, U, {\cap}, \emptyset, {\cup})$.
	Note that the internal bounded distributive lattice structure is not ``reversed'' here although we use the reversed inclusion order ${\supseteq}$ for $\answerobj$.
	We extend $\lambda_{\mathrm{HFL}}$-terms by $M, N \coloneqq \dots \mid \mathbf{true} \mid \mathbf{false} \mid M \land N \mid M \lor N$.
	Using the extended $\lambda_{\mathrm{HFL}}$-terms, we can replace ${\join}$ with $\land$ because they are semantically equivalent: 
	$\mathcal{A}^{\emalgsymbol_{\mathrm{tr}}} \interpret{M \join N} = \mathcal{A}^{\emalgsymbol_{\mathrm{tr}}} \interpret{M \land N}$.
	Thus, we can redefine our CPS transformation as
	\[ \CPS{(\geneff{\join})} \quad=\quad \lambda k : 1 + 1 \to \answertype.\ k\ (\iota_1\ ()) \land k\ (\iota_2\ ()).\]
\end{example}

\begin{example}[expected cost analysis, continued from Example~\ref{ex:expected-cost-cps}]\label{ex:expected-cost-extended}
	For expected cost analysis, $\answerobj = \mathbb{W}$ has the additive and the multiplicative monoid structure.
	So, we extend $\lambda_{\mathrm{HFL}}$-terms with $({+}), ({\cdot}) : \mathbb{W}^2 \to \mathbb{W}$ and constants $w \in \mathbb{W}$.
	That is, we define $M, N \coloneqq \dots \mid w \mid M + N \mid M \cdot N$.
	Modal operators $({-})^{\checkmark}$ and $M_1 +_p M_2$ in extended $\lambda_{\mathrm{HFL}}$-terms are semantically equivalent to $1 + ({-})$ and $p \cdot M_1 + (1 - p) \cdot M_2$, respectively, where the subtraction in $1 - p$ is a meta-level operation.
	Now, we redefine the CPS transformation as
	\[ \CPS{(M^{\checkmark})}\ =\ \lambda k. 1 + \CPS{M}\ k \qquad \CPS{(M_1 +_p M_2)}\ =\ \lambda k. p \cdot \CPS{M_1}\ k + (1 - p) \cdot \CPS{M_2}\ k. \]
\end{example}

We note that when $\emalgsymbol : T \answerobj \to \answerobj$ has a structure for the de Morgan duality, we can extend the $\lambda_{\mathrm{HFL}}$-terms with negation, which clarifies, for example, the duality between total correctness and partial correctness (see \referappendix{sec:duality}{I}).
We can also extend $\lambda_{\mathrm{HFL}}$-terms with quantifiers as explained in \referappendix{sec:quantifiers}{J}.

\section{Related Work}
\subsection{Generic Weakest Preconditions}
The weakest precondition transformer proposed by Dijkstra~\cite{dijkstra1975} is for guarded command language (GCL), which is an imperative language with nondeterminism.
Dijkstra's weakest precondition transformer is extended to, for example, \emph{probabilistic} GCL~\cite{mciver2001,kaminski2018} and further extended to a probabilistic functional language~\cite{avanzini2021}.
\emph{Separation logic}~\cite{reynolds2002} is an extension of Hoare logic that is more suited for reasoning about pointers.
Recently, a concurrent extension of separation logic, Iris~\cite{jung2015}, is applied to verify, for example, Rust programs~\cite{jung2018a} and effect handlers~\cite{devilhena2021}.

There are several works that aim to give uniform accounts of various weakest preconditions in category-theoretic frameworks~\cite{hasuo2015,martin2006,aguirre2020,goncharov2013}.
A framework based on fibrations and monad liftings \cite{aguirre2020} captures a wide class of generic weakest preconditions, and our Def.~\ref{def:weakest-precondition} is based on their work.
However, most of such categorical frameworks (including \cite{aguirre2020}) lack syntactic aspects of weakest preconditions or are limited to imperative programs whereas our framework focuses on syntactic computation of weakest preconditions for higher-order functional programs.
We also extended the list of examples in~\cite{aguirre2020} by adding safety properties and may/must-reachability and by considering domain theoretic models.
Categorical semantics for separation logic is studied using BI-hyperdoctrines \cite{kammar2017,biering2007,bizjak2018,polzer2020}.
It might be possible to instantiate our result to separation logic using EM algebras over BI-hyperdoctrines, but we leave it as future work.

\subsection{CPS Transformations and Weakest Preconditions}
For first-order imperative languages, the relationship between continuation-passing style and weakest precondition transformers is already observed \cite{jensen1978,audebaud1999}.
As for higher-order functional languages, this relationship has a few applications.
One is the expected cost analyses of probabilistic programs \cite{avanzini2021}.
Another application is trace properties for programs with nondeterminism and output \cite{kobayashi2018,kobayashi2009a}, although they didn't make it explicit that behind their method (except for linear-time temporal properties~\cite[Section~7]{kobayashi2018}) is the relationship between CPS transformations and weakest preconditions.
Note that the correctness of these works is proved for specific computational effects, while our result gives a general framework that subsumes these works.

\subsection{Dijkstra Monads}
The relationship between CPS transformations and weakest preconditions is used for \emph{Dijkstra monads}~\cite{swamy2013a,ahman2017,maillard2019}, which integrate computation of weakest precondition transformers with dependent type systems.
Our framework is closely related to Dijkstra monads, but existing results on Dijkstra monads have difficulty in handling programs with nondeterminism and probabilities.
That is, their results cannot be applied to some of our instances (e.g.\ safety properties and expected cost analyses).
To the best of our knowledge, our framework is the most \emph{general} one for \emph{syntactic} weakest preconditions, and none of the existing frameworks can cover all instances in Table~\ref{tab:instances}.

More concretely, a general recipe to derive Dijkstra monads using a CPS transformation is proposed in \cite{ahman2017}, but the correctness of their result is proved using deterministic semantics, which makes it difficult to apply their result to programs with nondeterminism or probabilities.
Their work was followed by a categorical exposition of Dijkstra monads~\cite{maillard2019}.
Their key insights are that Dijkstra monads correspond to \emph{monadic relations}.
The Dijkstra monads considered in \cite{ahman2017} correspond to a subclass of monadic relations, which are obtained from the continuation monad pseudo-transformer.
In contrast, our setting (Def.~\ref{def:weakest-precondition}) corresponds to another subclass of monadic relations derived from Eilenberg--Moore algebras.
The work \cite{maillard2019} explains the syntactic aspect only for the former subclass.
Our Theorem~\ref{thm:cps-is-wpt-with-recursion} provides the syntactic counterpart for the latter subclass.

\section{Conclusions and Future Work}
We provided a general framework for syntactic computation of generic weakest preconditions for effectful functional programs with recursion.
We instantiated our framework to various problems of program verification.

In future work, we aim to extend our framework to effect handlers~\cite{hillerstrom2017}.
Another direction is to look for a way to solve logical constraints since we have a reduction from programs to logical constraints.
We would like to also seek more instances of our framework, such as separation logic and verification of probabilistic programs with conditioning.

\bibliographystyle{ACM-Reference-Format}
\bibliography{cps-wpt}

%%% -*-BibTeX-*-
%%% Do NOT edit. File created by BibTeX with style
%%% ACM-Reference-Format-Journals [18-Jan-2012].

\begin{thebibliography}{54}

%%% ====================================================================
%%% NOTE TO THE USER: you can override these defaults by providing
%%% customized versions of any of these macros before the \bibliography
%%% command.  Each of them MUST provide its own final punctuation,
%%% except for \shownote{} and \showURL{}.  The latter two
%%% do not use final punctuation, in order to avoid confusing it with
%%% the Web address.
%%%
%%% To suppress output of a particular field, define its macro to expand
%%% to an empty string, or better, \unskip, like this:
%%%
%%% \newcommand{\showURL}[1]{\unskip}   % LaTeX syntax
%%%
%%% \def \showURL #1{\unskip}           % plain TeX syntax
%%%
%%% ====================================================================

\ifx \showCODEN    \undefined \def \showCODEN     #1{\unskip}     \fi
\ifx \showISBNx    \undefined \def \showISBNx     #1{\unskip}     \fi
\ifx \showISBNxiii \undefined \def \showISBNxiii  #1{\unskip}     \fi
\ifx \showISSN     \undefined \def \showISSN      #1{\unskip}     \fi
\ifx \showLCCN     \undefined \def \showLCCN      #1{\unskip}     \fi
\ifx \shownote     \undefined \def \shownote      #1{#1}          \fi
\ifx \showarticletitle \undefined \def \showarticletitle #1{#1}   \fi
\ifx \showURL      \undefined \def \showURL       {\relax}        \fi
% The following commands are used for tagged output and should be
% invisible to TeX
\providecommand\bibfield[2]{#2}
\providecommand\bibinfo[2]{#2}
\providecommand\natexlab[1]{#1}
\providecommand\showeprint[2][]{arXiv:#2}

\bibitem[Abramsky and Jung(1994)]%
        {abramsky1994}
\bibfield{author}{\bibinfo{person}{Samson Abramsky} {and} \bibinfo{person}{Achim Jung}.} \bibinfo{year}{1994}\natexlab{}.
\newblock \showarticletitle{Domain Theory}.
\newblock In \bibinfo{booktitle}{\emph{Handbook of {{Logic}} in {{Computer Science}}}}. Vol.~\bibinfo{volume}{3}. \bibinfo{publisher}{Oxford University Press}, \bibinfo{pages}{1--168}.
\newblock


\bibitem[Aguirre and Katsumata(2020)]%
        {aguirre2020}
\bibfield{author}{\bibinfo{person}{Alejandro Aguirre} {and} \bibinfo{person}{Shin-ya Katsumata}.} \bibinfo{year}{2020}\natexlab{}.
\newblock \showarticletitle{Weakest Preconditions in Fibrations}. In \bibinfo{booktitle}{\emph{Proceedings of the 36th {{Conference}} on the {{Mathematical Foundations}} of {{Programming Semantics}}}} \emph{(\bibinfo{series}{Electronic {{Notes}} in {{Theoretical Computer Science}}}, Vol.~\bibinfo{volume}{352})}. \bibinfo{pages}{5--27}.
\newblock
\href{https://doi.org/10.1016/j.entcs.2020.09.002}{doi:\nolinkurl{10.1016/j.entcs.2020.09.002}}


\bibitem[Ahman et~al\mbox{.}(2017)]%
        {ahman2017}
\bibfield{author}{\bibinfo{person}{Danel Ahman}, \bibinfo{person}{C{\u a}t{\u a}lin Hri{\c t}cu}, \bibinfo{person}{Kenji Maillard}, \bibinfo{person}{Guido Mart{\'i}nez}, \bibinfo{person}{Gordon Plotkin}, \bibinfo{person}{Jonathan Protzenko}, \bibinfo{person}{Aseem Rastogi}, {and} \bibinfo{person}{Nikhil Swamy}.} \bibinfo{year}{2017}\natexlab{}.
\newblock \showarticletitle{Dijkstra Monads for Free}. In \bibinfo{booktitle}{\emph{Proceedings of the 44th {{ACM SIGPLAN Symposium}} on {{Principles}} of {{Programming Languages}} - {{POPL}} 2017}}. \bibinfo{publisher}{ACM Press}, \bibinfo{address}{Paris, France}, \bibinfo{pages}{515--529}.
\newblock
\showISBNx{978-1-4503-4660-3}
\href{https://doi.org/10.1145/3009837.3009878}{doi:\nolinkurl{10.1145/3009837.3009878}}


\bibitem[Audebaud and Zucca(1999)]%
        {audebaud1999}
\bibfield{author}{\bibinfo{person}{Philippe Audebaud} {and} \bibinfo{person}{Elena Zucca}.} \bibinfo{year}{1999}\natexlab{}.
\newblock \showarticletitle{Deriving Proof Rules from Continuation Semantics}.
\newblock \bibinfo{journal}{\emph{Formal Aspects of Computing}} \bibinfo{volume}{11}, \bibinfo{number}{4} (\bibinfo{date}{Dec.} \bibinfo{year}{1999}), \bibinfo{pages}{426--447}.
\newblock
\showISSN{0934-5043, 1433-299X}
\href{https://doi.org/10.1007/s001650050041}{doi:\nolinkurl{10.1007/s001650050041}}


\bibitem[Avanzini et~al\mbox{.}(2021)]%
        {avanzini2021}
\bibfield{author}{\bibinfo{person}{Martin Avanzini}, \bibinfo{person}{Gilles Barthe}, {and} \bibinfo{person}{Ugo Dal~Lago}.} \bibinfo{year}{2021}\natexlab{}.
\newblock \showarticletitle{On Continuation-Passing Transformations and Expected Cost Analysis}.
\newblock \bibinfo{journal}{\emph{Proceedings of the ACM on Programming Languages}} \bibinfo{volume}{5}, \bibinfo{number}{ICFP} (\bibinfo{date}{Aug.} \bibinfo{year}{2021}), \bibinfo{pages}{1--30}.
\newblock
\showISSN{2475-1421}
\href{https://doi.org/10.1145/3473592}{doi:\nolinkurl{10.1145/3473592}}


\bibitem[Barnett et~al\mbox{.}(2006)]%
        {barnett2006}
\bibfield{author}{\bibinfo{person}{Mike Barnett}, \bibinfo{person}{Bor-Yuh~Evan Chang}, \bibinfo{person}{Robert DeLine}, \bibinfo{person}{Bart Jacobs}, {and} \bibinfo{person}{K.~Rustan~M. Leino}.} \bibinfo{year}{2006}\natexlab{}.
\newblock \showarticletitle{Boogie: A Modular Reusable Verifier for Object-Oriented Programs}.
\newblock In \bibinfo{booktitle}{\emph{Formal {{Methods}} for {{Components}} and {{Objects}}}}. Vol.~\bibinfo{volume}{4111}. \bibinfo{publisher}{Springer Berlin Heidelberg}, \bibinfo{address}{Berlin, Heidelberg}, \bibinfo{pages}{364--387}.
\newblock
\showISBNx{978-3-540-36749-9 978-3-540-36750-5}
\href{https://doi.org/10.1007/11804192_17}{doi:\nolinkurl{10.1007/11804192_17}}


\bibitem[Beck(1969)]%
        {beck1969}
\bibfield{author}{\bibinfo{person}{Jon Beck}.} \bibinfo{year}{1969}\natexlab{}.
\newblock \bibinfo{booktitle}{\emph{Distributive Laws}}. Vol.~\bibinfo{volume}{80}.
\newblock \bibinfo{publisher}{Springer Berlin Heidelberg}, \bibinfo{address}{Berlin, Heidelberg}, \bibinfo{pages}{119--140}.
\newblock
\showISBNx{978-3-540-04601-1 978-3-540-36091-9}
\href{https://doi.org/10.1007/BFb0083084}{doi:\nolinkurl{10.1007/BFb0083084}}


\bibitem[Biering et~al\mbox{.}(2007)]%
        {biering2007}
\bibfield{author}{\bibinfo{person}{Bodil Biering}, \bibinfo{person}{Lars Birkedal}, {and} \bibinfo{person}{Noah {Torp-Smith}}.} \bibinfo{year}{2007}\natexlab{}.
\newblock \showarticletitle{{{BI-hyperdoctrines}}, Higher-Order Separation Logic, and Abstraction}.
\newblock \bibinfo{journal}{\emph{ACM Transactions on Programming Languages and Systems}} \bibinfo{volume}{29}, \bibinfo{number}{5} (\bibinfo{date}{Aug.} \bibinfo{year}{2007}), \bibinfo{pages}{24}.
\newblock
\showISSN{0164-0925, 1558-4593}
\href{https://doi.org/10.1145/1275497.1275499}{doi:\nolinkurl{10.1145/1275497.1275499}}


\bibitem[Bizjak and Birkedal(2018)]%
        {bizjak2018}
\bibfield{author}{\bibinfo{person}{Ale{\v s} Bizjak} {and} \bibinfo{person}{Lars Birkedal}.} \bibinfo{year}{2018}\natexlab{}.
\newblock \showarticletitle{On Models of Higher-Order Separation Logic}.
\newblock \bibinfo{journal}{\emph{Electronic Notes in Theoretical Computer Science}}  \bibinfo{volume}{336} (\bibinfo{date}{April} \bibinfo{year}{2018}), \bibinfo{pages}{57--78}.
\newblock
\showISSN{15710661}
\href{https://doi.org/10.1016/j.entcs.2018.03.016}{doi:\nolinkurl{10.1016/j.entcs.2018.03.016}}


\bibitem[{de Vilhena} and Pottier(2021)]%
        {devilhena2021}
\bibfield{author}{\bibinfo{person}{Paulo~Em{\'i}lio {de Vilhena}} {and} \bibinfo{person}{Fran{\c c}ois Pottier}.} \bibinfo{year}{2021}\natexlab{}.
\newblock \showarticletitle{A Separation Logic for Effect Handlers}.
\newblock \bibinfo{journal}{\emph{Proceedings of the ACM on Programming Languages}} \bibinfo{volume}{5}, \bibinfo{number}{POPL} (\bibinfo{date}{Jan.} \bibinfo{year}{2021}), \bibinfo{pages}{1--28}.
\newblock
\showISSN{2475-1421, 2475-1421}
\href{https://doi.org/10.1145/3434314}{doi:\nolinkurl{10.1145/3434314}}


\bibitem[Dijkstra(1975)]%
        {dijkstra1975}
\bibfield{author}{\bibinfo{person}{Edsger~W. Dijkstra}.} \bibinfo{year}{1975}\natexlab{}.
\newblock \showarticletitle{Guarded Commands, Nondeterminacy and Formal Derivation of Programs}.
\newblock \bibinfo{journal}{\emph{Commun. ACM}} \bibinfo{volume}{18}, \bibinfo{number}{8} (\bibinfo{date}{Aug.} \bibinfo{year}{1975}), \bibinfo{pages}{453--457}.
\newblock
\showISSN{0001-0782, 1557-7317}
\href{https://doi.org/10.1145/360933.360975}{doi:\nolinkurl{10.1145/360933.360975}}


\bibitem[Filli{\^a}tre and Paskevich(2013)]%
        {filliatre2013}
\bibfield{author}{\bibinfo{person}{Jean-Christophe Filli{\^a}tre} {and} \bibinfo{person}{Andrei Paskevich}.} \bibinfo{year}{2013}\natexlab{}.
\newblock \showarticletitle{Why3 --- {{Where Programs Meet Provers}}}.
\newblock In \bibinfo{booktitle}{\emph{Programming {{Languages}} and {{Systems}}}}. Vol.~\bibinfo{volume}{7792}. \bibinfo{publisher}{Springer Berlin Heidelberg}, \bibinfo{address}{Berlin, Heidelberg}, \bibinfo{pages}{125--128}.
\newblock
\showISBNx{978-3-642-37035-9 978-3-642-37036-6}
\href{https://doi.org/10.1007/978-3-642-37036-6_8}{doi:\nolinkurl{10.1007/978-3-642-37036-6_8}}


\bibitem[Fiore and Simpson(1999)]%
        {fiore1999}
\bibfield{author}{\bibinfo{person}{Marcelo Fiore} {and} \bibinfo{person}{Alex Simpson}.} \bibinfo{year}{1999}\natexlab{}.
\newblock \showarticletitle{Lambda {{Definability}} with {{Sums}} via {{Grothendieck Logical Relations}}}.
\newblock In \bibinfo{booktitle}{\emph{Typed {{Lambda Calculi}} and {{Applications}}}}. Vol.~\bibinfo{volume}{1581}. \bibinfo{publisher}{Springer Berlin Heidelberg}, \bibinfo{address}{Berlin, Heidelberg}, \bibinfo{pages}{147--161}.
\newblock
\showISBNx{978-3-540-65763-7 978-3-540-48959-7}
\href{https://doi.org/10.1007/3-540-48959-2_12}{doi:\nolinkurl{10.1007/3-540-48959-2_12}}


\bibitem[Flanagan(2006)]%
        {flanagan2006}
\bibfield{author}{\bibinfo{person}{Cormac Flanagan}.} \bibinfo{year}{2006}\natexlab{}.
\newblock \showarticletitle{Hybrid Type Checking}. In \bibinfo{booktitle}{\emph{Conference Record of the 33rd {{ACM SIGPLAN-SIGACT}} Symposium on {{Principles}} of Programming Languages - {{POPL}}'06}}. \bibinfo{publisher}{ACM Press}, \bibinfo{address}{Charleston, South Carolina, USA}, \bibinfo{pages}{245--256}.
\newblock
\showISBNx{978-1-59593-027-9}
\href{https://doi.org/10.1145/1111037.1111059}{doi:\nolinkurl{10.1145/1111037.1111059}}


\bibitem[F{\"u}hrmann and Thielecke(2004)]%
        {fuhrmann2004}
\bibfield{author}{\bibinfo{person}{Carsten F{\"u}hrmann} {and} \bibinfo{person}{Hayo Thielecke}.} \bibinfo{year}{2004}\natexlab{}.
\newblock \showarticletitle{On the Call-by-Value {{CPS}} Transform and Its Semantics}.
\newblock \bibinfo{journal}{\emph{Information and Computation}} \bibinfo{volume}{188}, \bibinfo{number}{2} (\bibinfo{date}{Jan.} \bibinfo{year}{2004}), \bibinfo{pages}{241--283}.
\newblock
\showISSN{08905401}
\href{https://doi.org/10.1016/j.ic.2003.08.001}{doi:\nolinkurl{10.1016/j.ic.2003.08.001}}


\bibitem[Goncharov and Schroder(2013)]%
        {goncharov2013}
\bibfield{author}{\bibinfo{person}{Sergey Goncharov} {and} \bibinfo{person}{Lutz Schroder}.} \bibinfo{year}{2013}\natexlab{}.
\newblock \showarticletitle{A {{Relatively Complete Generic Hoare Logic}} for {{Order-Enriched Effects}}}. In \bibinfo{booktitle}{\emph{2013 28th {{Annual ACM}}/{{IEEE Symposium}} on {{Logic}} in {{Computer Science}}}}. \bibinfo{publisher}{IEEE}, \bibinfo{address}{New Orleans, LA, USA}, \bibinfo{pages}{273--282}.
\newblock
\showISBNx{978-1-4799-0413-6}
\href{https://doi.org/10.1109/LICS.2013.33}{doi:\nolinkurl{10.1109/LICS.2013.33}}


\bibitem[Hasegawa and Kakutani(2002)]%
        {hasegawa2002}
\bibfield{author}{\bibinfo{person}{Masahito Hasegawa} {and} \bibinfo{person}{Yoshihiko Kakutani}.} \bibinfo{year}{2002}\natexlab{}.
\newblock \showarticletitle{Axioms for Recursion in Call-by-Value}.
\newblock \bibinfo{journal}{\emph{Higher-Order and Symbolic Computation}} \bibinfo{volume}{15}, \bibinfo{number}{2/3} (\bibinfo{year}{2002}), \bibinfo{pages}{235--264}.
\newblock
\showISSN{13883690}
\href{https://doi.org/10.1023/A:1020895213317}{doi:\nolinkurl{10.1023/A:1020895213317}}


\bibitem[Hasuo(2015)]%
        {hasuo2015}
\bibfield{author}{\bibinfo{person}{Ichiro Hasuo}.} \bibinfo{year}{2015}\natexlab{}.
\newblock \showarticletitle{Generic Weakest Precondition Semantics from Monads Enriched with Order}.
\newblock \bibinfo{journal}{\emph{Theoretical Computer Science}}  \bibinfo{volume}{604} (\bibinfo{date}{Nov.} \bibinfo{year}{2015}), \bibinfo{pages}{2--29}.
\newblock
\showISSN{03043975}
\href{https://doi.org/10.1016/j.tcs.2015.03.047}{doi:\nolinkurl{10.1016/j.tcs.2015.03.047}}


\bibitem[Hermida(1993)]%
        {hermida1993}
\bibfield{author}{\bibinfo{person}{Claudio Hermida}.} \bibinfo{year}{1993}\natexlab{}.
\newblock \emph{\bibinfo{title}{Fibrations, Logical Predicates and Indeterminates}}.
\newblock \bibinfo{thesistype}{Ph.\,D. Dissertation}. \bibinfo{school}{University of Edinburgh, UK}.
\newblock


\bibitem[Heunen et~al\mbox{.}(2017)]%
        {heunen2017}
\bibfield{author}{\bibinfo{person}{Chris Heunen}, \bibinfo{person}{Ohad Kammar}, \bibinfo{person}{Sam Staton}, {and} \bibinfo{person}{Hongseok Yang}.} \bibinfo{year}{2017}\natexlab{}.
\newblock \showarticletitle{A Convenient Category for Higher-Order Probability Theory}. In \bibinfo{booktitle}{\emph{2017 32nd {{Annual ACM}}/{{IEEE Symposium}} on {{Logic}} in {{Computer Science}} ({{LICS}})}}. \bibinfo{publisher}{IEEE}, \bibinfo{address}{Reykjavik, Iceland}, \bibinfo{pages}{1--12}.
\newblock
\showISBNx{978-1-5090-3018-7}
\href{https://doi.org/10.1109/LICS.2017.8005137}{doi:\nolinkurl{10.1109/LICS.2017.8005137}}


\bibitem[Hillerstr{\"o}m et~al\mbox{.}(2017)]%
        {hillerstrom2017}
\bibfield{author}{\bibinfo{person}{Daniel Hillerstr{\"o}m}, \bibinfo{person}{Sam Lindley}, \bibinfo{person}{Robert Atkey}, {and} \bibinfo{person}{K.~C. Sivaramakrishnan}.} \bibinfo{year}{2017}\natexlab{}.
\newblock \showarticletitle{Continuation Passing Style for Effect Handlers}.
\newblock  (\bibinfo{year}{2017}), \bibinfo{pages}{19 pages}.
\newblock
\href{https://doi.org/10.4230/LIPICS.FSCD.2017.18}{doi:\nolinkurl{10.4230/LIPICS.FSCD.2017.18}}


\bibitem[Hino et~al\mbox{.}(2016)]%
        {hino2016}
\bibfield{author}{\bibinfo{person}{Wataru Hino}, \bibinfo{person}{Hiroki Kobayashi}, \bibinfo{person}{Ichiro Hasuo}, {and} \bibinfo{person}{Bart Jacobs}.} \bibinfo{year}{2016}\natexlab{}.
\newblock \showarticletitle{Healthiness from Duality}. In \bibinfo{booktitle}{\emph{Proceedings of the 31st {{Annual ACM}}/{{IEEE Symposium}} on {{Logic}} in {{Computer Science}} - {{LICS}} '16}}. \bibinfo{publisher}{ACM Press}, \bibinfo{address}{New York, NY, USA}, \bibinfo{pages}{682--691}.
\newblock
\showISBNx{978-1-4503-4391-6}
\href{https://doi.org/10.1145/2933575.2935319}{doi:\nolinkurl{10.1145/2933575.2935319}}


\bibitem[Hyland et~al\mbox{.}(2007)]%
        {hyland2007}
\bibfield{author}{\bibinfo{person}{Martin Hyland}, \bibinfo{person}{Paul~Blain Levy}, \bibinfo{person}{Gordon Plotkin}, {and} \bibinfo{person}{John Power}.} \bibinfo{year}{2007}\natexlab{}.
\newblock \showarticletitle{Combining Algebraic Effects with Continuations}.
\newblock \bibinfo{journal}{\emph{Theoretical Computer Science}} \bibinfo{volume}{375}, \bibinfo{number}{1-3} (\bibinfo{date}{May} \bibinfo{year}{2007}), \bibinfo{pages}{20--40}.
\newblock
\showISSN{03043975}
\href{https://doi.org/10.1016/j.tcs.2006.12.026}{doi:\nolinkurl{10.1016/j.tcs.2006.12.026}}


\bibitem[Hyland et~al\mbox{.}(2006)]%
        {hyland2006}
\bibfield{author}{\bibinfo{person}{Martin Hyland}, \bibinfo{person}{Gordon Plotkin}, {and} \bibinfo{person}{John Power}.} \bibinfo{year}{2006}\natexlab{}.
\newblock \showarticletitle{Combining Effects: {{Sum}} and Tensor}.
\newblock \bibinfo{journal}{\emph{Theoretical Computer Science}} \bibinfo{volume}{357}, \bibinfo{number}{1-3} (\bibinfo{date}{July} \bibinfo{year}{2006}), \bibinfo{pages}{70--99}.
\newblock
\showISSN{03043975}
\href{https://doi.org/10.1016/j.tcs.2006.03.013}{doi:\nolinkurl{10.1016/j.tcs.2006.03.013}}


\bibitem[Jensen(1978)]%
        {jensen1978}
\bibfield{author}{\bibinfo{person}{Kurt Jensen}.} \bibinfo{year}{1978}\natexlab{}.
\newblock \showarticletitle{Connection between {{Dijkstra}}'s Predicate-Transformers and Denotational Continuation-Semantics}.
\newblock \bibinfo{journal}{\emph{DAIMI Report Series}} \bibinfo{volume}{7}, \bibinfo{number}{86} (\bibinfo{date}{Jan.} \bibinfo{year}{1978}).
\newblock
\showISSN{2245-9316, 0105-8517}
\href{https://doi.org/10.7146/dpb.v7i86.6502}{doi:\nolinkurl{10.7146/dpb.v7i86.6502}}


\bibitem[Jung et~al\mbox{.}(2018)]%
        {jung2018a}
\bibfield{author}{\bibinfo{person}{Ralf Jung}, \bibinfo{person}{Jacques-Henri Jourdan}, \bibinfo{person}{Robbert Krebbers}, {and} \bibinfo{person}{Derek Dreyer}.} \bibinfo{year}{2018}\natexlab{}.
\newblock \showarticletitle{{{RustBelt}}: Securing the Foundations of the {{Rust}} Programming Language}.
\newblock \bibinfo{journal}{\emph{Proceedings of the ACM on Programming Languages}} \bibinfo{volume}{2}, \bibinfo{number}{POPL} (\bibinfo{date}{Jan.} \bibinfo{year}{2018}), \bibinfo{pages}{1--34}.
\newblock
\showISSN{2475-1421}
\href{https://doi.org/10.1145/3158154}{doi:\nolinkurl{10.1145/3158154}}


\bibitem[Jung et~al\mbox{.}(2015)]%
        {jung2015}
\bibfield{author}{\bibinfo{person}{Ralf Jung}, \bibinfo{person}{David Swasey}, \bibinfo{person}{Filip Sieczkowski}, \bibinfo{person}{Kasper Svendsen}, \bibinfo{person}{Aaron Turon}, \bibinfo{person}{Lars Birkedal}, {and} \bibinfo{person}{Derek Dreyer}.} \bibinfo{year}{2015}\natexlab{}.
\newblock \showarticletitle{Iris: Monoids and Invariants as an Orthogonal Basis for Concurrent Reasoning}.
\newblock \bibinfo{journal}{\emph{ACM SIGPLAN Notices}} \bibinfo{volume}{50}, \bibinfo{number}{1} (\bibinfo{date}{May} \bibinfo{year}{2015}), \bibinfo{pages}{637--650}.
\newblock
\showISSN{0362-1340, 1558-1160}
\href{https://doi.org/10.1145/2775051.2676980}{doi:\nolinkurl{10.1145/2775051.2676980}}


\bibitem[Kaminski et~al\mbox{.}(2018)]%
        {kaminski2018}
\bibfield{author}{\bibinfo{person}{Benjamin~Lucien Kaminski}, \bibinfo{person}{Joost-Pieter Katoen}, \bibinfo{person}{Christoph Matheja}, {and} \bibinfo{person}{Federico Olmedo}.} \bibinfo{year}{2018}\natexlab{}.
\newblock \showarticletitle{Weakest Precondition Reasoning for Expected Runtimes of Randomized Algorithms}.
\newblock \bibinfo{journal}{\emph{J. ACM}} \bibinfo{volume}{65}, \bibinfo{number}{5} (\bibinfo{date}{Aug.} \bibinfo{year}{2018}), \bibinfo{pages}{1--68}.
\newblock
\showISSN{00045411}
\href{https://doi.org/10.1145/3208102}{doi:\nolinkurl{10.1145/3208102}}


\bibitem[Kammar et~al\mbox{.}(2022)]%
        {kammar2022}
\bibfield{author}{\bibinfo{person}{Ohad Kammar}, \bibinfo{person}{Shin-ya Katsumata}, {and} \bibinfo{person}{Philip Saville}.} \bibinfo{year}{2022}\natexlab{}.
\newblock \showarticletitle{Fully Abstract Models for Effectful {$\lambda$}-Calculi via Category-Theoretic Logical Relations}.
\newblock \bibinfo{journal}{\emph{Proceedings of the ACM on Programming Languages}} \bibinfo{volume}{6}, \bibinfo{number}{POPL} (\bibinfo{date}{Jan.} \bibinfo{year}{2022}), \bibinfo{pages}{1--28}.
\newblock
\showISSN{2475-1421}
\href{https://doi.org/10.1145/3498705}{doi:\nolinkurl{10.1145/3498705}}


\bibitem[Kammar et~al\mbox{.}(2017)]%
        {kammar2017}
\bibfield{author}{\bibinfo{person}{Ohad Kammar}, \bibinfo{person}{Paul~B. Levy}, \bibinfo{person}{Sean~K. Moss}, {and} \bibinfo{person}{Sam Staton}.} \bibinfo{year}{2017}\natexlab{}.
\newblock \showarticletitle{A Monad for Full Ground Reference Cells}. In \bibinfo{booktitle}{\emph{2017 32nd {{Annual ACM}}/{{IEEE Symposium}} on {{Logic}} in {{Computer Science}} ({{LICS}})}}. \bibinfo{publisher}{IEEE}, \bibinfo{address}{Reykjavik, Iceland}, \bibinfo{pages}{1--12}.
\newblock
\showISBNx{978-1-5090-3018-7}
\href{https://doi.org/10.1109/LICS.2017.8005109}{doi:\nolinkurl{10.1109/LICS.2017.8005109}}


\bibitem[Katsumata(2005)]%
        {katsumata2005}
\bibfield{author}{\bibinfo{person}{Shin-ya Katsumata}.} \bibinfo{year}{2005}\natexlab{}.
\newblock \showarticletitle{A Semantic Formulation of {$\top\top$}-Lifting and Logical Predicates for Computational Metalanguage}. In \bibinfo{booktitle}{\emph{Computer {{Science Logic}}}} \emph{(\bibinfo{series}{Lecture {{Notes}} in {{Computer Science}}}, Vol.~\bibinfo{volume}{3634})}. \bibinfo{publisher}{Springer Berlin Heidelberg}, \bibinfo{address}{Berlin, Heidelberg}, \bibinfo{pages}{87--102}.
\newblock
\showISBNx{978-3-540-28231-0 978-3-540-31897-2}
\href{https://doi.org/10.1007/11538363_8}{doi:\nolinkurl{10.1007/11538363_8}}


\bibitem[Katsumata(2008)]%
        {katsumata2008}
\bibfield{author}{\bibinfo{person}{Shin-ya Katsumata}.} \bibinfo{year}{2008}\natexlab{}.
\newblock \showarticletitle{A Characterisation of Lambda Definability with Sums via {$\top\top$}-{{Closure}} Operators}. In \bibinfo{booktitle}{\emph{Computer Science Logic}}. \bibinfo{publisher}{Springer Berlin Heidelberg}, \bibinfo{address}{Berlin, Heidelberg}, \bibinfo{pages}{278--292}.
\newblock
\showISBNx{978-3-540-87531-4}
\href{https://doi.org/10.1007/978-3-540-87531-4_21}{doi:\nolinkurl{10.1007/978-3-540-87531-4_21}}


\bibitem[Katsumata(2013)]%
        {katsumata2013}
\bibfield{author}{\bibinfo{person}{Shin-ya Katsumata}.} \bibinfo{year}{2013}\natexlab{}.
\newblock \showarticletitle{Relating Computational Effects by {$\top\top$}-Lifting}.
\newblock \bibinfo{journal}{\emph{Information and Computation}}  \bibinfo{volume}{222} (\bibinfo{date}{Jan.} \bibinfo{year}{2013}), \bibinfo{pages}{228--246}.
\newblock
\showISSN{08905401}
\href{https://doi.org/10.1016/j.ic.2012.10.014}{doi:\nolinkurl{10.1016/j.ic.2012.10.014}}


\bibitem[Katsura et~al\mbox{.}(2020)]%
        {katsura2020}
\bibfield{author}{\bibinfo{person}{Hiroyuki Katsura}, \bibinfo{person}{Naoki Iwayama}, \bibinfo{person}{Naoki Kobayashi}, {and} \bibinfo{person}{Takeshi Tsukada}.} \bibinfo{year}{2020}\natexlab{}.
\newblock \showarticletitle{A New Refinement Type System for Automated $\nu \text{HFL}_\mathbb{Z}$ Validity Checking}. In \bibinfo{booktitle}{\emph{Programming Languages and Systems}} \emph{(\bibinfo{series}{Lecture Notes in Computer Science}, Vol.~\bibinfo{volume}{12470})}. \bibinfo{publisher}{Springer International Publishing}, \bibinfo{address}{Cham}, \bibinfo{pages}{86--104}.
\newblock
\showISBNx{978-3-030-64436-9 978-3-030-64437-6}
\href{https://doi.org/10.1007/978-3-030-64437-6_5}{doi:\nolinkurl{10.1007/978-3-030-64437-6_5}}


\bibitem[Kobayashi(2009)]%
        {kobayashi2009a}
\bibfield{author}{\bibinfo{person}{Naoki Kobayashi}.} \bibinfo{year}{2009}\natexlab{}.
\newblock \showarticletitle{Types and Higher-Order Recursion Schemes for Verification of Higher-Order Programs}. In \bibinfo{booktitle}{\emph{Proceedings of the 36th Annual {{ACM SIGPLAN-SIGACT}} Symposium on {{Principles}} of Programming Languages}}. \bibinfo{publisher}{ACM}, \bibinfo{address}{Savannah GA USA}, \bibinfo{pages}{416--428}.
\newblock
\showISBNx{978-1-60558-379-2}
\href{https://doi.org/10.1145/1480881.1480933}{doi:\nolinkurl{10.1145/1480881.1480933}}


\bibitem[Kobayashi et~al\mbox{.}(2018)]%
        {kobayashi2018}
\bibfield{author}{\bibinfo{person}{Naoki Kobayashi}, \bibinfo{person}{Takeshi Tsukada}, {and} \bibinfo{person}{Keiichi Watanabe}.} \bibinfo{year}{2018}\natexlab{}.
\newblock \showarticletitle{Higher-Order Program Verification via {{HFL}} Model Checking}. In \bibinfo{booktitle}{\emph{Programming {{Languages}} and {{Systems}}}} \emph{(\bibinfo{series}{Lecture {{Notes}} in {{Computer Science}}}, Vol.~\bibinfo{volume}{10801})}. \bibinfo{publisher}{Springer International Publishing}, \bibinfo{pages}{711--738}.
\newblock
\showISBNx{978-3-319-89883-4 978-3-319-89884-1}
\href{https://doi.org/10.1007/978-3-319-89884-1_25}{doi:\nolinkurl{10.1007/978-3-319-89884-1_25}}


\bibitem[Kura et~al\mbox{.}(2019)]%
        {kura2019}
\bibfield{author}{\bibinfo{person}{Satoshi Kura}, \bibinfo{person}{Natsuki Urabe}, {and} \bibinfo{person}{Ichiro Hasuo}.} \bibinfo{year}{2019}\natexlab{}.
\newblock \showarticletitle{Tail Probabilities for Randomized Program Runtimes via Martingales for Higher Moments}. In \bibinfo{booktitle}{\emph{Tools and {{Algorithms}} for the {{Construction}} and {{Analysis}} of {{Systems}}}} \emph{(\bibinfo{series}{Lecture {{Notes}} in {{Computer Science}}}, Vol.~\bibinfo{volume}{11428})}. \bibinfo{publisher}{Springer}, \bibinfo{address}{Prague, Czech Republic}, \bibinfo{pages}{135--153}.
\newblock
\href{https://doi.org/10.1007/978-3-030-17465-1_8}{doi:\nolinkurl{10.1007/978-3-030-17465-1_8}}


\bibitem[Lambek and Scott(1986)]%
        {lambek1986}
\bibfield{author}{\bibinfo{person}{Joachim Lambek} {and} \bibinfo{person}{P.~J. Scott}.} \bibinfo{year}{1986}\natexlab{}.
\newblock \bibinfo{booktitle}{\emph{Introduction to Higher Order Categorical Logic}}.
\newblock Number~7 in \bibinfo{series}{Cambridge Studies in Advanced Mathematics}. \bibinfo{publisher}{Cambridge University Press}, \bibinfo{address}{Cambridge [Cambridgeshire] ; New York}.
\newblock
\showISBNx{978-0-521-24665-1}
\showLCCN{QA169 .L28 1986}


\bibitem[Maillard et~al\mbox{.}(2019)]%
        {maillard2019}
\bibfield{author}{\bibinfo{person}{Kenji Maillard}, \bibinfo{person}{Danel Ahman}, \bibinfo{person}{Robert Atkey}, \bibinfo{person}{Guido Mart{\'i}nez}, \bibinfo{person}{C{\u a}t{\u a}lin Hri{\c t}cu}, \bibinfo{person}{Exequiel Rivas}, {and} \bibinfo{person}{{\'E}ric Tanter}.} \bibinfo{year}{2019}\natexlab{}.
\newblock \showarticletitle{Dijkstra Monads for All}.
\newblock \bibinfo{journal}{\emph{Proceedings of the ACM on Programming Languages}} \bibinfo{volume}{3}, \bibinfo{number}{ICFP} (\bibinfo{date}{July} \bibinfo{year}{2019}), \bibinfo{pages}{1--29}.
\newblock
\showISSN{24751421}
\href{https://doi.org/10.1145/3341708}{doi:\nolinkurl{10.1145/3341708}}


\bibitem[Manes and Mulry(2007)]%
        {manes2007}
\bibfield{author}{\bibinfo{person}{Ernie Manes} {and} \bibinfo{person}{Philip Mulry}.} \bibinfo{year}{2007}\natexlab{}.
\newblock \showarticletitle{Monad Compositions {{I}}: General Constructions and Recursive Distributive Laws}.
\newblock \bibinfo{journal}{\emph{Theory and Applications of Categories}} \bibinfo{volume}{18}, \bibinfo{number}{7} (\bibinfo{year}{2007}), \bibinfo{pages}{172--208}.
\newblock


\bibitem[Martin et~al\mbox{.}(2006)]%
        {martin2006}
\bibfield{author}{\bibinfo{person}{Ursula Martin}, \bibinfo{person}{Erik~A. Mathiesen}, {and} \bibinfo{person}{Paulo Oliva}.} \bibinfo{year}{2006}\natexlab{}.
\newblock \showarticletitle{Hoare {{Logic}} in the {{Abstract}}}. In \bibinfo{booktitle}{\emph{Computer {{Science Logic}}}} \emph{(\bibinfo{series}{Lecture {{Notes}} in {{Computer Science}}}, Vol.~\bibinfo{volume}{4207})}. \bibinfo{publisher}{Springer Berlin Heidelberg}, \bibinfo{address}{Berlin, Heidelberg}, \bibinfo{pages}{501--515}.
\newblock
\showISBNx{978-3-540-45458-8 978-3-540-45459-5}
\href{https://doi.org/10.1007/11874683_33}{doi:\nolinkurl{10.1007/11874683_33}}


\bibitem[McIver and Morgan(2001)]%
        {mciver2001}
\bibfield{author}{\bibinfo{person}{A.K. McIver} {and} \bibinfo{person}{Carroll Morgan}.} \bibinfo{year}{2001}\natexlab{}.
\newblock \showarticletitle{Partial Correctness for Probabilistic Demonic Programs}.
\newblock \bibinfo{journal}{\emph{Theoretical Computer Science}} \bibinfo{volume}{266}, \bibinfo{number}{1-2} (\bibinfo{date}{Sept.} \bibinfo{year}{2001}), \bibinfo{pages}{513--541}.
\newblock
\showISSN{03043975}
\href{https://doi.org/10.1016/S0304-3975(00)00208-5}{doi:\nolinkurl{10.1016/S0304-3975(00)00208-5}}


\bibitem[Meyer and De~Vink(1988)]%
        {meyer1988}
\bibfield{author}{\bibinfo{person}{J.-J.Ch. Meyer} {and} \bibinfo{person}{E.P. De~Vink}.} \bibinfo{year}{1988}\natexlab{}.
\newblock \showarticletitle{Applications of Compactness in the {{Smyth}} Powerdomain of Streams}.
\newblock \bibinfo{journal}{\emph{Theoretical Computer Science}} \bibinfo{volume}{57}, \bibinfo{number}{2-3} (\bibinfo{date}{May} \bibinfo{year}{1988}), \bibinfo{pages}{251--282}.
\newblock
\showISSN{03043975}
\href{https://doi.org/10.1016/0304-3975(88)90042-4}{doi:\nolinkurl{10.1016/0304-3975(88)90042-4}}


\bibitem[Moggi(1989)]%
        {moggi1989}
\bibfield{author}{\bibinfo{person}{E. Moggi}.} \bibinfo{year}{1989}\natexlab{}.
\newblock \showarticletitle{Computational Lambda-Calculus and Monads}. In \bibinfo{booktitle}{\emph{[1989] {{Proceedings}}. {{Fourth Annual Symposium}} on {{Logic}} in {{Computer Science}}}}. \bibinfo{publisher}{IEEE Comput. Soc. Press}, \bibinfo{address}{Pacific Grove, CA, USA}, \bibinfo{pages}{14--23}.
\newblock
\showISBNx{978-0-8186-1954-0}
\href{https://doi.org/10.1109/LICS.1989.39155}{doi:\nolinkurl{10.1109/LICS.1989.39155}}


\bibitem[Olmedo et~al\mbox{.}(2018)]%
        {olmedo2018}
\bibfield{author}{\bibinfo{person}{Federico Olmedo}, \bibinfo{person}{Friedrich Gretz}, \bibinfo{person}{Nils Jansen}, \bibinfo{person}{Benjamin~Lucien Kaminski}, \bibinfo{person}{Joost-Pieter Katoen}, {and} \bibinfo{person}{Annabelle Mciver}.} \bibinfo{year}{2018}\natexlab{}.
\newblock \showarticletitle{Conditioning in {{Probabilistic Programming}}}.
\newblock \bibinfo{journal}{\emph{ACM Transactions on Programming Languages and Systems}} \bibinfo{volume}{40}, \bibinfo{number}{1} (\bibinfo{date}{March} \bibinfo{year}{2018}), \bibinfo{pages}{1--50}.
\newblock
\showISSN{0164-0925, 1558-4593}
\href{https://doi.org/10.1145/3156018}{doi:\nolinkurl{10.1145/3156018}}


\bibitem[Plotkin(1975)]%
        {plotkin1975}
\bibfield{author}{\bibinfo{person}{G.D. Plotkin}.} \bibinfo{year}{1975}\natexlab{}.
\newblock \showarticletitle{Call-by-Name, Call-by-Value and the {$\lambda$}-Calculus}.
\newblock \bibinfo{journal}{\emph{Theoretical Computer Science}} \bibinfo{volume}{1}, \bibinfo{number}{2} (\bibinfo{date}{Dec.} \bibinfo{year}{1975}), \bibinfo{pages}{125--159}.
\newblock
\showISSN{03043975}
\href{https://doi.org/10.1016/0304-3975(75)90017-1}{doi:\nolinkurl{10.1016/0304-3975(75)90017-1}}


\bibitem[Plotkin and Power(2003)]%
        {plotkin2003}
\bibfield{author}{\bibinfo{person}{Gordon Plotkin} {and} \bibinfo{person}{John Power}.} \bibinfo{year}{2003}\natexlab{}.
\newblock \showarticletitle{Algebraic Operations and Generic Effects}.
\newblock \bibinfo{journal}{\emph{Applied Categorical Structures}} \bibinfo{volume}{11}, \bibinfo{number}{1} (\bibinfo{year}{2003}), \bibinfo{pages}{69--94}.
\newblock
\showISSN{09272852}
\href{https://doi.org/10.1023/A:1023064908962}{doi:\nolinkurl{10.1023/A:1023064908962}}


\bibitem[Polzer and Goncharov(2020)]%
        {polzer2020}
\bibfield{author}{\bibinfo{person}{Miriam Polzer} {and} \bibinfo{person}{Sergey Goncharov}.} \bibinfo{year}{2020}\natexlab{}.
\newblock \showarticletitle{Local Local Reasoning: A {{BI-hyperdoctrine}} for Full Ground Store}. In \bibinfo{booktitle}{\emph{Foundations of {{Software Science}} and {{Computation Structures}}}} \emph{(\bibinfo{series}{Lecture {{Notes}} in {{Computer Science}}}, Vol.~\bibinfo{volume}{12077})}. \bibinfo{publisher}{Springer International Publishing}, \bibinfo{address}{Cham}, \bibinfo{pages}{542--561}.
\newblock
\showISBNx{978-3-030-45230-8 978-3-030-45231-5}
\href{https://doi.org/10.1007/978-3-030-45231-5_28}{doi:\nolinkurl{10.1007/978-3-030-45231-5_28}}


\bibitem[Rauch et~al\mbox{.}(2017)]%
        {rauch2017}
\bibfield{author}{\bibinfo{person}{Christoph Rauch}, \bibinfo{person}{Sergey Goncharov}, {and} \bibinfo{person}{Lutz Schr{\"o}der}.} \bibinfo{year}{2017}\natexlab{}.
\newblock \showarticletitle{Generic {{Hoare Logic}} for {{Order-Enriched Effects}} with {{Exceptions}}}.
\newblock In \bibinfo{booktitle}{\emph{Recent {{Trends}} in {{Algebraic Development Techniques}}}}. Vol.~\bibinfo{volume}{10644}. \bibinfo{publisher}{Springer International Publishing}, \bibinfo{address}{Cham}, \bibinfo{pages}{208--222}.
\newblock
\showISBNx{978-3-319-72043-2 978-3-319-72044-9}
\href{https://doi.org/10.1007/978-3-319-72044-9_14}{doi:\nolinkurl{10.1007/978-3-319-72044-9_14}}


\bibitem[Reynolds(2002)]%
        {reynolds2002}
\bibfield{author}{\bibinfo{person}{J.C. Reynolds}.} \bibinfo{year}{2002}\natexlab{}.
\newblock \showarticletitle{Separation Logic: A Logic for Shared Mutable Data Structures}. In \bibinfo{booktitle}{\emph{Proceedings 17th {{Annual IEEE Symposium}} on {{Logic}} in {{Computer Science}}}}. \bibinfo{publisher}{IEEE Comput. Soc}, \bibinfo{address}{Copenhagen, Denmark}, \bibinfo{pages}{55--74}.
\newblock
\showISBNx{978-0-7695-1483-3}
\href{https://doi.org/10.1109/LICS.2002.1029817}{doi:\nolinkurl{10.1109/LICS.2002.1029817}}


\bibitem[Simpson and Plotkin(2000)]%
        {simpson2000}
\bibfield{author}{\bibinfo{person}{A. Simpson} {and} \bibinfo{person}{G. Plotkin}.} \bibinfo{year}{2000}\natexlab{}.
\newblock \showarticletitle{Complete Axioms for Categorical Fixed-Point Operators}. In \bibinfo{booktitle}{\emph{Proceedings {{Fifteenth Annual IEEE Symposium}} on {{Logic}} in {{Computer Science}} ({{Cat}}. {{No}}.{{99CB36332}})}}. \bibinfo{publisher}{IEEE Comput. Soc}, \bibinfo{address}{Santa Barbara, CA, USA}, \bibinfo{pages}{30--41}.
\newblock
\showISBNx{978-0-7695-0725-5}
\href{https://doi.org/10.1109/LICS.2000.855753}{doi:\nolinkurl{10.1109/LICS.2000.855753}}


\bibitem[Swamy et~al\mbox{.}(2013)]%
        {swamy2013a}
\bibfield{author}{\bibinfo{person}{Nikhil Swamy}, \bibinfo{person}{Joel Weinberger}, \bibinfo{person}{Cole Schlesinger}, \bibinfo{person}{Juan Chen}, {and} \bibinfo{person}{Benjamin Livshits}.} \bibinfo{year}{2013}\natexlab{}.
\newblock \showarticletitle{Verifying Higher-Order Programs with the {{Dijkstra}} Monad}. In \bibinfo{booktitle}{\emph{Proceedings of the 34th {{ACM SIGPLAN}} Conference on {{Programming}} Language Design and Implementation - {{PLDI}} '13}}. \bibinfo{publisher}{ACM Press}, \bibinfo{address}{Seattle, Washington, USA}, \bibinfo{pages}{387}.
\newblock
\showISBNx{978-1-4503-2014-6}
\href{https://doi.org/10.1145/2491956.2491978}{doi:\nolinkurl{10.1145/2491956.2491978}}


\bibitem[V{\'a}k{\'a}r et~al\mbox{.}(2019)]%
        {vakar2019}
\bibfield{author}{\bibinfo{person}{Matthijs V{\'a}k{\'a}r}, \bibinfo{person}{Ohad Kammar}, {and} \bibinfo{person}{Sam Staton}.} \bibinfo{year}{2019}\natexlab{}.
\newblock \showarticletitle{A Domain Theory for Statistical Probabilistic Programming}.
\newblock \bibinfo{journal}{\emph{Proceedings of the ACM on Programming Languages}} \bibinfo{volume}{3}, \bibinfo{number}{POPL} (\bibinfo{date}{Jan.} \bibinfo{year}{2019}), \bibinfo{pages}{1--29}.
\newblock
\showISSN{24751421}
\href{https://doi.org/10.1145/3290349}{doi:\nolinkurl{10.1145/3290349}}


\bibitem[Viswanathan and Viswanathan(2004)]%
        {viswanathan2004}
\bibfield{author}{\bibinfo{person}{Mahesh Viswanathan} {and} \bibinfo{person}{Ramesh Viswanathan}.} \bibinfo{year}{2004}\natexlab{}.
\newblock \showarticletitle{A Higher Order Modal Fixed Point Logic}. In \bibinfo{booktitle}{\emph{{{CONCUR}} 2004 - {{Concurrency Theory}}}} \emph{(\bibinfo{series}{Lecture {{Notes}} in {{Computer Science}}}, Vol.~\bibinfo{volume}{3170})}. \bibinfo{publisher}{Springer Berlin Heidelberg}, \bibinfo{address}{Berlin, Heidelberg}, \bibinfo{pages}{512--528}.
\newblock
\showISBNx{978-3-540-22940-7 978-3-540-28644-8}
\href{https://doi.org/10.1007/978-3-540-28644-8_33}{doi:\nolinkurl{10.1007/978-3-540-28644-8_33}}


\end{thebibliography}

\ifthenelse{\boolean{longversion}}{
	\appendix
	\section{Source Language}

Unlike the source language defined in Section~\ref{sec:source-language}, the source language in this section is defined using \emph{algebraic operations} (denoted by $o_{\rho}\ M$).
However, this is just a matter of taste because we can define generic effects as a syntactic sugar as follows.
\[ \mathbf{gen}_o\ M \coloneqq o_{\mathrm{ar}(o)}\ (\lambda y. y, M) \]

\subsection{Typing Rules}\label{subsec:source-typing-rules}
\begin{mathpar}
	\inferrule{
		(x : \rho) \in \Gamma
	}{
		\Gamma \vdash x : \rho
	}
	\and
	\inferrule{
		\Gamma \vdash M : \mathrm{ar}(c)
	}{
		\Gamma \vdash c\ M : \mathrm{car}(c)
	}
	\and
	\inferrule{
		\Gamma \vdash M : (\mathrm{ar}(o) \to \rho) \times \mathrm{car}(o)
	}{
		\Gamma \vdash o_{\rho}\ M : \rho
	}
	\and
	\inferrule{ }{
		\Gamma \vdash () : 1
	}
	\and
	\inferrule{
		\Gamma \vdash M : \rho \\
		\Gamma \vdash N : \tau
	}{
		\Gamma \vdash (M, N) : \rho \times \tau
	}
	\and
	\inferrule{
		\Gamma \vdash M : \rho_1 \times \rho_2
	}{
		\Gamma \vdash \pi_1\ M : \rho_1
	}
	\and
	\inferrule{
		\Gamma \vdash M : \rho_1 \times \rho_2
	}{
		\Gamma \vdash \pi_2\ M : \rho_2
	}
	\and
	\inferrule{
		\Gamma \vdash M : 0
	}{
		\Gamma \vdash \delta(M) : \rho
	}
	\and
	\inferrule{
		\Gamma \vdash M : \rho
	}{
		\Gamma \vdash \iota_1\ M : \rho + \tau
	}
	\and
	\inferrule{
		\Gamma \vdash M : \tau
	}{
		\Gamma \vdash \iota_2\ M : \rho + \tau
	}
	\and
	\inferrule{
		\Gamma \vdash M : \rho_1 + \rho_2 \\
		\Gamma, x_1 : \rho_1 \vdash N_1 : \tau \\
		\Gamma, x_2 : \rho_2 \vdash N_2 : \tau
	}{
		\Gamma \vdash \delta(M, x_1 {:} \rho_1. N_1, x_2 {:} \rho_2. N_2) : \tau
	}
	\and
	\inferrule{
		\Gamma, x : \rho \vdash M : \tau
	}{
		\Gamma \vdash \lambda x : \rho. M : \rho \to \tau
	}
	\and
	\inferrule{
		\Gamma \vdash M : \rho \to \tau \\
		\Gamma \vdash N : \rho
	}{
		\Gamma \vdash M\ N : \tau
	}
	\and
	\inferrule{
		\Gamma, f : \rho \to \tau, x : \rho \vdash M : \tau \\
		\Gamma, f : \rho \to \tau \vdash N : \tau'
	}{
		\Gamma \vdash \letrecchurch{f}{x}{\rho}{\tau}{M}{N} : \tau'
	}
\end{mathpar}

\subsection{Semantics}\label{subsec:source-semantics}

Let $\mathcal{A} = (\category{C}, T, A, a)$ be a $\lambda_c(\Sigma)$-structure.

\textbf{Types}: $\mathcal{A}\interpret{\rho} \in \category{C}$.
\begin{gather}
	\mathcal{A}\interpret{b} = A b \qquad
	\mathcal{A}\interpret{1} = 1 \qquad
	\mathcal{A}\interpret{\rho \times \tau} = \mathcal{A}\interpret{\rho} \times \mathcal{A}\interpret{\tau} \\
	\mathcal{A}\interpret{0} = 0 \qquad
	\mathcal{A}\interpret{\rho + \tau} = \mathcal{A}\interpret{\rho} + \mathcal{A}\interpret{\tau} \qquad
	\mathcal{A}\interpret{\rho \to \tau} = \exponential{\mathcal{A}\interpret{\rho}}{T \mathcal{A}\interpret{\tau}}
\end{gather}

\textbf{Contexts}: $\mathcal{A}\interpret{\Gamma} \in \category{C}$.
\begin{gather}
	\mathcal{A}\interpret{\cdot} = 1 \qquad
	\mathcal{A}\interpret{\Gamma, x : \rho} = \mathcal{A}\interpret{\Gamma} \times \mathcal{A}\interpret{\rho}
\end{gather}

\textbf{Terms}:
For each well-typed term $\Gamma \vdash M : \rho$, we define $\mathcal{A} \interpret{M} : \mathcal{A} \interpret{\Gamma} \to T \mathcal{A} \interpret{\rho}$ by
\begin{align}
	\mathcal{A} \interpret{\Gamma, x : \rho \vdash y : \tau} &= \begin{cases}
		\mathcal{A} \interpret{\Gamma \vdash y : \tau} \comp \pi_1 & x \neq y \\
		\eta^T \comp \pi_2 & x = y
	\end{cases} \\
	\mathcal{A} \interpret{c\ M} & = T(a(c)) \comp \mathcal{A} \interpret{M} \\
	\mathcal{A} \interpret{o_{\rho}\ M} & = \mu^T \comp T(\Lambda^{-1}(a(o)_{\mathcal{A}\interpret{\rho}})) \comp \mathcal{A} \interpret{M} \\
	\mathcal{A} \interpret{()} &= \eta^T \comp {!} \\
	\mathcal{A} \interpret{(M_1, M_2)} &= \mu^T \comp T \strength^T \comp T \braiding \comp \strength^T \comp \braiding \comp \langle \mathcal{A} \interpret{M_1}, \mathcal{A} \interpret{M_2} \rangle \\
	\mathcal{A} \interpret{\pi_i\ M} &= T \pi_i \comp \mathcal{A} \interpret{M} \\
	\mathcal{A} \interpret{\delta(M)} &= T {?} \comp \mathcal{A} \interpret{M} \\
	\mathcal{A} \interpret{\iota_i\ M} &= T \iota_i \comp \mathcal{A} \interpret{M} \\
	\mathcal{A} \interpret{\delta(M, x_1 : \rho_1. M_1, x_2 : \rho_2. M_2)} &= \mu^T \comp T [\mathcal{A} \interpret{M_1}, \mathcal{A} \interpret{M_2}] \comp T [\identity{} \times \iota_1, \identity{} \times \iota_2]^{-1} \comp \strength^T \comp \langle \identity{}, \mathcal{A} \interpret{M} \rangle \\
	\mathcal{A} \interpret{\lambda x : \rho. M} &= \eta^T \comp \Lambda (\mathcal{A} \interpret{M}) \\
	\mathcal{A} \interpret{M\ N} &= \mu^T \comp T \mathbf{ev} \comp \mu^T \comp T \strength^T \comp T \braiding \comp \strength^T \comp \braiding \comp \langle \mathcal{A} \interpret{M}, \mathcal{A} \interpret{N} \rangle \\
	\mathcal{A} \interpret{\letrec{f}{x}{M}{N}} &= \mathcal{A} \interpret{N} \comp \langle \identity{}, (\Lambda (\mathcal{A} \interpret{M}))^{\dagger} \rangle
\end{align}
where
\begin{itemize}
	\item For $f : X \to Y$ and $g : X \to Z$, $\tupling{f}{g} : X \to Y \times Z$ is the tupling.
	\item $\pi_1 : X \times Y \to X$ and $\pi_2 : X \times Y \to Y$ are the first and the second projection.
	\item ${!} : X \to 1$ is a unique morphism to a terminal object.
	\item For $f : X \to Z$ and $g : Y \to Z$, $\cotupling{f}{g} : X + Y \to Z$ is the cotupling.
	\item $\iota_1 : X \to X + Y$ and $\iota_2 : Y \to X + Y$ are coprojection.
	\item ${?} : 0 \to X$ is a unique morphism from an initial object.
	\item $\mathbf{ev}_{X, Y} : (\exponential{X}{Y}) \times X \to Y$ is the evaluation morphism.
	\item $\Lambda_{X, Y, Z} : \category{C}(X \times Y, Z) \to \category{C}(X, \exponential{Y}{Z})$ is the currying.
	\item $[\identity{} \times \iota_1, \identity{} \times \iota_2]^{-1} : X \times (Y + Z) \to X \times Y + X \times Z$ is the inverse of the distributivity isomorphism.
	\item $\braiding_{X, Y} : X \times Y \to Y \times X$ is the braiding.
	\item $\eta^T_X : X \to T X$ is the unit of a monad $T$.
	\item $\mu^T_X : T^2 X \to T X$ is the multiplication of a monad $T$.
	\item $\strength^T_{X, Y} : X \times T Y \to T (X \times Y)$ is the strength of a strong monad $T$.
\end{itemize}

\section{Target Language}

\subsection{Typing Rules}\label{subsec:target-typing-rules}
\begin{mathpar}
	\inferrule{
		(x : \rho) \in \Gamma
	}{
		\Gamma \vdash x : \rho
	}
	\and
	\inferrule{
		\Gamma \vdash M : \mathrm{ar}(c)
	}{
		\Gamma \vdash c\ M : \mathrm{car}(c)
	}
	\and
	\inferrule{
		\Gamma \vdash M : (\mathrm{ar}(o) \to \answertype) \times \mathrm{car}(o)
	}{
		\Gamma \vdash o\ M : \answertype
	}
	\and
	\inferrule{ }{
		\Gamma \vdash () : 1
	}
	\and
	\inferrule{
		\Gamma \vdash M : \rho \\
		\Gamma \vdash N : \tau
	}{
		\Gamma \vdash (M, N) : \rho \times \tau
	}
	\and
	\inferrule{
		\Gamma \vdash M : \rho_1 \times \rho_2
	}{
		\Gamma \vdash \pi_1\ M : \rho_1
	}
	\and
	\inferrule{
		\Gamma \vdash M : \rho_1 \times \rho_2
	}{
		\Gamma \vdash \pi_2\ M : \rho_2
	}
	\and
	\inferrule{
		\Gamma \vdash M : 0
	}{
		\Gamma \vdash \delta(M) : \answertype
	}
	\and
	\inferrule{
		\Gamma \vdash M : \rho
	}{
		\Gamma \vdash \iota_1\ M : \rho + \tau
	}
	\and
	\inferrule{
		\Gamma \vdash M : \tau
	}{
		\Gamma \vdash \iota_2\ M : \rho + \tau
	}
	\and
	\inferrule{
		\Gamma \vdash M : \rho_1 + \rho_2 \\
		\Gamma, x_1 : \rho_1 \vdash M_1 : \answertype \\
		\Gamma, x_2 : \rho_2 \vdash M_2 : \answertype
	}{
		\Gamma \vdash \delta(M, x_1 : \rho_1. M_1, x_2 : \rho_2. M_2) : \answertype
	}
	\and
	\inferrule{
		\Gamma, x : \rho \vdash M : \answertype
	}{
		\Gamma \vdash \lambda x : \rho. M : \rho \to \answertype
	}
	\and
	\inferrule{
		\Gamma \vdash M : \rho \to \answertype \\
		\Gamma \vdash N : \rho
	}{
		\Gamma \vdash M\ N : \answertype
	}
	\and
	\inferrule{
	\Gamma, f : \rho \to \answertype, x : \rho \vdash M : \answertype \\
	\Gamma, f : \rho \to \answertype \vdash N : \tau
	}{
	\Gamma \vdash \letrec{f}{(x : \rho)}{M}{N} : \tau
	}
\end{mathpar}

\subsection{Semantics}\label{subsec:target-semantics}
Let $\emalgsymbol : T \answerobj \to \answerobj$ be a $T$-algebra.
We define $\mathcal{A}^{\emalgsymbol} \llbracket {-} \rrbracket$ as follows.

For types:
\begin{gather}
	\mathcal{A}^{\emalgsymbol}\interpret{b} = A b \qquad
	\mathcal{A}^{\emalgsymbol} \interpret{\answertype} = \answerobj \qquad
	\mathcal{A}^{\emalgsymbol}\interpret{1} = 1 \qquad
	\mathcal{A}^{\emalgsymbol}\interpret{\rho \times \tau} = \mathcal{A}^{\emalgsymbol}\interpret{\rho} \times \mathcal{A}^{\emalgsymbol}\interpret{\tau} \\
	\mathcal{A}^{\emalgsymbol}\interpret{0} = 0 \qquad
	\mathcal{A}^{\emalgsymbol}\interpret{\rho + \tau} = \mathcal{A}^{\emalgsymbol}\interpret{\rho} + \mathcal{A}^{\emalgsymbol}\interpret{\tau} \qquad
	\mathcal{A}^{\emalgsymbol}\interpret{\rho \to \tau} = \exponential{\mathcal{A}^{\emalgsymbol}\interpret{\rho}}{\mathcal{A}^{\emalgsymbol}\interpret{\tau}}
\end{gather}

For contexts, we define $\mathcal{A}^{\emalgsymbol} \interpret{\cdot} = 1$ and $\mathcal{A}^{\emalgsymbol} \interpret{\Gamma, x : \rho} = \mathcal{A}^{\emalgsymbol} \interpret{\Gamma} \times \mathcal{A}^{\emalgsymbol} \interpret{\rho}$.

For any well-typed term $\Gamma \vdash M : \rho$, the interpretation $\mathcal{A}^{\emalgsymbol} \interpret{\Gamma \vdash M : \rho} : \mathcal{A}^{\emalgsymbol} \interpret{\Gamma} \to \mathcal{A}^{\emalgsymbol} \interpret{\rho}$ (or denoted simply by $\mathcal{A}^{\emalgsymbol} \interpret{M}$ if there is no fear of confusion) is defined as follows.
\begin{align}
	\mathcal{A}^{\emalgsymbol} \interpret{\Gamma, x : \rho \vdash y : \tau} &= \begin{cases}
		\mathcal{A}^{\emalgsymbol} \interpret{\Gamma \vdash y : \tau} \comp \pi_1 & x \neq y \\
		\pi_2 & x = y
	\end{cases} \\
	\mathcal{A}^{\emalgsymbol} \interpret{c\ M} & = a(c) \comp \mathcal{A}^{\emalgsymbol} \interpret{M} \\
	\mathcal{A}^{\emalgsymbol} \interpret{o\ M} & = \emalgsymbol \comp T \mathbf{ev} \comp \strength^T \comp (\identity{} \times \mathbf{Gef}(a(o))) \comp \mathcal{A}^{\emalgsymbol} \interpret{M} \\
	\mathcal{A}^{\emalgsymbol} \interpret{()} &= {!} \\
	\mathcal{A}^{\emalgsymbol} \interpret{(M_1, M_2)} &= \tupling{\mathcal{A}^{\emalgsymbol} \interpret{M_1}}{\mathcal{A}^{\emalgsymbol} \interpret{M_2}} \\
	\mathcal{A}^{\emalgsymbol} \interpret{\pi_i M} &= \pi_i \comp \mathcal{A}^{\emalgsymbol} \interpret{M} \\
	\mathcal{A}^{\emalgsymbol} \interpret{\delta(M)} &= {?} \comp \mathcal{A}^{\emalgsymbol} \interpret{M} \\
	\mathcal{A}^{\emalgsymbol} \interpret{\iota_i M} &= \iota_i \comp \mathcal{A}^{\emalgsymbol} \interpret{M} \\
	\mathcal{A}^{\emalgsymbol} \interpret{\delta(M, x_1 : \rho_1. M_1, x_2 : \rho_2. M_2)} &= [\mathcal{A}^{\emalgsymbol} \interpret{M_1}, \mathcal{A}^{\emalgsymbol} \interpret{M_2}] \comp [\identity{} \times \iota_1, \identity{} \times \iota_2]^{-1} \comp \tupling{\identity{}}{\mathcal{A}^{\emalgsymbol} \interpret{M}} \\
	\mathcal{A}^{\emalgsymbol} \interpret{\lambda x. M} &= \Lambda (\mathcal{A}^{\emalgsymbol} \interpret{M}) \\
	\mathcal{A}^{\emalgsymbol} \interpret{M\ N} &= \mathbf{ev} \comp \tupling{\mathcal{A}^{\emalgsymbol} \interpret{M}}{\mathcal{A}^{\emalgsymbol} \interpret{N}} \\
	\mathcal{A}^{\emalgsymbol} \interpret{\letrec{k}{(x : \rho)}{M}{N}} &= \mathcal{A}^{\emalgsymbol} \interpret{N} \comp \tupling{\identity{}}{(\Lambda (\mathcal{A}^{\emalgsymbol} \interpret{M}))^{\dagger}}
\end{align}

\section{An Example Program for Expected Cost Analysis and Cost Momemt Analysis}\label{sec:advanced-example}
We use the $\lambda_c$-signature defined in Example~\ref{ex:expected-cost-syntax} and assume that we have $\mathbf{int} \in B$ and that the set $K$ of effect-free constants contains basic operations for integers (see Example~\ref{ex:total-partial-correctness-syntax}).
Recall that we have if-then-else expressions as a syntactic sugar.
Since we have a uniform distribution, we extend the probabilistic branching operator as follows.
Given a term $N : \mathbf{real}$, we define $M_1 +_{N} M_2$ as a syntactic sugar for $\operation{unif}(\lambda x.\ \mathbf{if}\ x \le N\ \mathbf{then}\ M_1\ \mathbf{else}\ M_2, ())$ where $x$ is a fresh variable.

Now, we consider a variant of random walk that dynamically changes how to make a step.
\begin{gather}
	\mathrm{walk} : (1 \to \mathbf{int}) \times \mathbf{int} \to 1 \\
	\letrec{\mathrm{walk}}{(s, n)}{\mathbf{if}\ n \le 0\ \mathbf{then}\ ()\ \mathbf{else}\ (\mathrm{walk}\ (\mathrm{update}\ s, n + s\ ()))^{\checkmark}}{\mathrm{walk}\ (\lambda x. 0, 1)}
\end{gather}

We update a step function $s : 1 \to \mathbf{int}$ as follows.
\begin{gather}
	\mathrm{update} : (1 \to \mathbf{int}) \to 1 \to \mathbf{int} \\
	\mathrm{update} \coloneqq \lambda s. \operation{unif}(\lambda p.\ (\lambda x. s\ x +_p (-2)) +_{1/2} (\lambda x. s\ x +_p 1), ())
\end{gather}
That is, we update $s$ to a step function of the form $\lambda x. s\ x +_p a$ where $p$ is sampled from the uniform distribution on $[0, 1]$ and $a$ is either $-2$ or $1$.
Note that this program contains both higher-order functions and continuous distributions, which make the problem challenging.

For expected cost analysis, we apply the CPS transformation (Definition~\ref{def:cps-transformation}) and then pass the constant function $0$ as a postcondition.
Then, we get the following $\lambda_{\mathrm{HFL}}(\Sigma)$-term, which represents the expected cost of $\mathrm{walk}\ (\lambda x. 0, 1)$.
\begin{align}
	&\mathbf{let}\ \mathbf{rec}\ \mathrm{walk}'\ ((s, n), k) = \\
	&\qquad\qquad \mathbf{if}\ n \le 0\ \mathbf{then}\ k\ ()\ \mathbf{else}\ 1 + \mathrm{update}'\ (s, \lambda s'. s\ ((), \lambda y. \mathrm{walk}'\ ((s', n + y), k))) \\
	&\mathbf{in}\ \mathrm{walk}'\ ((\lambda (x, k). k\ 0, 1), \lambda x. 0)
\end{align}
where
\begin{align}
	\mathrm{update}' &\coloneqq \lambda (s, k). \int_{[0, 1]} \frac{1}{2} \cdot (k\ s'_{-2}) + \frac{1}{2} \cdot (k\ s'_1) \, \mathrm{d}p \\
	s'_{-2} &\coloneqq \lambda (x, k). p \cdot (s\ (x, k)) + (1-p) \cdot (k\ (-2)) \\
	s'_{1} &\coloneqq \lambda (x, k). p \cdot (s\ (x, k)) + (1-p) \cdot (k\ 1)
\end{align}
Note that $\mathrm{walk}'\ ((\lambda (x, k). k\ 0, 1), \lambda x. 0)$ has type $\answertype$ where $\answertype$ is interpreted as the type of extended nonnegative real numbers $[0, \infty]$.

Next, consider the cost moment analysis.
Suppose we are interested in up to the second moment.
Then, $\answertype$ is interpreted as $[0, \infty]^2$.
By applying the CPS transformation, we get the following.
\begin{align}
	&\mathbf{let}\ \mathbf{rec}\ \mathrm{walk}'\ ((s, n), k) = \\
	&\qquad\qquad \mathbf{if}\ n \le 0\ \mathbf{then}\ k\ ()\ \mathbf{else}\ 1 \oplus \mathrm{update}'\ (s, \lambda s'. s\ ((), \lambda y. \mathrm{walk}'\ ((s', n + y), k))) \\
	&\mathbf{in}\ \mathrm{walk}'\ ((\lambda (x, k). k\ 0, 1), \lambda x. (0, 0))
\end{align}
where
\begin{align}
	1 \oplus (M_1, M_2) &\coloneqq (1 + M_1, 1 + 2 \cdot M_1 + M_2) \\
	\mathrm{update}' &\coloneqq \lambda (s, k). \int_{[0, 1]} \frac{1}{2} \cdot (k\ s'_{-2}) + \frac{1}{2} \cdot (k\ s'_1) \, \mathrm{d}p \\
	s'_{-2} &\coloneqq \lambda (x, k). p \cdot (s\ (x, k)) + (1-p) \cdot (k\ (-2)) \\
	s'_{1} &\coloneqq \lambda (x, k). p \cdot (s\ (x, k)) + (1-p) \cdot (k\ 1)
\end{align}

\section{Fixing an Error in Existing Work (to be submitted as another paper)}\label{sec:fixing-error}
\newcommand{\yoneda}{y}

We fix an error in the proof of \cite[Thm~12]{katsumata2013}.
We focus on the recursion-free case but the same argument applies to the case with recursion.

\subsection{Preliminaries}
The proof in \cite[Thm~12]{katsumata2013} is based on a fibrational framework of logical relations~\cite{hermida1993}.
In this framework, we consider two layers of models of programs, that is, a functor $p : \category{E} \to \category{B}$.
We require $p$ to be a fibration.

\begin{definition}
	A functor $p : \category{E} \to \category{B}$ is a \emph{fibration} if $p$ satisfies the cartesian lifting property: for any $u : I \to p Y$ in $\category{B}$, there exist $X \in \category{E}$ and a cartesian morphism $f : X \to Y$ above $u$.
	Here, we say a morphism $f : X \to Y$ in $\category{E}$ is \emph{above} $u : I \to J$ if $p f = u$, and $f : X \to Y$ is \emph{cartesian} if for any $h : Z \to Y$ and $v : p Z \to p X$ such that $h$ is above $p f \comp v$, there exists a unique morphism $g : Z \to X$ above $v$ such that $h = g \comp f$.
\end{definition}

We introduce several terminology about fibrations.
Given a fibration $p : \category{E} \to \category{B}$, $\category{E}$ is called the \emph{total} category, and $\category{B}$ is called the \emph{base} category.
Given $I \in \category{B}$, the \emph{fibre category} $\category{E}_I$ is the category whose objects are objects in $\category{E}$ above $I$ and morphisms are morphisms in $\category{E}$ above the identity morphism $\identity{I}$.
By the cartesian lifting property, each morphism $u : I \to J$ in $\category{B}$ induces a \emph{reindexing functor} $u^{*} : \category{E}_J \to \category{E}_I$.
A fibration $p : \category{E} \to \category{B}$ is \emph{preordered} if $\category{E}_I$ is a preorder for each $I \in \category{B}$ and \emph{posetal} if $\category{E}_I$ is a poset.
If $p$ is a preordered fibration, we write $u : X \dot{\to} Y$ if there exists $f : X \to Y$ above $u : p X \to p Y$.
A fibration $p$ is a \emph{bifibration} if each reindexing functor $u^{*}$ has a left adjoint $u_{*} \dashv u^{*}$.

\begin{definition}
	A \emph{fibration for logical relations} $p : \category{E} \to \category{B}$ is a posetal bifibration over a bicartesian closed category $\category{B}$ with fibred small products (small products in each fibre category preserved by reindexing functors) such that $\category{E}$ is a bicartesian closed category and $p$ strictly preserves the bicartesian closed structure.
\end{definition}
If $p : \category{E} \to \category{B}$ is a fibration for logical relations, then we write a dot above each component of the bicartesian closed structure of $\category{E}$ (e.g.\ $\dot{\Rightarrow}$ and $\dot{\times}$) to distinguish it from that of $\category{B}$.

By \cite[Corollary~6, Proposition~7]{katsumata2013}, the subobject fibration $\mathbf{Sub}([\category{C}^{\op}, \Set]) \to [\category{C}^{\op}, \Set]$ of the presheaf category over a small category $\category{C}$ is a fibration for logical relations; and given a fibration for logical relations $p : \category{E} \to \category{C}$ and a finite-product preserving functor $F : \category{B} \to \category{C}$, the change-of-base construction gives a fibration for logical relations if $\category{B}$ is bicartesian closed.

Typically, the base category of $p : \category{E} \to \category{B}$ is an ordinary model of programs ($\lambda$-calculus), and the total category is a category of predicates (or relations) and predicate-preserving (relation-preserving) morphisms.
Since the total category $\category{E}$ is also a model of programs, we can interpret a program in $\category{E}$.
Since we have a functor $p$, the interpretation in $\category{E}$ gives an evidence that the interpretation in $\category{B}$ preserves predicates (or relations), and thus, we get the fundamental theorem of logical relations.

To discuss logical relations for computational effects, we need a monad on the base category and a lifting of the monad on the total category.
\cite{katsumata2005} provided $\top\top$-lifting as a construction of a lifting of a monad, and \cite[Thm~12]{katsumata2013} used it to relate two interpretations $\mathcal{A}_1\interpret{M} : \mathcal{A}_1 \interpret{\Gamma} \to T_1 \mathcal{A}_1 \interpret{\rho}$ and $\mathcal{A}_2\interpret{M} : \mathcal{A}_2 \interpret{\Gamma} \to T_2 \mathcal{A}_2 \interpret{\rho}$ of $\lambda_c$-calculus along a strong monad morphism $\phi : T_1 \to T_2$.

\begin{definition}
	Let $p : \category{E} \to \category{B}$ be a fibration for logical relations and $T$ a strong monad on $\category{B}$.
	A parameter $\mathcal{R} = (R, S)$ for a $\top\top$-lifting is a pair of functors $R : J \to \category{B}$ and $S : J \to \category{E}$ from a set $J$ such that $T \comp R = p \comp S$.
	A $\top\top$-lifting $T^{\top\top(\mathcal{R})}$ with respect to a parameter $\mathcal{R}$ is defined by
	\[ T^{\top\top(\mathcal{R})} X \coloneqq \bigwedge_{j \in J} T^{\top\top(R j, S j)} X \]
	where $T^{\top\top(R j, S j)} X$ is a pullback of $\dotExponential{(\dotExponential{X}{S j})}{S j}$ along the strong monad morphism $\phi_j : T \to \exponential{(\exponential{({-})}{T (R j)})}{T (R j)}$ defined by the (free) EM algebra structure of $T (R i)$.
	\begin{center}
		\begin{tikzcd}
			\category{E} \ar[d, "p"] & T^{\top\top(R j, S j)} X \ar[r] & \dotExponential{(\dotExponential{X}{S j})}{S j} \\
			\category{B} & T I \ar[r, "\phi_j"] & \exponential{(\exponential{I}{T (R j)})}{T (R j)}
		\end{tikzcd}
	\end{center}
\end{definition}

\subsection{The Error in the Proof}
The statement of \cite[Thm~12]{katsumata2013} is as follows.
\begin{conjecture}\label{conj:relating-effects}
	Let $\Sigma = (B, K, O, \mathrm{ar}, \mathrm{car})$ be a $\lambda_c$-signature, $\mathcal{A} = (\category{C}, T_1, A, a)$ be a $\lambda_c(\Sigma)$-structure, $T_2$ be a strong monad on $\category{C}$, and $\phi : T_1 \to T_2$ be a strong monad morphism.
	Then, for any well-typed $\lambda_c(\Sigma)$-term $x_1 : b_1, \dots, x_n : b_n \vdash M : b$, we have $\phi_{A b} \comp \mathcal{A} \interpret{M} = (\phi \mathcal{A}) \interpret{M}$.
\end{conjecture}

In the proof, we consider the following fibration $q : \category{K} \to \category{C} \times \category{C}$ defined by the change-of-base construction.
\begin{equation}
	\begin{tikzcd}
		\category{K} \ar[r] \ar[d, "q"] & \mathbf{Sub}([\category{C}^{\mathrm{op}}, \Set]) \ar[d] \\
		\category{C} \times \category{C} \ar[r, "D"] & {[\category{C}^{\mathrm{op}}, \Set]}
	\end{tikzcd}
\end{equation}
Here, $D : \category{C} \times \category{C} \to [\category{C}^{\mathrm{op}}, \Set]$ is defined by $D (I, J) = \yoneda I \times \yoneda J$ where $\yoneda$ is the Yoneda embedding.
An object in $\category{K}$ is a tuple $(X, I, I')$ where $I, I' \in \category{C}$ and $X$ is a subpresheaf of $\yoneda I \times \yoneda I'$.

In the base category $\category{C} \times \category{C}$, a $\lambda_c(\Sigma)$-term $M$ is interpreted as the pair of $\mathcal{A} \interpret{A}$ and $(\phi \mathcal{A}) \interpret{A}$.
We can define the interpretation of $M$ in $\category{K}$ using the following:
for each $b \in B$, we define
\begin{align}
	V b &\coloneqq \mathbf{Eq} (A b) = (\lambda H \in \category{C}. \{ (f, f) \mid f : H \to A b \}, A b, A b) &&\in \category{K} \\
	C b &\coloneqq (\lambda H \in \category{C}. \{ (f, \phi \comp f) \mid f : H \to T_1 (A b) \}, T_1 (A b), T_2 (A b)) &&\in \category{K}
\end{align}
where $\mathbf{Eq} : \category{C} \to \category{K}$ is defined by
$\mathbf{Eq} X \coloneqq (\lambda H \in \category{C}. \{ (f, f) \mid f : H \to X \}, X, X)$.
Now, we define a $\lambda_c(\Sigma)$-structure $\mathcal{V} = (\category{K}, \dots)$ by \cite[Theorem~7]{katsumata2013}
Since $q \comp C = (T_1 \times T_2) \comp \tupling{A}{A} : B \to \category{C} \times \category{C}$, this defines a $\top\top$-lifting $(T_1 \times T_2)^{\top\top(\tupling{A}{A}, C)}$.
The interpretation of base types in $\category{K}$ is defined by $V$.
The remaining part is to define the interpretation of effect-free constants and algebraic operations.
That is, we need to prove the following.
\begin{gather}
	a(c) : \mathcal{V} \interpret{\mathrm{ar}(c)} \dotTo \mathcal{V} \interpret{\mathrm{car}(c)} \label{eq:effect-free-lift} \\
	a(o)_{A b} : (\dotExponential{\mathcal{V}\interpret{\mathrm{ar}(o)}}{C b}) \dotTo (\dotExponential{\mathcal{V}\interpret{\mathrm{car}(o)}}{C b}) \label{eq:algebraic-operation-lift}
\end{gather}

However, if either the coarity of an effect-free constant or the arity of an algebraic operation contains coproducts, then \eqref{eq:effect-free-lift} does not hold.
For example, let $\category{C} = \Set$ and assume we have an effect-free constant $\mathrm{iszero} : \mathbb{N} \to 1 + 1$ with $a(\mathrm{iszero})(n) = \iota_1\ ()$ if and only if $n = 0$.
Then, $a(\mathrm{iszero}) : \mathcal{V}\interpret{\mathbb{N}} \dot{\to} \mathcal{V}\interpret{1 + 1}$ does not hold because $\mathcal{V}\interpret{1 + 1} = (\lambda H \in \Set. \{ (\iota_1 \comp {!}, \iota_1 \comp {!}), (\iota_2 \comp {!}, \iota_2 \comp {!}) \}, 1 + 1, 1 + 1)$ and $(a(\mathrm{iszero}), a(\mathrm{iszero})) \notin \{ (\iota_1 \comp {!}, \iota_1 \comp {!}), (\iota_2 \comp {!}, \iota_2 \comp {!}) \}$.

\begin{remark}
	The counterexample is for the proof strategy and not for the statement itself.
	In fact, we will later show that Conjecture~\ref{conj:relating-effects} is true if $\category{C}$ is stable, and $\category{C} = \Set$ is an example of a stable bicartesian closed category.
\end{remark}

\subsection{Correction}
\newcommand{\ddotExponential}[2]{#1 \mathrel{\ddot{\Rightarrow}} #2}
\newcommand{\ddotTimes}{\mathrel{\ddot{\times}}}
\newcommand{\ddotPlus}{\mathrel{\ddot{+}}}
\newcommand{\ddotTo}{\mathrel{\ddot{\to}}}

We assume that $\category{C}$ is stable.
The stability condition is used in~\cite{fiore1999} to characterise definability of morphisms to simply typed lambda calculus with products and sums.
W.l.o.g.\ we also assume that $\category{C}$ is small.
This is possible because we can take a small subcategory of $\category{C}$ that is closed under the interpretation of $\lambda_c(\Sigma)$-types/terms.

\begin{theorem}\label{thm:relating-effect}
	Let $\Sigma = (B, K, O, \mathrm{ar}, \mathrm{car})$ be a $\lambda_c$-signature, $\mathcal{A} = (\category{C}, T_1, A, a)$ be a $\lambda_c(\Sigma)$-structure such that $\category{C}$ is stable, $T_2$ be a strong monad on $\category{C}$, and $\phi : T_1 \to T_2$ be a strong monad morphism.
	Then, for any well-typed $\lambda_c(\Sigma)$-term $x_1 : b_1, \dots, x_n : b_n \vdash M : b$, we have $\phi_{A b} \comp \mathcal{A} \interpret{M} = (\phi \mathcal{A}) \interpret{M}$.	
\end{theorem}

We fix the proof by using $\top\top$-closure~\cite{katsumata2008}.
\begin{definition}[$\top\top$-closure]
	Let $p : \category{E} \to \category{B}$ be a fibration for logical relations.
	A \emph{closure parameter} is a functor $S : J \to \category{E}$ from a set $J$.
	A \emph{$\top\top$-closure operator} with respect to $S$ is a mapping $({-})^{\top\top(S)} : \category{E} \to \category{E}$ defined by
	\[ X^{\top\top(S)} \coloneqq \bigwedge_{j \in J} X^{\top\top(S j)} \]
	where $X^{\top\top(S j)}$ is a pullback of $\exponential{(\exponential{X}{S j})}{S j}$ along the unit $\eta$ of the continuation monad $\exponential{(\exponential{({-})}{p (S j)})}{p (S j)}$.
	\begin{center}
		\begin{tikzcd}
			\category{E} \ar[d] & X^{\top\top(S j)} \ar[r] & \exponential{(\exponential{X}{S j})}{S j} \\
			\category{B} & p X \ar[r, "\eta"] & \exponential{(\exponential{p X}{p (S j)})}{p (S j)}
		\end{tikzcd}
	\end{center}
\end{definition}

We take a full reflective subcategory $\category{K}^{\top\top (P)}$ of $\top\top$-closed objects (i.e.\ $X \in \category{K}^{\top\top (P)}$ if $X \in \category{K}$ and $X^{\top\top(P)} = X$) and interpreting well-typed $\lambda_c(\Sigma)$-terms in $\category{K}^{\top\top (P)}$ where the parameter $P$ is defined by $P = [P_{\mathrm{pure}}, C] : \mathbf{GTyp}(B) + B \to \category{K}$ and $P_{\mathrm{pure}} \rho = \mathbf{Eq}(\mathcal{A} \interpret{\rho})$.
The situation is depicted as follows.
\begin{center}
	\begin{tikzcd}
		\category{K}^{\top\top (P)} \ar[r, hookrightarrow, "i"] & \category{K} \ar[r] \ar[d, "q"] & \mathbf{Sub}([\category{C}^{\mathrm{op}}, \Set]) \ar[d] \\
		& \category{C} \times \category{C} \ar[r, "D"] & {[\category{C}^{\mathrm{op}}, \Set]}
	\end{tikzcd}
\end{center}

Recall that bicartesian closed structure of $\category{K}$ and $\category{K}^{\top\top (P)}$ is given as follows.
For $\category{K}$,
\begin{itemize}
	\item $\dot{0} = (\lambda H \in \category{C}. \emptyset, 0, 0)$
	\item $\dot{1} = (\lambda H \in \category{C}. \{ ({!}, {!}) \}, 1, 1)$
	\item $(X, I, I') \dotTimes (Y, J, J') = (\lambda H \in \category{C}. \{ (f, g) \mid (\pi_1 \comp f, \pi_1 \comp g) \in X H \land (\pi_2 \comp f, \pi_2 \comp g) \in Y H \}, I \times J, I' \times J')$
	\item $\dotExponential{(X, I, I')}{(Y, J, J')} = (\lambda H \in \category{C}. \{ (f, g) \mid \forall H' \in \category{C}. \forall h : H' \to H. \forall (x, y) \in X H'. (\eval \comp \tupling{f \comp h}{x}, \eval \comp \tupling{g \comp h}{y}) \in Y H' \}, \exponential{I}{J}, \exponential{I'}{J'})$
	\item $(X, I, I') \dotPlus (Y, J, J') = (\lambda H \in \category{C}. \{ (\iota_1 \comp f, \iota_1 \comp g) \mid (f, g) \in X H \} \cup \{ (\iota_2 \comp f, \iota_2 \comp g) \mid (f, g) \in Y H \}, I + J, I' + J')$
\end{itemize}
The category $\category{K}^{\top\top (P)}$ inherits the cartesian closed structure of $\category{K}$ but has a different co-cartesian structure~\cite[Theorem~4]{katsumata2008}.
\begin{itemize}
	\item $\ddot{1} = \dot{1}$
	\item $(X, I, I') \ddotTimes (Y, J, J') = (X, I, I') \dotTimes (Y, J, J')$
	\item $\ddotExponential{(X, I, I')}{(Y, J, J')} = \dotExponential{(X, I, I')}{(Y, J, J')}$
	\item $\ddot{0} = \dot{0}^{\top\top (P)}$
	\item $(X, I, I') \ddotPlus (Y, J, J') = ((X, I, I') \dotPlus (Y, J, J'))^{\top\top (P)}$
\end{itemize}
Here, we leave the inclusion functor implicit.
Note that $\top\top$-operators of $(X, I, I') \in \category{K}$ is explicitly given as follows.
\begin{align}
	(X, I, I')^{\top\top (P_{\mathrm{pure}} \rho)} &= (\lambda H \in \category{C}. \{ (x_1, x_2) \mid \forall h : H' \to H. \forall (k_1, k_2) \in \mathrm{rel}(\dotExponential{(X, I, I')}{P_{\mathrm{pure}} \rho}) H'. \\
	&\qquad \eval \comp \tupling{k_1}{x_1 \comp h} = \eval \comp \tupling{k_2}{x_2 \comp h} \}, I, I') \\
	(X, I, I')^{\top\top (C b)} &= (\lambda H \in \category{C}. \{ (x_1, x_2) \mid \forall h : H' \to H. \forall (k_1, k_2) \in \mathrm{rel}(\dotExponential{(X, I, I')}{C b}) H'. \\
	&\qquad \phi \comp \eval \comp \tupling{k_1}{x_1 \comp h} = \eval \comp \tupling{k_2}{x_2 \comp h} \}, I, I') \\
	(X, I, I')^{\top\top (P)} &= \bigwedge_{\rho \in \mathbf{GTyp}(B)} (X, I, I')^{\top\top (P_{\mathrm{pure}} \rho)} \land \bigwedge_{b \in B} (X, I, I')^{\top\top (C b)}
\end{align}

We write $\mathrm{rel}(X, I, I') \coloneqq X$ to refer to the subpresheaf part of $(X, I, I') \in \category{K}$.

By~\cite[Proposition~8]{katsumata2008}, $V b$ is closed, and we have $V : B \to \category{K}^{\top\top (P)}$.
We consider interpreting $\lambda_c$-calculus in $\category{K}^{\top\top (P)}$.

We consider the $\top\top$-lifting $(T_1 \times T_2)^{\top\top(\tupling{A}{A}, C)}$ along $q : \category{K} \to \category{C} \times \category{C}$.
By unfolding the definition of the $\top\top$-lifting, we have the following.
\begin{align}
	&(T_1 \times T_2)^{\top\top (C b)} (X, I_1, I_2) \\
	&= (\lambda H \in \category{C}. \{ (x_1, x_2) \mid \forall h : H' \to H. \forall (k_1, k_2) \in \mathrm{rel}(\dotExponential{(X, I_1, I_2)}{C \b}) H'. \\
	&\qquad \phi \comp \mu^{T_1} \comp T_1 \eval \comp \strength^{T_1} \comp \tupling{k_1}{x_1 \comp h} = \mu^{T_2} \comp T_2 \eval \comp \strength^{T_2} \comp \tupling{k_2}{x_2 \comp h} \}, T_1 I_1, T_2 I_2) \\
	&(T_1 \times T_2)^{\top\top(\tupling{A}{A}, C)} (X, I_1, I_2) \\
	&= \bigwedge_{b \in B} (T_1 \times T_2)^{\top\top ((A b, A b), C b)} (X, I_1, I_2)
\end{align}

The $\top\top$-lifting $(T_1 \times T_2)^{\top\top(\tupling{A}{A}, C)}$ is a strong monad on $\category{K}$.
We can restrict this to $\category{K}^{\top\top(P)}$.
\begin{lemma}\label{lem:TT-lifting_TT-closed}
	If $(X, I_1, I_2) \in \category{K}$ is $\top\top(P)$-closed, then so is $(T_1 \times T_2)^{\top\top(\tupling{A}{A}, C)} (X, I_1, I_2)$.
\end{lemma}
\begin{proof}
	Actually, we don't use the assumption that $(X, I_1, I_2) \in \category{K}$ is $\top\top(P)$-closed.

	To prove $((T_1 \times T_2)^{\top\top(\tupling{A}{A}, C)} (X, I_1, I_2))^{\top\top(P)} \le (T_1 \times T_2)^{\top\top(\tupling{A}{A}, C)} (X, I_1, I_2)$, it suffices to show
	\[ ((T_1 \times T_2)^{\top\top((A b, A b), C b)} (X, I_1, I_2))^{\top\top (C b)} \le (T_1 \times T_2)^{\top\top((A b, A b), C b)} (X, I_1, I_2)\]
	for each $b \in B$.
	Let $(x_1, x_2) \in \mathrm{rel}(((T_1 \times T_2)^{\top\top((A b, A b), C b)} (X, I_1, I_2))^{\top\top (C b)}) H$.
	By definition of $(T_1 \times T_2)^{\top\top((A b, A b), C b)}$ on the right-hand side, we need to show
	\begin{equation}
		\phi \comp \mu^{T_1} \comp T_1 \eval \comp \strength^{T_1} \comp \tupling{k_1}{x_1 \comp h} = \mu^{T_2} \comp T_2 \eval \comp \strength^{T_2} \comp \tupling{k_2}{x_2 \comp h} \label{eq:TT-lifting_TT-closed_proof_aim}
	\end{equation}
	for each $h : H' \to H$ and $(k_1, k_2) \in \mathrm{rel}(\dotExponential{(X, I_1, I_2)}{C b}) H'$.
	Let
	\begin{align}
		k'_1 &\coloneqq \Lambda(\mu^{T_1} \comp T_1 \eval \comp \strength^{T_1}) \comp k_1 \\
		k'_2 &\coloneqq \Lambda(\mu^{T_2} \comp T_2 \eval \comp \strength^{T_2}) \comp k_2.
	\end{align}
	It follows that $(k'_1, k'_2) \in \mathrm{rel}(\dotExponential{(T_1 \times T_2)^{\top\top((A b, A b), C b)} (X, I_1, I_2)}{C b}) H'$ because for any $h' : H'' \to H'$ and $(x'_1, x'_2) \in \mathrm{rel}((T_1 \times T_2)^{\top\top((A b, A b), C b)} (X, I_1, I_2)) H''$, we have
	\begin{align}
		&\phi \comp \eval \comp \tupling{k'_1 \comp h'}{x'_1} \\
		&= \phi \comp \mu^{T_1} \comp T_1 \eval \comp \strength^{T_1} \comp  \tupling{k_1 \comp h'}{x'_1} \\
		&= \mu^{T_2} \comp T_2 \eval \comp \strength^{T_2} \comp \tupling{k_2 \comp h'}{x'_2} \\
		&= \eval \comp \tupling{k'_2 \comp h'}{x'_2}
	\end{align}
	by applying the definition of $(T_1 \times T_2)^{\top\top((A b, A b), C b)} (X, I_1, I_2)$ to $(k_1 \comp h', k_2 \comp h') \in \mathrm{rel}(\dotExponential{(X, I_1, I_2)}{C b}) H''$.
	Thus, we have \eqref{eq:TT-lifting_TT-closed_proof_aim} as follows.
	\begin{align}
		\phi \comp \mu^{T_1} \comp T_1 \eval \comp \strength^{T_1} \comp \tupling{k_1}{x_1 \comp h} &= \phi \comp \eval \comp \tupling{k'_1}{x_1 \comp h} \\
		&= \eval \comp \tupling{k'_2}{x_2 \comp h} \\
		&= \mu^{T_2} \comp T_2 \eval \comp \strength^{T_2} \comp \tupling{k_2}{x_2 \comp h}
	\end{align}
	\qed
\end{proof}
Note that we cannot immediately obtain Lemma~\ref{lem:TT-lifting_TT-closed} by~\cite[Lemma~4.3, 4.4]{kammar2022} because $\top\top(P)$-closedness does not imply $\top\top(C)$-closedness.

The interpretation $\mathcal{V} : \mathbf{Typ}(B) \to \category{K}^{\top\top (P)}$ satisfies the following property.

\begin{lemma}\label{lem:ground_eq}
	Assume $\category{C}$ has stable finite coproducts.
	For each $\rho \in \mathbf{GTyp}(B)$, $\mathcal{V}\interpret{\rho} = \mathbf{Eq} (\mathcal{A} \interpret{\rho})$.
\end{lemma}
\begin{proof}
	By induction on $\rho$.
	The key idea is that we use the definition of $\top\top$-closure and stable finite coproducts in the cases for coproduct types.
	\begin{itemize}
		\item The base case $\rho = b \in B$ is trivial.
		\item If $\rho = 1$, then
			\[ \mathcal{V}\interpret{1} = \ddot{1} = \dot{1} = \mathbf{Eq} (\mathcal{A} \interpret{1}). \]
		\item If $\rho = \rho_1 \times \rho_2$, then $\mathcal{V}\interpret{\rho_1 \times \rho_2} = \mathcal{V}\interpret{\rho_1} \ddotTimes \mathcal{V}\interpret{\rho_2} = \mathcal{V}\interpret{\rho_1} \dotTimes \mathcal{V}\interpret{\rho_2} = \mathbf{Eq} (\mathcal{A} \interpret{\rho_1}) \dotTimes \mathbf{Eq} (\mathcal{A} \interpret{\rho_2}) = \mathbf{Eq} (\mathcal{A} \interpret{\rho_1} \times \mathcal{A} \interpret{\rho_2}) = \mathbf{Eq} (\mathcal{A} \interpret{\rho_1 \times \rho_2})$ by IH.
		\item If $\rho = 0$, then we have $\mathbf{Eq}(0) = \dot{0}^{\top\top(P)}$.
			\begin{itemize}
				\item We prove $\mathbf{Eq}(0) \le \dot{0}^{\top\top(P)}$ holds.
				\begin{itemize}
					\item First, we prove $\mathbf{Eq}(0) \le \dot{0}^{\top\top(P_{\mathrm{pure}} \rho)}$ for any $\rho \in \mathbf{GTyp}(B)$.
					For any $(f, f) \in \mathrm{rel}(\mathbf{Eq}(0)) H$, $h : H' \to H$, and $(k_1, k_2) \in \mathrm{rel} (\dotExponential{\dot{0}}{P_{\mathrm{pure}} \rho}) H'$, we have
					\[ \eval \comp \tupling{k_1}{f \comp h} = \eval \comp \tupling{k_2}{f \comp h} : H' \to \mathcal{A}\interpret{\rho} \]
					because by the strictness of initial objects in $\category{C}$, $f \comp h : H' \to 0$ is an isomorphism, which implies $H'$ is an initial object.
					\item Similarly, we have $\mathbf{Eq}(0) \le \dot{0}^{\top\top(C b)}$ for any $b \in B$.
					For any $(f, f) \in \mathrm{rel}(\mathbf{Eq}(0)) H$, $h : H' \to H$, and $(k_1, k_2) \in \mathrm{rel} (\dotExponential{\dot{0}}{C b}) H'$, we have
					\[ \phi \comp \eval \comp \tupling{k_1}{f \comp h} = \eval \comp \tupling{k_2}{f \comp h} : H' \to T_2 \mathcal{A}\interpret{\rho}. \]
				\end{itemize}
				\item We have $\dot{0}^{\top\top(P)} \le \mathbf{Eq} (0)$ because $\dot{0} \le \mathbf{Eq} (0) = V 0$ holds and $V 0$ is $\top\top(P)$-closed by \cite[Proposition~8]{katsumata2008}.
			\end{itemize}
		\item If $\rho = \rho_1 + \rho_2$, then we have
			\[\mathcal{V}\interpret{\rho_1 + \rho_2} = \mathcal{V}\interpret{\rho_1} \ddotPlus \mathcal{V}\interpret{\rho_2} = (\mathbf{Eq} (\mathcal{A} \interpret{\rho_1}) \dotPlus \mathbf{Eq} (\mathcal{A} \interpret{\rho_2}))^{\top\top(P)}\]
			by IH.
			\begin{itemize}
				\item We prove $\mathbf{Eq} (\mathcal{A} \interpret{\rho_1 + \rho_2}) \le (\mathbf{Eq} (\mathcal{A} \interpret{\rho_1}) \dotPlus \mathbf{Eq} (\mathcal{A} \interpret{\rho_2}))^{\top\top(P)}$.
				\begin{itemize}
					\item First, we prove
					\[ \mathbf{Eq} (\mathcal{A} \interpret{\rho_1 + \rho_2}) \le (\mathbf{Eq} (\mathcal{A} \interpret{\rho_1}) \dotPlus \mathbf{Eq} (\mathcal{A} \interpret{\rho_2}))^{\top\top(P_{\mathrm{pure}} \rho)} \]
					for each $\rho \in \mathbf{GTyp}(B)$.
					That is, for each $(f, f) \in \mathrm{rel}(\mathbf{Eq} (\mathcal{A} \interpret{\rho_1 + \rho_2})) H$ where $f : H \to \mathcal{A} \interpret{\rho_1} + \mathcal{A} \interpret{\rho_2}$, we prove $(f, f) \in \mathrm{rel}(\mathbf{Eq} (\mathcal{A} \interpret{\rho_1}) \dotPlus \mathbf{Eq} (\mathcal{A} \interpret{\rho_2}))^{\top\top(P_{\mathrm{pure}} \rho)} H$.
					This is because for any $h : H' \to H$ and $(k_1, k_2) \in \mathrm{rel}(\dotExponential{(\mathbf{Eq} (\mathcal{A} \interpret{\rho_1}) \dotPlus \mathbf{Eq} (\mathcal{A} \interpret{\rho_2}))}{P_{\mathrm{pure}} \rho}) H'$, we have the following equation.
					\begin{align}
						&\eval \comp \tupling{k_1}{f \comp h} \comp \cotupling{(f \comp h)^{*} \fstcoproj}{(f \comp h)^{*} \sndcoproj} \\
						&= \cotupling{\eval \comp \tupling{k_1}{f \comp h} \comp (f \comp h)^{*} \fstcoproj}{\eval \comp \tupling{k_1}{f \comp h} \comp (f \comp h)^{*} \sndcoproj} \\
						&= \cotupling{\eval \comp \tupling{k_1 \comp (f \comp h)^{*} \fstcoproj}{\fstcoproj \comp \fstcoproj^{*} (f \comp h)}}{\eval \comp \tupling{k_1 \comp (f \comp h)^{*} \sndcoproj}{\sndcoproj \comp \sndcoproj^{*} (f \comp h)}} \\
						&= \cotupling{\eval \comp \tupling{k_2 \comp (f \comp h)^{*} \fstcoproj}{\fstcoproj \comp \fstcoproj^{*} (f \comp h)}}{\eval \comp \tupling{k_2 \comp (f \comp h)^{*} \sndcoproj}{\sndcoproj \comp \sndcoproj^{*} (f \comp h)}} \label{eq:TT_coproduct} \\
						&= \eval \comp \tupling{k_2}{f \comp h} \comp \cotupling{(f \comp h)^{*} \fstcoproj}{(f \comp h)^{*} \sndcoproj}.
					\end{align}
					\begin{center}
						\begin{tikzcd}
							{\fstcoproj}^{*} H' \ar[r] \ar[d] & H' \ar[d, "f \comp h"] & {\sndcoproj}^{*} H' \ar[l] \ar[d] \\
							\mathcal{A} \interpret{\rho_1} \ar[r, "\fstcoproj"] & \mathcal{A} \interpret{\rho_1} + \mathcal{A} \interpret{\rho_2} & \mathcal{A} \interpret{\rho_2} \ar[l, swap, "\sndcoproj"]
						\end{tikzcd}
					\end{center}
					In~\eqref{eq:TT_coproduct}, we used
					\[ \eval \comp \tupling{k_1 \comp (f \comp h)^{*} \fstcoproj}{\fstcoproj \comp \fstcoproj^{*} (f \comp h)} = \eval \comp \tupling{k_2 \comp (f \comp h)^{*} \fstcoproj}{\fstcoproj \comp \fstcoproj^{*} (f \comp h)}\]
					(and a similar equation for $\sndcoproj$), which follows from the definition of $\dotExponential{}{}$ and
					\begin{itemize}
						\item $(k_1, k_2) \in \mathrm{rel}(\dotExponential{(\mathbf{Eq} (\mathcal{A} \interpret{\rho_1}) \dotPlus \mathbf{Eq} (\mathcal{A} \interpret{\rho_2}))}{P_{\mathrm{pure}} \rho}) H'$,
						\item $(f \comp h)^{*} \fstcoproj : \fstcoproj^{*} H' \to H'$, and
						\item $(\fstcoproj \comp \fstcoproj^{*} (f \comp h), \fstcoproj \comp \fstcoproj^{*} (f \comp h)) \in \mathrm{rel} ((\mathbf{Eq} (\mathcal{A} \interpret{\rho_1}) \dotPlus \mathbf{Eq} (\mathcal{A} \interpret{\rho_2}))) \fstcoproj^{*} H'$.
					\end{itemize}
					By stability, $\cotupling{(f \comp h)^{*} \fstcoproj}{(f \comp h)^{*} \sndcoproj} : \iota_1^{*} H' + \iota_2^{*} H' \to H'$ is isomorphic.
					We have $\eval \comp \tupling{k_1}{f \comp h} = \eval \comp \tupling{k_2}{f \comp h}$ and thus $(f, f) \in \mathrm{rel}((\mathbf{Eq} (\mathcal{A} \interpret{\rho_1}) \dotPlus \mathbf{Eq} (\mathcal{A} \interpret{\rho_2}))^{\top\top(V b)}) H$.
				\item Similarly, we can also prove
					\[ \mathbf{Eq} (\mathcal{A} \interpret{\rho_1 + \rho_2}) \le (\mathbf{Eq} (\mathcal{A} \interpret{\rho_1}) \dotPlus \mathbf{Eq} (\mathcal{A} \interpret{\rho_2}))^{\top\top(C b)} \]
					for each $b \in B$.
					That is, for each $f : H \to \mathcal{A} \interpret{\rho_1} + \mathcal{A} \interpret{\rho_2}$ and $\rho \in \mathbf{GTyp}(B)$, we prove $(f, f) \in \mathrm{rel}(\mathbf{Eq} (\mathcal{A} \interpret{\rho_1}) \dotPlus \mathbf{Eq} (\mathcal{A} \interpret{\rho_2}))^{\top\top(C b)} H$.
					This is because for any $h : H' \to H$ and $(k_1, k_2) \in \mathrm{rel}(\dotExponential{(\mathbf{Eq} (\mathcal{A} \interpret{\rho_1}) \dotPlus \mathbf{Eq} (\mathcal{A} \interpret{\rho_2}))}{C b}) H'$, we have the following equation.
					\begin{align}
						&\phi \comp \eval \comp \tupling{k_1}{f \comp h} \comp \cotupling{(f \comp h)^{*} \fstcoproj}{(f \comp h)^{*} \sndcoproj} \\
						&= \cotupling{\phi \comp \eval \comp \tupling{k_1}{f \comp h} \comp (f \comp h)^{*} \fstcoproj}{\phi \comp \eval \comp \tupling{k_1}{f \comp h} \comp (f \comp h)^{*} \sndcoproj} \\
						&= \cotupling{\phi \comp \eval \comp \tupling{k_1 \comp (f \comp h)^{*} \fstcoproj}{\fstcoproj \comp \fstcoproj^{*} (f \comp h)}}{\phi \comp \eval \comp \tupling{k_1 \comp (f \comp h)^{*} \sndcoproj}{\sndcoproj \comp \sndcoproj^{*} (f \comp h)}} \\
						&= \cotupling{\eval \comp \tupling{k_2 \comp (f \comp h)^{*} \fstcoproj}{\fstcoproj \comp \fstcoproj^{*} (f \comp h)}}{\eval \comp \tupling{k_2 \comp (f \comp h)^{*} \sndcoproj}{\sndcoproj \comp \sndcoproj^{*} (f \comp h)}} \\
						&= \eval \comp \tupling{k_2}{f \comp h} \comp \cotupling{(f \comp h)^{*} \fstcoproj}{(f \comp h)^{*} \sndcoproj}.
					\end{align}
				\end{itemize}
				\item We have $(\mathbf{Eq} (\mathcal{A} \interpret{\rho_1}) \dotPlus \mathbf{Eq} (\mathcal{A} \interpret{\rho_2}))^{\top\top(P)} \le \mathbf{Eq} (\mathcal{A} \interpret{\rho_1 + \rho_2})$ because
					\[ \mathbf{Eq} (\mathcal{A} \interpret{\rho_1}) \dotPlus \mathbf{Eq} (\mathcal{A} \interpret{\rho_2}) \le \mathbf{Eq} (\mathcal{A} \interpret{\rho_1 + \rho_2}) = P_{\mathrm{pure}} (\rho_1 + \rho_2) \]
					and the right-hand side is $\top\top(P)$-closed.
			\end{itemize}
	\end{itemize}
\end{proof}

\begin{proof}[Proof of Theorem~\ref{thm:relating-effect}]
	We follow the proof of~\cite[Thm~12]{katsumata2013}.
	By Lemma~\ref{lem:ground_eq}, we can interpret effect-free constants in $\category{K}^{\top\top(P)}$, that is, $\dot{a}(k) \coloneqq \mathbf{Eq}(a(k)) : \mathcal{V}\interpret{\mathrm{ar}(k)} \to \mathcal{V}\interpret{\mathrm{car}(k)}$.
	We can also define $\dot{a}(o) \in \mathbf{Alg}(\category{K}^{\top\top(P)}, \mathcal{V}\interpret{\mathrm{ar}(o)}, \mathcal{V}\interpret{\mathrm{car}(o)})$ in the same way as~\cite[Thm~12]{katsumata2013}.
	More concretely, we apply~\cite[Thm~11]{katsumata2013} to obtain $\mathbf{Gef}(\dot{a}(o)) : \mathcal{V}\interpret{\mathrm{car}(o)} \dotTo (T_1 \times T_2)^{\top\top(\tupling{A}{A}, C)} \mathcal{V}\interpret{\mathrm{ar}(o)}$ in $\category{K}$, which is actually a morphism in $\category{K}^{\top\top(P)}$ by Lemma~\ref{lem:TT-lifting_TT-closed}, and then obtain $\dot{a}(o)$ in $\category{K}^{\top\top(P)}$.

	By interpreting the term with the $\lambda_c(\Sigma)$-structure $(\category{K}^{\top\top(P)}, (T_1 \times T_2)^{\top\top(\tupling{A}{A}, C)}, V, \dot{a})$, we have $\mathcal{A} \interpret{M} \times (\phi \mathcal{A}) \interpret{M} : \mathcal{V} \interpret{x_1 : b_1, \dots, x_n : b_n} \ddotTo (T_1 \times T_2)^{\top\top(\tupling{A}{A}, C)} \mathcal{V} \interpret{b} \le C b$ in $\category{K}^{\top\top(P)}$.
	Therefore, we have $\phi \comp \mathcal{A} \interpret{M} = (\phi \mathcal{A}) \interpret{M}$.
\end{proof}

\section{Main Theorem}\label{sec:main-theorem-proof}

See Section~\ref{sec:fixing-error} for preliminaries and the outline of the proof of \cite[Thm~12]{katsumata2013}.

\subsection{Outline of the Proof}
We prove Theorem~\ref{thm:cps-is-wpt-with-recursion} in two steps.
Let $\Gamma \vdash M : \rho$ be a $\lambda_c(\Sigma)$-term and $\mathcal{A} = (\category{C}, T, A, a)$ be $\lambda_c(\Sigma)$-structure.
In the first step, we relate the interpretation $\mathcal{A}\interpret{M} : \mathcal{A} \interpret{\Gamma} \to T \mathcal{A} \interpret{\rho}$ with the interpretation $\mathcal{A}' \interpret{M} : \mathcal{A}' \interpret{\Gamma} \to \contmonad{\answerobj} \mathcal{A}' \interpret{\rho}$ where, roughly speaking, $\mathcal{A}' = (\category{C}, \contmonad{\answerobj}, A, a')$ is defined by replacing a strong monad $T$ in $\mathcal{A}$ with a continuation monad $\contmonad{\answerobj} = \exponential{(\exponential{({-})}{\answerobj})}{\answerobj}$.
Technically, this is done by considering a strong monad morphism $\phi^{\emalgsymbol} : T \to \contmonad{\answerobj}$ defined by an EM $T$-algebra $\emalgsymbol : T \answerobj \to \answerobj$ and relating $\mathcal{A} \interpret{M}$ and $\mathcal{A}' \interpret{M}$ by logical relations studied in~\cite{katsumata2013}.
In the second step, we show $\mathcal{A}^{\emalgsymbol} \interpret{\CPS{M}} = \mathcal{A}' \interpret{M}$ by adapting a result from~\cite{fuhrmann2004} to our languages.
Then, we get the equation~\eqref{eq:cps_is_wpt} by unfolding the definition of $\phi^{\emalgsymbol}$.

\subsubsection{Logical Relations}
Suppose that parameters for Theorem~\ref{thm:cps-is-wpt-with-recursion} are given.
We have a strong monad morphism $\phi^{\emalgsymbol} : T \to \contmonad{\answerobj}$ by the following proposition~\cite[Proposition~1]{hyland2007}.
\begin{proposition}\label{prop:strong-monad-morphism-em-algebra}
	Let $T$ be a strong monad on a cartesian closed category.
	There is a one-to-one correspondence between (i) EM $T$-algebras $\emalgsymbol : T \answerobj \to \answerobj$ and (ii) strong monad morphisms $\phi^{\emalgsymbol} : T \to \contmonad{\answerobj}$ to the continuation monad.
	\qed
\end{proposition}

A strong monad morphism can ``push forward'' an $\omegaCPO$-enriched $\lambda_c(\Sigma)$-structure.

\begin{definition}\label{def:lambda-c-sigma-structure-from-strong-monad-morphism}
	Given a strong monad morphism $\phi : S \to T$ and an $\omegaCPO$-enriched $\lambda_c(\Sigma)$-structure $\mathcal{A} = (\category{C}, S, A, a)$, we define an $\omegaCPO$-enriched $\lambda_c(\Sigma)$-structure $\phi \mathcal{A}$ by the tuple $(\category{C}, T, A, \phi(a))$ where
	$\phi(a)(c) = a(c)$ and
	$\phi(a)(o) = \mathbf{Alg}(\phi, \mathcal{A}\interpret{\mathrm{ar}(k)}, \mathcal{A}\interpret{\mathrm{car}(k)}) (a(o))$
	for each $c \in K$ and $o \in O$.
	Here, $\mathbf{Alg}(\phi, D, C) : \mathbf{Alg}(S, D, C) \to \mathbf{Alg}(T, D, C)$ is defined by the mapping $(e : C \to S D) \mapsto (\phi \comp e : C \to T D)$ of generic effects and a bijection between algebraic operations and generic effects.
\end{definition}

Let $\Gamma \vdash M : \rho$ be a $\lambda_c(\Sigma)$-term.
Now, we show that $(\phi^{\emalgsymbol} \mathcal{A}) \interpret{M} : (\phi^{\emalgsymbol} \mathcal{A}) \interpret{\Gamma} \to \contmonad{\answerobj} (\phi^{\emalgsymbol} \mathcal{A}) \interpret{\rho}$ is equal to the composite of $\mathcal{A}\interpret{M} : \mathcal{A} \interpret{\Gamma} \to T \mathcal{A} \interpret{\rho}$ and the strong monad morphism $\phi^{\emalgsymbol} : T \to \contmonad{\answerobj}$.
This problem is studied in~\cite[Theorem~12]{katsumata2013} using logical relations and $\top\top$-lifting, but two changes should be made here.
Firstly, his result is limited to the case where types in the context $\Gamma$ and the type $\rho$ are base types, while we want to allow ground types in order to make our result as general as possible.
This is done by a mild extension of his proof.
Secondly and more importantly, his proof contains an subtle error in the treatment of coproduct types.
We correct this by adding an assumption about stable coproducts.
This correction will appear in another paper.

\begin{proposition}\label{prop:monad-morphism-ground-type}
	Let $\mathcal{A} = (\category{C}, S, A, a)$ be an $\omegaCPO$-enriched $\lambda_c(\Sigma)$-structure such that $\category{C}$ is stable.
	Given a strong monad morphism $\phi : S \to T$, for each well-typed $\lambda_c$-term $x_1 : \rho_1, \dots, x_n : \rho_n \vdash M : \rho$ where $\rho_1, \dots, \rho_n, \rho$ are ground types, we have
	\[ \phi \comp \mathcal{A}\interpret{x_1 : \rho_1, \dots, x_n : \rho_n \vdash M : \rho} = (\phi \mathcal{A}) \interpret{x_1 : \rho_1, \dots, x_n : \rho_n \vdash M : \rho} . \]
\end{proposition}
\begin{proof}
	As for the extension to ground types, see Section~\ref{sec:fixing-error}.
	The stability condition is required to correct the error in the proof of~\cite[Theorem~12]{katsumata2013} (see Section~\ref{sec:main-theorem-proof}).
\end{proof}

\subsubsection{CPS and Continuation Monads}
We show $\mathcal{A}^{\emalgsymbol} \interpret{\CPS{M}} = (\phi^{\emalgsymbol} \mathcal{A})\interpret{M}$ by extending a result from~\cite{fuhrmann2004}.
The proof is tedious but rather straightforward.
\begin{proposition}\label{prop:cps-continuation-monad}
	For each type $\rho$, there exists a (canonical) family of isomorphism $\psi_{\rho} : \mathcal{A}^{\emalgsymbol} \interpret{\CPS{\rho}} \to (\phi^{\emalgsymbol} \mathcal{A}) \interpret{\rho}$ such that for any well-typed $\lambda_c$-term $\Gamma \vdash M : \rho$, we have
	\[ (\phi^{\emalgsymbol} \mathcal{A}) \interpret{M} \comp \psi_{\Gamma} = \contmonad{\answerobj} \psi_{\rho} \comp \mathcal{A}^{\emalgsymbol} \interpret{\CPS{M}} \]
	where $\psi_{\Gamma} : \mathcal{A}^{\emalgsymbol} \interpret{\CPS{\Gamma}} \to (\phi^{\emalgsymbol} \mathcal{A}) \interpret{\Gamma}$ is an extension of $\psi_{\rho}$ to the context $\Gamma$.
	Moreover, $\psi_{\rho}$ is the identity if $\rho$ is a ground type.
	\qed
\end{proposition}
\begin{proof}
	By induction on $M$.
	See Section~\ref{sec:main-theorem-proof} for details.
\end{proof}

\begin{proof}[Proof of Theorem~\ref{thm:cps-is-wpt-with-recursion}]
	We have
	$\mathcal{A}^{\emalgsymbol} \interpret{\CPS{M}} = (\phi^{\emalgsymbol} \mathcal{A}) \interpret{M} = \phi^{\emalgsymbol} \comp \mathcal{A} \interpret{M}$
	by Proposition~\ref{prop:monad-morphism-ground-type},\ref{prop:cps-continuation-monad}
	where $\phi^{\emalgsymbol}$ is a strong monad morphism in Proposition~\ref{prop:strong-monad-morphism-em-algebra}.
	By unfolding definitions, we get~\eqref{eq:cps_is_wpt}.
\end{proof}

\subsection{Proofs}

\begin{proof}[Proof of Proposition~\ref{prop:monad-morphism-ground-type}]
	The key idea for extending \cite[Thm~12]{katsumata2013} to ground types is to use the following definition for a parameter for $\top\top$-lifting.
	\begin{align}
		V b &\coloneqq \mathbf{Eq} \comp A = (\lambda H \in \category{C}. \{ (f, f) \mid f : H \to A b \}, A b, A b) &&\in \category{K} \\
		C \rho &\coloneqq (\lambda H \in \category{C}. \{ (f, \phi \comp f) \mid f : H \to T_1 (\mathcal{A} \interpret{\rho}) \}, T_1 (\mathcal{A} \interpret{\rho}), T_2 (\mathcal{A} \interpret{\rho})) &&\in \category{K} \label{eq:ground-type-extension}
	\end{align}
	where $b \in B$ and $\rho \in \mathbf{GTyp}(B)$.
	If we restrict the domain of $C : \mathbf{GTyp}(B) \to \category{K}$ to base types $B$, then \eqref{eq:ground-type-extension} gives the same parameter as~\cite[Thm~12]{katsumata2013}.
	Since $(T_1 \times T_2) \tupling{\mathcal{A}\interpret{-}}{\mathcal{A}\interpret{-}} = q \comp C : \mathbf{GTyp}(B) \to \category{C} \times \category{C}$, we have a $\top\top$-lifting of $T_1 \times T_2$ defined by the parameter $(\tupling{\mathcal{A}\interpret{-}}{\mathcal{A}\interpret{-}}, C)$.
	Here, note that $\mathcal{A}\interpret{\rho} = (\phi \mathcal{A})\interpret{\rho}$ for any ground type $\rho \in \mathbf{GTyp}(B)$.

	Using this extension, we can do the same argument as Section~\ref{sec:fixing-error}.
	Specifically, we define a $\lambda_c(\Sigma)$-structure $\mathcal{V} = (\category{K}^{\top\top(P)}, (T_1 \times T_2)^{(\tupling{\mathcal{A}\interpret{-}}{\mathcal{A}\interpret{-}}, C)}, \dots)$ where $\category{K}^{\top\top(P)}$ is the full reflective subcategory of $\top\top$-closed objects in $\category{K}$ and the parameter $P$ is also extended in accordance with $C$.
	We interpret a well-typed $\lambda_c(\Sigma)$-term $x_1 : \rho_1, \dots, x_n : \rho_n \vdash M : \rho$ using $\mathcal{V}$, and if $\rho_1, \dots, \rho_n, \rho$ are ground types, then we get $\mathcal{A} \interpret{M} \times (\phi \mathcal{A}) \interpret{M} : \mathbf{Eq} (\mathcal{A} \interpret{x_1 : \rho_1, \dots, x_n : \rho_n}) \dotTo C \mathcal{A} \interpret{\rho}$ as the interpretation.
	Therefore, we have $\phi \comp \mathcal{A} \interpret{M} = (\phi \mathcal{A}) \interpret{M}$
\end{proof}

\begin{proof}[Proof of Proposition~\ref{prop:cps-continuation-monad}]
	We define $\psi_{\rho}$ as follows.
	\begin{gather}
		\psi_b = \identity{} \qquad
		\psi_1 = \identity{} \qquad
		\psi_0 = \identity{} \qquad
		\psi_{\rho_1 \times \rho_2} = \psi_{\rho_1} \times \psi_{\rho_2} \qquad
		\psi_{\rho_1 + \rho_n} = \psi_{\rho_1} + \psi_{\rho_2} \\
		\psi_{\rho \to \tau} = (\exponential{\psi^{-1}_{\rho}}{\contmonad{\answerobj} \psi_{\tau}}) \comp \Lambda (\Lambda (\eval \comp \associator))
	\end{gather}
	Note $\psi_{\rho \to \tau} : \exponential{\mathcal{A}^{\emalgsymbol}\interpret{\CPS{\rho}} \times (\exponential{\mathcal{A}^{\emalgsymbol}\interpret{\CPS{\tau}}}{\answerobj})}{\answerobj} \to \exponential{(\phi^{\emalgsymbol} \mathcal{A})\interpret{\rho}}{\exponential{(\exponential{(\phi^{\emalgsymbol} \mathcal{A})\interpret{\tau}}{\answerobj})}{\answerobj}}$
	It is obvious that $\psi_{\rho}$ is the identity if $\rho$ is a ground type.

	The rest of the proof is done by induction on $M$.
	The proof is tedious but rather straightforward.
	\begin{itemize}
		\item In many cases, we must be careful about the use of weakening, which is often implicit in the definition of the CPS transformation.
		Lemma~\ref{lem:sem_exchange_target},\ref{lem:sem_weakening_target} are useful when dealing with the interpretation of terms with unused variables.
		For example, consider the case for the CPS of effect-free constant.
		Recall that for a well-typed $\lambda_c(\Sigma)$-term $\Gamma \vdash M : \mathrm{ar}(c)$, we have
		\[ \CPS{(c\ M)} = \lambda k. \CPS{M}\ (\lambda m. k\ (c\ m)). \]
		Now, we prove
		\begin{equation}
			(\phi^{\emalgsymbol} \mathcal{A}) \interpret{c\ M} \comp \psi_{\Gamma} = C_{\answerobj} \psi_{\mathrm{car}(c)} \comp \mathcal{A}^{\emalgsymbol} \interpret{\lambda k. \CPS{M}\ (\lambda m. k\ (c\ m))}.
			\label{eq:cps-continuation-monad-proof-const}
		\end{equation}
		For the left-hand side, we have the following.
		\begin{align}
			(\phi^{\emalgsymbol} \mathcal{A}) \interpret{c\ M} \comp \psi_{\Gamma}
			&= C_{\answerobj} a(c) \comp (\phi^{\emalgsymbol} \mathcal{A}) \interpret{M} \comp \psi_{\Gamma} & \text{by Definition~\ref{def:interpretation-lambda-c-calculus}} \\
			&= C_{\answerobj} a(c) \comp \psi_{\mathrm{ar}(c)} \comp \mathcal{A}^{\emalgsymbol} \interpret{\CPS{M}} & \text{by induction hypothesis}
		\end{align}
		Note that $\CPS{M}$ in the right-hand side of~\eqref{eq:cps-continuation-monad-proof-const} is weakened by a variable $k$ ($\CPS{\Gamma}, k : \mathrm{car}(c) \to \answertype \vdash \CPS{M} : (\mathrm{ar}(c) \to \answertype) \to \answertype$) whereas $\CPS{M}$ in the inductive hypothesis is not ($\CPS{\Gamma} \vdash \CPS{M} : (\mathrm{ar}(c) \to \answertype) \to \answertype$).
		Taking this into account, we have the following.
		\begin{align}
			&\mathcal{A}^{\emalgsymbol} \interpret{\lambda k. \CPS{M}\ (\lambda m. k\ (c\ m))} \\
			&= \Lambda (\eval \comp \tupling{\mathcal{A}^{\emalgsymbol}\interpret{\Gamma, k : (\dots) \vdash \CPS{M} : (\dots)}}{\mathcal{A}^{\emalgsymbol}\interpret{\lambda m. k\ (c\ m)}}) & \text{by Definition~\ref{def:interpretation-response-calculus}} \\
			&= \Lambda (\eval \comp \tupling{\mathcal{A}^{\emalgsymbol}\interpret{\CPS{M}} \comp \pi_1}{\mathcal{A}^{\emalgsymbol}\interpret{\lambda m. k\ (c\ m)}}) & \text{by Lemma~\ref{lem:sem_weakening_target}} \\
			&= \Lambda (\eval \comp \tupling{\mathcal{A}^{\emalgsymbol}\interpret{\CPS{M}} \comp \pi_1}{\Lambda(\eval \comp \tupling{\pi_2 \comp \pi_1}{a(c) \comp \pi_2})})
		\end{align}
		Since $\psi_{\mathrm{ar}(c)}$ and $\psi_{\mathrm{car}(c)}$ are identities, it suffices to prove the following equation.
		\[ C_{\answerobj} a(c) \comp \mathcal{A}^{\emalgsymbol} \interpret{\CPS{M}} = \Lambda (\eval \comp \tupling{\mathcal{A}^{\emalgsymbol}\interpret{\CPS{M}} \comp \pi_1}{\Lambda(\eval \comp \tupling{\pi_2 \comp \pi_1}{a(c) \comp \pi_2})}) \]
		This follows from the axioms for cartesian closed categories.
		\item Sometimes, we need to handle substitution $\mathcal{A}^{\emalgsymbol} \interpret{M[N/x]}$, in which case Lemma~\ref{lem:sem_subst_target} is useful.
		\item For the case of recursion, Lemma~\ref{lem:fixedpoint_naturality},\ref{lem:parameterized_uniformity_simplified} are useful.
	\end{itemize}
\end{proof}

\begin{lemma}[exchange]\label{lem:sem_exchange_target}
	For each well-typed $\lambda_{\mathrm{HFL}}$-term $\Gamma, x : \tau_1, y : \tau_2, \Delta \vdash M : \rho$, we have
	\[ \mathcal{A}^{\emalgsymbol} \interpret{\Gamma, x : \tau_1, y : \tau_2, \Delta \vdash M : \rho} = \mathcal{A}^{\emalgsymbol} \interpret{\Gamma, y : \tau_2, x : \tau_1, \Delta \vdash M : \rho} \comp \psi_{\Gamma; x : \tau_1, y : \tau_2; \Delta} \]
	where $\psi_{\Gamma; x : \tau_1; y : \tau_2; \Delta} : \mathcal{A}^{\emalgsymbol}\interpret{\Gamma, x : \tau_1, y : \tau_2, \Delta} \to \mathcal{A}^{\emalgsymbol}\interpret{\Gamma, y : \tau_2, x : \tau_1, \Delta}$ is an isomorphism defined by
	\begin{gather}
		\psi_{\Gamma; x : \tau_1; y : \tau_2; \cdot} \coloneqq \associator^{-1} \comp (\identity{} \times \braiding) \comp \associator \qquad
		\psi_{\Gamma; x : \tau_1; y : \tau_2; \Delta, z : \tau} \coloneqq \psi_{\Gamma; x : \tau_1; y : \tau_2; \Delta} \times \identity{}.
	\end{gather}
	\qed
\end{lemma}

\begin{lemma}[weakening]\label{lem:sem_weakening_target}
	For each well-typed $\lambda_{\mathrm{HFL}}$-term $\Gamma \vdash M : \rho$, we have
	\[ \mathcal{A}^{\emalgsymbol} \interpret{\Gamma, x : \tau \vdash M : \rho} = \mathcal{A}^{\emalgsymbol} \interpret{\Gamma \vdash M : \rho} \comp \pi_1. \]
	\qed
\end{lemma}
\begin{lemma}[substitution]\label{lem:sem_subst_target}
	For each well-typed $\lambda_{\mathrm{HFL}}$-term $\Gamma, x : \tau \vdash M : \rho$ and $\Gamma \vdash N : \tau$, we have
	\[ \mathcal{A}^{\emalgsymbol} \interpret{M[N/x]} = \mathcal{A}^{\emalgsymbol} \interpret{M} \comp \langle \identity{}, \mathcal{A}^{\emalgsymbol} \interpret{N} \rangle. \]
	\qed
\end{lemma}

\begin{lemma}[naturality]\label{lem:fixedpoint_naturality}
	A parameterized uniform fixed-point operator for $T$-algebras is natural.
	That is, we have $f^{\dagger} \comp g = (f \comp (g \times \identity{}))^{\dagger}$ for each $T$-algebra $\alpha : T A \to A$, $g : X \to Y$, and $f : Y \times A \to A$.
	\qed
\end{lemma}

\begin{lemma}[simplified parameterized uniformity]\label{lem:parameterized_uniformity_simplified}
	Let $({-})^{\dagger}$ be a parameterized uniform fixed-point operator.
	Let $\alpha : T A \to A$ and $\beta : T B \to B$ be $T$-algebras.
	Let $f : X \times A \to A$ and $g : X \times B \to B$ be morphisms.
	For any $T$-algebra morphism $h : A \to B$ such that $g \comp (\identity{} \times h) = h \comp f$, we have $g^{\dagger} = h \comp f^{\dagger}$.
	\qed
\end{lemma}

\section{Details of Instances}\label{sec:detail-instance}

\subsection{Trace Property and May/Must-Reachability}\label{subsec:free-algebra-nondet}
For trace properties and may/must-reachability (Section~\ref{subsec:may-must-reachability}), we need technical preparation.
Here, we want to obtain the set of sequences of events from the free $\mathcal{T}_P$-algebra $T^P X$ generated by $X$.
However, it is tricky to give a concrete construction of $T^P X$ in general.
For simplicity, we assume $X = 1$ and consider two concrete constructions of non-free $\mathcal{T}_P$-algebras $H 1$ and $S 1$.

Given $X \in \omegaCPO$, we define $\mathbf{Seq} X \coloneqq ((E^{*} + E^{\omega}) \times \{ \bot \}) + (E^{*} \times X)$ where $E^{*}$ and $E^{\omega}$ are the set of finite and infinite sequences of events, respectively.
For any $s, t \in E^{*} + E^{\omega}$, we write $s \sqsubseteq t$ if $s$ is a prefix of $t$.
We define a partial order ${\le}_{\mathbf{Seq} X}$ as follows.
For each $(s, x), (t, y) \in \mathbf{Seq} X$, $(s, x) \le_{\mathbf{Seq} X} (t, y)$ is true if one of the following conditions holds: (i) $x = y = \bot$ and $s \sqsubseteq t$ (ii) $x = \bot$, $y \in X$, and $s \sqsubseteq t$ (iii) $s = t$, $x \in X$, $y \in X$, and $x \le y$.
Then, $\mathbf{Seq} X$ is an $\omega$cpo and moreover characterised as a free algebra of the algebraic theory $\mathcal{T}_{\mathrm{St}}$ defined below.
\begin{lemma}
	Let $\mathcal{T}_{\mathrm{St}}$ be an algebraic theory defined by a unary operation $\operation{event}_a$ for each $a \in E$ and a nullary operation $\bot$ together with an inequation $\bot \le x$.
	The $\omega$cpo $\mathbf{Seq} X$ is a free $\mathcal{T}_{\mathrm{St}}$-algebra generated by $X$.
	The unit $\eta_X : X \to \mathbf{Seq} X$ is given by $\eta_X(x) = (\varepsilon, x)$, and the algebra structure is given by
	$\bot^{\mathbf{Seq} X} = (\varepsilon, \bot)$ and
	$\operation{event}_a^{\mathbf{Seq} X} (s, x) = (a \cdot s, x)$
	where $\varepsilon \in E^{*}$ is the empty sequence and $({\cdot})$ is the concatenation of two sequences.
\end{lemma}

Let $\mathcal{T}_H$ and $\mathcal{T}_S$ be algebraic theories defined by adding $x \join y \ge x$ and $x \join y \le x$, respectively, to $\mathcal{T}_P$ defined in Example~\ref{ex:trace-property-semantics}.
We define a $\mathcal{T}_H$-algebra $H 1$ as follows.
\[ H 1 \coloneqq \{ Y \subseteq \mathbf{Seq} 1 \setminus (E^{\omega} \times \{ \bot \}) \mid \text{$Y$ is nonempty and downward-closed} \} \]
This is an $\omega$cpo ordered by the inclusion order ${\subseteq}$ and has a $\mathcal{T}_H$-algebra structure.
\[ \bot^{H 1} = \{(\varepsilon, \bot)\} \qquad \operation{event}_a^{H 1}(Y) = \mathop{\downarrow} \{ (a \cdot s, x) \mid (s, x) \in Y \} \qquad Y_1 \join^{H 1} Y_2 = Y_1 \cup Y_2 \]
Here, $\mathop{\downarrow} Y$ is the downward closure of $Y$.
In fact, $H 1$ is a free $\mathcal{T}_H$-algebra generated by $1$ and the unit is given by $\eta^H_1 (\star) = \mathop{\downarrow} \{(\varepsilon, \star)\}$ for $\star \in 1$.
We also define a $\mathcal{T}_S$-algebra $S 1$ as follows.
\[ S 1 \coloneqq \{ Y \subseteq \mathbf{Seq} 1 \mid \text{$Y$ is flat, closed, and nonempty} \} \]
Here, we say $Y \subseteq \mathbf{Seq} 1$ is \emph{flat} if any two different elements in $Y$ are incomparable (i.e. for any $(s, x), (t, y) \in Y$, if $(s, x) \le_{\mathbf{Seq} 1} (t, y)$, then $(s, x) = (t, y)$) and \emph{closed} if for any infinite sequence of events $a_1, a_2, \dots \in E$, if for any $n$, there exists $(t_n, x_n) \in Y$ such that $(a_1 \dots a_n, \bot) \le_{\mathbf{Seq} 1} (t_n, x_n)$, then $(a_1 a_2 \dots, \bot) \in Y$.
A partial order ${\le}_{\mathrm{EM}}$ on $S 1$ is defined by the Egli--Milner order: $Y_1 {\le}_{\mathrm{EM}} Y_2$ if and only if $\forall (t, y) \in Y_2, \exists (s, x) \in Y_1, (s, x) \le_{\mathbf{Seq} 1} (t, y)$.
Then, $S 1$ is an $\omega$cpo as proved in~\cite{meyer1988}.
We can also prove that $S 1$ is a free $\mathcal{T}_S$-algebra generated by $1$ where the unit is given by $\eta^{S}_1(\star) = \{ (\varepsilon, \star) \}$.
The $\mathcal{T}_S$-algebra structure is given by
\[ \bot^{S 1} = \{ (\varepsilon, \bot) \} \quad \operation{event}_a^{S 1} Y = \{ (a \cdot s, x) \mid (s, x) \in Y \} \quad Y_1 \join^{S 1} Y_2 = \min (Y_1 \cup Y_2) \]
where $\min Y$ is the set of minimal elements in $Y$.
Since any $\mathcal{T}_H$-algebra ($\mathcal{T}_S$-algebra) is a $\mathcal{T}_P$-algebra, we have a morphism of $\mathcal{T}_H$-algebras $h^H : T^P 1 \to H 1$ ($\mathcal{T}_S$-algebras $h^S : T^P 1 \to S 1$) such that $\eta^H_1 = h^H \comp \eta^{T^P}_1$ ($\eta^S_1 = h^S \comp \eta^{T^P}_1$).
These morphisms enable us to extract information about sequences of events without knowing a concrete construction of $T^P 1$.
Intuitively, $h^H (Y)$ gives the prefix-closure of $Y \in T^P 1$, and $h^S(Y)$ gives the set of minimal sequences in $Y$.
\begin{example}[trace property (detailed)]\label{ex:trace-property-wp-detailed}
	We define an EM $T^P$-algebra $\emalgsymbol_{\mathrm{tr}} : T^P \answerobj \to \answerobj$ where $T^P$ is a monad defined in Example~\ref{ex:trace-property-semantics} and then explain that trace properties can be expressed as weakest preconditions for $\emalgsymbol_{\mathrm{tr}}$.
	Let $\mathfrak{A}$ be a deterministic finite automaton $(U, \delta, q_0, F)$ where $U$ is a finite set of states, $\delta \subseteq U \times E \times U$ is a transition relation, $q_0 \in U$ is an initial state, and $F$ is a set of final states.
	Here, we say $\mathfrak{A}$ is \emph{deterministic} if for any $q \in U$ and $a \in E$, there is at most one $q' \in U$ such that $(q, a, q') \in \delta$.
	We also assume that all states are final states $U = F$.
	We write $q \xrightarrow{a} q'$ if $(q, a, q') \in \delta$.
	The language accepted by $\mathfrak{A}$ is denoted by $L(\mathfrak{A})$.

	Now, consider an $\omega$cpo $\answerobj = (2^U, \supseteq)$ (note that the inclusion order is reversed here).
	This means that each truth value $Q \in 2^U$ assigns true or false to each state of $\mathfrak{A}$.
	We define a $\mathcal{T}_P$-algebra on $\answerobj$ as follows.
	\begin{gather}
		\nontermconst^{\answerobj} \coloneqq U \quad
		x \join^{\answerobj} y \coloneqq x \cap y \quad
		\operation{event}^{\answerobj}_a(x) \coloneqq \langle a \rangle x \coloneqq \{ q \in U \mid \exists q' \in x, q \xrightarrow{a} q' \}
		\label{eq:trace-algebra-detail}
	\end{gather}
	Note that operations defined in~\eqref{eq:trace-algebra-detail} are Scott-continuous.
	Note also that $\operation{event}_a(x \join y) = \operation{event}_a(x) \join \operation{event}_a(y)$ holds because we assumed that $(U, \delta)$ is deterministic.
	This $\mathcal{T}_P$-algebra defines an EM $T^P$-algebra $\emalgsymbol_{\mathrm{tr}} : T^P \answerobj \to \answerobj$.

	The weakest precondition transformer defined by $\emalgsymbol_{\mathrm{tr}}$ corresponds to trace properties for the automaton $\mathfrak{A}$.
	For simplicity, consider a morphism $f : 1 \to T^P 1$ that represents a program whose input and output are the unit type.
	In this situation, we can regard $\mathrm{wp}^{\emalgsymbol_{\mathrm{tr}}}[f]$ as a function of type $\answerobj \to \answerobj$ by identifying $\answerobj \cong \omegaCPO(1, \answerobj)$.
	Let $Q \in \answerobj \cong \omegaCPO(1, \answerobj)$ be a postcondition.
	To relate $\mathrm{wp}^{\emalgsymbol_{\mathrm{tr}}}[f](Q)$ with the trace property for $f$, we use $H 1$ because \eqref{eq:trace-algebra-detail} is actually a $\mathcal{T}_H$-algebra.
	By freeness, we have a unique $\mathcal{T}_H$-algebra morphism $\hat{Q} : H 1 \to \answerobj$ such that $Q = \hat{Q} \comp \eta^{H}_1$.
	The morphism $\hat{Q}$ is given by
	$\hat{Q} Y = \bigcap_{(s, \bot) \in Y \cap (E^{*} \times \{ \bot \})} \langle s \rangle \bot^{\answerobj} \cap \bigcap_{(s, \star) \in Y \cap (E^{*} \times 1)} \langle s \rangle Q$
	where $\langle a_1 \dots a_n \rangle x \coloneqq \langle a_1 \rangle \dots \langle a_n \rangle x$ is a shorthand notation for a sequence of events.
	Now, recall that $\mathrm{wp}^{\emalgsymbol_{\mathrm{tr}}}[f](Q)$ is defined by $\emalgsymbol_{\mathrm{tr}} \comp T^P Q \comp f$.
	Note that $\emalgsymbol_{\mathrm{tr}} \comp T^P Q : T^P 1 \to \answerobj$ is a morphism of EM algebras from $\mu^{T^P}_1 : T^P (T^P 1) \to T^P 1$ to $\emalgsymbol_{\mathrm{tr}} : T^P \answerobj \to \answerobj$ such that $Q = \emalgsymbol_{\mathrm{tr}} \comp T^P Q \comp \eta^{T^P}_1$.
	Since we also have $Q = \hat{Q} \comp \eta^H_1 = \hat{Q} \comp h^H \comp \eta^{T^P}_1$, we get $\hat{Q} \comp h^H = \emalgsymbol_{\mathrm{tr}} \comp T^P Q$ by the universal property of the free algebra $T^P 1$.
	Now, let $Q = U \in \answerobj$ be the set of all states.
	Since $s \in L(\mathfrak{A})$ if and only if $q_0 \in \langle s \rangle U$, we have
	\begin{equation}
		q_0 \in \mathrm{wp}^{\emalgsymbol_{\mathrm{tr}}}[f](U) = \bigcap_{(s, x) \in h^H(f(\star))} \langle s \rangle U \quad\iff\quad \forall (s, x) \in h^H(f(\star)), s \in L(\mathfrak{A}).
		\label{eq:wp-trace-iff-detail}
	\end{equation}
	Since $\{ s \mid (s, x) \in h^H(f(\star)) \}$ is (the prefix closure of) the set of sequences of events output by $f$, we can rephrase \eqref{eq:wp-trace-iff-detail} as ``the trace property for $f : 1 \to T^P 1$ is true if and only if $\mathrm{wp}^{\emalgsymbol_{\mathrm{tr}}}[f](U)$ is true at the initial state $q_0$ of the given automaton $\mathfrak{A}$''.
	Intuitively, $\mathrm{wp}^{\emalgsymbol_{\mathrm{tr}}}[f]$ takes a set of ``post-states'', runs the given automaton $\mathfrak{A}$ backwards, and returns the set of ``pre-states'' such that for any pre-state and any output string, there exists a run of $\mathfrak{A}$ that finishes at a post-state.
\end{example}

\subsection{Expected Cost and Cost Moment}
Distributive laws for Example~\ref{ex:expected-cost-cost-moment-semantics} are given as follows.
First, we have a distributive law between $P$ and $\mathbb{W} \times ({-})$.
\begin{lemma}
	Let $\category{C}$ be a symmetric monoidal category and $M$ be a monoid in $\category{C}$.
	Then, for any strong monad $T$, the strength $\strength^T$ of $T$ gives a distributive law $\strength^T_{M, {-}} : M \otimes T {-} \to T (M \otimes {-})$ between strong monads.
	\qed
\end{lemma}
Then, we consider a distributive law between $P(\mathbb{W} \times ({-}))$ and $({-})_{\bot}$.
\begin{lemma}
	Let $T$ be a strong monad on $\omegaQBS$ and assume
	\begin{equation}
		\text{for any $x \in T X$,} \qquad \eta^{T}(\bot) \le T \eta^{({-})_{\bot}}(x) \quad \in T(X_{\bot})
		\label{eq:omegaqbs-lift-distributive-law}
	\end{equation}
	where $X \in \omegaQBS$.
	We have a distributive law $d : (T({-}))_{\bot} \to T(({-})_{\bot})$ between strong monads.
\end{lemma}
\begin{proof}
	For each $X \in \omegaQBS$, we define a function $|d_X| : |T X| + 1 \to |T(X_{\bot})|$ by $|d_X| = [|T \eta^{({-})_{\bot}}_X|, |\eta^{T}| \comp \iota_2]$.
	Since $d_X(\bot) = \eta^T(\bot) \le T \eta^{({-})_{\bot}}(x) = d_X(x)$ for any $x \in T X$, $|d_X|$ is Scott-continuous.
	It is straightforward to check naturality and axioms of distributive laws.
\end{proof}
The strong monad $P(\mathbb{W} \times ({-}))$ satisfies \eqref{eq:omegaqbs-lift-distributive-law}.
In fact, $P(\mathbb{W} \times ({-}))$ has a stronger property.
\begin{lemma}
	Let $X$ be an $\omega$qbs with a bottom element $\bot_X \in X$.
	Then, $\eta^{P(\mathbb{W} \times ({-}))}(\bot_X)$ is a bottom element in $P(\mathbb{W} \times X)$.
\end{lemma}
\begin{proof}
	Recall that $P(\mathbb{W} \times X)$ is a sub-$\omega$cpo of $\exponential{(\exponential{\mathbb{W} \times X}{\mathbb{W}})}{\mathbb{W}}$ and that $f \in P(\mathbb{W} \times X)$ is linear and satisfies $f(1) = 1$.
	We prove $\eta^{P(\mathbb{W} \times ({-}))}(\bot_X) \le f$ for any $f \in P(\mathbb{W} \times X)$.
	It suffices to prove that for any $w : \mathbb{W} \times X \to \mathbb{W}$, we have $\eta^{P(\mathbb{W} \times ({-}))}(\bot_X)(w) \le f(w)$.
	\[ \eta^{P(\mathbb{W} \times ({-}))}(\bot_X)(w) = w(0, \bot_X)  = w(0, \bot_X) \cdot f(1) = f(w(0, \bot_X) \cdot 1) \le f(w) \]
\end{proof}

\section{More Instances}\label{sec:more-instances}
\subsection{May/Must-Reachability}\label{subsec:may-must-reachability}
\subsubsection{Informal Introduction}
May/must-reachability are also studied in~\cite{kobayashi2018} as well as the trace property in Section~\ref{subsec:trace-property}.
For example, consider may/must-reachability about the $\mathrm{close}$ event in \eqref{eq:example-program-trace-property}, that is, whether the program may/must reach $\operation{event}_{\mathrm{close}}$.
Since we are interested in may/must-reachability for $\operation{event}_{\mathrm{close}}$, all the other events are irrelevant for this problem and can be removed for simplicity.
\begin{equation}
	\letrec{f}{x}{\operation{event}_{\mathrm{close}}(x) \mathrel{\square} f\ x}{f\ ()} \label{eq:example-program-may-must}
\end{equation}
For may-reachability, we want to verify whether there exists a non-empty sequence of events.
This is equivalent to the negation of the trace property where the specification is given by an automaton $\mathfrak{A}_0$ that only accepts the empty sequence, i.e., $\mathfrak{A}_0$ is the automaton with only one state $q_0$ and no transition.
We get an HFL formula for may-reachability by applying the same CPS transformation as Section~\ref{subsec:trace-property} and then taking the de Morgan dual.
\begin{equation}
	\letrec{f'}{x\ k}{ [\mathrm{close}] (k\ x) \lor (f'\ x\ k)}{f\ ()\ (\lambda r. \mathbf{false})}
	\label{eq:example-cps-may-reachability}
\end{equation}
Here, $[\mathrm{close}]$ is the dual modal operator of $\langle \mathrm{close} \rangle$, and $\mathbf{let}\ \mathbf{rec}$ is interpreted as the least fixed point since we take the dual.
We can simplify~\eqref{eq:example-cps-may-reachability} by (1) replacing the modal operator $[\mathrm{close}]({-})$ with $\mathbf{true}$ because $\mathfrak{A}_0$ has no transition and (2) defining $F \coloneqq f'\ ()\ (\lambda r. \mathbf{false})$.
Then, we get an HFL formula $\mu F. \mathbf{true} \lor F$, which is the same as what \cite{kobayashi2018} gives.

For must-reachability, we want to verify whether all sequences are non-empty.
We apply a different CPS transformation ($\square \mapsto \land$, $\operation{event}_a \mapsto [a]$, and $\mathbf{let}\ \mathbf{rec} \mapsto \text{lfp}$) from may-reachability ($\square \mapsto \lor$, $\operation{event}_a \mapsto [a]$, and $\mathbf{let}\ \mathbf{rec} \mapsto \text{lfp}$, the dual of trace properties), which reflects the difference between angelic/demonic nondeterminism.
By passing $\lambda r. \mathbf{false}$ as a postcondition, we get the following.
\[ \letrec{f'}{x\ k}{[\mathrm{close}] (k\ x) \land (f'\ x\ k)}{f'\ ()\ (\lambda r. \mathbf{false})} \]
Similarly to the may-reachability, we can replace $[\mathrm{close}]({-})$ with $\mathbf{true}$.
By defining $F \coloneqq f'\ ()\ (\lambda r. \mathbf{false})$, we get an HFL formula $\mu F. \mathbf{true} \land F$, which is the same as what \cite{kobayashi2018} gives.

\subsubsection{Details}
We consider the $\lambda_c$-signature $\Sigma$ defined in Example~\ref{ex:trace-property-syntax} and the $\lambda_c(\Sigma)$-structure $\mathcal{A}$ defined in Example~\ref{ex:trace-property-semantics}.

May/must-reachability can be expressed by weakest preconditions.
\begin{example}[may reachability]\label{ex:may-reachability-wp}
	May-reachability is the negation of a trace property.
	Consider a trivial automaton $\mathfrak{A}_0$.
	The language accepted by $\mathfrak{A}_0$ is $L(\mathfrak{A}_0) = \{ \varepsilon \}$.
	May-reachability asks if there exists an output string that is not accepted by $\mathfrak{A}_0$.
	By Example~\ref{ex:trace-property-wp}, the may-reachability for $f : 1 \to T^P 1$ is true if and only if $q_0 \notin \mathrm{wp}^{\emalgsymbol_{\mathrm{tr}}}[f](\lambda r. U)$.
\end{example}

\begin{example}[must reachability]\label{ex:must-reachability-wp}
	Given a DFA $\mathfrak{A} = (U, \delta, q_0, U)$, consider an $\omega$cpo $\answerobj = (2^U, \subseteq)$ and define a $\mathcal{T}_P$-algebra on $\answerobj$ as follows.
	Then, this defines a EM $T^P$-algebra $\emalgsymbol_{\mathrm{must}} : T^P \answerobj \to \answerobj$.
	\begin{gather}
		\nontermconst^{\answerobj} \coloneqq \emptyset \quad
		x \join^{\answerobj} y \coloneqq x \cap y \quad
		\operation{event}^{\answerobj}_a(x) \coloneqq [a] x \coloneqq \{ q \in U \mid \forall q', q \xrightarrow{a} q' \implies q' \in x \}
		\label{eq:must-algebra}
	\end{gather}

	We show that the weakest precondition for $\emalgsymbol_{\mathrm{must}}$ corresponds to must-reachability of $f : 1 \to T^P 1$ if $\mathfrak{A}$ is the trivial automaton $\mathfrak{A}_0$ and the postcondition is given by $Q = \emptyset \in \answerobj$.
	In this case, we use $S 1$ (see Section~\ref{sec:free-algebra-nondet}) since~\eqref{eq:must-algebra} is a $\mathcal{T}_S$-algebra.
	By the freeness of $S 1$, we have a unique algebra morphism $\hat{Q} : S 1 \to \answerobj$ such that $Q = \hat{Q} \comp \eta^S_1$.
	The morphism $\hat{Q}$ is given by
	$\hat{Q} Y = \bigcap_{(s, \bot) \in Y} [s] \bot^{\answerobj} \cap \bigcap_{(s, \star) \in Y} [s] Q(\star)$
	where $[a_1 \dots a_n] x = [a_1] \dots [a_n] x$ for any $a_1 \dots a_n \in E^{*}$ and $[a_1 \dots ] \bot^{\answerobj} = \bigcup_n [a_1 \dots a_n] \bot^{\answerobj}$ for any $a_1 \dots \in E^{\omega}$.
	Similarly to Example~\ref{ex:trace-property-wp}, we have $\hat{Q} \comp h^S = \emalgsymbol_{\mathrm{must}} \comp T^P Q$ by the freeness of $T^P 1$.
	We also have $[a] x = U$ for any $x \subseteq U$ since the trivial automaton has no transition.
	As a result, we get the following.
	\[ q_0 \in \mathrm{wp}^{\emalgsymbol_{\mathrm{must}}}[f](\emptyset) = \bigcap_{(s, x) \in s(f(\star))} [s] \emptyset \qquad\iff\qquad \forall (s, x) \in h^S(f(\star)), s \neq \varepsilon \]
	This can be read as ``the must-reachability for $f : 1 \to T^P 1$ is true if and only if $q_0 \in \mathrm{wp}^{\emalgsymbol_{\mathrm{must}}}[f](\emptyset)$''.
\end{example}

Then, we apply Theorem~\ref{thm:cps-is-wpt-with-recursion} to the above weakest preconditions.

\begin{example}[must reachability]\label{ex:must-reachability-cps}
	Let $\vdash M : 1$ be a $\lambda_c(\Sigma)$-term.
	By Example~\ref{ex:must-reachability-wp}, the must-reachability for $M$ is true if and only if $q_0 \in \mathrm{wp}^{\emalgsymbol_{\mathrm{must}}}[\mathcal{A}\interpret{M}](\emptyset)$ where $\mathcal{A}$ is defined in Example~\ref{ex:trace-property-semantics}; and by Theorem~\ref{thm:cps-is-wpt-with-recursion}, the must-reachability is true if and only if $q_0 \in \mathcal{A}^{\emalgsymbol_{\mathrm{must}}} \interpret{\CPS{M} (\lambda \_. \mathbf{false})}$ where $\mathcal{A}^{\emalgsymbol_{\mathrm{tr}}} \interpret{\mathbf{false}} = \emptyset$.
	This corresponds to~\cite[Thm~2]{kobayashi2018}.

	In this case, the CPS transformation itself is the same as the trace property (Example~\ref{ex:trace-property-cps}), but the interpretation of $\lambda_{\mathrm{HFL}}(\Sigma)$-terms is different because we use a different EM algebra.
	The modal operator for an event operator $\operation{event}_a$ is interpreted as the always-true since the trivial automaton $\mathfrak{A}_0$ has no transition.
	\[ \mathcal{A}^{\emalgsymbol_{\mathrm{must}}} \interpret{\operation{event}_a(M)}(x) = [a] (\mathcal{A}^{\emalgsymbol_{\mathrm{must}}} \interpret{M}(x)) = \{q_0\} = \mathcal{A}^{\emalgsymbol_{\mathrm{must}}} \interpret{\mathbf{true}}(x) \]
	The modal operator for nondeterministic branching $\join$ is interpreted as conjunction.
	\[ \mathcal{A}^{\emalgsymbol_{\mathrm{must}}} \interpret{M \join N}(x) = \mathcal{A}^{\emalgsymbol_{\mathrm{must}}} \interpret{M}(x) \cap \mathcal{A}^{\emalgsymbol_{\mathrm{must}}} \interpret{N}(x) \]
	Since we use the standard order $(2^{\{ q_0 \}}, \subseteq)$, fixed points are interpreted as the least fixed points.
\end{example}
Extending the syntax of $\lambda_{\mathrm{HFL}}$, we obtain a more convenient CPS transformation.
\begin{example}[must-reachability, continued from Example~\ref{ex:must-reachability-cps}]\label{ex:must-reachability-extended}
	Similarly to Example~\ref{ex:trace-propery-extended}, $\answerobj = (2^U, {\subseteq})$ also has an internal bounded distributive lattice structure $(\answerobj, \mathbf{true}, {\land}, \mathbf{false}, {\lor})$ defined by $(\answerobj, U, {\cap}, \emptyset, {\cup})$.
	We consider $\lambda_{\mathrm{HFL}}$-terms extended with this internal bounded distributive lattice structure.
	By replacing modal operators with bounded-distributive-lattice operations, we can redefine our CPS transformation as $\CPS{(M_1 \join M_2)} = \lambda k. \CPS{M_1}\ k \land \CPS{M_2}\ k$ and $\CPS{\operation{event}_a(M)} = \mathbf{true}$.
\end{example}

\ifthenelse{\boolean{submission}}{}{
\subsection{Total Correctness for Programs with States}
\begin{definition}
	We define a $\lambda_c$-signature $\Sigma = (B, K, O)$ as follows.
	\begin{itemize}
		\item $B = \{ \mathbf{loc}, \mathbf{val}, \dots \}$
		\item $K = \{ l : 1 \rightarrowtriangle \mathbf{loc}, v : 1 \rightarrowtriangle \mathbf{val}, \dots \}$ where $l$ and $v$ range over locations of type $\mathbf{loc}$ and $\mathbf{val}$, respectively.
		\item $O = \{ \operation{lookup} : \mathbf{val} \rightarrowtriangle \mathbf{loc}, \operation{update} : 1 \rightarrowtriangle \mathbf{loc} \times \mathbf{val} \}$
	\end{itemize}
\end{definition}

\begin{definition}
	Let $L$ be a finite set of locations and $V$ be a countable set of values.
	We define a $\lambda_c(\Sigma)$-structure $\mathcal{A} = (\omegaCPO, T, A, a)$ as follows.
	\begin{itemize}
		\item $T \coloneqq \exponential{S}{((({-}) \times S)_{\bot})}$ where we define an $\omega$cpo $S$ by $(V^L, {=})$.
		Here, the monad is defined by the tensor of the algebraic theory of states and non-termination~\cite[Section~5]{hyland2006}.
		The unit is given by $\eta^T = (\exponential{S}{\eta^{({-})_{\bot}}}) \comp \mathbf{coev}$, that is, $\eta^T(x)(s) = \eta^{({-})_{\bot}}(x, s)$ where $\mathbf{coev}_X : X \to \exponential{S}{(X \times S)}$ is the coevaluation.
		The multiplication is given by $\mu^{T} = \exponential{S}{(\mu^{({-})_{\bot}} \comp \eval_{\bot})}$, that is,
		\[ \mu^T(f)(s) = \begin{cases}
			\bot & f(s) = \bot \\
			g(s') & f(s) = \eta^{({-})_{\bot}}(g, s').
		\end{cases} \]
		\item \[ A(\mathbf{loc}) = L \qquad A(\mathbf{val}) = V \]
		\item $\mathbf{Gef}(a(\operation{lookup})) : L \to T V$ is defined by
		\[ \mathbf{Gef}(a(\operation{lookup}))(l)(s) = \eta^{({-})_{\bot}}(s(l), s) \]
		and $\mathbf{Gef}(a(\operation{update})) : L \times V \to T 1$ is
		\[ \mathbf{Gef}(a(\operation{update}))(l, v)(s) = \eta^{({-})_{\bot}}(\star, s[l \mapsto v]) \]
	\end{itemize}
\end{definition}

\begin{definition}
	Let $\answerobj = (2^S, {\subseteq})$ be an $\omega$cpo where $2^S$ is the powerset of $S$.
	Note that $\answerobj$ is isomorphic to $\exponential{S}{2}$ where $2 = \{ \mathbf{false}, \mathbf{true}\}$ is an $\omega$cpo with the order defined by $\mathbf{false} \le \mathbf{true}$.
	We define an EM algebra $\emalgsymbol_{\mathrm{st}}$ on $\answerobj$ by
	\[ \exponential{S}{((\answerobj \times S)_{\bot})} \xrightarrow{\exponential{S}{\eval_{\bot}}} \exponential{S}{2_{\bot}} \xrightarrow{\exponential{S}{\emalgsymbol_{\mathrm{tot}}}} \exponential{S}{2} = \answerobj \]
	where $\emalgsymbol_{\mathrm{tot}}$ is defined in Example~\ref{ex:total-correctness-wp}.
\end{definition}

\begin{lemma}
	The morphism $\emalgsymbol_{\mathrm{st}}$ is an EM algebra.
\end{lemma}
\begin{proof}
	We prove $\emalgsymbol_{\mathrm{st}} \comp \eta^T = \identity{}$ and $\emalgsymbol_{\mathrm{st}} \comp \mu^T = \emalgsymbol_{\mathrm{st}} \comp T \emalgsymbol_{\mathrm{st}}$.
	\begin{align}
		\emalgsymbol_{\mathrm{st}} \comp \eta^T &= (\exponential{S}{\emalgsymbol_{\mathrm{tot}}}) \comp (\exponential{S}{\eval_{\bot}}) \comp (\exponential{S}{\eta^{({-})_{\bot}}}) \comp \mathbf{coev} \\
		&= (\exponential{S}{(\emalgsymbol_{\mathrm{tot}} \comp \eval_{\bot} \comp \eta^{({-})_{\bot}})}) \comp \mathbf{coev} \\
		&= (\exponential{S}{(\emalgsymbol_{\mathrm{tot}} \comp \eta^{({-})_{\bot}} \comp \eval)}) \comp \mathbf{coev} \\
		&= (\exponential{S}{\eval}) \comp \mathbf{coev} \\
		&= \identity{}
	\end{align}
	\begin{align}
		\emalgsymbol_{\mathrm{st}} \comp \mu^T &= (\exponential{S}{\emalgsymbol_{\mathrm{tot}}}) \comp (\exponential{S}{\eval_{\bot}}) \comp (\exponential{S}{\mu^{({-})_{\bot}}}) \comp (\exponential{S}{\eval_{\bot}}) \\
		&= \exponential{S}{(\emalgsymbol_{\mathrm{tot}} \comp \eval_{\bot} \comp \mu^{({-})_{\bot}} \comp \eval_{\bot})} \\
		&= \exponential{S}{(\emalgsymbol_{\mathrm{tot}} \comp \mu^{({-})_{\bot}} \comp (\eval_{\bot})_{\bot} \comp \eval_{\bot})} \\
		&= \exponential{S}{(\emalgsymbol_{\mathrm{tot}} \comp (\emalgsymbol_{\mathrm{tot}})_{\bot} \comp (\eval_{\bot})_{\bot} \comp \eval_{\bot})} \\
		&= \exponential{S}{(\emalgsymbol_{\mathrm{tot}} \comp (\emalgsymbol_{\mathrm{tot}} \comp \eval_{\bot} \comp \eval)_{\bot})} \\
		&= \exponential{S}{(\emalgsymbol_{\mathrm{tot}} \comp (\eval \comp ((\exponential{S}{(\emalgsymbol_{\mathrm{tot}} \comp \eval_{\bot})}) \times S))_{\bot})} \\
		&= (\exponential{S}{(\emalgsymbol_{\mathrm{tot}} \comp \eval_{\bot})}) \comp (\exponential{S}{((\exponential{S}{(\emalgsymbol_{\mathrm{tot}} \comp \eval_{\bot})}) \times S)_{\bot}}) \\
		&= \emalgsymbol_{\mathrm{st}} \comp T \emalgsymbol_{\mathrm{st}}
	\end{align}	
\end{proof}

We identify the pre/post-condition of the form $Q : X \to \answerobj$ with the (upward-closed) subset $\{ (x, s) \mid Q(x)(s) = \mathbf{true} \} \subseteq X \times S$.
Then, for any $f : X \to T Y$ and a postcondition $Q : Y \to \answerobj$, the weakest precondition $\mathrm{wp}[f](Q) : X \to \answerobj$ corresponds to the following subset.
\[ \{ (x, s) \mid \exists (y, s') \in Y \times S, \eta^{({-})_{\bot}}(y, s') = f(x)(s) \land Q(y)(s') = \mathbf{true} \} \]
because we have
\[ \mathrm{wp}[f](Q)(x)(s) = \begin{cases}
	\mathbf{false} & f(x)(s) = \bot \\
	Q(y)(s') & f(x)(s) = \eta^{({-})_{\bot}}(y, s')
\end{cases} \]

\todo{prove}
Let $\Gamma \vdash M : \mathbf{val} \to \answertype$ and $\Gamma \vdash N : \mathbf{loc}$ be $\lambda_{\mathrm{HFL}}(\Sigma)$-terms.
Let $\gamma \in \mathcal{A}^{\emalgsymbol_{\mathrm{st}}}\interpret{\Gamma}$, $m = \mathcal{A}^{\emalgsymbol_{\mathrm{st}}}\interpret{M}(\gamma) : V \to \answerobj$, and $n = \mathcal{A}^{\emalgsymbol_{\mathrm{st}}}\interpret{N}(\gamma) \in L$.
Then, the modal operator for $\operation{lookup}$ is given as follows.
\[ \mathcal{A}^{\emalgsymbol_{\mathrm{st}}}\interpret{\operation{lookup}\ (M, N)}(\gamma)(s) = m(s(n))(s) \]

Let $\Gamma \vdash M : \answertype$ and $\Gamma \vdash N : \mathbf{loc} \times \mathbf{val}$ be $\lambda_{\mathrm{HFL}}(\Sigma)$-terms.
Let $\gamma \in \mathcal{A}^{\emalgsymbol_{\mathrm{st}}}\interpret{\Gamma}$, $m = \mathcal{A}^{\emalgsymbol_{\mathrm{st}}}\interpret{M}(\gamma) \in \answerobj$, and $(n_1, n_2) = \mathcal{A}^{\emalgsymbol_{\mathrm{st}}}\interpret{N}(\gamma) \in L \times V$.
Then, the modal operator for $\operation{update}$ is given as follows.
\[ \mathcal{A}^{\emalgsymbol_{\mathrm{st}}}\interpret{\operation{update}\ (\lambda x. M, N)}(\gamma)(s) = m(s[n_1 \mapsto n_2]) \]

If we decompose the answer type $\answertype$ as $\answertype = (\mathbf{loc} \to \mathbf{val}) \to \answertype'$ where $\answertype'$ is interpreted as $(\{ \mathbf{false}, \mathbf{true} \}, {\le})$, and extend the syntax of $\lambda_{\mathrm{HFL}}(\Sigma)$-terms by adding a ternary operation $({-})[({-}) \mapsto ({-})] : (\mathbf{loc} \to \mathbf{val}) \times \mathbf{loc} \times \mathbf{val} \to \mathbf{loc} \to \mathbf{val}$, then
\[ \mathcal{A}^{\emalgsymbol_{\mathrm{st}}}\interpret{\operation{lookup}\ (M, N)} = \mathcal{A}^{\emalgsymbol_{\mathrm{st}}}\interpret{\lambda s. M\ (s\ N)\ s} \]
\[ \mathcal{A}^{\emalgsymbol_{\mathrm{st}}}\interpret{\operation{update}\ (\lambda x. M, (N_1, N_2))} = \mathcal{A}^{\emalgsymbol_{\mathrm{st}}}\interpret{\lambda s. M\ (s[N_1 \mapsto N_2])} \]
}

\subsection{Exception}
\[ B = \{ \mathbf{ex} \} \]
\[ O = \{ \operation{raise} : 0 \rightarrowtriangle \mathbf{ex} \} \]

Let $E$ be a countable set of exceptions.
\[ \mathcal{A} = (\omegaCPO, (({-}) + E)_{\bot}, A, a) \]
Here, the monad is defined by the sum of $({-}) + E$ and $({-})_{\bot}$~\cite[Section~3]{hyland2006}.
\[ A(\mathbf{ex}) = E \]
\[ \mathbf{Gef}(a(\operation{raise})) = \eta^{({-})_{\bot}} \comp \iota_2 \]

Let $\answerobj = (2, {\le})$.
Given a function $p_{\mathrm{ab}} : E \to \answerobj$, we define an EM algebra $\emalgsymbol_{\mathrm{e}}$ by
\[ (\answerobj + E)_{\bot} \xrightarrow{[\identity{}, p_{\mathrm{ab}}]_{\bot}} \answerobj_{\bot} \xrightarrow{\emalgsymbol_{\mathrm{tot}}} \answerobj \]
Here, $p_{\mathrm{ab}} : E \to \answerobj$ represents an \emph{abnormal postcondition}~\cite{rauch2017}.

\begin{lemma}
	The morphism $\emalgsymbol_{\mathrm{e}}$ is an EM algebra.
\end{lemma}
\begin{proof}
	\begin{align}
		\emalgsymbol_{\mathrm{e}} \comp \eta &= \emalgsymbol_{\mathrm{tot}} \comp [\identity{}, p_{\mathrm{ab}}]_{\bot} \comp \eta^{({-})_{\bot}} \comp \iota_1 \\
		&= \emalgsymbol_{\mathrm{tot}} \comp \eta^{({-})_{\bot}} \comp [\identity{}, p_{\mathrm{ab}}] \comp \iota_1 \\
		&= \identity{}
	\end{align}
	\begin{align}
		\emalgsymbol_{\mathrm{e}} \comp \mu &= \emalgsymbol_{\mathrm{tot}} \comp [\identity{}, p_{\mathrm{ab}}]_{\bot} \comp \mu^{({-})_{\bot}} \comp [\identity{}, \eta^{({-})_{\bot}} \comp \iota_2]_{\bot} \\
		&= \emalgsymbol_{\mathrm{tot}} \comp \mu^{({-})_{\bot}} \comp ([\identity{}, p_{\mathrm{ab}}]_{\bot})_{\bot} \comp [\identity{}, \eta^{({-})_{\bot}} \comp \iota_2]_{\bot} \\
		&= \emalgsymbol_{\mathrm{tot}} \comp (\emalgsymbol_{\mathrm{tot}})_{\bot} \comp ([\identity{}, p_{\mathrm{ab}}]_{\bot})_{\bot} \comp [\identity{}, \eta^{({-})_{\bot}} \comp \iota_2]_{\bot} \\
		&= \emalgsymbol_{\mathrm{tot}} \comp (\emalgsymbol_{\mathrm{tot}} \comp [\identity{}, p_{\mathrm{ab}}]_{\bot} \comp [\identity{}, \eta^{({-})_{\bot}} \comp \iota_2])_{\bot} \\
		&= \emalgsymbol_{\mathrm{tot}} \comp (\emalgsymbol_{\mathrm{tot}} \comp [[\identity{}, p_{\mathrm{ab}}]_{\bot}, [\identity{}, p_{\mathrm{ab}}]_{\bot} \comp \eta^{({-})_{\bot}} \comp \iota_2])_{\bot} \\
		&= \emalgsymbol_{\mathrm{tot}} \comp (\emalgsymbol_{\mathrm{tot}} \comp [[\identity{}, p_{\mathrm{ab}}]_{\bot}, \eta^{({-})_{\bot}} \comp p_{\mathrm{ab}}])_{\bot} \\
		&= \emalgsymbol_{\mathrm{tot}} \comp ([\emalgsymbol_{\mathrm{tot}} \comp [\identity{}, p_{\mathrm{ab}}]_{\bot}, p_{\mathrm{ab}}])_{\bot} \\
		&= \emalgsymbol_{\mathrm{tot}} \comp ([\identity{}, p_{\mathrm{ab}}] \comp ((\emalgsymbol_{\mathrm{tot}} \comp [\identity{}, p_{\mathrm{ab}}]_{\bot}) + E))_{\bot} \\
		&= \emalgsymbol_{\mathrm{tot}} \comp [\identity{}, p_{\mathrm{ab}}]_{\bot} \comp ((\emalgsymbol_{\mathrm{tot}} \comp [\identity{}, p_{\mathrm{ab}}]_{\bot}) + E)_{\bot} \\
		&= \emalgsymbol_{\mathrm{e}} \comp (\emalgsymbol_{\mathrm{e}} + E)_{\bot}
	\end{align}	
\end{proof}

For $\Gamma \vdash M : \mathbf{ex}$, the modal operator $\operation{raise}$ is interpreted as follows.
\[ \mathcal{A}^{\emalgsymbol}\interpret{\operation{raise}\ (\lambda x. \delta(x), M)} = p_{\mathrm{ab}} \comp \mathcal{A}^{\emalgsymbol}\interpret{M} \]

\ifthenelse{\boolean{submission}}{}{
\subsection{Weakest (Liberal) Pre-Expectation}
\todo{}
We define $\mathbb{W}_{[0, 1]} = ([0, 1], \mathbf{Meas}(\mathbb{R}, [0, 1]), {\le})$.
Note that we have $\lnot : \mathbb{W}_{[0, 1]} \to (\mathbb{W}_{[0, 1]})^{\op}$ defined by $\lnot(x) = 1 - x$.

We define $\emalgsymbol^P : P \mathbb{W}_{[0, 1]} \to \mathbb{W}_{[0, 1]}$ by \todo{}
Note that $\lnot \comp \emalgsymbol^P \comp P \lnot^{-1} : P (\mathbb{W}_{[0, 1]})^{\op} \to (\mathbb{W}_{[0, 1]})^{\op}$ is given by \todo{}

\subsection{May/Must for Nondeterminism}

\subsection{WP for Probabilistic Programs with Conditioning}
\[ B = \{ \mathbf{real} \} \]
\[ O = \{ \operation{unif} : \mathbf{real} \rightarrowtriangle 1, \operation{score} : 1 \rightarrowtriangle \mathbf{real} \} \]

Let $T$ be the statistical powerdomain monad~\cite{vakar2019}.
\[ \mathcal{A} = (\omegaQBS, T, A, a) \]
\[ A(\mathbf{real}) = (\mathbb{R}, \mathbf{Meas}(\mathbb{R}, \mathbb{R}), {=}) \]

$\mathbf{Gef}(a(\operation{score})) : \mathbb{R} \to T 1$
\[ \mathbf{Gef}(a(\operation{score}))(r)(w) = |r| \cdot w(\star) \]

Let $\answerobj = \mathbb{W}$.
As an EM $T$-algebra, we define $\emalgsymbol^T : T \mathbb{W} \to \mathbb{W}$ by the expected value $\emalgsymbol^T(f) = f(\identity{})$.

Let $\Gamma \vdash M : \answertype$ and $\Gamma \vdash N : \mathbf{real}$ be $\lambda_{\mathrm{HFL}}(\Sigma)$-terms.
Let $\gamma \in \mathcal{A}^{\emalgsymbol^T}\interpret{\Gamma}$.
\[ \mathcal{A}^{\emalgsymbol^T}\interpret{\operation{score}\ (\lambda x. M, N)}(\gamma) = |\mathcal{A}^{\emalgsymbol^T}\interpret{N}(\gamma)| \cdot \mathcal{A}^{\emalgsymbol^T}\interpret{M}(\gamma) \]
Thus, if we extend the syntax of $\lambda_{\mathrm{HFL}}(\Sigma)$-terms by adding $|{-}| : \mathbf{real} \to \answertype$ and $({\cdot}) : \answertype^2 \to \answertype$, then we can rewrite the modal operator for $\operation{score}$ as follows.
\[ \mathcal{A}^{\emalgsymbol^T}\interpret{\operation{score}\ (\lambda x. M, N)} = \mathcal{A}^{\emalgsymbol^T}\interpret{|N| \cdot M} \]

\subsection{Expected Cost for Unnormalised Distributions}

Let $T$ be the statistical powerdomain monad~\cite{vakar2019}.

EM algebra:
$\answerobj = \mathbb{W}^2$

\begin{itemize}
	\item $\emalgsymbol^{T}_{2} : T \answerobj \to \answerobj$ is the 2-fold product of the expectation $\emalgsymbol^{T} : T \mathbb{W} \to \mathbb{W}$, which is given by $\emalgsymbol^{T}_{2}(f) = (f(\pi_1), f(\pi_2))$ for each $f \in T \answerobj$.
	\item $\emalgsymbol^{\mathbb{W} \times ({-})} : \mathbb{W} \times \answerobj \to \answerobj$ is defined by $\emalgsymbol^{\mathbb{W} \times ({-})}(a, (x_0, x_1)) = (x_0, a x_0 + x_1)$.
	This defines a $(\mathbb{W}, 0, {+})$-action on $\answerobj$.
	\item $\emalgsymbol^{({-})_{\bot}} : \answerobj_{\bot} \to  \answerobj$ defined by $\emalgsymbol^{({-})_{\bot}}(\bot) = (0, 0)$.
\end{itemize}

\begin{lemma}
	Two EM algebras $\emalgsymbol^{T}_{2}$ and $\emalgsymbol^{\mathbb{W} \times ({-})}$ satisfy the composite law in Lemma~\ref{lem:distributive-law-em-algebra}.
	\[ \emalgsymbol^{T}_{2} \comp T \emalgsymbol^{\mathbb{W} \times ({-})} \comp \strength^{T} = \emalgsymbol^{\mathbb{W} \times ({-})} \comp (\mathbb{W} \times \emalgsymbol^{T}_{2}) \]
\end{lemma}
\begin{proof}
	Let $f \in T \answerobj$ and $r \in \mathbb{W}$.
	Note that $f$ is linear.
	\begin{align}
		(\emalgsymbol^{T}_{2} \comp T \emalgsymbol^{\mathbb{W} \times ({-})} \comp \strength^{T})(r, f) &= (\emalgsymbol^{T}_{2} \comp T \emalgsymbol^{\mathbb{W} \times ({-})})(\lambda w. f(w(r, {-}))) \\
		&= \emalgsymbol^{T}_{2} (\lambda w. f(w(\emalgsymbol^{\mathbb{W} \times ({-})}(r, {-})))) \\
		&= (f(\pi_1(\emalgsymbol^{\mathbb{W} \times ({-})}(r, {-}))), f(\pi_2(\emalgsymbol^{\mathbb{W} \times ({-})}(r, {-})))) \\
		&= (f(\pi_1), f(\lambda (x_0, x_1). r \cdot x_0 + x_1)) \\
		&= (f(\pi_1), r \cdot f(\pi_1) + f(\pi_2)) \\
		&= \emalgsymbol^{\mathbb{W} \times ({-})} (r, (f(\pi_1), f(\pi_2))) \\
		&= \emalgsymbol^{\mathbb{W} \times ({-})} \comp (\mathbb{W} \times \emalgsymbol^{T}_{2}) (r, f)
	\end{align}
\end{proof}
The composite law for $\emalgsymbol^{({-})_{\bot}}$ also holds.

The weakest precondition transformer defined by the composite EM algebra is given as follows.
\[ \mathrm{wp}[f](\lambda x. (1, 0)) = \emalgsymbol^T \comp \tupling{h_1}{h_2} \comp f \]
where
\begin{gather}
	h_1(y, c) = \begin{cases}
		0 & y = \bot \text{ \todo{this isn't good?}} \\
		1 & y \in Y
	\end{cases} \qquad\qquad
	h_2(y, c) = \begin{cases}
		0 & y = \bot \\
		c & y \in Y
	\end{cases}
\end{gather}

\[ \mathcal{A}^{\emalgsymbol}\interpret{M^{\checkmark}}(\gamma) = \emalgsymbol^{\mathbb{W} \times ({-})}(1, \mathcal{A}^{\emalgsymbol}\interpret{M}(\gamma)) \]
We decompose $\answertype = \answertype' \times \answertype'$ where $\answertype'$ is \todo{}.

\[ \mathcal{A}^{\emalgsymbol}\interpret{M^{\checkmark}} = \mathcal{A}^{\emalgsymbol}\interpret{(\pi_1\ M, \pi_1\ M + \pi_2\ M)} \]
}

\section{De Morgan Duality}\label{sec:duality}
De Morgan duality deserves a detailed explanation among possible extensions of the target language.
This extension allows us to take the negation of a $\lambda_{\mathrm{HFL}}$-term, and we apply this to may-reachability.

\subsection{Syntax}
To keep track of variance, we extend the target language as follows.
\begin{definition}[$\lambda^{\pm}_{\mathrm{HFL}}(\Sigma)$-types/terms]\label{def:target-language-de-morgan}
	Let $\Sigma$ be a $\lambda_c$-signature.
	We extend $\lambda_{\mathrm{HFL}}(\Sigma)$-types/terms as follows and call them $\lambda^{\pm}_{\mathrm{HFL}}(\Sigma)$-types/terms.
	\begin{gather}
		\answertype^{?} \coloneqq \answertype^{+} \mid \answertype^{-} \qquad
		\rho, \tau \coloneqq \answertype^{?} \mid b \mid 1 \mid \rho \times \tau \mid 0 \mid \rho + \tau \mid \rho \to \answertype^{?} \\
		\begin{aligned}
			M, N \coloneqq &\dots \mid o^{?}\ M \mid \letrecchurch{f}{x}{\rho}{\answertype^{?}}{M}{N} \mid \dots \\
			&\top^{?} \mid \bot^{?} \mid \lnot^{?} M \mid M \land^{?} N \mid M \lor^{?} N \qquad\qquad\qquad
			\text{where\hspace{1em}${?} \coloneqq {+} \mid {-}$}
		\end{aligned}
	\end{gather}
	We call $\answertype^{?}$ a \emph{proposition type}.
	A \emph{ground type} is a type constructed without $\answertype^{?}$ and $\tau \to \answertype^{?}$.
\end{definition}
In Definition~\ref{def:target-language-de-morgan}, the answer type $\answertype$ is annotated by ${+}$/${-}$, and terms are annotated when their typing rules involve $\answertype^{?}$.
We may omit annotations ${+}/{-}$ for terms if they are clear from context, but we do not omit annotations for proposition types.
Note that the target language $\lambda_{\mathrm{HFL}}$ defined in Section~\ref{subsec:target-language} can be embedded to $\lambda^{\pm}_{\mathrm{HFL}}$ by letting $\answertype = \answertype^{+}$.

Typing rules for $\lambda_{\mathrm{HFL}}$ are also extended to $\lambda^{\pm}_{\mathrm{HFL}}$ according to the following principle: taking negation $\lnot$ is the only way for a positive/negative proposition to interact with negative/positive propositions.
For example, given $M, N : \answertype^{+}$ (or $M, N : \answertype^{-}$), we can construct a conjunction $M \land^{+} N : \answertype^{+}$ (or $M \land^{-} N : \answertype^{-}$), but we cannot construct a conjunction $M \land N$ of $M : \answertype^{+}$ and $N : \answertype^{-}$.
See Fig.~\ref{fig:typing-rules-de-morgan} and Section~\ref{subsec:de-morgan-typing-rules} for the full definition.

\begin{figure}[tbp]
	\begin{mathpar}
		\inferrule{
			\Gamma \vdash M : \answertype^{+}
		}{
			\Gamma \vdash \lnot^{+} M : \answertype^{-}
		}
		\and
		\inferrule{
			\Gamma \vdash M : \answertype^{-}
		}{
			\Gamma \vdash \lnot^{-} M : \answertype^{+}
		}
		\and
		\inferrule{
			\Gamma \vdash M : \answertype^{?} \\
			\Gamma \vdash N : \answertype^{?}
		}{
			\Gamma \vdash M \land^{?} N : \answertype^{?}
		}
		\and
		\inferrule{
			\Gamma \vdash M : (\mathrm{ar}(o) \to \answertype^{?}) \times \mathrm{car}(o)
		}{
			\Gamma \vdash o^{?}\ M : \answertype^{?}
		}
		\and
		\inferrule{
			\Gamma, f : \rho \to \answertype^{?}, x : \rho \vdash M : \answertype^{?} \\
			\Gamma, f : \rho \to \answertype^{?} \vdash N : \tau
		}{
			\Gamma \vdash \letrecchurch{f}{x}{\rho}{\answertype^{?}}{M}{N} : \tau
		}
	\end{mathpar}
	\caption{Selected typing rules for $\lambda^{\pm}_{\mathrm{HFL}}(\Sigma)$-terms. In the rules, $? \in \{ {+}, {-} \}$.}
	\label{fig:typing-rules-de-morgan}
\end{figure}

\subsection{Typing Rules}\label{subsec:de-morgan-typing-rules}
We define typing rules for $\lambda^{\pm}_{\mathrm{HFL}}(\Sigma)$-terms.
The following rules are changed from typing rules for $\lambda_{\mathrm{HFL}}(\Sigma)$-terms (Section~\ref{subsec:source-typing-rules}).
\begin{mathpar}
	\inferrule{
		\Gamma \vdash M : 0
	}{
		\Gamma \vdash \delta^{?}(M) : \answertype^{?}
	}
	\and
	\inferrule{
		\Gamma \vdash M : \rho_1 + \rho_2 \\
		\Gamma, x_1 : \rho_1 \vdash M_1 : \answertype^{?} \\
		\Gamma, x_2 : \rho_2 \vdash M_2 : \answertype^{?}
	}{
		\Gamma \vdash \delta^{?}(M, x_1 : \rho_1. M_1, x_2 : \rho_2. M_2) : \answertype^{?}
	}
	\and
	\inferrule{
		\Gamma, x : \rho \vdash M : \answertype^{?}
	}{
		\Gamma \vdash (\lambda x : \rho. M)^{?} : \rho \to \answertype
	}
	\and
	\inferrule{
		\Gamma \vdash M : \rho \to \answertype^{?} \\
		\Gamma \vdash N : \rho
	}{
		\Gamma \vdash (M\ N)^{?} : \answertype^{?}
	}
	\and
	\inferrule{
		\Gamma \vdash M : (\mathrm{ar}(o) \to \answertype^{?}) \times \mathrm{car}(o)
	}{
		\Gamma \vdash o^{?}\ M : \answertype^{?}
	}
	\and
	\inferrule{
		\Gamma, f : \rho \to \answertype^{?}, x : \rho \vdash M : \answertype^{?} \\
		\Gamma, f : \rho \to \answertype^{?} \vdash N : \tau
	}{
		\Gamma \vdash \letrecchurch{f}{x}{\rho}{\answertype^{?}}{M}{N} : \tau
	}
\end{mathpar}
For logical connectives, typing rules are defined as follows.
\begin{mathpar}
	\inferrule{ }{
		\Gamma \vdash \top^{?} : \answertype^{?}
	}
	\and
	\inferrule{ }{
		\Gamma \vdash \bot^{?} : \answertype^{?}
	}
	\and
	\inferrule{
		\Gamma \vdash M : \answertype^{?}
	}{
		\Gamma \vdash \lnot^{?} M : \answertype^{\overline{?}}
	}
	\and
	\inferrule{
		\Gamma \vdash M : \answertype^{?} \\
		\Gamma \vdash N : \answertype^{?}
	}{
		\Gamma \vdash M \land^{?} N : \answertype^{?}
	}
	\and
	\inferrule{
		\Gamma \vdash M : \answertype^{?} \\
		\Gamma \vdash N : \answertype^{?}
	}{
		\Gamma \vdash M \lor^{?} N : \answertype^{?}
	}
\end{mathpar}
For each typing rule, all annotations ${?}$ must be instantiated by the same sign ${+}/{-}$.
In the typing rule for $\lnot^{?} M$, we define $\overline{({+})} = ({-})$ and $\overline{({-})} = ({+})$.

\subsection{Semantics}
Proposition types are interpreted by a de Morgan algebra.
\begin{definition}[internal de Morgan algebra]\label{def:de-morgan-algebra}
	An \emph{de Morgan algebra internal to $\category{C}$} is a tuple $(\mathbf{\Omega}^{+}, \mathbf{\Omega}^{-}, \lnot)$ where $\mathbf{\Omega}^{+} = (\answerobj^{+}, \top^{+}, {\land}^{+}, \bot^{+}, {\lor}^{+})$ and $\mathbf{\Omega}^{-} = (\answerobj^{-}, \bot^{-}, {\lor}^{-}, \top^{-}, {\land}^{-})$ are bounded distributive lattices internal to $\category{C}$ (note the difference of orders of operations between $\mathbf{\Omega}^{+}$ and $\mathbf{\Omega}^{-}$); and ${\lnot} : \answerobj^{+} \to \answerobj^{-}$ is an isomorphism of internal bounded distributive lattices, that is, the following equations are satisfied.
	\begin{gather}
		\lnot \comp \top^{+} = \bot^{-} \qquad
		\lnot \comp \bot^{+} = \top^{-} \\
		\lnot \comp {\land}^{+} = {\lor}^{-} \comp (\lnot \times \lnot) \qquad
		\lnot \comp {\lor}^{+} = {\land}^{-} \comp (\lnot \times \lnot)
	\end{gather}
\end{definition}

\begin{remark}
	Definition~\ref{def:de-morgan-algebra} is a multi-sorted version of internal de Morgan algebras, that is, we consider a set of operations on multiple objects $\answerobj^{+}, \answerobj^{-}$, like ${\lnot} : \answerobj^{+} \to \answerobj^{-}$.
	When considering ordered settings like $\omegaCPO$, the multi-sorted version of internal de Morgan algebras is better than the single-sorted version for internalising a negation ${\lnot} : \answerobj^{+} \to \answerobj^{-}$ because $\lnot$ is often an anti-monotonic function and cannot be defined as an endomorphism ${\lnot} : \answerobj \to \answerobj$.
\end{remark}

\begin{definition}[de Morgan EM algebra]\label{def:de-morgan-em-algebra}
	Let $(\mathbf{\Omega}^{+}, \mathbf{\Omega}^{-}, \lnot)$ be an internal de Morgan algebra.
	A \emph{de Morgan EM $T$-algebra} on $(\mathbf{\Omega}^{+}, \mathbf{\Omega}^{-}, \lnot)$ is a tuple $((\mathbf{\Omega}^{+}, \mathbf{\Omega}^{-}, \lnot), \emalgsymbol^{+}, \emalgsymbol^{-})$ such that $\emalgsymbol^{+} : T \answerobj^{+} \to \answerobj^{+}$ and $\emalgsymbol^{-} : T \answerobj^{-} \to \answerobj^{-}$ are EM algebras, and $\lnot$ is a morphism of EM algebras from $\emalgsymbol^{+}$ to $\emalgsymbol^{-}$.
\end{definition}
In Definition~\ref{def:de-morgan-em-algebra}, one of $\emalgsymbol^{+}$ and $\emalgsymbol^{-}$ is redundant because given an internal de Morgan algebra $(\mathbf{\Omega}^{+}, \mathbf{\Omega}^{-}, \lnot)$, an EM algebra $\emalgsymbol^{+}$ uniquely defines the other by $\emalgsymbol^{-} \coloneqq \lnot \comp \emalgsymbol^{+} \comp T \lnot^{-1} : T \answerobj^{-} \to \answerobj^{-}$ (and vice versa).
Therefore, we sometimes say ``$\emalgsymbol^{+}$ is a de Morgan EM algebra'' when the internal de Morgan algebra $(\mathbf{\Omega}^{+}, \mathbf{\Omega}^{-}, \lnot)$ is clear from the context.
We say $\emalgsymbol^{+}$ (or $\emalgsymbol^{-}$) is the \emph{dual} of $\emalgsymbol^{-}$ (or $\emalgsymbol^{+}$).

\begin{definition}\label{def:semantics-extended-target-language}
	Let $\mathcal{A} = (\category{C}, T, A, a)$ be an $\omegaCPO$-enriched $\lambda_c(\Sigma)$-structure $\emalgsymbol = ((\mathbf{\Omega}^{+}, \mathbf{\Omega}^{-}, \lnot), \emalgsymbol^{+}, \emalgsymbol^{-})$ be a de Morgan EM $T$-algebra.
	We extend the interpretation of $\lambda_{\mathrm{HFL}}(\Sigma)$-types/terms (Definition~\ref{def:interpretation-response-calculus}) to $\lambda^{\pm}_{\mathrm{HFL}}(\Sigma)$-types/terms as follows.
	We define the interpretation $\mathcal{A}^{\emalgsymbol} \interpret{\rho}$ of $\lambda^{\pm}_{\mathrm{HFL}}(\Sigma)$-types by $\mathcal{A}^{\emalgsymbol} \interpret{\answertype^{+}} \coloneqq \answerobj^{+}$ and $\mathcal{A}^{\emalgsymbol} \interpret{\answertype^{-}} \coloneqq \answerobj^{-}$ for proposition types and extend this to all types $\rho$ in the same way as Definition~\ref{def:interpretation-response-calculus}.
	The interpretation of $\lambda^{\pm}_{\mathrm{HFL}}(\Sigma)$-terms is defined using the structure of the de Morgan EM $T$-algebra.
	For example, $\lnot^{?} M$, $M \land^{?} N$, and $o^{?}\ M$ are interpreted as follows.
	See Section~\ref{subsec:de-morgan-semantics} for the full definition.
	\begin{gather}
		\mathcal{A}^{\emalgsymbol} \interpret{\lnot^{+} M} \coloneqq \lnot \comp \mathcal{A}^{\emalgsymbol} \interpret{M} \qquad
		\mathcal{A}^{\emalgsymbol} \interpret{\lnot^{-} M} \coloneqq \lnot^{-1} \comp \mathcal{A}^{\emalgsymbol} \interpret{M} \\
		\mathcal{A}^{\emalgsymbol} \interpret{M \land^{?} N} \coloneqq {\land}^{?} \comp \tupling{\mathcal{A}^{\emalgsymbol} \interpret{M}}{\mathcal{A}^{\emalgsymbol} \interpret{N}} \\
		\mathcal{A}^{\emalgsymbol} \interpret{o^{?}\ M} \coloneqq \emalgsymbol^{?} \comp T \mathbf{ev} \comp \strength^T \comp (\identity{} \times \mathbf{Gef}(a(o))) \comp \mathcal{A}^{\emalgsymbol} \interpret{M}
	\end{gather}
\end{definition}
Note that $\mathcal{A}^{\emalgsymbol} \interpret{-}$ in Definition~\ref{def:semantics-extended-target-language} extends Definition~\ref{def:interpretation-response-calculus} in the sense that we get the same interpretation $\mathcal{A}^{\emalgsymbol} \interpret{M} = \mathcal{A}^{\emalgsymbol^{+}} \interpret{M}$ if we regard a $\lambda_{\mathrm{HFL}}(\Sigma)$-term $M$ as a $\lambda^{\pm}_{\mathrm{HFL}}(\Sigma)$-term by letting $\answertype = \answertype^{+}$.

\begin{example}[total/partial correctness]
	Let $\answerobj^{+} = (\{ \mathbf{false}, \mathbf{true} \}, {\le})$ and $\answerobj^{-} = (\{ \mathbf{false}, \mathbf{true} \}, {\ge})$ where $\le$ is an order such that $\mathbf{false} \le \mathbf{true}$.
	Then, the pair of $\answerobj^{+}$ and $\answerobj^{-}$ has a de Morgan structure internal to $\omegaCPO$ where $\lnot : \answerobj^{+} \to \answerobj^{-}$ defined by $\lnot \mathbf{true} = \mathbf{false}$ and $\lnot \mathbf{false} = \mathbf{true}$.
	We have two EM algebras $\emalgsymbol_{\mathrm{tot}} : (\answerobj^{+})_{\bot} \to \answerobj^{+}$ (Example~\ref{ex:total-correctness-wp}) and $\emalgsymbol_{\mathrm{par}} : (\answerobj^{-})_{\bot} \to \answerobj^{-}$ (Example~\ref{ex:partial-correctness-wp}), and $\lnot : \answerobj^{+} \to \answerobj^{-}$ is a morphism between these EM algebras.
	Thus, we have a de Morgan EM algebra.

	In this situation, $\lambda^{\pm}_{\mathrm{HFL}}$ has both least and greatest fixed points: fixed points for $\answertype^{+}$ and $\answertype^{-}$ are interpreted as least and greatest fixed points, respectively.
\end{example}

\begin{example}[trace property/may reachability]\label{ex:trace-may-dual}
	We define a de Morgan algebra internal to $\omegaCPO$ by $\answerobj^{+} = (2^U, \supseteq)$ and $\answerobj^{-} = (2^U, \subseteq)$ with $\lnot X = U \setminus X$.
	We define an EM algebra $\emalgsymbol_{\mathrm{may}} : T^H \answerobj^{-} \to \answerobj^{-}$ as the dual of $\emalgsymbol_{\mathrm{tr}} : T^H \answerobj^{+} \to \answerobj^{+}$ defined in Example~\ref{ex:trace-property-wp}.
	This gives a de Morgan EM algebra $\emalgsymbol = ((\mathbf{\Omega}^{+}, \mathbf{\Omega}^{-}, \lnot), \emalgsymbol_{\mathrm{tr}}, \emalgsymbol_{\mathrm{may}})$.
	The EM algebra $\emalgsymbol_{\mathrm{may}} : T^H \answerobj^{-} \to \answerobj^{-}$ will be used for may reachability later in Example~\ref{ex:may-reachability-cps}.

	In this case, $\lambda^{\pm}_{\mathrm{HFL}}$ is a similar language to the HFL of~\cite{kobayashi2018,viswanathan2004}.
	When we use $\emalgsymbol$ to interpret $\lambda^{\pm}_{\mathrm{HFL}}(\Sigma)$-terms, $\operation{event}_a^{+}(M)$ in $\lambda^{\pm}_{\mathrm{HFL}}$-terms corresponds to $\langle a \rangle M$ in their HFL, and $\operation{event}_a^{-}(M)$ corresponds to $[ a ] M$.
	Fixed points for $\answertype^{+}$ and $\answertype^{-}$ are interpreted as greatest and least fixed points, respectively.

	On one hand, our $\lambda^{\pm}_{\mathrm{HFL}}$ generalises their HFL by considering general modal operators.
	On the other hand, $\lambda^{\pm}_{\mathrm{HFL}}$ interpreted by $\emalgsymbol$ is still a proper subset of their HFL.
	For example, the typing rules of $\lambda^{\pm}_{\mathrm{HFL}}$ does not allow terms like $\operation{event}_a^{+}(\operation{event}_a^{-}(M))$ because $\operation{event}_a^{+}$ cannot be applied to $\operation{event}_a^{-}(M) : \answertype^{-}$.
	We also restrict models for interpreting $\lambda^{\pm}_{\mathrm{HFL}}$-terms by allowing only deterministic automaton while they allow nondeterministic automaton.
	However, our aim here is to provide a target language that subsumes the image of the CPS transformation and not to make a target language as rich as possible.
\end{example}

\subsection{Semantics (Full Definition)}\label{subsec:de-morgan-semantics}
Let $\emalgsymbol$ be a de Morgan EM algebra.
Most of the interpretation of $\lambda^{\pm}_{\mathrm{HFL}}(\Sigma)$-terms are defined in the same way as that of $\lambda_{\mathrm{HFL}(\Sigma)}$-terms (Section~\ref{subsec:target-semantics}) except for the following terms that use EM algebra structures.
\begin{gather}
	\mathcal{A}^{\emalgsymbol} \llbracket o^{?}\ M \rrbracket = \emalgsymbol^{?} \comp T \mathbf{ev} \comp \strength^T \comp (\identity{} \times \mathbf{Gef}(a(o))) \comp \mathcal{A}^{\emalgsymbol} \interpret{M} \\
	\mathcal{A}^{\emalgsymbol} \interpret{\letrecchurch{f}{x}{\rho}{\answertype^{?}}{M}{N}} = \mathcal{A}^{\emalgsymbol} \interpret{N} \comp \tupling{\identity{}}{(\Lambda (\mathcal{A}^{\emalgsymbol} \interpret{M}))^{\dagger}}
\end{gather}

Logical connectives are interpreted by internal de Morgan algebra structures.
\begin{gather}
	\mathcal{A}^{\emalgsymbol} \interpret{\lnot^{+} M} \coloneqq \lnot \comp \mathcal{A}^{\emalgsymbol} \interpret{M} \qquad
	\mathcal{A}^{\emalgsymbol} \interpret{\lnot^{-} M} \coloneqq \lnot^{-1} \comp \mathcal{A}^{\emalgsymbol} \interpret{M} \\
	\mathcal{A}^{\emalgsymbol} \interpret{\top^{?}} \coloneqq \top^{?} \comp {!} \qquad
	\mathcal{A}^{\emalgsymbol} \interpret{\bot^{?}} \coloneqq \bot^{?} \comp {!} \\
	\mathcal{A}^{\emalgsymbol} \interpret{M \land^{?} N} \coloneqq {\land}^{?} \comp \tupling{\mathcal{A}^{\emalgsymbol} \interpret{M}}{\mathcal{A}^{\emalgsymbol} \interpret{N}} \qquad
	\mathcal{A}^{\emalgsymbol} \interpret{M \lor^{?} N} \coloneqq {\lor}^{?} \comp \tupling{\mathcal{A}^{\emalgsymbol} \interpret{M}}{\mathcal{A}^{\emalgsymbol} \interpret{N}}
\end{gather}

\subsection{Duality}
We explain that $\lambda^{\pm}_{\mathrm{HFL}}$ has a duality structure.
We define a syntactic translation that gives the dual of $\lambda^{\pm}_{\mathrm{HFL}}(\Sigma)$-types/terms and then list several properties about the duality.

\begin{figure}[tbp]
	\centering
	\begin{tikzpicture}
		\node (topp) at (0, 0.5) {$\top^{+}$};
		\node (topm) at (0, -0.5) {$\top^{-}$};
		\node (botp) at (1, 0.5) {$\bot^{+}$};
		\node (botm) at (1, -0.5) {$\bot^{-}$};
		\draw[<->] (topp) -- (botm);
		\draw[<->] (botp) -- (topm);
		\node (landp) at (2, 0.5) {$\land^{+}$};
		\node (landm) at (2, -0.5) {$\land^{-}$};
		\node (lorp) at (3, 0.5) {$\lor^{+}$};
		\node (lorm) at (3, -0.5) {$\lor^{-}$};
		\draw[<->] (landp) -- (lorm);
		\draw[<->] (lorp) -- (landm);
		\node (algopp) at (4, 0.5) {$o^{+}$};
		\node (algopm) at (4, -0.5) {$o^{-}$};
		\draw[<->] (algopp) -- (algopm);
		\node (fixp) at (7, 0.5) {$\mathbf{let}\ \mathbf{rec}\ f\ x\ : \answertype^{+} = \dots$};
		\node (fixm) at (7, -0.5) {$\mathbf{let}\ \mathbf{rec}\ f\ x\ : \answertype^{-} = \dots$};
		\draw[<->] (fixp) -- (fixm);
	\end{tikzpicture}
	\caption{Duality of $\lambda^{\pm}_{\mathrm{HFL}}(\Sigma)$-terms.}
	\label{fig:duality}
\end{figure}

\begin{definition}\label{def:de-morgan-involution}
	For any $\lambda^{\pm}_{\mathrm{HFL}}(\Sigma)$-type $\rho$, we define a $\lambda^{\pm}_{\mathrm{HFL}}(\Sigma)$-type $\overline{\rho}$ as follows.
	For proposition types, we define $\overline{\answertype^{?}} \coloneqq \answertype^{\overline{?}}$ where $\overline{({+})} = {-}$ and $\overline{({-})} = {+}$,
	and extend this to other types homomorphically.
	For any $\lambda^{\pm}_{\mathrm{HFL}}(\Sigma)$-term $M$, we define a $\lambda^{\pm}_{\mathrm{HFL}}(\Sigma)$-term $\overline{M}$ by
	\begin{gather}
		\overline{\bot^{?}} \coloneqq \top^{\overline{?}} \qquad
		\overline{\top^{?}} \coloneqq \bot^{\overline{?}} \qquad
		\overline{\lnot^{?} M} = \lnot^{\overline{?}} \overline{M} \\
		\overline{M \lor^{?} N} \coloneqq \overline{M} \land^{\overline{?}} \overline{N} \qquad
		\overline{M \land^{?} N} \coloneqq \overline{M} \lor^{\overline{?}} \overline{N}
	\end{gather}
	and for other terms, we just substitute $\overline{?}$ for $?$, e.g., $\overline{o^{?}\ M} \coloneqq o^{\overline{?}}\ \overline{M}$.
	The situation is depicted in Fig.~\ref{fig:duality}.
\end{definition}

By definition, $\overline{({-})}$ is involutive, and if $\rho$ is a ground type, then $\overline{\rho} = \rho$.

\begin{lemma}
	For any well-typed term $\Gamma \vdash M : \rho$, we have $\overline{\Gamma} \vdash \overline{M} : \overline{\rho}$ where $\overline{\Gamma}$ is defined by $\overline{x_1 : \rho_1, \dots, x_n : \rho_n} = x_1 : \overline{\rho_1}, \dots, x_n : \overline{\rho_n}$.
	\qed
\end{lemma}

\begin{theorem}[de Morgan's law]\label{thm:de-morgan-law}
	Let $\lnot_{\rho} : \mathcal{A}^{\emalgsymbol} \interpret{\rho} \to \mathcal{A}^{\emalgsymbol} \interpret{\overline{\rho}}$ be an extension of the negation $\lnot : \answerobj^{+} \to \answerobj^{-}$ to any type $\rho$, that is, for proposition types, we define $\lnot_{\answertype^{+}} \coloneqq {\lnot}$ and $\lnot_{\answertype^{-}} \coloneqq {\lnot}^{-1}$, and for other types, $\lnot_{\rho}$ is defined homomorphically.
	Then, we have the following equation.
	\begin{equation}
		\lnot_{\rho} \comp \mathcal{A}^{\emalgsymbol} \interpret{\Gamma \vdash M : \rho} = \mathcal{A}^{\emalgsymbol} \interpret{\overline{\Gamma} \vdash \overline{M} : \overline{\rho}} \comp \lnot_{\Gamma}
		\tag*{\qed}
	\end{equation}
\end{theorem}

\begin{corollary}\label{cor:negation-de-morgan-dual}
	For $\Gamma \vdash M : \answertype^{?}$ with $\Gamma$ consisting of ground types, $\mathcal{A}^{\emalgsymbol} \interpret{\lnot M} = \mathcal{A}^{\emalgsymbol} \interpret{\overline{M}}$.
	\qed
\end{corollary}

\begin{proposition}\label{prop:algebraic-operation-dual}
	For each $n$-ary $o \in O$, modal operators $o^{+}$ and $o^{-}$ have the following relation.
	\begin{equation}
		\mathcal{A}^{\emalgsymbol} \interpret{o^{\overline{?}}\ (M_1, \dots, M_n)} = \mathcal{A}^{\emalgsymbol} \interpret{\lnot (o^{?}\ (\lnot M_1, \dots, \lnot M_n))}
		\tag*{\qed}
	\end{equation}
\end{proposition}

\begin{proposition}
	Fixed points for $\answertype^{+}$ and $\answertype^{-}$ have the following relation.
	\begin{align}
		&\mathcal{A}^{\emalgsymbol} \interpret{\letrecchurch{f}{x}{\rho}{\answertype^{\overline{?}}}{M}{N}} \\
		&= \mathcal{A}^{\emalgsymbol}\interpret{\letrecchurch{f}{x}{\rho}{\answertype^{?}}{\lnot M[\lambda x. \lnot (f\ x)/f]}{N[\lambda x. \lnot (f\ x)/f]}} \tag*{\qed}
	\end{align}
\end{proposition}

\begin{proposition}
	For any well-typed $\lambda^{\pm}_{\mathrm{HFL}}(\Sigma)$-term $\Gamma \vdash M : \rho$, we have $\mathcal{A}^{\overline{\emalgsymbol}} \interpret{M} = \mathcal{A}^{\emalgsymbol} \interpret{\overline{M}}$ where $\overline{\emalgsymbol} \coloneqq ((\mathbf{\Omega}^{-}, \mathbf{\Omega}^{+}, \lnot^{-1}), \emalgsymbol^{-}, \emalgsymbol^{+})$ is the dual of a de Morgan EM algebra $\emalgsymbol = ((\mathbf{\Omega}^{+}, \mathbf{\Omega}^{-}, \lnot), \emalgsymbol^{+}, \emalgsymbol^{-})$.
	\qed
\end{proposition}

\begin{proposition}\label{prop:wp-dual}
	For any $f : X \to T Y$ and $Q : Y \to \answerobj^{-}$, we have $\mathrm{wp}^{\emalgsymbol^{-}}[f](Q) = \lnot \comp \mathrm{wp}^{\emalgsymbol^{+}}[f](\lnot^{-1} \comp Q)$.
	\qed
\end{proposition}

By Theorem~\ref{thm:cps-is-wpt-with-recursion}, we have a CPS transformation from $\lambda_c(\Sigma)$-terms to $\lambda^{\pm}_{\mathrm{HFL}}(\Sigma)$-terms that corresponds to the weakest precondition for $\emalgsymbol^{+}$ if we use $\answertype^{+}$ as an answer type.
By duality, we also have a CPS transformation for the weakest precondition for $\emalgsymbol^{-}$ if we use $\answertype^{-}$.
\begin{corollary}\label{cor:dual-cps}
	Let $\Gamma \vdash M : \rho$ be a well-typed $\lambda_c(\Sigma)$-term, $x : \rho \vdash Q_{+} : \answertype^{+}$ and $x : \rho \vdash Q_{-} : \answertype^{-}$ be well-typed $\lambda^{\pm}_{\mathrm{HFL}}(\Sigma)$-terms.
	The weakest preconditions for $\emalgsymbol^{+}$ and $\emalgsymbol^{-}$ are given as follows.
	\begin{align}
		\mathrm{wp}^{\emalgsymbol^{+}}[\mathcal{A} \interpret{M}](\mathcal{A}^{\emalgsymbol} \interpret{Q_{+}}) &= \mathcal{A}^{\emalgsymbol} \interpret{\CPS{M}\ (\lambda x. Q_{+})} \\
		\mathrm{wp}^{\emalgsymbol^{-}}[\mathcal{A} \interpret{M}](\mathcal{A}^{\emalgsymbol} \interpret{Q_{-}}) &= \mathcal{A}^{\emalgsymbol} \interpret{\overline{\CPS{M}}\ (\lambda x. Q_{-})}
		\tag*{\qed}
	\end{align}
\end{corollary}

\begin{example}[may-reachability, continued from Example~\ref{ex:trace-may-dual}]\label{ex:may-reachability-cps}
	Recall that may-reachability is the negation of the trace property for the trivial automaton with $L(\mathcal{A}_0) = \{ \varepsilon \}$.
	By Theorem~\ref{thm:cps-is-wpt-with-recursion} and by definition of $\lnot$, the may-reachability for a $\lambda_c(\Sigma)$-term $\vdash M : 1$ is true if and only if $q_0 \in \mathcal{A}^{\emalgsymbol} \interpret{\lnot (\CPS{M}\ (\lambda x. \mathbf{true}))}$.
	By the de Morgan law (Corollary~\ref{cor:negation-de-morgan-dual}), this is equivalent to $q_0 \in \mathcal{A}^{\emalgsymbol} \interpret{\overline{\CPS{M}}\ (\lambda x. \mathbf{false})}$.
	This gives essentially the same formula as~\cite[Thm~1]{kobayashi2018} because this translation $M \mapsto \overline{\CPS{M}}$ maps the event operator $\operation{event}_a$ to the modal operator $\operation{event}^{-}_a = [a]$ (and this is equivalent to $\mathbf{true}$ since there is no transition in $\mathcal{A}_0$), the nondeterministic branching operator $\join$ to the disjunction $\lor^{-}$, and a recursive function to the least fixed point.

	We can alternatively understand may-reachability as a weakest precondition for $\emalgsymbol_{\mathrm{may}}$.
	By Proposition~\ref{prop:wp-dual}, the may-reachability for $M$ is true if and only if $q_0 \in \mathrm{wp}^{\emalgsymbol_{\mathrm{may}}}[\mathcal{A} \interpret{M}](\emptyset)$.
	By Corollary~\ref{cor:dual-cps}, this is equivalent to $q_0 \in \mathcal{A}^{\emalgsymbol} \interpret{\overline{\CPS{M}}\ (\lambda x. \mathbf{false})}$, which is the same condition as above.
\end{example}

\section{Quantifiers}\label{sec:quantifiers}

\subsection{Syntax}
Let $\Sigma$ be a $\lambda_c$-signature.
We consider extending $\lambda_{\mathrm{HFL}}(\Sigma)$-terms with quantifiers.
\[ M, N \coloneqq \dots \mid \forall x : \rho. M \mid \exists x : \rho. M \]

\begin{mathpar}
	\inferrule{
		\Gamma, x : \rho \vdash M : \answertype
	}{
		\Gamma \vdash \forall x : \rho. M : \answertype
	}
	\and
	\inferrule{
		\Gamma, x : \rho \vdash M : \answertype
	}{
		\Gamma \vdash \exists x : \rho. M : \answertype
	}
\end{mathpar}

\subsection{Semantics}
Let $\mathcal{A} = (\category{C}, T, A, a)$ be a $\lambda_c(\Sigma)$-structure and $\emalgsymbol : T \answerobj \to \answerobj$ be an EM algebra where $\answerobj$ is an ordered object with the partial order structure given by $\{ (\category{C}_0(X, \answerobj), {\Rightarrow}_X) \}_{X \in \category{C}}$.
We have the codomain fibration $\mathrm{cod} : \category{C} / \answerobj \to \category{C}$ where the total category is the lax slice category~\cite{aguirre2020}.
We assume that $\mathrm{cod} : \category{C} / \answerobj \to \category{C}$ has simple products and simple coproducts.
That is, we have the left and the right adjoint of $\category{C}_0(\pi_1, \answerobj) : \category{C}_0(X, \answerobj) \to \category{C}_0(X \times Y, \answerobj)$ that satisfies the Beck--Chevalley condition.

\paragraph{Adjunctions.}
\[ \exists \dashv \category{C}_0(\pi_1, \answerobj) \dashv \forall \]
\begin{mathpar}
	\mprset{fraction={===}}
	\inferrule{
		f \comp \pi_1 \Rightarrow g
	}{
		f \Rightarrow \forall g
	}
	\and
	\inferrule{
		f \Rightarrow g \comp \pi_1
	}{
		\exists f \Rightarrow g
	}
\end{mathpar}

\paragraph{The Beck--Chevalley condition.}
For any $u : X \to X'$ and $f : X' \times Y \to \answerobj$,
\begin{center}
	\begin{tikzcd}
		\category{C}_0(X' \times Y, \answerobj) \ar[r, "\forall"] \ar[d, "{\category{C}_0(u \times \identity{}, \answerobj)}"] & \category{C}_0(X', \answerobj) \ar[d, "{\category{C}_0(u, \answerobj)}"] \\
		\category{C}_0(X \times Y, \answerobj) \ar[r, "\forall"] & \category{C}_0(X, \answerobj)
	\end{tikzcd}
\end{center}
\[ (\forall f) \comp u = \forall (f \comp (u \times \identity{})) \qquad (\exists f) \comp u = \exists (f \comp (u \times \identity{})) \]

We define the interpretation of quantifiers as follows
\[ \mathcal{A}^{\emalgsymbol}\interpret{\forall x : \rho. M} \coloneqq \forall \mathcal{A}^{\emalgsymbol}\interpret{M} \qquad \mathcal{A}^{\emalgsymbol}\interpret{\exists x : \rho. M} \coloneqq \exists \mathcal{A}^{\emalgsymbol}\interpret{M} \]

\subsection{Examples of Simple Products and Simple Coproducts}

Summary: Simple products (= universal quantifiers) are well-behaved in $\omegaCPO$ but simple coproducts (= existential quantifiers) are not.
Later, we consider $\nu\mathrm{HFL}(\mathbb{Z})$ in which only universal quantifiers are used.

\begin{lemma}\label{lem:forall-set}
	Suppose $\category{C} = \Set$.
	Let $\answerobj = \{ \mathbf{false}, \mathbf{true} \}$ be the ordered object defined by the standard pointwise order: for any $f, g \in \Set(X, \answerobj)$,
	\[ f \Rightarrow g \qquad\iff\qquad \forall x \in X, f(x) = \mathbf{true} \implies g(x) = \mathbf{true}. \]
	Then, simple products for $\cod : \Set / \answerobj \to \Set$ is given as follows.
	\begin{equation}
		(\forall f)(x) = \begin{cases}
			\mathbf{true} & \forall y \in Y, f(x, y) = \mathbf{true} \\
			\mathbf{false} & \text{otherwise}
		\end{cases}
		\label{eq:forall-set}
	\end{equation}
\end{lemma}
\begin{proof}
	\begin{itemize}
		\item Adjunction:
		\begin{itemize}
			\item If $f \comp \pi_1 \Rightarrow g$ and $f(x) = \mathbf{true}$ where $f : X \to \answerobj$ and $g : X \times Y \to \answerobj$, then for any $y \in Y$, we have $g(x, y) = \mathbf{true}$.
			Thus, $(\forall g)(x) = \mathbf{true}$.
			\item If $f \Rightarrow \forall g$ and $(f \comp \pi_1)(x, y) = \mathbf{true}$, then we have $(\forall g)(x) = \mathbf{true}$.
			Thus, $g(x, y) = \mathbf{true}$.
		\end{itemize}
		\item BC condition: We prove $(\forall f) \comp u = \forall (f \comp (u \times \identity{}))$.
		Let $x \in X$.
		\begin{align}
			(\forall f) (u(x)) = \mathbf{true} &\iff \forall y \in Y, f(u(x), y) = \mathbf{true} \\
			&\iff \forall (f \comp (u \times \identity{}))(x) = \mathbf{true}
		\end{align}
	\end{itemize}
\end{proof}

Simple products in $\mathbf{Poset}$ and $\omegaCPO$ are defined in the same way as $\Set$ (Lemma~\ref{lem:forall-set}).
To show this, it suffices to prove that monotonicity and Scott-continuity are preserved by ${\forall} : \category{C}_0(X \times Y, \answerobj) \to \category{C}_0(X, \answerobj)$.
\begin{lemma}
	Suppose $\category{C} = \mathbf{Poset}$.
	Let $\answerobj = (\{ \mathbf{false}, \mathbf{true} \}, {\ge})$ be the ordered object defined in the same way as Lemma~\ref{lem:forall-set}.
	(Note that $\answerobj = (\{ \mathbf{false}, \mathbf{true} \}, {\ge})$ has the opposite order as an $\omega$cpo but has the standard order as an ordered object.)
	If $f$ is monotone with respect to ${\ge}$, then so is \eqref{eq:forall-set}.
\end{lemma}
\begin{proof}
	Suppose $x \le x'$.
	We prove $(\forall f)(x) \ge (\forall f)(x')$.
	It suffices to prove that $(\forall f)(x) = \mathbf{false}$ implies $(\forall f)(x') = \mathbf{false}$.
	If $(\forall f)(x) = \mathbf{false}$, then there exists $y \in Y$ such that $f(x, y) = \mathbf{false}$.
	By the monotonicity of $f$, we have $f(x', y) = \mathbf{false}$.
	Therefore, $(\forall f)(x) = \mathbf{false}$.
\end{proof}

\begin{lemma}
	Suppose $\category{C} = \omegaCPO$.
	Let $\answerobj = (\{ \mathbf{false}, \mathbf{true} \}, {\ge})$ be the ordered object defined in the same way as Lemma~\ref{lem:forall-set}.
	If $f$ is Scott-continuous with respect to ${\ge}$, then so is \eqref{eq:forall-set}.
\end{lemma}
\begin{proof}
	Suppose we have an $\omega$-chain $\{ x_n \}_n$.
	It suffices to prove that if $(\forall f)(x_n) = \mathbf{true}$ for any $n$, then $(\forall f)(\sup_n x_n) = \mathbf{true}$.
	If $(\forall f)(x_n) = \mathbf{true}$, then for any $y \in Y$, we have $f(x_n, y) = \mathbf{true}$.
	By the Scott-continuity of $f$, we have $f(\sup_n x_n, y) = \mathbf{true}$ for any $y$.
	Therefore, we have $(\forall f)(\sup_n x_n) = \mathbf{true}$.
\end{proof}

We can define simple coproducts in $\Set$ and $\mathbf{Poset}$ in a similar way.
However, such simple coproducts do not preserve Scott-continuity.
\begin{lemma}
	Consider the same situation as Lemma~\ref{lem:forall-set}.
	The simple coproduct for $\cod : \Set / \answerobj \to \Set$ is given as follows.
	\begin{equation}
		(\exists f)(x) = \begin{cases}
			\mathbf{true} & \exists y \in Y, f(x, y) = \mathbf{true} \\
			\mathbf{false} & \text{otherwise}
		\end{cases}
		\label{eq:exists-set}
	\end{equation}
\end{lemma}
\begin{proof}
	\begin{itemize}
		\item Adjunction:
		\begin{itemize}
			\item If $f \Rightarrow g \comp \pi_1$ and $(\exists f)(x) = \mathbf{true}$, then there exists $y \in Y$ such that $f(x, y) = \mathbf{true}$.
			Therefore, $g(x) = (g \comp \pi_1)(x, y) = \mathbf{true}$.
			\item If $\exists f \Rightarrow g$ and $f(x, y) = \mathbf{true}$, then $(\exists f)(x) = \mathbf{true}$, which implies $g(x) = \mathbf{true}$.
		\end{itemize}
		\item BC condition: We prove $(\exists f) \comp u = \exists (f \comp (u \times \identity{}))$.
		Let $x \in X$.
		\begin{align}
			((\exists f) \comp u)(x) = \mathbf{true} &\iff \exists y \in Y, f(u(x), y) = \mathbf{true} \\
			&\iff \exists (f \comp (u \times \identity{}))(x) = \mathbf{true}
		\end{align}
	\end{itemize}
\end{proof}

\begin{lemma}
	Suppose $\category{C} = \mathbf{Poset}$.
	Let $\answerobj = (\{ \mathbf{false}, \mathbf{true} \}, {\ge})$ be the ordered object defined in the same way as Lemma~\ref{lem:forall-set}.
	If $f : X \times Y \to \answerobj$ is monotone, then so is $\exists f : X \to \answerobj$.
\end{lemma}
\begin{proof}
	Suppose $x \le x'$.
	We prove $(\exists f)(x) \ge (\exists f)(x')$.
	If $(\exists f)(x) = \mathbf{false}$, then for any $y \in Y$, $f(x, y) = \mathbf{false}$.
	By the monotonicity of $f$, we have $f(x', y) = \mathbf{false}$ for any $y \in Y$.
	Therefore, $(\exists f)(x') = \mathbf{false}$.
\end{proof}

\begin{lemma}
	Suppose $\category{C} = \omegaCPO$.
	Let $\answerobj = (\{ \mathbf{false}, \mathbf{true} \}, {\ge})$ be the ordered object defined in the same way as Lemma~\ref{lem:forall-set}.
	In this case, \eqref{eq:exists-set} does not preserve Scott-continuity.
\end{lemma}
\begin{proof}
	Let $f : (\{0, 1, \dots, \omega\}, {\le}) \times (\mathbb{N}, {=}) \to \answerobj$ be a function defined as follows.
	\[ f(x, y) \coloneqq \begin{cases}
		\mathbf{true} & x \le y \\
		\mathbf{false} & x > y
	\end{cases} \]
	The function $f$ is Scott-continuous, but $\exists f$ is not.
	Consider the $\omega$-chain $\{ x_n \}_n$ defined by $x_n = n \in \{ 0, 1, \dots, \omega \}$.
	For any $n$, we have $(\exists f)(x_n) = \mathbf{true}$.
	However, $(\exists f)(\sup_n x_n) = (\exists f)(\omega) = \mathbf{false}$.
\end{proof}

\subsection{Relation to $\nu\mathrm{HFL}(\mathbb{Z})$}
We compare $\nu\mathrm{HFL}(\mathbb{Z})$~\cite{katsura2020,kobayashi2018} and an instance of our target language.
There are two main differences.
The first one is the difference of syntax, which we will handle by defining a syntactic translation from $\nu\mathrm{HFL}(\mathbb{Z})$ to $\lambda_{\mathrm{HFL}}$.
The second one is the difference of semantic models.
The interpretation of $\nu\mathrm{HFL}(\mathbb{Z})$ is defined in $\mathbf{Poset}$ while the interpretation of $\lambda_{\mathrm{HFL}}$ is defined in $\omegaCPO$.
We relate these interpretations by considering logical relations.
As a result, we prove that given a closed $\nu\mathrm{HFL}(\mathbb{Z})$-formula, we can translate it to a $\lambda_{\mathrm{HFL}}(\Sigma)$-term whose validity coincides with the original $\nu\mathrm{HFL}(\mathbb{Z})$-formula (Theorem~\ref{thm:nuHFL-as-instance}).

We consider the following instance of $\lambda_{\mathrm{HFL}}$.
Note that the definitions are almost the same as those for partial correctness.
\begin{itemize}
	\item $\lambda_c$-signature: $\Sigma = (B, K, O)$ where
	\begin{itemize}
		\item $B = \{ \mathbf{int} \}$,
		\item $K$ consists of basic binary operations on integers $\mathrm{op} : \mathbf{int} \times \mathbf{int} \rightarrowtriangle \mathbf{int}$ and integer constants $n : 1 \rightarrowtriangle \mathbf{int}$ for each $n \in \mathbb{Z}$,
		\item $O = \emptyset$.
	\end{itemize}
	\item $\lambda_c(\Sigma)$-structure: $\mathcal{A} = (\omegaCPO, ({-})_{\bot}, A, a)$ where
	\begin{itemize}
		\item $A(\mathbf{int}) = (\mathbb{Z}, {=})$
		\item $a$ gives natural interpretations of operations on integers.
	\end{itemize}
	\item EM algebra: $\answerobj = (\{ \mathbf{false}, \mathbf{true} \}, {\ge})$ (same as partial correctness).
	\item $\lambda_{\mathrm{HFL}}(\Sigma)$-terms are extended by basic predicates on integers, the distributive lattice structure, and universal quantifiers.
\end{itemize}

Translation of types:
Recall that types for $\nu\mathrm{HFL}(\mathbb{Z})$ is defined as follows.
\[ \rho \coloneqq \bullet \mid \eta \to \rho \qquad \eta \coloneqq \rho \mid \mathbf{int} \]
We translate this to $\lambda_{\mathrm{HFL}}(\Sigma)$-types as follows.
\begin{itemize}
	\item For $\rho$, we define a list $\rho^{\natural}$ of $\lambda_{\mathrm{HFL}}(\Sigma)$-types as follows.
	\begin{itemize}
		\item $\bullet^{\natural} = []$
		\item $(\eta \to \rho)^{\natural} = \eta^{\sharp} :: \rho^{\natural}$
	\end{itemize}
	\item We define $\rho^{\sharp} = \prod \rho^{\natural} \to \answertype$ where $\prod [] = 1$ and $\prod (\rho :: l) = \rho \times \prod l$.
	\item For $\eta$, we define $\eta^{\sharp}$ as follows.
	\begin{itemize}
		\item If $\eta = \rho$, $\eta^{\sharp} = \rho^{\sharp}$.
		\item If $\eta = \mathbf{int}$, $\eta^{\sharp} = \mathbf{int}$.
	\end{itemize}
\end{itemize}

Then, we can translate well-typed terms of $\nu\mathrm{HFL}(\mathbb{Z})$ to well-typed $\lambda_{\mathrm{HFL}}(\Sigma)$-terms.
\[ \Gamma \vdash \psi : \eta \qquad\mapsto\qquad \Gamma^{\sharp} \vdash \psi^{\sharp} : \eta^{\sharp} \]

\begin{itemize}
	\item $(\nu X : \rho. \psi)^{\sharp} = \letrec{X}{x}{\psi^{\sharp}\ x}{X}$
	\begin{mathpar}
		\inferrule{
			\Gamma, X : \rho \vdash \psi : \rho
		}{
			\Gamma \vdash \nu X : \rho. \psi : \rho
		}
	\end{mathpar}
	\begin{mathpar}
		\inferrule{
			\Gamma^{\sharp}, X : \prod \rho^{\natural} \to \answertype \vdash \psi^{\sharp} : \prod \rho^{\natural} \to \answertype
		}{
			\Gamma^{\sharp} \vdash \letrec{X}{x}{\psi^{\sharp}\ x}{X} : \rho^{\sharp}
		}
	\end{mathpar}
	\item $(\lambda X : \eta. \psi)^{\sharp} = \lambda (X, Y) : \eta^{\sharp} \times \prod \rho^{\natural}. \psi^{\sharp}\ Y$
	\begin{mathpar}
		\inferrule{
			\Gamma, X : \eta \vdash \psi : \rho
		}{
			\Gamma \vdash \lambda X : \eta. \psi : \eta \to \rho
		}
	\end{mathpar}
	\begin{mathpar}
		\inferrule{
			\Gamma^{\sharp}, X : \eta^{\sharp} \vdash \psi : \prod \rho^{\natural} \to \answertype
		}{
			\Gamma^{\sharp} \vdash \lambda (X, Y) : \eta^{\sharp} \times \prod \rho^{\natural}. \psi^{\sharp}\ Y : (\eta \to \rho)^{\sharp}
		}
	\end{mathpar}
	\item $(\psi_1\ \psi_2)^{\sharp} = \lambda Y : \prod \rho^{\natural}. \psi_1^{\sharp}\ (\psi_2^{\sharp}, Y)$
	\begin{mathpar}
		\inferrule{
			\Gamma \vdash \psi_1 : \eta \to \rho \\
			\Gamma \vdash \psi_2 : \eta
		}{
			\Gamma \vdash \psi_1\ \psi_2 : \rho
		}
	\end{mathpar}
	\begin{mathpar}
		\inferrule{
			\Gamma^{\sharp} \vdash \psi_1^{\sharp} : \eta^{\sharp} \times \prod \rho^{\natural} \to \answertype \\
			\Gamma^{\sharp} \vdash \psi_2^{\sharp} : \eta^{\sharp}
		}{
			\Gamma^{\sharp} \vdash \lambda Y : \prod \rho^{\natural}. \psi_1^{\sharp}\ (\psi_2^{\sharp}, Y) : \rho^{\sharp}
		}
	\end{mathpar}
	\item $(\forall X : \mathbf{int}. \psi)^{\sharp} = \lambda Y : 1. \forall X : \mathbf{int}. \psi^{\sharp}\ Y$
	\begin{mathpar}
		\inferrule{
			\Gamma, X : \mathbf{int} \vdash \psi : \bullet
		}{
			\Gamma \vdash \forall X : \mathbf{int}. \psi : \bullet
		}
	\end{mathpar}
	\begin{mathpar}
		\inferrule{
			\Gamma^{\sharp}, X : \mathbf{int} \vdash \psi^{\sharp} : 1 \to \answertype
		}{
			\Gamma^{\sharp} \vdash \lambda Y : 1. \forall X : \mathbf{int}. \psi^{\sharp}\ Y : \bullet^{\sharp}
		}
	\end{mathpar}
	\item $\mathbf{true}^{\sharp} = \lambda X : 1. \mathbf{true}$
	\begin{mathpar}
		\inferrule{ }{
			\Gamma \vdash \mathbf{true} : \bullet
		}
	\end{mathpar}
	\begin{mathpar}
		\inferrule{ }{
			\Gamma^{\sharp} \vdash \lambda X : 1. \mathbf{true} : \bullet^{\sharp}
		}
	\end{mathpar}
	\item $\mathbf{false}^{\sharp} = \lambda X : 1. \mathbf{false}$
	\item $(\psi_1 \land \psi_2)^{\sharp} = \lambda X : 1. \psi_1^{\sharp}\ X \land \psi_2^{\sharp}\ X$
	\begin{mathpar}
		\inferrule{
			\Gamma \vdash \psi_1 : \bullet \\
			\Gamma \vdash \psi_2 : \bullet
		}{
			\Gamma \vdash \psi_1 \land \psi_2 : \bullet
		}
	\end{mathpar}
	\begin{mathpar}
		\inferrule{
			\Gamma^{\sharp} \vdash \psi_1^{\sharp} : 1 \to \answertype \\
			\Gamma^{\sharp} \vdash \psi_2^{\sharp} : 1 \to \answertype
		}{
			\Gamma^{\sharp} \vdash \lambda X : 1. \psi_1^{\sharp}\ X \land \psi_2^{\sharp}\ X : \bullet^{\sharp}
		}
	\end{mathpar}
	\item $(\psi_1 \lor \psi_2)^{\sharp} = \lambda X : 1. \psi_1^{\sharp}\ X \lor \psi_2^{\sharp}\ X$
	\item $X^{\sharp} = X$
	\item $(p(\psi_1, \dots, \psi_k))^{\sharp} = \lambda X : 1. p(\psi_1^{\sharp}, \dots, \psi_k^{\sharp})$
	\begin{mathpar}
		\inferrule{
			\Gamma \vdash \psi_i : \mathbf{int}
		}{
			\Gamma \vdash p(\psi_1, \dots, \psi_k) : \bullet
		}
	\end{mathpar}
	\begin{mathpar}
		\inferrule{
			\Gamma^{\sharp} \vdash \psi_i^{\sharp} : \mathbf{int}
		}{
			\Gamma^{\sharp} \vdash \lambda X : 1. p(\psi_1^{\sharp}, \dots, \psi_k^{\sharp}) : \bullet^{\sharp}
		}
	\end{mathpar}
	\item $n^{\sharp} = n$
	\item $(\psi_1 \mathrel{\mathrm{op}} \psi_2)^{\sharp} = \psi_1^{\sharp} \mathrel{\mathrm{op}} \psi_2^{\sharp}$
\end{itemize}

\begin{definition}
	We define a functor $({-})^{\op} : \mathbf{Poset} \to \mathbf{Poset}$ by $(X, {\le}_X)^{\op} \coloneqq (X, {\ge}_X)$ and $f^{\op} \coloneqq f$ for any $(X, {\le}_X), (Y, {\le}_Y) \in \mathbf{Poset}$ and $f : (X, {\le}_X) \to (Y, {\le}_Y)$.
\end{definition}

\begin{lemma}
	The functor $({-})^{\op} : \mathbf{Poset} \to \mathbf{Poset}$ preserves the cc structure.
	\[ 1^{\op} = 1 \qquad (X \times Y)^{\op} = X^{\op} \times Y^{\op} \qquad (\exponential{X}{Y})^{\op} = \exponential{X^{\op}}{Y^{\op}} \]
\end{lemma}

\begin{definition}
	We define the interpretation of $\nu\mathrm{HFL}(\mathbb{Z})$ in $\mathbf{Poset}$ as follows.
	\begin{gather}
		\interpret{\bullet} \coloneqq (\{ \mathbf{false}, \mathbf{true} \}, {\ge}) \\
		\interpret{\eta \to \rho} \coloneqq \exponential{\interpret{\eta}}{\interpret{\rho}} \\
		\interpret{\mathbf{int}} \coloneqq (\mathbb{Z}, {=})
	\end{gather}
	\begin{gather}
		\interpret{\psi_1 \mathrel{\mathrm{op}} \psi_2} \coloneqq \interpret{\mathrm{op}} \comp \tupling{\interpret{\psi_1}}{\interpret{\psi_2}} \\
		\interpret{p(\psi_1, \dots, \psi_k)} \coloneqq \interpret{p} \comp \langle \interpret{\psi_1}, \dots, \interpret{\psi_k} \rangle \\
		\interpret{\nu X : \rho. \psi}(\gamma) \coloneqq \mathrm{lfp}\interpret{\psi}(\gamma, {-})
	\end{gather}
\end{definition}
Note that $\interpret{\psi}^{\op}$ is the same as the interpretation used in, e.g., \cite{katsura2020}.
Note also that the existence of the least fixed point in the definition above is guaranteed by the fact that $\interpret{\rho}$ is a complete lattice, which can be easily proved by induction.
We can further prove that $\interpret{\psi}$ is Scott-continuous.
Thus,
\[ \interpret{\nu X : \rho. \psi}(\gamma) = \mathrm{lfp}\interpret{\psi}(\gamma, {-}) = \sup_n (\interpret{\psi}(\gamma, {-}))^n(\bot) \]

\begin{definition}
	interpretation of $\nu\mathrm{HFL}(\mathbb{Z})$ in $\omegaCPO$
	\begin{gather}
		\interpret{\bullet}' \coloneqq (\{ \mathbf{false}, \mathbf{true} \}, {\ge}) \\
		\interpret{\eta \to \rho}' \coloneqq \exponential{\interpret{\eta}'}{\interpret{\rho}'} \\
		\interpret{\mathbf{int}}' \coloneqq (\mathbb{Z}, {=})
	\end{gather}
	\begin{gather}
		\interpret{\nu X : \rho. \psi}' \coloneqq (\interpret{\psi}')^{\dagger} \qquad
		\interpret{\forall X : \mathbf{int}. \psi}' \coloneqq \forall \interpret{\psi}' \\
		\interpret{\lambda X : \rho. \psi} \coloneqq \Lambda (\interpret{\psi}') \qquad
		\interpret{\psi_1\ \psi_2}' \coloneqq \eval \comp \tupling{\interpret{\psi_1}'}{\interpret{\psi_2}'} \\
		\interpret{p(\psi_1, \dots, \psi_k)}' \coloneqq \interpret{p} \comp \langle \interpret{\psi_1}', \dots, \interpret{\psi_k}' \rangle \\
		\interpret{\psi_1 \mathrel{\mathrm{op}} \psi_2}' \coloneqq \interpret{\mathrm{op}} \comp \tupling{\interpret{\psi_1}'}{\interpret{\psi_2}'} \qquad
		\interpret{n}' \coloneqq n \comp {!} \\
		\interpret{\mathbf{true}}' \coloneqq \top \comp {!} \qquad
		\interpret{\mathbf{false}}' \coloneqq \bot \comp {!} \\
		\interpret{\psi_1 \land \psi_2}' \coloneqq {\land} \comp \tupling{\interpret{\psi_1}'}{\interpret{\psi_2}'} \qquad
		\interpret{\psi_1 \lor \psi_2}' \coloneqq {\lor} \comp \tupling{\interpret{\psi_1}'}{\interpret{\psi_2}'} \\
		\interpret{\Gamma, Y : \eta' \vdash X : \eta}' \coloneqq \interpret{\Gamma \vdash X : \eta}' \comp \pi_1 \qquad
		\interpret{\Gamma, X : \eta \vdash X : \eta}' \coloneqq \pi_2
	\end{gather}
\end{definition}

Note that the forgetful functor $U : \omegaCPO \to \mathbf{Poset}$ does not preserves exponentials because for any $X, Y \in \omegaCPO$, the exponential object $\exponential{X}{Y}$ in $\omegaCPO$ is the set of Scott-continuous functions $f : X \to Y$ whereas the exponential object $\exponential{X}{Y}$ in $\mathbf{Poset}$ is the set of monotone functions $f : X \to Y$.

\begin{remark}
	Existential quantifiers are ill-behaved in terms of Scott-continuity (with respect to the reversed order on $\{ \mathbf{false}, \mathbf{true} \}$).
	For example, consider the following formula.
	\[ F : \mathbf{int} \to \bullet \vdash \exists X : \mathbf{int}. F\ X : \bullet \]
	For each $n$, we define an environment $\gamma_n$ as follows.
	\[ \gamma_n(F) \coloneqq \lambda x. |x| \ge n \]
	Then, $\{ \gamma_n \}_n$ is an $\omega$-chain with respect to the reversed order.
	We have $\interpret{\exists X : \mathbf{int}. F\ X}(\gamma_n) = \mathbf{true}$ for any $n$.
	However, since $(\sup_n \gamma_n)(F) = \lambda x. \mathbf{false}$, we have $\interpret{\exists X : \mathbf{int}. F\ X}(\sup_n \gamma_n) = \mathbf{false}$.
	This also gives an evidence of $\interpret{(\mathbf{int} \to \bullet) \to \bullet} \neq \interpret{(\mathbf{int} \to \bullet) \to \bullet}'$.
\end{remark}

\begin{proposition}\label{prop:nuHFL-cpo-poset}
	For any well-typed $\nu\mathrm{HFL}(\mathbb{Z})$-term $\vdash \psi : \bullet$,
	\[ \interpret{\psi}(\emptyset) = \interpret{\psi}'(\emptyset) \]
\end{proposition}
\begin{proof}
	Use sconing:
	\begin{center}
		\begin{tikzcd}[column sep=10em]
			\category{K} \ar[rd, phantom, very near start, "\lrcorner"] \ar[r] \ar[d, "p"] & \mathbf{Sub}(\Set) \ar[d] \\
			\omegaCPO \times \mathbf{Poset} \ar[r, "{(\omegaCPO \times \mathbf{Poset})((1, 1), {-})}"] & \Set
		\end{tikzcd}
	\end{center}
	The total category $\category{K}$ is cartesian closed and $p$ strictly preserves the cc-structure~\cite{hermida1993}.

	We define the interpretation $\interpret{-}''$ in $\category{K}$ by
	\begin{gather}
		\interpret{\bullet}'' \coloneqq ((2, {\ge}), (2, {\ge}), \{ (f, f) \mid \text{$f : 1 \to 2$ is a function} \}) \\
		\interpret{\mathbf{int}}'' \coloneqq (\mathbb{Z}, \mathbb{Z}, \{ (f, f) \mid f : 1 \to \mathbb{Z} \}) \qquad
		\interpret{\eta \to \rho}'' \coloneqq \dotExponential{\interpret{\eta}''}{\interpret{\rho}''}
	\end{gather}
	where $2 = \{ \mathbf{false}, \mathbf{true} \}$ and $\mathbb{Z} = (\mathbb{Z}, {=})$.

	\begin{itemize}
		\item Basic predicates: We have $(\interpret{p}, \interpret{p}) : (\interpret{\mathbf{int}}'')^k \dotTo \interpret{\bullet}''$.
		\item Basic operations and constants for $\mathbf{int}$: We have $(\interpret{\mathrm{op}}, \interpret{\mathrm{op}}') : (\interpret{\mathbf{int}}'')^2 \to \interpret{\mathbf{int}}''$ and $(n, n) : 1 \to \interpret{\mathbf{int}}''$ for each $n \in \mathbb{Z}$.
		\item True/False/And/Or: We have $({\land}, {\land}), ({\lor}, {\lor}) : (\interpret{\bullet}'')^2 \to \interpret{\bullet}''$ and $\mathbf{true}, \mathbf{false} : 1 \to \interpret{\bullet}''$.
		\item Universal quantifiers: We have a mapping $\forall : \category{K}((X, Y, R) \times \interpret{\mathbf{int}}'', \interpret{\bullet}'') \to \category{K}((X, Y, R), \interpret{\bullet}'')$ such that the following diagram commutes.
		\begin{center}
			\begin{tikzcd}
				\category{K}((X, Y, R) \times \interpret{\mathbf{int}}'', \interpret{\bullet}'') \ar[r, "\forall"] \ar[d, "p"] & \category{K}((X, Y, R), \interpret{\bullet}'') \ar[d, "p"] \\
				\omegaCPO(X \times \mathbb{Z}, (2, {\ge})) \times \mathbf{Poset}(Y \times \mathbb{Z}, (2, {\ge})) \ar[r, "\forall \times \forall"] & \omegaCPO(X, (2, {\ge})) \times \mathbf{Poset}(Y, (2, {\ge}))
			\end{tikzcd}
		\end{center}
		This is proved as follows.
		Let $(f, g) \in \category{K}((X, Y, R) \times \interpret{\mathbf{int}}'', \interpret{\bullet}'')$.
		That is, $f \in \omegaCPO(X \times \mathbb{Z}, (2, {\ge}))$ and $g \in \mathbf{Poset}(Y \times \mathbb{Z}, (2, {\ge}))$; and for any $(x, y) \in R$ and $n : 1 \to \mathbb{Z}$, we have $f \comp \tupling{x}{n} = g \comp \tupling{y}{n}$.
		Then, we have $\forall (x, y) \in R, (\forall f) \comp x = (\forall g) \comp y$ because
		\begin{itemize}
			\item if $(\forall f) \comp x = \mathbf{true}$, then for any $n : 1 \to \mathbb{Z}$, we have $g \comp \tupling{y}{n} = f \comp \tupling{x}{n} = \mathbf{true}$, thus $(\forall g) \comp y = \mathbf{true}$,
			\item if $(\forall f) \comp x = \mathbf{false}$, then there exists $n : 1 \to \mathbb{Z}$ such that $g \comp \tupling{y}{n} = f \comp \tupling{x}{n} = \mathbf{false}$, thus $(\forall g) \comp y = \mathbf{false}$.
		\end{itemize}
		Note that this proof depends on the definition of $\interpret{\mathbf{int}}''$.
		Note that $\forall : \category{K}((X, Y, R) \times \interpret{\mathbf{int}}'', \interpret{\bullet}'') \to \category{K}((X, Y, R), \interpret{\bullet}'')$ satisfies the Beck--Chevalley condition because $\forall \times \forall$ in the base category satisfies the BC condition.
		\item Note that $\interpret{\rho}''$ is admissible with respect to the reversed order $(2, {\ge})$.
		This is proved by induction.
		The base case is obvious.
		The step case follows because $\dotExponential{(X, Y, R)}{(X', Y', R')} = (\exponential{X}{X'}, \exponential{Y}{Y'}, \{ (f, g) \mid \forall (x, y) \in R, (\eval \comp \tupling{f}{x}, \eval \comp \tupling{g}{y}) \in R' \})$ is admissible if $(X', Y', R')$ is admissible.

		Let $\Gamma, X : \rho \vdash \psi : \rho$ be a well-typed term.
		Since we have
		\[ \interpret{\nu X : \rho. \psi}(\gamma) = \sup_n (\interpret{\psi}(\gamma, {-}))^n(\bot) \qquad \interpret{\nu X : \rho. \psi}'(\gamma) = \sup_n (\interpret{\psi}'(\gamma, {-}))^n(\bot) \]
		and $\interpret{\rho}''$ is admissible, we have $(\interpret{\nu X : \rho. \psi}, \interpret{\nu X : \rho. \psi}') : \interpret{\Gamma}'' \dotTo \interpret{\rho}''$.
		\item For other term constructions, we use cartesian closed structure of $\category{K}$.
	\end{itemize}
	Therefore, for any well-typed term $\Gamma \vdash \psi : \eta$, we have $(\interpret{\psi}', \interpret{\psi}) : \interpret{\Gamma}'' \dotTo \interpret{\eta}''$.
	Specifically, for any $\vdash \psi : \bullet$, we have $(\interpret{\psi}', \interpret{\psi}) : \dot{1} \dotTo \interpret{\bullet}''$, which implies $\interpret{\psi}' = \interpret{\psi}$ as functions.
\end{proof}

We define $\mathbf{curry}$ and $\mathbf{uncurry}$ as follows.
\begin{lemma}
	The following morphisms are mutually inverse.
	\begin{align}
		\mathbf{curry} &\coloneqq \Lambda \Lambda (\eval \comp \associator) &&: \exponential{X \times Y}{A} \to \exponential{X}{\exponential{Y}{A}} \\
		\mathbf{uncurry} &\coloneqq \Lambda(\eval \comp (\eval \times \identity{}) \comp \associator^{-1}) &&: \exponential{X}{\exponential{Y}{A}} \to \exponential{X \times Y}{A}
	\end{align}
\end{lemma}
\begin{proof}
	\begin{align}
		\mathbf{curry} \comp \mathbf{uncurry} &= \Lambda \Lambda (\eval \comp \associator \comp ((\Lambda(\eval \comp (\eval \times \identity{}) \comp \associator^{-1}) \times \identity{}) \times \identity{})) \\
		&= \Lambda \Lambda (\eval \comp (\eval \times \identity{}) \comp \associator^{-1} \comp \associator) \\
		&= \Lambda (\Lambda (\eval) \comp \eval) \\
		&= \Lambda (\eval) \\
		&= \identity{}
	\end{align}
	\begin{align}
		\mathbf{uncurry} \comp \mathbf{curry} &= \Lambda(\eval \comp (\eval \times \identity{}) \comp \associator^{-1} \comp (\Lambda \Lambda (\eval \comp \associator) \times \identity{})) \\
		&= \Lambda(\eval \comp (\Lambda (\eval \comp \associator) \times \identity{}) \comp \associator^{-1}) \\
		&= \Lambda(\eval \comp \associator \comp \associator^{-1}) \\
		&= \identity{}
	\end{align}
\end{proof}

\begin{lemma}\label{lem:curry-EM-algebra-morphism}
	For any EM $T$-algebra $\alpha : T A \to A$ and $X, Y \in \category{C}$, $\mathbf{uncurry} : \exponential{X}{\exponential{Y}{A}} \to \exponential{X \times Y}{A}$ is an isomorphism of EM algebras.
\end{lemma}
\begin{proof}
	Let $\strength'^{T} \coloneqq T \braiding \comp \strength^T \comp \braiding$.
	\begin{align}
		&\Lambda(\alpha \comp T \eval \comp \strength'^{T}) \comp T \mathbf{uncurry} \\
		&= \Lambda(\alpha \comp T \eval \comp \strength'^{T} \comp (T \mathbf{uncurry} \times \identity{})) \\
		&= \Lambda(\alpha \comp T \eval \comp T (\mathbf{uncurry} \times \identity{}) \comp \strength'^{T}) \\
		&= \Lambda(\alpha \comp T (\eval \comp (\eval \times \identity{}) \comp \associator^{-1}) \comp \strength'^{T})
	\end{align}
	\begin{align}
		&\mathbf{uncurry} \comp \Lambda(\Lambda(\alpha \comp T \eval \comp \strength'^{T}) \comp T \eval \comp \strength'^{T}) \\
		&= \Lambda(\eval \comp (\eval \times \identity{}) \comp \associator^{-1} \comp (\Lambda(\Lambda(\alpha \comp T \eval \comp \strength'^{T}) \comp T \eval \comp \strength'^{T}) \times \identity{})) \\
		&= \Lambda(\eval \comp ((\Lambda(\alpha \comp T \eval \comp \strength'^{T}) \comp T \eval \comp \strength'^{T}) \times \identity{}) \comp \associator^{-1}) \\
		&= \Lambda(\alpha \comp T \eval \comp \strength'^{T} \comp ((T \eval \comp \strength'^{T}) \times \identity{}) \comp \associator^{-1}) \\
		&= \Lambda(\alpha \comp T \eval \comp T (\eval \times \identity{}) \comp \strength'^{T} \comp (\strength'^{T} \times \identity{}) \comp \associator^{-1}) \\
		&= \Lambda(\alpha \comp T \eval \comp T (\eval \times \identity{}) \comp T \associator^{-1} \comp \strength'^{T})
	\end{align}
\end{proof}

\begin{proposition}\label{prop:nuHFL-translation-sound}
	For any well-typed $\nu\mathrm{HFL}(\mathbb{Z})$-term $\Gamma \vdash \psi : \eta$,
	\[ \kappa_{\eta} \comp \interpret{\psi}' = \mathcal{A}^{\emalgsymbol} \interpret{\psi^{\sharp}} \comp \kappa_{\Gamma} \]
	where $\kappa_{\eta} : \interpret{\eta}' \to \mathcal{A}^{\emalgsymbol}\interpret{\eta^{\sharp}}$ is a canonical isomorphism.
	\begin{itemize}
		\item $\kappa_{\bullet} : \answerobj \to \exponential{1}{\answerobj}$
		\[ \kappa_{\bullet} \coloneqq \Lambda(\pi_1) \qquad \kappa_{\bullet}^{-1} \coloneqq \eval \comp \tupling{\identity{}}{{!}} \]
		\item $\kappa_{\eta \to \rho} : \exponential{\interpret{\eta}'}{\interpret{\rho}'} \to \mathcal{A}^{\emalgsymbol} \exponential{\interpret{\eta^{\sharp}} \times \mathcal{A}^{\emalgsymbol} \interpret{\prod \rho^{\natural}}}{\answerobj}$
		\[ \kappa_{\eta \to \rho} \coloneqq \mathbf{uncurry} \comp (\exponential{\kappa_{\eta}^{-1}}{\kappa_{\rho}}) \qquad \kappa_{\eta \to \rho}^{-1} \coloneqq (\exponential{\kappa_{\eta}}{\kappa_{\rho}^{-1}}) \comp \mathbf{curry} \]
		\item $\kappa_{\mathbf{int}} = \identity{}$
	\end{itemize}
\end{proposition}
\begin{proof}
	\allowdisplaybreaks
	\begin{itemize}
		\item $\nu X : \rho. \psi$
		\begin{align}
			&\mathcal{A}^{\emalgsymbol}\interpret{\letrec{X}{x}{\psi^{\sharp}\ x}{X}} \comp \kappa_{\Gamma} \\
			&= (\Lambda (\eval \comp \tupling{\mathcal{A}^{\emalgsymbol}\interpret{\psi^{\sharp}} \comp \pi_1}{\pi_2}))^{\dagger} \comp \kappa_{\Gamma} \\
			&= (\mathcal{A}^{\emalgsymbol}\interpret{\psi^{\sharp}})^{\dagger} \comp \kappa_{\Gamma} \\
			&= (\mathcal{A}^{\emalgsymbol}\interpret{\psi^{\sharp}} \comp (\kappa_{\Gamma} \times \identity{}))^{\dagger} \\
			&= (\kappa_{\rho} \comp \interpret{\psi}' \comp (\identity{} \times \kappa_{\rho}^{-1}))^{\dagger} \\
			&= \kappa_{\rho} \comp (\interpret{\psi}')^{\dagger} \\
			&= \kappa_{\rho} \comp \interpret{\nu X : \rho. \psi}'
		\end{align}
		Note that $\kappa_{\rho}$ is a morphism of EM algebras by Lemma~\ref{lem:curry-EM-algebra-morphism}.
		\item $\Gamma \vdash \lambda X : \eta. \psi : \eta \to \rho$
		\begin{align}
			&\kappa_{\eta \to \rho} \comp \interpret{\lambda X : \eta. \psi}' \\
			&= \kappa_{\eta \to \rho} \comp \Lambda \interpret{\psi}' \\
			&= \kappa_{\eta \to \rho} \comp \Lambda (\kappa_{\rho}^{-1} \comp \mathcal{A}^{\emalgsymbol} \interpret{\psi^{\sharp}} \comp \kappa_{\Gamma, X : \eta}) \\
			&= \Lambda(\eval \comp (\eval \times \identity{}) \comp \associator^{-1}) \comp (\exponential{\kappa_{\eta}^{-1}}{\kappa_{\rho}}) \comp \Lambda (\kappa_{\rho}^{-1} \comp \mathcal{A}^{\emalgsymbol} \interpret{\psi^{\sharp}} \comp \kappa_{\Gamma, X : \eta}) \\
			&= \Lambda(\eval \comp (\eval \times \identity{}) \comp \associator^{-1}) \comp \Lambda (\kappa_{\rho} \comp \kappa_{\rho}^{-1} \comp \mathcal{A}^{\emalgsymbol} \interpret{\psi^{\sharp}} \comp \kappa_{\Gamma, X : \eta} \comp (\identity{} \times \kappa_{\eta}^{-1})) \\
			&= \Lambda(\eval \comp (\eval \times \identity{}) \comp \associator^{-1}) \comp \Lambda (\mathcal{A}^{\emalgsymbol} \interpret{\psi^{\sharp}} \comp (\kappa_{\Gamma} \times \identity{})) \\
			&= \Lambda(\eval \comp (\eval \times \identity{}) \comp \associator^{-1}) \comp \Lambda (\mathcal{A}^{\emalgsymbol} \interpret{\psi^{\sharp}}) \comp \kappa_{\Gamma} \\
			&= \Lambda(\eval \comp (\mathcal{A}^{\emalgsymbol} \interpret{\psi^{\sharp}} \times \identity{}) \comp \associator^{-1}) \comp \kappa_{\Gamma}
		\end{align}
		\begin{align}
			\mathcal{A}^{\emalgsymbol}\interpret{\lambda (X, Y) : \eta^{\sharp} \times \prod \rho^{\natural}. \psi^{\sharp}\ Y} &= \Lambda (\eval \comp \tupling{\mathcal{A}^{\emalgsymbol}\interpret{\psi^{\sharp}} \comp (\identity{} \times \pi_1)}{\pi_2 \comp \pi_2}) \\
			&= \Lambda (\eval \comp (\mathcal{A}^{\emalgsymbol}\interpret{\psi^{\sharp}} \times \identity{}) \comp \associator^{-1})
		\end{align}
		\item $\Gamma \vdash \psi_1\ \psi_2 : \rho$
		\begin{align}
			&\mathcal{A}^{\emalgsymbol}\interpret{\lambda Y : \prod \rho^{\natural}. \psi_1^{\sharp}\ (\psi_2^{\sharp}, Y)} \comp \kappa_{\Gamma} \\
			&= \Lambda (\eval \comp \tupling{\mathcal{A}^{\emalgsymbol}\interpret{\psi_1^{\sharp}} \comp \pi_1}{\tupling{\mathcal{A}^{\emalgsymbol}\interpret{\psi_2^{\sharp}} \comp \pi_1}{\pi_2}}) \comp \kappa_{\Gamma} \\
			&= \Lambda (\eval \comp \tupling{\mathcal{A}^{\emalgsymbol}\interpret{\psi_1^{\sharp}} \comp \pi_1}{\mathcal{A}^{\emalgsymbol}\interpret{\psi_2^{\sharp}} \times \identity{}} \comp (\kappa_{\Gamma} \times \identity{})) \\
			&= \Lambda (\eval \comp \tupling{\kappa_{\eta \to \rho} \comp \interpret{\psi_1}' \comp \pi_1}{(\kappa_{\eta} \comp \interpret{\psi_2}') \times \identity{}}) \\
			&= \Lambda (\eval \comp (\kappa_{\eta \to \rho} \times (\kappa_{\eta} \times \identity{})) \comp \tupling{\interpret{\psi_1}' \comp \pi_1}{\interpret{\psi_2}' \times \identity{}}) \\
			&= \Lambda (\eval \comp (\eval \times \identity{}) \comp \associator^{-1} \comp ((\exponential{\kappa_{\eta}^{-1}}{\kappa_{\rho}}) \times (\kappa_{\eta} \times \identity{})) \comp \tupling{\interpret{\psi_1}' \comp \pi_1}{\interpret{\psi_2}' \times \identity{}}) \\
			&= \Lambda (\eval \comp (\eval \times \identity{}) \comp (((\exponential{\kappa_{\eta}^{-1}}{\kappa_{\rho}}) \times \kappa_{\eta}) \times \identity{}) \comp ((\interpret{\psi_1}' \times \interpret{\psi_2}') \times \identity{}) \comp \associator^{-1} \comp \tupling{\pi_1}{\identity{}}) \\
			&= \Lambda (\eval \comp (\kappa_{\rho} \times \identity{}) \comp (\eval \times \identity{}) \comp ((\interpret{\psi_1}' \times \interpret{\psi_2}') \times \identity{}) \comp (\tupling{\identity{}}{\identity{}} \times \identity{})) \\
			&= \Lambda (\eval) \comp \kappa_{\rho} \comp \eval \comp \tupling{\interpret{\psi_1}'}{\interpret{\psi_2}'} \\
			&= \kappa_{\rho} \comp \interpret{\psi_1\ \psi_2}' \\
		\end{align}
		\item $\Gamma \vdash \forall X : \mathbf{int}. \psi : \bullet$
		\begin{align}
			&\mathcal{A}^{\emalgsymbol}\interpret{\lambda Y : 1. \forall X : \mathbf{int}. \psi^{\sharp}\ Y} \comp \kappa_{\Gamma} \\
			&= \Lambda (\forall (\eval \comp \tupling{\mathcal{A}^{\emalgsymbol}\interpret{\psi^{\sharp}} \comp (\pi_1 \times \identity{})}{\pi_2 \comp \pi_1})) \comp \kappa_{\Gamma} \\
			&= \Lambda (\forall (\eval \comp \tupling{\mathcal{A}^{\emalgsymbol}\interpret{\psi^{\sharp}} \comp (\pi_1 \times \identity{})}{\pi_2 \comp \pi_1}) \comp (\kappa_{\Gamma} \times \identity{})) \\
			&= \Lambda (\forall (\eval \comp \tupling{\mathcal{A}^{\emalgsymbol}\interpret{\psi^{\sharp}} \comp (\pi_1 \times \identity{})}{\pi_2 \comp \pi_1} \comp ((\kappa_{\Gamma} \times \identity{}) \times \identity{}))) & \text{BC condition} \\
			&= \Lambda (\forall (\eval \comp \tupling{\mathcal{A}^{\emalgsymbol}\interpret{\psi^{\sharp}} \comp (\kappa_{\Gamma} \times \identity{}) \comp (\pi_1 \times \identity{})}{\pi_2 \comp \pi_1})) \\
			&= \Lambda (\forall (\eval \comp \tupling{\kappa_{\bullet} \comp \interpret{\psi}' \comp (\pi_1 \times \identity{})}{\pi_2 \comp \pi_1})) \\
			&= \Lambda (\forall (\eval \comp (\kappa_{\bullet} \times \identity{}) \comp (\interpret{\psi}' \times \identity{}) \comp \tupling{\pi_1 \times \identity{}}{\pi_2 \comp \pi_1})) \\
			&= \Lambda (\forall (\pi_1 \comp (\interpret{\psi}' \times \identity{}) \comp \tupling{\pi_1 \times \identity{}}{\pi_2 \comp \pi_1})) \\
			&= \Lambda (\forall (\interpret{\psi}' \comp (\pi_1 \times \identity{}))) \\
			&= \Lambda (\forall (\interpret{\psi}') \comp \pi_1) & \text{BC condition} \\
			&= \Lambda (\pi_1 \comp (\interpret{\forall X : \mathbf{int}. \psi}' \times \identity{})) \\
			&= \kappa_{\bullet} \comp \interpret{\forall X : \mathbf{int}. \psi}'
		\end{align}
		\item $\Gamma \vdash \mathbf{true} : \bullet$ (and similarly for $\Gamma \vdash \mathbf{false} : \bullet$)
		\begin{align}
			\mathcal{A}^{\emalgsymbol}\interpret{\lambda X : 1. \mathbf{true}} \comp \kappa_{\Gamma} &= \Lambda(\top \comp {!}) \comp \kappa_{\Gamma} \\
			&= \Lambda(\pi_1 \comp ((\top \comp {!}) \times \identity{})) \comp \kappa_{\Gamma} \\
			&= \Lambda(\pi_1) \comp \top \comp {!} \comp \kappa_{\Gamma} \\
			&= \kappa_{\bullet} \comp \interpret{\mathbf{true}}'
		\end{align}
		\item $\Gamma \vdash \psi_1 \land \psi_2 : \bullet$ (and $\Gamma \vdash \psi_1 \lor \psi_2 : \bullet$)
		\begin{align}
			&\mathcal{A}^{\emalgsymbol}\interpret{\lambda X : 1. \psi_1^{\sharp}\ X \land \psi_2^{\sharp}\ X} \comp \kappa_{\Gamma} \\
			&= \Lambda (\land \comp \tupling{\eval \comp \tupling{\mathcal{A}^{\emalgsymbol}\interpret{\psi_1^{\sharp}} \comp \pi_1}{\pi_2}}{\eval \comp \tupling{\mathcal{A}^{\emalgsymbol}\interpret{\psi_2^{\sharp}} \comp \pi_1}{\pi_2}})\comp \kappa_{\Gamma} \\
			&= \Lambda (\land \comp \tupling{\eval \comp \tupling{\mathcal{A}^{\emalgsymbol}\interpret{\psi_1^{\sharp}} \comp \pi_1}{\pi_2}}{\eval \comp \tupling{\mathcal{A}^{\emalgsymbol}\interpret{\psi_2^{\sharp}} \comp \pi_1}{\pi_2}}\comp (\kappa_{\Gamma} \times \identity{})) \\
			&= \Lambda (\land \comp \tupling{\eval \comp \tupling{\mathcal{A}^{\emalgsymbol}\interpret{\psi_1^{\sharp}} \comp \kappa_{\Gamma} \comp \pi_1}{\pi_2}}{\eval \comp \tupling{\mathcal{A}^{\emalgsymbol}\interpret{\psi_2^{\sharp}} \comp \kappa_{\Gamma} \comp \pi_1}{\pi_2}}) \\
			&= \Lambda (\land \comp \tupling{\eval \comp \tupling{\kappa_{\bullet} \comp \interpret{\psi_1}' \comp \pi_1}{\pi_2}}{\eval \comp \tupling{\kappa_{\bullet} \comp \interpret{\psi_1}' \comp \pi_1}{\pi_2}}) \\
			&= \Lambda (\land \comp \tupling{\interpret{\psi_1}' \comp \pi_1}{\interpret{\psi_2}' \comp \pi_1}) \\
			&= \Lambda (\land \comp \tupling{\interpret{\psi_1}'}{\interpret{\psi_2}'} \comp \pi_1) \\
			&= \Lambda (\pi_1) \comp \land \comp \tupling{\interpret{\psi_1}'}{\interpret{\psi_2}'} \\
			&= \kappa_{\bullet} \comp \interpret{\psi_1 \land \psi_2}'
		\end{align}
		\item $\Gamma, Y : \eta' \vdash X : \eta$
		\begin{align}
			&\mathcal{A}^{\emalgsymbol}\interpret{\Gamma^{\sharp}, Y : \eta'^{\sharp} \vdash X : \eta^{\sharp}} \comp \kappa_{\Gamma, Y : \eta'} \\
			&= \mathcal{A}^{\emalgsymbol}\interpret{\Gamma^{\sharp}\vdash X : \eta^{\sharp}} \comp \pi_1 \comp (\kappa_{\Gamma} \times \kappa_{\eta'}) \\
			&= \mathcal{A}^{\emalgsymbol}\interpret{\Gamma^{\sharp}\vdash X : \eta^{\sharp}} \comp \kappa_{\Gamma} \comp \pi_1 \\
			&= \kappa_{\eta} \comp \interpret{\Gamma \vdash X : \eta}' \comp \pi_1 \\
			&= \kappa_{\eta} \comp \interpret{\Gamma, Y : \eta' \vdash X : \eta}'
		\end{align}
		\item $\Gamma, X : \eta \vdash X : \eta$
		\begin{align}
			&\mathcal{A}^{\emalgsymbol}\interpret{\Gamma^{\sharp}, X : \eta^{\sharp} \vdash X : \eta^{\sharp}} \comp \kappa_{\Gamma, X : \eta} \\
			&= \pi_2 \comp (\kappa_{\Gamma} \times \kappa_{\eta}) \\
			&= \kappa_{\eta} \comp \pi_2 \\
			&= \kappa_{\eta} \comp \interpret{\Gamma, X : \eta \vdash X : \eta}'
		\end{align}
		\item $\Gamma \vdash p(\psi_1, \dots, \psi_k) : \bullet$
		\begin{align}
			&\mathcal{A}^{\emalgsymbol}\interpret{\lambda X : 1. p(\psi_1^{\sharp}, \dots, \psi_k^{\sharp})} \comp \kappa_{\Gamma} \\
			&= \Lambda(\interpret{p}' \comp \langle \mathcal{A}^{\emalgsymbol}\interpret{\psi_1^{\sharp}} \comp \pi_1, \dots, \mathcal{A}^{\emalgsymbol}\interpret{\psi_k^{\sharp}} \comp \pi_1 \rangle) \comp \kappa_{\Gamma} \\
			&= \Lambda(\interpret{p}' \comp \langle \mathcal{A}^{\emalgsymbol}\interpret{\psi_1^{\sharp}}, \dots, \mathcal{A}^{\emalgsymbol}\interpret{\psi_k^{\sharp}} \rangle \comp \pi_1 \comp (\kappa_{\Gamma} \times \identity{})) \\
			&= \Lambda(\interpret{p}' \comp \langle \mathcal{A}^{\emalgsymbol}\interpret{\psi_1^{\sharp}} \comp \kappa_{\Gamma}, \dots, \mathcal{A}^{\emalgsymbol}\interpret{\psi_k^{\sharp}} \comp \kappa_{\Gamma} \rangle \comp \pi_1) \\
			&= \Lambda(\pi_1) \comp \interpret{p}' \comp \langle \interpret{\psi_1}', \dots, \interpret{\psi_k}' \rangle \\
			&= \kappa_{\bullet} \comp \interpret{p(\psi_1, \dots, \psi_k)}'
		\end{align}
		\item $\Gamma \vdash n : \mathbf{int}$
		\begin{align}
			&\mathcal{A}^{\emalgsymbol}\interpret{n} \comp \kappa_{\Gamma} \\
			&= n \comp {!} \comp \kappa_{\Gamma} \\
			&= \interpret{n}'
		\end{align}
		\item $\Gamma \vdash \psi_1 \mathrel{\mathrm{op}} \psi_2 : \mathbf{int}$
		\begin{align}
			&\mathcal{A}^{\emalgsymbol}\interpret{\psi_1^{\sharp} \mathrel{\mathrm{op}} \psi_2^{\sharp}} \comp \kappa_{\Gamma} \\
			&= \interpret{\mathrm{op}} \comp \tupling{\mathcal{A}^{\emalgsymbol}\interpret{\psi_1^{\sharp}}}{\mathcal{A}^{\emalgsymbol}\interpret{\psi_2^{\sharp}}} \comp \kappa_{\Gamma} \\
			&= \interpret{\mathrm{op}} \comp \tupling{\interpret{\psi_1}'}{\interpret{\psi_1}'} \\
			&= \interpret{\psi_1 \mathrel{\mathrm{op}} \psi_2}'
		\end{align}
	\end{itemize}
\end{proof}

\begin{theorem}\label{thm:nuHFL-as-instance}
	For any well-typed $\nu\mathrm{HFL}(\mathbb{Z})$-term $\vdash \psi : \bullet$, $\interpret{\psi}(\emptyset) = \mathcal{A}^{\emalgsymbol} \interpret{\psi^{\sharp}\ ()}(\emptyset)$.
\end{theorem}
\begin{proof}
	By Proposition~\ref{prop:nuHFL-cpo-poset},\ref{prop:nuHFL-translation-sound}, we have the following equation.
	\[ \interpret{\psi} = \interpret{\psi}' = \kappa_{\bullet}^{-1} \comp \mathcal{A}^{\emalgsymbol} \interpret{\psi^{\sharp}} = \eval \comp \tupling{\identity{}}{{!}} \comp \mathcal{A}^{\emalgsymbol} \interpret{\psi^{\sharp}} = \mathcal{A}^{\emalgsymbol} \interpret{\psi^{\sharp}\ ()} \]
\end{proof}
	
}{}

\end{document}